\newif\iffull
\newif\ifnotes
\newif\iflater
\setlist[itemize]{leftmargin=*}
\setlist[enumerate]{leftmargin=*}
\def\grd@save@target#1{%
  \def\grd@target{#1}}
\def\grd@save@start#1{%
  \def\grd@start{#1}}
\tikzset{
  grid with coordinates/.style={
    to path={%
      \pgfextra{%
        \edef\grd@@target{(\tikztotarget)}%
        \tikz@scan@one@point\grd@save@target\grd@@target\relax
        \edef\grd@@start{(\tikztostart)}%
        \tikz@scan@one@point\grd@save@start\grd@@start\relax
        \draw[minor help lines] (\tikztostart) grid (\tikztotarget);
        \draw[major help lines] (\tikztostart) grid (\tikztotarget);
        \grd@start
        \pgfmathsetmacro{\grd@xa}{\the\pgf@x/1cm}
        \pgfmathsetmacro{\grd@ya}{\the\pgf@y/1cm}
        \grd@target
        \pgfmathsetmacro{\grd@xb}{\the\pgf@x/1cm}
        \pgfmathsetmacro{\grd@yb}{\the\pgf@y/1cm}
        \pgfmathsetmacro{\grd@xc}{\grd@xa + \pgfkeysvalueof{/tikz/grid with coordinates/major step}}
        \pgfmathsetmacro{\grd@yc}{\grd@ya + \pgfkeysvalueof{/tikz/grid with coordinates/major step}}
        \foreach \x in {\grd@xa,\grd@xc,...,\grd@xb}
        \node[anchor=north] at (\x,\grd@ya) {\pgfmathprintnumber{\x}};
        \foreach \y in {\grd@ya,\grd@yc,...,\grd@yb}
        \node[anchor=east] at (\grd@xa,\y) {\pgfmathprintnumber{\y}};
      }
    }
  },
  minor help lines/.style={
    help lines,
    step=\pgfkeysvalueof{/tikz/grid with coordinates/minor step}
  },
  major help lines/.style={
    help lines,
    line width=\pgfkeysvalueof{/tikz/grid with coordinates/major line width},
    step=\pgfkeysvalueof{/tikz/grid with coordinates/major step}
  },
  grid with coordinates/.cd,
  minor step/.initial=.2,
  major step/.initial=1,
  major line width/.initial=2pt,
}
\newcommand{\doclearpage}{%
	\iffull
	\clearpage
	\else
	\fi
}
\newcommand{\keywords}[1]{\bigskip\par\noindent{\small\textbf{Keywords\/}: #1}}
\newcommand{\defemph}[1]{\textbf{\emph{#1}}}
\newcommand{\FormatAuthor}[3]{
	\begin{tabular}{c}
		#1 \\ {\small\texttt{#2}} \\ {\small #3}
	\end{tabular}
}
\theoremstyle{plain} 
\newtheorem{theorem}{Theorem}[section]
\newtheorem{lemma}[theorem]{Lemma}
\newtheorem{proposition}[theorem]{Proposition}
\newtheorem{claim}[theorem]{Claim}
\newtheorem{corollary}[theorem]{Corollary}
\newtheorem{definition}[theorem]{Definition}
\theoremstyle{definition} 
\newtheorem{construction}[theorem]{Construction}
\newtheorem{remark}[theorem]{Remark}
\newtheorem{openproblem}{Open Problem}
\newtheorem{introtheorem}{Theorem}
\theoremstyle{remark} 
\crefname{step}{Step}{Steps}
\newcommand{\CB}{\allowbreak}
\newcommand{\DoQuote}[1]{``#1''}
\newcommand{\pair}[2]{(#1 ,\CB #2)}
\DeclareMathOperator{\poly}{poly}
\DeclareMathOperator{\polylog}{polylog}
\DeclareMathOperator{\Expectation}{\mathbb{E}}
\newcommand{\Bits}{\{0,1\}}
\newcommand{\Naturals}{\mathbb{N}}
\newcommand{\DefineEqual}{:=}
\renewcommand{\Set}[1]{\{#1\}}
\newcommand{\SetCardinality}[1]{|#1|}
\newcommand{\BitSize}[1]{|#1|}
\newcommand{\Proof}{\pi\xspace}
\newcommand{\FormatComplexityClass}[1]{\mathbf{#1}}
\newcommand{\NTIME}{\FormatComplexityClass{NTIME}}
\newcommand{\BPP}{\FormatComplexityClass{BPP}}
\newcommand{\NP}{\FormatComplexityClass{NP}}
\newcommand{\sharpP}{\FormatComplexityClass{\#P}}
\newcommand{\PSPACE}{\FormatComplexityClass{PSPACE}}
\newcommand{\NEXP}{\FormatComplexityClass{NEXP}}
\newcommand{\IPCP}{\FormatComplexityClass{IPCP}}
\newcommand{\MIPS}{\FormatComplexityClass{MIP^*}}
\newcommand{\QMA}{\FormatComplexityClass{QMA}}
\newcommand{\QIP}{\FormatComplexityClass{QIP}}
\newcommand{\QSZKHV}{\FormatComplexityClass{QSZK_{HV}}}
\newcommand{\QSZK}{\FormatComplexityClass{QSZK}}
\newcommand{\PZKIPCP}{\FormatComplexityClass{PZK\mbox{-}IPCP}}
\newcommand{\PZKMIP}{\FormatComplexityClass{PZK\mbox{-}MIP}}
\newcommand{\PZKMIPS}{\FormatComplexityClass{PZK\mbox{-}\MIPStar{}}}
\newcommand{\ZKMIPS}{\FormatComplexityClass{ZK\mbox{-}\MIPStar{}}}
\newcommand{\Domain}{D}
\newcommand{\Range}{R}
\newcommand{\SubDomain}{\tilde{\Domain}}
\newcommand{\Restrict}[2]{#1|_{#2}}
\newcommand{\Field}{\mathbb{F}}
\newcommand{\SubField}{\mathbb{K}}
\newcommand{\FieldSize}{q}
\newcommand{\Multiplicative}[1]{#1^{\times}}
\newcommand{\VariableX}{X}
\newcommand{\VariableY}{Y}
\newcommand{\Poly}{Q}
\newcommand{\PolyA}{P}
\newcommand{\Lagrange}[1]{I_{#1}}
\NewDocumentCommand{\IndividualDegree}{m o}{\IfValueTF{#2}{\mathrm{deg}_{#2}(#1)}{\mathrm{deg}(#1)}}
\newcommand{\PolynomialRing}[3]{#1[#3_{1,\dots,#2}]}
\newcommand{\PolynomialRingIndOne}[4]{#1[#3_{1,\dots,#2}^{\leq #4}]}
\newcommand{\PolynomialRingIndOneXY}[7]{#1[#3_{1,\dots,#2}^{\leq #6},#5_{1,\dots,#4}^{\leq #7}]}
\newcommand{\pST}{\; \middle\vert \;}
\newcommand{\MakeDistribution}[1]{\mathcal{#1}}
\newcommand{\Distribution}{\MakeDistribution{D}}
\newcommand{\Relation}{\mathscr{R}}
\newcommand{\Language}{\mathscr{L}}
\newcommand{\Witnesses}[2]{#1\vert_{#2}}
\newcommand{\GetLanguage}[1]{\mathrm{Lan}(#1)}
\newcommand{\Instance}{x}
\newcommand{\Witness}{w}
\newcommand{\InstanceSize}{n}
\newcommand{\DeciderMachine}{M}
\newcommand{\DeciderTime}{T}
\newcommand{\Prover}{P\xspace}
\newcommand{\Verifier}{V\xspace}
\newcommand{\Simulator}{S\xspace}
\newcommand{\ProverStrategy}{P}
\newcommand{\ProverStrategyLD}{P_{\textrm{LD}}}
\newcommand{\SCSubset}{H}
\newcommand{\SCVars}{m}
\newcommand{\SCDegree}{d}
\newcommand{\SCPoly}{F}
\newcommand{\SCSum}{a}
\newcommand{\RandPoly}{R}
\newcommand{\MaskedPoly}{Q}
\newcommand{\AnsTable}[1]{\mathsf{ans}_{#1}}
\newcommand{\SCStrength}{\QueryBound}
\newcommand{\CodeSimAlgorithm}{\mathcal{A}}
\newcommand{\ListSize}{\ell}
\newcommand{\SlowSimulator}{\Simulator_{\mathrm{slow}}}
\newcommand{\SubsetSize}{\lambda}
\newcommand{\SSCVars}{k}
\newcommand{\SSCSubset}{G}
\newcommand{\StrongRandPoly}{Z}
\newcommand{\AuxRandPoly}{A}
\newcommand{\SoundnessSet}{I}
\newcommand{\Interact}[2]{\langle #1,#2 \rangle}
\newcommand{\InnerProduct}[2]{\langle #1,#2 \rangle}
\DeclareMathOperator{\rank}{rank}
\newcommand{\Malicious}[1]{\expandafter\tilde#1}
\newcommand{\Simulated}[1]{#1_{\mathrm{sim}}}
\newcommand{\View}{\mathrm{View}}
\newcommand{\IPCPView}[2]{\View\;\Interact{#1}{#2}}
\newcommand{\MIPView}[2]{\View\;\Interact{#1}{#2}}
\newcommand{\Distance}[2]{\Delta(#1,#2)}
\newcommand{\Characteristic}[1]{\mathrm{char}(#1)}
\newcommand{\Plane}{s}
\newcommand{\Line}{\ell}
\newcommand{\Planes}[1]{\mathrm{Planes}(#1)}
\newcommand{\Lines}[1]{\mathrm{Lines}(#1)}
\newcommand{\RMSubDomain}{H}
\newcommand{\RMVars}{m}
\newcommand{\RMDegree}{d}
\newcommand{\OracleSATRelation}{\Relation_{\mathrm{O3SAT}}}
\newcommand{\FormatComplexityFunction}[1]{\mathsf{#1}}
\newcommand{\SoundnessError}{\FormatComplexityFunction{\varepsilon}}
\newcommand{\ProofLength}{\FormatComplexityFunction{l}}
\newcommand{\QueryComplexity}{\FormatComplexityFunction{q}}
\newcommand{\CommunicationComplexity}{\FormatComplexityFunction{c}}
\newcommand{\ProximityParameter}{\FormatComplexityFunction{\delta}}
\newcommand{\RoundComplexity}{\FormatComplexityFunction{r}}
\newcommand{\QueryBound}{\FormatComplexityFunction{b}}
\newcommand{\NumProvers}{\FormatComplexityFunction{k}}
\newcommand{\eps}{\ensuremath{\epsilon}\xspace}
\newcommand{\F}{\ensuremath{\mathbb{F}}\xspace}
\newcommand{\E}{{\mathbb{E}}}
\newcommand{\x}{\ensuremath{\mathbf{x}}}
\renewcommand{\deg}[1]{\ensuremath{\mathrm{deg}(#1)}}
\newcommand{\MIPStar}{MIP\textsuperscript{*}}
\newcommand{\LDIPCP}{$(\Field, \SCDegree, \SCVars)$-low-degree IPCP}
\newcommand{\qstate}{entangled state}
\newcommand{\ProverMain}{\mathcal{P}_{\textrm{main}}}
\newcommand{\ProverPlane}{\mathcal{P}_{\textrm{plane}}}
\newcommand{\ProverPoint}{\mathcal{P}_{\textrm{lookup}}}
\newcommand{\Measurement}[3]{#1_{#2}^{#3}}
\newcommand{\Id}{\mathrm{Id}}
\newcommand{\HilbertSpace}{\mathcal{H}}
\newcommand{\LinearOperators}[1]{\mathcal{L}(#1)}
\newcommand{\Trace}{\mathrm{Tr}}
\newcommand{\TraceRho}[1]{\mathrm{Tr}_{\rho}\left( #1 \right)}
\newcommand{\NumRegisters}{r}
\newcommand{\InnerProductPsi}[2]{\langle #1,#2 \rangle_{\Psi}}
\newcommand{\NormPsi}[1]{\left\| #1 \right\|_{\Psi}}
\newcommand{\NormOne}[1]{\left\| #1 \right\|_{1}}
\newcommand{\Complex}{\mathbb{C}}
\newcommand{\Oracle}{R}
\newcommand{\Input}{x}
\newcommand{\Point}{\alpha}
\newcommand{\Element}{z}
\newcommand{\State}{\sigma}
\newcommand{\Register}[1]{\mathcal{#1}}
\newcommand{\Transformation}{T}
\newcommand{\LD}[1]{\hat{#1}}
\newcommand{\subalign}[1]{%
  \vcenter{%
    \Let@ \restore@math@cr \default@tag
    \baselineskip\fontdimen10 \scriptfont\tw@
    \advance\baselineskip\fontdimen12 \scriptfont\tw@
    \lineskip\thr@@\fontdimen8 \scriptfont\thr@@
    \lineskiplimit\lineskip
    \ialign{\hfil$\m@th\scriptstyle##$&$\m@th\scriptstyle{}##$\crcr
      #1\crcr
    }%
  }
}
\newcommand{\IPCPparams}[5]{{\IPCP \left[ \subalign{
		\textsf{round complexity:}&\enspace {#2} \\
		\textsf{PCP length:}&\enspace {#3} \\
		\textsf{communication complexity:}&\enspace {#4} \\
		\textsf{query complexity:}&\enspace {#5} \\
		\textsf{soundness error:}&\enspace {#1} 
		} \right]}} 
\newcommand{\PZKIPCPparams}[6]{{\PZKIPCP \left[ \subalign{
		\textsf{round complexity:}&\enspace {#2} \\
		\textsf{PCP length:}&\enspace {#3} \\
		\textsf{communication complexity:}&\enspace {#4} \\
		\textsf{query complexity:}&\enspace {#5} \\
		\textsf{query bound:}&\enspace {#6} \\
		\textsf{soundness error:}&\enspace {#1}
		} \right]}} 		
\newcommand{\MIPSparams}[4]{{\MIPS \left[ \subalign{
		\textsf{number of provers:}&\enspace {#1} \\
		\textsf{round complexity:}&\enspace {#3} \\
		\textsf{communication complexity:}&\enspace {#4} \\
		\textsf{soundness error:}&\enspace {#2}
		} \right]}} 
\newcommand{\PZKMIPSparams}[4]{{\PZKMIPS \left[ \subalign{
		\textsf{number of provers:}&\enspace {#1} \\
		\textsf{round complexity:}&\enspace {#3} \\
		\textsf{communication complexity:}&\enspace {#4} \\
		\textsf{soundness error:}&\enspace {#2} \\
		} \right]}}
\newcommand{\LDIPCPparams}[6]{{\left| \subalign{
		\textsf{round complexity:}&\enspace {#2} \\
		\textsf{PCP length:}&\enspace {#3} \\
		\textsf{communication complexity:}&\enspace {#4} \\
		\textsf{query complexity:}&\enspace {#5} \\
		\textsf{oracle }\in&\enspace {#6} \\
		\textsf{soundness error:}&\enspace {#1} \\
		} \right|}}
\begin{document}

\title{%
Spatial Isolation Implies Zero Knowledge \\
Even in a Quantum World%
$^\dagger$\footnotetext{$^\dagger$This work was supported in part by the UC Berkeley Center for Long-Term Cybersecurity. A part of the earlier technical report \cite{ChiesaFS17} was merged with this work.}
%
}
\author{
\begin{tabular}[h!]{ccc}
\FormatAuthor{Alessandro Chiesa}{alexch@berkeley.edu}{UC Berkeley}
 & \FormatAuthor{Michael A. Forbes}{miforbes@illinois.edu}{University of Illinois at Urbana--Champaign} \\
 & \\
\FormatAuthor{Tom Gur}{tom.gur@berkeley.edu}{UC Berkeley}
 & \FormatAuthor{Nicholas Spooner}{nick.spooner@berkeley.edu}{UC Berkeley}
\end{tabular}
}

\iffull
\date{\today}
\fi

\maketitle

\begin{abstract}
Zero knowledge plays a central role in cryptography and complexity.
The seminal work of Ben-Or et al.\ (STOC 1988) shows that zero knowledge can be achieved unconditionally for any language in $\NEXP$, as long as one is willing to make a suitable \emph{physical assumption}: if the provers are spatially isolated, then they can be assumed to be playing independent strategies.

Quantum mechanics, however, tells us that this assumption is unrealistic, because spatially-isolated provers could share a quantum entangled state and realize a non-local correlated strategy. The \MIPStar{} model captures this setting.

\medskip

In this work we study the following question: \emph{does spatial isolation still suffice to unconditionally achieve zero knowledge even in the presence of quantum entanglement?} \

\medskip

We answer this question in the affirmative: we prove that every language in $\NEXP$ has a $2$-prover \emph{zero knowledge} interactive proof that is sound against entangled provers; that is, $\NEXP \subseteq \ZKMIPS$.

Our proof consists of constructing a zero knowledge interactive PCP with a strong algebraic structure, and then lifting it to the \MIPStar{} model. This lifting relies on a new framework that builds on recent advances in low-degree testing against entangled strategies, and clearly separates classical and quantum tools.

Our main technical contribution is the development of new algebraic techniques for obtaining unconditional zero knowledge; this includes a zero knowledge variant of the celebrated sumcheck protocol, a key building block in many probabilistic proof systems. A core component of our sumcheck protocol is a new algebraic commitment scheme, whose analysis relies on algebraic complexity theory.

\keywords{zero knowledge; multi-prover interactive proofs; quantum entangled strategies; interactive PCPs; sumcheck protocol; algebraic complexity}
\end{abstract}

\iffull
\clearpage
\setcounter{tocdepth}{2}
\begin{spacing}{0.9}
{\footnotesize \tableofcontents}
\end{spacing}
\clearpage
\fi

\section{Introduction}
\label{sec:intro}

Zero knowledge, the ability to demonstrate the validity of a claim without revealing any information about it, is a central notion in cryptography and complexity that has received much attention in the last few decades. Introduced in the seminal work of Goldwasser, Micali, and Rackoff \cite{GoldwasserMR89}, zero knowledge was first demonstrated in the model of interactive proofs, in which a resource-unbounded prover interacts with a probabilistic polynomial-time verifier to the end of convincing it of the validity of a statement.

Goldreich, Micali, and Wigderson \cite{GoldreichMW91} showed that every language in $\NP$ has a \emph{computational} zero knowledge interactive proof, under the cryptographic assumption that (non-uniform) one-way functions exist. Ostrovsky and Wigderson \cite{OstrovskyW93} proved that this assumption is necessary.

Unfortunately, the stronger notion of \emph{statistical} zero knowledge interactive proofs, where both soundness and zero knowledge hold unconditionally, is limited. For example, if $\NP$ had such proofs then the polynomial hierarchy would collapse to its second level \cite{BoppanaHZ87,Fortnow87,AielloH91}.

The celebrated work of Ben-Or et al.\ \cite{BenOrGKW88} demonstrated that the situation is markedly different when the verifier interacts with \emph{multiple} provers, in a \emph{classical world} where by spatially isolating the provers we ensure that they are playing independent strategies --- this is the model of multi-prover interactive proofs (MIPs). They proved that every language having an MIP (i.e., every language in $\NEXP$ \cite{BabaiFL91}) also has a \emph{perfect} zero knowledge MIP. This result tells us that \emph{spatial isolation implies zero knowledge}.

In light of quantum mechanics, however, we know that spatial isolation \emph{does not} imply independence, because the provers could share an \qstate{} and realize a strategy that is beyond that of independently acting provers. For example, it is possible for entangled provers to win a game (e.g., the magic square game) with probability $1$, whereas independent provers can only win with probability at most $8/9$ \cite{CleveHTW04}.

Non-local correlations arising from local measurements on entangled particles play a fundamental role in physics, and their study goes back at least to Bell's work on the Einstein--Podolsky--Rosen paradox \cite{Bell64}. Recent years have seen a surge of interest in MIPs with \emph{entangled provers}, which correspond to the setting in which multiple non-communicating provers share an \qstate{} and wish to convince a classical verifier of some statement. This notion is captured by \MIPStar{} protocols, introduced by Cleve et al.\ \cite{CleveHTW04}. A priori it is unclear whether these systems should be less powerful than standard MIPs, because of the richer class of \emph{malicious} prover strategies, or more powerful, because of the richer class of \emph{honest} prover strategies.

Investigating proof systems with entangled adversaries not only sharpens our understanding of entanglement as a computational resource, but also contributes insights to hardness of approximation and cryptography in a post-quantum world. However, while the last three decades saw the development of powerful ideas and tools for designing and analyzing proof systems with classical adversaries, despite much effort, there are only a handful of tools available for dealing with quantum entangled adversaries, and many fundamental questions remain open.
 
\MIPStar{} protocols were studied in a long line of work, culminating in a breakthrough result of Ito and Vidick \cite{ItoV12}, who in a technical tour-de-force showed that $\NEXP \subseteq \MIPS$;\footnote{%
While this is the popular statement of the result, \cite{ItoV12} show a stronger result, namely, that $\NEXP$ is exactly the class of languages decided by MIPs \emph{sound against entangled provers}. Their honest provers are classical, and soundness holds also against entangled provers. This is also the case in our protocols. It remains unknown whether entanglement grants provers additional power: there is \emph{no} known reasonable upper bound on $\MIPS$.
}
this result was further improved in \cite{Vidick16,NatarajanV17}. However, it is unknown whether these MIP protocols can achieve zero knowledge, which is the original motivation behind the classical MIP model. In sum, in this paper we pose the following question:

\begin{center}
\emph{To what extent does spatial isolation imply unconditional zero knowledge in a quantum world?}
\end{center}
\smallskip

\subsection{Our results}
\label{sec:our-results}

Our main result is a strong positive answer to the foregoing question, namely, we show that the $\NEXP \subseteq \MIPS$ result of Ito and Vidick \cite{ItoV12} continues to hold even when we require zero knowledge.

\begin{introtheorem}
\label{thm:main-zk}
Every language in $\NEXP$ has a perfect zero knowledge $2$-prover \MIPStar{}. In more detail,
\begin{equation*}
\NEXP \subseteq \PZKMIPSparams{2}{1/2}{\poly(n)}{\poly(n)} \enspace.
\end{equation*}
\end{introtheorem}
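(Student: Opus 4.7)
The plan is to reduce an $\NEXP$-complete problem to an algebraic sumcheck-style statement, build a perfect zero knowledge interactive PCP (IPCP) for it with strong algebraic structure, and then lift that IPCP to a two-prover $\MIPStar$ protocol using a generic transformation built on quantum-sound low-degree testing. This cleanly separates the ZK design (classical, algebraic) from the quantum analysis (lifting).

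First I would start from a succinctly-represented $\NEXP$-complete problem such as succinct oracle 3SAT, whose standard arithmetization yields a low-degree polynomial $F$ (depending on the instance and on the low-degree extension $\hat{w}$ of a witness) that the honest prover claims satisfies $\sum_{x \in H^m} F(x) = 0$ for a subcube $H^m \subseteq \Field^m$. The honest prover's oracle messages are $\hat{w}$ together with auxiliary low-degree polynomials used by the ZK sumcheck. Thus reducing \Cref{thm:main-zk} to building (i) a perfect ZK IPCP with a low-degree-polynomial oracle, and (ii) a sound lifting of such IPCPs into $\MIPStar$ that preserves ZK.

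The heart of the construction is a perfect zero knowledge sumcheck in the IPCP model with bounded-query ZK. I would mask $F$ with an independent random low-degree polynomial $R$ whose sum $b = \sum_{x \in H^m} R(x)$ is sent to the verifier, and run the standard sumcheck on $F + z \cdot R$ for a verifier-chosen $z \in \Field$; the verifier's only \emph{oracle} queries land on $\hat{w}$ and $R$. The point is that any bounded set of evaluations of $R$ plus the partial sums revealed by the sumcheck transcript must be perfectly simulatable without $F$. Designing $R$ so that this holds is the role of the new algebraic commitment scheme: $R$ is sampled from a distribution whose span of induced linear functionals (evaluations plus iterated partial sums along the sumcheck's variable-by-variable reduction) has full rank up to the query bound, so the joint view is the uniform distribution on an affine subspace that a simulator samples directly. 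Proving this non-degeneracy is where algebraic complexity tools enter: it reduces to a lower bound on the rank of a structured evaluation-plus-summation matrix over $\Field$.

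Next I would lift the IPCP to a two-prover $\MIPStar$. Following the Ito--Vidick template and its refinements, the first prover plays the role of the PCP oracle (answering point queries on $\hat{w}, R$), while the second prover participates in an $\MIPStar$-sound low-degree test (plane--point or axis-parallel line test) certifying that the first prover's answers are consistent with a global low-degree function. The classical sumcheck rounds are carried out in parallel with both provers and answers are cross-checked. Soundness against entangled strategies is inherited from the quantum soundness of the low-degree test, which produces a low-degree function consistent with the provers' measurements; conditioned on this, the classical analysis of the ZK sumcheck yields the claimed soundness error. For zero knowledge, the honest provers use \emph{shared classical randomness} to jointly sample $\hat{w}$ and $R$, so the verifier's view is identical to its view in the IPCP; since the IPCP has a perfect simulator, the same simulator works against any (classical, possibly cheating) $\MIPStar$ verifier.

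The main obstacle I expect is the zero knowledge sumcheck together with its algebraic commitment analysis: making the masking polynomial $R$ simultaneously (a) of low enough individual degree that the sumcheck on $F + zR$ remains sound with polynomial communication, (b) rich enough that the bounded-query plus partial-sum functionals are linearly independent, and (c) efficiently samplable by the honest prover. A secondary difficulty is arranging the IPCP verifier's queries to conform to the narrow interface (point/plane restrictions of a single low-degree function) for which quantum-sound low-degree tests are known, so that the lifting to $\MIPStar$ goes through without blowing up query complexity or breaking the simulator.
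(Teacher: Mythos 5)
Your high-level plan --- arithmetize an $\NEXP$-complete problem, build a zero-knowledge sumcheck-based IPCP whose oracle is a low-degree polynomial, then lift to a $2$-prover $\MIPStar$ via a quantum-sound low-degree test --- matches the paper's decomposition. But there are two places where the proposal as written would not go through.

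The central gap is in the ZK sumcheck. You propose to place the mask $R$ (and the witness extension $\hat{w}$) directly in the IPCP oracle, choose $R$'s distribution so that the evaluations and partial sums a bounded-query verifier sees form a full-rank system, and simulate by sampling the resulting affine subspace. That is precisely the \emph{weak} ZK sumcheck of \cite{BenSassonCFGRS17}, whose guarantee is that a verifier making $q$ oracle queries learns at most $q$ evaluations of the summand $F$. That was enough in the $\sharpP$ setting, where the verifier can evaluate $F$ itself, but it is not enough here: the lifting lemma (\cref{lem:lifting}) requires zero knowledge against roughly $\Omega(d^{2})$ queries (the plane lookup alone forces a budget of $(d+1)^{2}$), while a degree-$d$ polynomial oracle gives only $d$-wise independence, so with $O(d)$ queries to $R$ or to $\hat{w}$ the verifier learns $O(d)$ evaluations of $F$, which leaks the witness. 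The paper's fix is to send neither $R$ nor $\hat{w}$ directly: instead the oracle is an \emph{algebraic commitment} $Z$, a polynomial in $m+k$ variables whose inner sum $\sum_{\vec{\beta}\in G^{k}} Z(\vec{X},\vec{\beta})$ recovers $R(\vec{X})$ (and likewise for $\hat{w}$). A new lower bound on the algebraic query complexity of polynomial summation (\cref{lem:sum-query-lower-bound}) shows that fewer than $|G|^{k}$ queries to $Z$ reveal \emph{nothing} about $R$, and the prover decommits a single value $R(\vec{c})$ by running the weak ZK sumcheck on $Z$ as a subroutine. This yields the strong guarantee --- a poly-query verifier learns exactly one evaluation of $F$ --- which is what the lifting lemma can tolerate. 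Your rank argument, without the commitment layer, cannot get past the $d$-wise-independence ceiling of a degree-$d$ oracle.

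A second gap is in the lifting. You say the sumcheck rounds are carried out ``in parallel with both provers,'' and you conclude ZK from the IPCP simulator because the honest provers use shared classical randomness. But a cheating $\MIPStar$ verifier is not constrained to address the provers as the honest verifier does: if it can run two correlated IPCP interactions, one with each prover, then the (single-interaction) IPCP simulator does not apply, and indeed the paper's ZK IPCP provably breaks under two parallel interactions (see \cref{rem:breaking-zk}). The paper therefore uses a non-standard symmetrization: the \emph{provers} flip a shared coin to designate a primary and a secondary prover; only the primary may ever play the IPCP-prover role, and the secondary answers only point or plane lookups. This keeps the joint strategy symmetric (which the entangled low-degree test analysis requires) while guaranteeing that any malicious verifier obtains at most one IPCP-prover interaction, so the IPCP simulator can be invoked.
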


We stress that the \MIPStar{} protocols of \cref{thm:main-zk} enjoy both unconditional soundness against entangled provers as well as unconditional (perfect) zero knowledge against \emph{any} (possibly malicious) verifier.

\subsection{Other notions of quantum zero knowledge}
\label{sec:other-zk-notions}

To the best of our knowledge, this work is the first to study the notion of zero knowledge with entangled provers, as captured by the \MIPStar{} model. Nevertheless, zero knowledge has been studied in other settings in the quantum information and computation literature; we now briefly recall these.

Watrous \cite{Watrous02} introduced \emph{honest-verifier} zero knowledge for quantum interactive proofs (interactive proofs in which the prover and verifier are quantum machines), and studied the resulting complexity class $\QSZKHV$. Kobayashi \cite{Kobayashi03} studied a non-interactive variant of this notion. Damg{\aa}rd, Fehr, and Salvail \cite{DamgardFS04} achieve zero knowledge for $\NP$ against malicious quantum verifiers, but only via \emph{arguments} (i.e., computationally sound proofs) in the common reference string model. Subsequently, Watrous \cite{Watrous09} constructed quantum interactive proofs that remain zero knowledge against malicious quantum verifiers.

Zero knowledge for quantum interactive proofs has since then remained an active area of research, and several aspects and variants of it were studied in recent works, including the power of public-coin interaction \cite{Kobayashi08}, quantum proofs of knowledge \cite{Unruh12}, zero knowledge in the quantum random oracle model \cite{Unruh15}, zero knowledge proof systems for $\QMA$ \cite{BroadbentJSW16}, and oracle separations for quantum statistical zero knowledge \cite{MendaW18}.

All the above works consider protocols between a \emph{single} quantum prover and a quantum verifier. In particular, they do not study entanglement as a shared resource between two (or more) provers.

In contrast, the \MIPStar{} protocols that we study differ from the protocols above in two main aspects:
\begin{inparaenum}[(1)]
	\item our proof systems have multiple spatially-isolated provers that share an \qstate{}, and
	\item it suffices that the honest verifier is a \emph{classical} machine.
\end{inparaenum}
Indeed, we show that, analogously to the classical setting, \MIPStar{} protocols can achieve \emph{unconditional} zero knowledge for a much larger complexity class (namely, $\NEXP$) than possible for QSZK protocols (since $\QSZK \subseteq \QIP = \PSPACE$).


\clearpage
\section{Techniques}
\label{sec:techniques}

We begin by discussing the challenge that arises when trying to prove that $\NEXP \subseteq \PZKMIPS$, by outlining a natural approach to obtaining zero knowledge \MIPStar{} protocols, and considering why it fails.

\subsection{The challenge}

We know that every language in $\NEXP$ has a (perfect) zero knowledge MIP protocol, namely, that $\NEXP \subseteq \PZKMIP$ \cite{BenOrGKW88}. We also know that every language in $\NEXP$ has an \MIPStar{} protocol, namely, that $\NEXP \subseteq \MIPS$ \cite{ItoV12}. Is it then not possible to simply combine these two facts and deduce that every language in $\NEXP$ has a (perfect) zero knowledge \MIPStar{}? 

The challenge is that the standard techniques used to construct zero knowledge MIP protocols do not seem compatible with those used to construct \MIPStar{} protocols for large classes. In fact, the former are precisely the type of techniques that prove to be very limited for obtaining soundness against entangled provers.

In more detail, while constructions of MIP (and PCP) protocols typically capitalize on an \emph{algebraic} structure, known constructions of \emph{zero knowledge} MIPs are of a \emph{combinatorial} nature. For example, the zero knowledge MIP in \cite{BenOrGKW88} is based on a multi-prover information-theoretic commitment scheme, which can be thought of as a CHSH-like game. The zero knowledge MIP in \cite{DworkFKNS92} is obtained via the standard transformation from zero knowledge PCPs, which is a form of consistency game. Unfortunately, these types of constructions do not appear resistant to entangled provers, nor is it clear how one can modify them to obtain this resistance without leveraging some algebraic structure.

Indeed, initial attempts to show that $\NEXP \subseteq \MIPS$ (e.g., \cite{ItoKPSY08,ItoKM09,KempeKMTV11}) tried to apply some black box transformation to an arbitrarily structured (classical) MIP protocol to force the provers to behave as if they are not entangled, and then appeal to standard MIP soundness. These works were only able to obtain limited protocols (e.g., with very large soundness error).

In their breakthrough paper, Ito and Vidick \cite{ItoV12} overcame this hurdle and showed that $\NEXP \subseteq \MIPS$ by taking a different route: rather than a black box transformation, they modified and reanalyzed a particular proof system, namely the MIP protocol for $\NEXP$ in \cite{BabaiFL91}, while leveraging and crucially using its algebraic structure. (Subsequent works \cite{Vidick16,NatarajanV17} improved this result by reducing the number of provers and rounds to a minimum, showing \MIPStar{} protocols for $\NEXP$ with two provers and one round.)

In sum, the challenge lies in the apparent incompatibility between techniques used for zero knowledge and those used for soundness against entangled provers.

\subsection{High-level overview}

Our strategy for proving our main result is to bridge the aforementioned gap by isolating the role of algebra in granting soundness against entangled provers, and developing new algebraic techniques for zero knowledge. Our proof of \cref{thm:main-zk} thus consists of two parts.
\begin{enumerate}[label=(\Roman*)]

	\item \textbf{Lifting lemma:}
	a black box transformation from algebraically-structured classical protocols into corresponding \MIPStar{} protocols, which preserves zero knowledge.
	
	\item \textbf{Algebraic zero knowledge:}
	a new construction of zero knowledge algebraically-structured protocols for any language in $\NEXP$.

\end{enumerate}
The first part is primarily a conceptual contribution, and it deals with quantum aspects of proof systems. The second part is our main technical contribution, and it deals with classical protocols (it does not require any background in quantum information). We briefly discuss each of the parts, and then provide an overview of the first part in \cref{sec:techniques-part-I} and of the second part in \cref{sec:techniques-part-II}.

In the first part of the proof, we build on recent advances in low-degree testing against entangled provers, and provide an abstraction of techniques in \cite{ItoV12,Vidick16,NatarajanV17}. We prove a \emph{lifting lemma} (\cref{lem:lifting}) that transforms a class of algebraically-structured classical protocols into \MIPStar{} protocols, while preserving zero knowledge. This provides a generic framework for constructing \MIPStar{} protocols, while decoupling the mechanisms responsible for soundness against entangled provers from other classical components.

In the second part of the proof, we construct an algebraically-structured zero knowledge classical protocol, which we refer to as a \emph{low-degree interactive PCP}, to which we apply the lifting lemma, completing the proof. At the heart of our techniques is a strong zero knowledge variant of the sumcheck protocol \cite{LundFKN92} (a fundamental subroutine in many probabilistic proof systems), which we deem of independent interest. In turn, a key component in our zero knowledge sumcheck is a new algebraic commitment scheme, whose hiding property is guaranteed by algebraic query complexity lower bounds \cite{AaronsonW09,JumaKRS09}. These shed more light on the connection of zero knowledge to algebraic complexity theory.

We summarize the roadmap towards proving \cref{thm:main-zk} in \cref{sec:roadmap}.


\subsection{Part I: lifting classical proof systems to \MIPStar{}}
\label{sec:techniques-part-I}

The first step towards obtaining a generic framework for transforming classical protocols into corresponding \MIPStar{} protocols is making a simple, yet crucial, observation. Namely, while the result in \cite{ItoV12} is stated as a white box modification of the MIP protocol in \cite{BabaiFL91}, we observe that the techniques used there can in fact be applied more generally. That is, we observe that \emph{any} \DoQuote{low-degree interactive PCP}, a type of algebraically structured proof system that underlies (implicitly and explicitly) many constructions in the probabilistic proof systems literature, can be transformed into a corresponding \MIPStar{} protocol.

The first part of the proof of \cref{thm:main-zk} formalizes this idea, identifying sufficient conditions to apply the techniques of \cite{ItoV12,Vidick16}, and showing a lifting lemma that transforms protocols satisfying these conditions into \MIPStar{} protocols. We relate features of the original protocol to those of the resulting \MIPStar{} protocols, such as round complexity and, crucially, zero knowledge.

To make this discussion more accurate, we next define and discuss low-degree interactive PCPs.


\subsubsection{Low-degree interactive PCPs}

An \emph{Interactive PCP} (IPCP), a proof system whose systematic study was initiated by Kalai and Raz \cite{KalaiR08}, naturally extends the notions of a probabilistically checkable proof (PCP) and an interactive proof (IP). An $\RoundComplexity$-round IPCP is a two-phase protocol in which a computationally unbounded \emph{prover} $\Prover$ tries to convince a polynomial-time \emph{verifier} $\Verifier$ that an input $\Instance$, given to both parties, is in a language $\Language$. First, the prover sends to the verifier a PCP oracle (a purported proof that $x \in \Language$), which the verifier can query at any time. Second, the prover and verifier engage in an $\RoundComplexity$-round IP, at the end of which the verifier either accepts or rejects.\footnote{Alternatively, an IPCP can be viewed as a PCP that is verified \emph{interactively} (by an IP, instead of a randomized algorithm).} Completeness and soundness are defined in the usual way. 

In this work we consider a type of algebraically-structured IPCP, which we call a \emph{low-degree IPCPs}. This notion implicitly (and semi-explicitly) underlies many probabilistic proof systems in the literature. Informally, a low-degree IPCP is an IPCP satisfying the following:
\begin{inparaenum}[(1)]
  \item \emph{low-degree completeness}, which states that the PCP oracle sent by the (honest) prover is a polynomial of low (individual) degree;
  \item \emph{low-degree soundness}, which relaxes soundness to hold only against provers that send PCP oracles that are low-degree polynomials.
\end{inparaenum}

Low-degree completeness and soundness can be viewed as a promise that the PCP oracle is a low-degree polynomial. Indeed, these conditions are designed to capture \DoQuote{compatibility} with low-degree testing: only protocols with low-degree completeness will pass a low-degree test with probability $1$; moreover, adding a low-degree test to an IPCP with low-degree soundness results (roughly) in an IPCP with standard soundness.

\subsubsection{From low-degree IPCP to \MIPStar{}}

We show that any low-degree IPCP can be transformed into a corresponding \MIPStar{} protocol, in a way that preserves zero knowledge (for a sufficiently strong notion of zero knowledge IPCP). To this end, we use an entanglement-resistant low degree test, which allows us to essentially restrict the provers usage of the \qstate{} to strategies that can be approximately implemented via randomness shared among the provers. Informally, the idea is that by carefully invoking such a test, we can let one prover take on the role of the PCP oracle, and the other to take the role of the IPCP prover, and then emulate the entire IPCP protocol.

In more detail, we show a zero-knowledge-preserving transformation of low-degree IPCPs to \MIPStar{} protocols, which is captured by the following lifting lemma.

\begin{lemma}[informally stated, see \cref{lem:lifting}]
\label{lem:lifting-informal}
There exists a transformation $\Transformation$ that takes an $\RoundComplexity$-round low-degree IPCP $(\Prover',\Verifier')$ for a language $\Language$, and outputs a $2$-prover $(\RoundComplexity^* + 2)$-round \MIPStar{} $(\Prover_1, \Prover_2, \Verifier) \DefineEqual \Transformation(\Prover',\Verifier')$ for $\Language$, where $\RoundComplexity^* = \max\Set{\RoundComplexity,1}$.
Moreover, this transformation preserves zero knowledge.\footnote{More accurately, we require the given IPCP to be zero knowledge with query bound that is roughly quadratic in the degree of the PCP oracle. See \cref{sec:LDIPCP_to_MIPS} for details.}
\end{lemma}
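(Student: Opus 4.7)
The plan is to exhibit an explicit transformation $\Transformation$ that converts an $\RoundComplexity$-round low-degree IPCP $(\Prover', \Verifier')$ into a $2$-prover $(\RoundComplexity^* + 2)$-round \MIPStar{} protocol $(\Prover_1, \Prover_2, \Verifier)$, and then verify completeness, soundness against entangled provers, and zero knowledge preservation separately. In the transformation, prover $\Prover_1$ plays the role of the PCP oracle sent by $\Prover'$: on any query $\alpha$ to the PCP, the verifier $\Verifier$ forwards $\alpha$ to $\Prover_1$ and expects to receive the value at $\alpha$ of some low-degree polynomial. Prover $\Prover_2$ plays the role of the IPCP prover in the $\RoundComplexity$-round interactive phase: $\Verifier$ emulates $\Verifier'$ message-by-message against $\Prover_2$, routing any intermediate PCP queries to $\Prover_1$. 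To force $\Prover_1$'s strategy to be (approximately) a low-degree polynomial, $\Verifier$ reserves the two extra rounds to execute an entanglement-resistant low-degree test against $\Prover_1$, drawing on the testers developed in \cite{ItoV12,Vidick16,NatarajanV17}.

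Completeness is immediate: the honest $\Prover_1$ answers queries by evaluating the low-degree polynomial that $\Prover'$ would have sent as its oracle, and $\Prover_2$ behaves exactly like $\Prover'$ in the interactive phase. Since these strategies are classical and do not rely on any entangled state, the IPCP's completeness carries over, and low-degree completeness ensures the embedded low-degree test accepts with probability one.

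For soundness against entangled provers, I would combine two ingredients. First, the quantum soundness of the low-degree test: if $\Prover_1$ passes the test with probability at least $1 - \varepsilon$ against some entangled state $\Psi$, then its query-answering measurements are close, in a suitable state-dependent norm, to measurements that jointly evaluate some actual low-degree polynomial $\tilde\pi$. Second, the low-degree soundness of the underlying IPCP: against any purported PCP oracle that is a low-degree polynomial, $\Verifier'$ rejects false claims with the stated soundness error. I would then couple the transcript of the emulated interactive phase with the hypothetical classical execution in which $\Prover_1$'s query answers are replaced by $\tilde\pi$: the quantum soundness of the low-degree test bounds the statistical distance between these two transcript distributions, while low-degree soundness upper bounds the acceptance probability of the classical execution.

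Zero knowledge preservation is where the quadratic query-bound assumption enters. Given a simulator $\Simulator'$ for the IPCP that handles up to $\QueryBound$ queries to the PCP oracle, I would build an \MIPStar{} simulator $\Simulator$ that, for any (possibly malicious, possibly entangled) verifier, simulates the full view by running $\Simulator'$ to answer the queries that the verifier routes to $\Prover_1$ during the interactive phase, and \emph{also} routing through $\Simulator'$ the queries issued by the honest verifier during the low-degree test. Since the number of queries made by the low-degree test is polynomial in the degree, while the interactive phase itself issues at most $\QueryBound$ queries in any execution, the total query load is roughly quadratic in the degree, matching the hypothesis. The main obstacle will be the quantum soundness analysis: carefully translating the operator-level closeness statement coming from the low-degree test into a statement about the acceptance probability of the \emph{entire} composed protocol, in which the interactive phase with $\Prover_2$ is interleaved with queries to $\Prover_1$, since $\Prover_1$ and $\Prover_2$ may share arbitrary entanglement. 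This demands a careful hybrid argument showing that swapping $\Prover_1$'s real answers for those of $\tilde\pi$ changes the overall acceptance probability by only a small additive amount, despite the adversarial quantum correlations across the two provers and across the rounds.
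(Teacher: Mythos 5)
Your high-level outline captures the right idea (one prover as oracle lookup, one as IPCP prover, plus an entanglement-resistant low-degree test), but it is missing several steps that are essential for the argument to go through, and one architectural choice would actively break the proof.

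First, you do not include the preprocessing step that the paper invokes as \cref{prop:round-reduction}. You describe the \MIPStar{} verifier ``routing any intermediate PCP queries to $\Prover_1$,'' but the quantum low-degree test (\cref{thm:quantum_low_degree_test}) yields a guarantee about a prover answering a \emph{single uniformly distributed} point query consistently with some low-degree polynomial. It does not directly control what happens when the verifier issues several adaptively chosen queries interleaved with the interactive phase. The paper resolves this by first transforming the low-degree IPCP into one whose verifier makes exactly one uniform query after the interaction ends (adapting a curve-passing technique from Kalai--Raz); this accounts for one of the two extra rounds. Without this reduction, the coupling argument you sketch between the real execution and the execution against a fixed $\tilde\pi$ does not follow from the stated LDT guarantee.

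Second, and more seriously, your fixed role assignment (``$\Prover_1$ is the PCP oracle, $\Prover_2$ is the IPCP prover'') is incompatible with the hypothesis of the entanglement-resistant low-degree test, which requires the two provers to play \emph{symmetric} strategies (\cref{thm:quantum_low_degree_test}, see also \cref{lem:asymmetry}). The plane-vs-point test also inherently needs both provers to act as lookups (one for planes, one for points), so at least sometimes $\Prover_2$ must serve as a lookup rather than the IPCP prover. The paper's main technical twist is a \emph{non-standard symmetrization}: the provers measure their shared state to flip a coin deciding who is ``primary'' and who is ``secondary,'' the primary prover may be assigned any role including $\ProverMain$, and the secondary is restricted to lookup roles. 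The verifier then flips a coin to choose between the LDT and IPCP emulation. This keeps the overall prover strategy symmetric (so \cref{thm:quantum_low_degree_test} applies) while also guaranteeing that no malicious verifier can ever force \emph{both} provers into the IPCP-prover role in the same execution --- a condition the paper points out (\cref{rem:breaking-zk}) is necessary for zero knowledge, because the underlying IPCP for $\NEXP$ stops being zero knowledge when the prover is run in parallel with itself. Your proposal avoids the parallel-interaction problem by fixing roles, but at the price of making the LDT unsound; the paper's construction is precisely how to get both at once. Your zero knowledge accounting (routing LDT queries through $\Simulator'$ and bounding by roughly $(d+1)^2$) is in the right spirit, but it presumes a protocol structure that, as described, cannot be analyzed with the available low-degree test.
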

We stress that the simplicity of the lifting lemma is a key feature since, as we describe below, it requires us to only make small structural changes to the IPCP protocol. This facilitates the preservation of various complexity measures and properties, such as zero knowledge.

To prove this lemma, a key tool that we use is a new low-degree test by Natarajan and Vidick \cite{NatarajanV17},\footnote{If we do not aim to obtain the optimal number of provers in our \MIPStar{} protocols, then it it suffices to use (an adaptation of) the low-degree test in \cite{Vidick16}.} which adapts the celebrated plane-vs-point test of Raz and Safra \cite{RazS97} to the \MIPStar{} model. A \emph{low-degree test} is a procedure used to determine if a given function $f \colon \Field^\SCVars \to \Field$ is close to a low-degree polynomial or if, instead, it is far from all low-degree polynomials, by examining $f$ at very few locations. In the plane-vs-point test, the verifier specifies a random $2$-dimensional plane in $\Field^\SCVars$ to one prover and a random point on this plane to the other prover; each prover replies with the purported value of $f$ on the received plane or point; then the verifier checks that these values are consistent.

Informally, the analysis in \cite{NatarajanV17} asserts that every entangled strategy that passes this test with high probably must satisfy an algebraic structure; more specifically, to pass this test the provers can only use their shared \qstate{} to (approximately) agree on a low-degree polynomial according to which they answer. We use the following soundness analysis of the this protocol. (See \cref{sec:quantum-information} for the standard quantum notation used in the theorem below.)

\begin{theorem}[{\cite[Theorem 2]{NatarajanV17}, informally stated}]
\label{thm:quantum_low_degree_test_informal}
There exists an absolute constant $c \in (0,1)$ such that, for every soundness parameter $\SoundnessError > 0$, number of variables $\SCVars \in \Naturals$, degree $\SCDegree \in \Naturals$, and finite field $\Field$, there exists a low-degree test $T$ for which the following holds. For every \emph{symmetric} entangled prover strategy and measurements $\Set{\Measurement{A}{\Point}{\Element}}_{\Element \in \Field,\Point \in \Field^\SCVars}$ that are accepted by $T$ with probability at least $1-\SoundnessError$, there exists a measurement $\Set{\Measurement{L}{}{\Poly}}_{\Poly}$, where $\Poly$ is an $m$-variate polynomial of degree $d$, such that:	
\begin{enumerate}[nolistsep]
	
  \item \emph{Approximate consistency with $\Set{\Measurement{A}{\Point}{\Element}}$:}
  $\E_{\Point \in \Field^\SCVars} \sum_{\Poly} \sum_{\Element \neq \Poly(\Point)} \bra{\Psi} \Measurement{A}{\Point}{\Element} \otimes \Measurement{L}{}{\Poly} \ket{\Psi} \leq \SoundnessError^c$.

  \item \emph{Self-consistency of $\Set{\Measurement{L}{}{\Poly}}$:}
  $\sum_{\Poly} \bra{\Psi} \Measurement{L}{}{\Poly} \otimes (\Id -\Measurement{L}{}{\Poly}) \ket{\Psi} \leq \SoundnessError^c$.

\end{enumerate}
\end{theorem}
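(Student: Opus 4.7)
The theorem is a quantum soundness analysis of the plane-vs-point low-degree test, where Bob is queried on a random affine plane $s \subset \Field^\SCVars$ and returns (the coefficients of) a bivariate polynomial $g$ of individual degree $\leq \SCDegree$, while Alice is queried on a random point $\Point \in s$ and returns a field element $\Element$; the test accepts iff $g(\Point)=\Element$. The task is to extract a single joint polynomial-valued measurement $\Set{\Measurement{L}{}{\Poly}}$ that approximately explains both provers' answers. My plan follows the template of Natarajan--Vidick, which is itself a quantum lift of the classical Raz--Safra analysis, but divided into a clean sequence of three reductions.

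First I would reduce the point measurements to plane measurements using symmetry and the test passing probability. By applying Naimark's dilation I may assume $\Set{\Measurement{A}{\Point}{\Element}}$ and $\Set{\Measurement{B}{s}{g}}$ are projective. The assumption that the test passes with probability $\geq 1-\SoundnessError$ becomes, in the state-dependent norm $\NormPsi{\cdot}$, an approximate consistency relation $\Measurement{A}{\Point}{\Element} \otimes \Id \approx \Id \otimes \sum_{g : g(\Point)=\Element}\Measurement{B}{s}{g}$ when averaged over $\Point \in s$. By symmetry of the strategy, the same relation holds with the roles of the two registers swapped. This already determines Alice's measurements (up to $\poly(\SoundnessError)$ error) from Bob's plane measurements, so the remaining task is to globalize Bob's plane measurements into a single polynomial-valued measurement.

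Second, I would run a quantum Raz--Safra pasting argument to produce $\Set{\Measurement{L}{}{\Poly}}$. Classically, one composes the plane-to-line and line-to-point projections to show that a pass rate of $1-\SoundnessError$ forces the plane answers to be the restrictions of a single global polynomial $\Poly$ of degree $\leq \SCDegree$ on an $1-\poly(\SoundnessError)$ fraction of planes. Quantumly, one proves approximate commutation of plane measurements on overlapping planes (inherited from consistency with the shared point measurements), and then defines $\Measurement{L}{}{\Poly}$ as an appropriate coarse-graining of a sequence of plane measurements along axis-aligned directions. The key quantum ingredient is that approximate commutation plus self-consistency lets one argue that this sequential measurement has a single well-defined outcome $\Poly$ with probability $1-\poly(\SoundnessError)$, yielding the self-consistency bound in item~(2).

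Finally, combining the first two steps gives item~(1): since $\Measurement{L}{}{\Poly}$ agrees with the plane measurement on $s$ (up to $\poly(\SoundnessError)$), and the plane measurement on $s$ agrees with Alice's point measurement at $\Point \in s$ (up to $\poly(\SoundnessError)$), the triangle inequality in $\NormPsi{\cdot}$ yields
\begin{equation*}
\E_{\Point \in \Field^\SCVars}\sum_{\Poly}\sum_{\Element \neq \Poly(\Point)} \bra{\Psi} \Measurement{A}{\Point}{\Element} \otimes \Measurement{L}{}{\Poly} \ket{\Psi} \leq \SoundnessError^{c}
\end{equation*}
for some absolute constant $c \in (0,1)$ determined by the worst error accumulation across the reductions. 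The main obstacle, and the most delicate aspect of the proof, is the quantum pasting step: classically one freely resamples lines and planes, but quantum measurements disturb $\ket{\Psi}$, so each additional measurement in the sequence costs error that must be bounded in $\NormPsi{\cdot}$. Controlling this disturbance, and extracting a polynomial dependence $\SoundnessError^{c}$ rather than a much weaker bound (e.g.\ $\SoundnessError^{1/\log(1/\SoundnessError)}$), requires a careful induction on the number of pasted planes together with a quantum union bound and the exact form of the Raz--Safra combinatorics; this is where the bulk of the technical work in \cite{NatarajanV17} lies, and I would treat it as a black-box ingredient rather than reprove it.
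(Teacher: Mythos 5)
This theorem is explicitly labeled \DoQuote{informally stated} and attributed to \cite[Theorem 2]{NatarajanV17}; the paper does not prove it, it imports it as an external black box, so there is no in-paper proof to compare against. Your sketch is a reasonable high-level narrative of the Natarajan--Vidick argument (Naimark dilation, symmetrization, reducing point answers to plane answers, a quantum Raz--Safra pasting step to build the global polynomial measurement, then error accounting to get a $\SoundnessError^c$ bound), and you correctly identify the pasting step as the technical heart and explicitly defer to the original source for it. That is exactly the paper's posture too, so in substance your proposal and the paper do the same thing: cite. Two small points are worth flagging. First, in the total-degree plane-vs-point test that \cite{NatarajanV17} actually analyze, the plane prover returns a bivariate polynomial of bounded \emph{total} degree, not bounded individual degree as you wrote; the informal theorem statement here is agnostic on this, but it matters once one wants a precise constant. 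Second, the version of this result that the paper actually proves is the refinement to \emph{individual} degree, \cref{thm:quantum_low_degree_test}, and that proof has a quite different shape from what you outline: it does not re-derive the pasting analysis but instead wraps the cited total-degree theorem with an additional axis-parallel univariate sub-test and a short error-accounting argument, following the classical individual-to-total reduction of \cite{GoldreichS06}. If you wanted to contribute a proof that lives inside this paper's scope, that reduction is the step to reproduce, not the core Raz--Safra globalization.
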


In fact, we actually use a more refined version, which tests a polynomial's \emph{individual} degree rather than its \emph{total} degree. In the classical setting, such a test is implicit in \cite{GoldreichS06} via a reduction from individual-degree to total-degree testing. Informally, this reduction first invokes the test for low total degree, then performs univariate low-degree testing with respect to a random axis-parallel line in each axis. We extend this reduction and its analysis to the setting of \MIPStar{}. (See \cref{sec:quantum-individual} for details.) The analysis of the low individual degree test was communicated to us by Thomas Vidick, to whom we are grateful for allowing us to include it here.

With the foregoing low-degree test at our disposal, we are ready to outline the simple transformation from low-degree IPCPs to \MIPStar{} protocols. We begin with a preprocessing step. Note that the low individual degree test provides us with means to assert that the provers can (approximately) only use their \qstate{} to choose a low-degree polynomial $\Poly$, and answer the verifier with the evaluation of $\Poly$ on a \emph{single}, uniformly distributed point (or plane). Thus, it is important that the IPCP verifier (which we start from) only makes a single uniform query to its oracle. By adapting techniques from \cite{KalaiR08}, we can leverage the algebraic structure of the low-degree IPCP and capitalize on the interaction to ensure the IPCP verifier has this property, at essentially the cost of increasing the round complexity by $1$.\footnote{Indeed, if the original IPCP verifier makes a single uniform query to its oracle, then we can save a round in \cref{lem:lifting-informal}; that is, we obtain an \MIPStar{} with round complexity $\RoundComplexity^*+1$, rather than $\RoundComplexity^*+2$.}

Thus we have a low-degree IPCP, with prover $\Prover$ and verifier $\Verifier$, in which the verification takes place as follows. Both $\Prover$ and $\Verifier$ receive an explicit input $x$ that is allegedly in the language $\Language$. In addition, $\Verifier$ is granted oracle access to a purported low-degree polynomial $\Oracle$, whose full description is known to $\Prover$. The parties engage in an $r$-round interaction, at the end of which $\Verifier$ is allowed to make a single uniform query to $\Oracle$ and decide whether $x \in \Language$ (with high probability).

We transform this IPCP into a $2$-prover \MIPStar{} by considering the following protocol.
First, the verifier chooses uniformly at random whether to (1) invoke a low-degree test, in which it asks one prover to evaluate $\Oracle$ on a random plane or axis-parallel line and the other prover to evaluate $\Oracle$ on a random point on this plane or line, or (2) emulate the IPCP protocol, in which one prover plays the role of the IPCP prover and the other acts as lookup for $\Oracle$.

We use the approximate consistency condition of \cref{thm:quantum_low_degree_test_informal} to assert that the lookup prover approximately answers according to a low-degree polynomial, and use the self-consistency condition to ensure that both provers are consistently answering according to the \emph{same} low-degree polynomial.\footnote{Since the players are allowed the use of entanglement, we cannot hope for a single function that underlies their strategy. Indeed, the players could measure their \qstate{} to obtain shared randomness and select a random $\Oracle$ according to which they answer.}

We remark that preserving zero knowledge introduces some subtle technicalities (which we resolve), the main of which is that because the analysis of the entanglement-resistant low individual degree test requires that the provers employ \emph{symmetric} strategies, we need to perform a non-standard symmetrization (since standard symmetrization turns out to break zero knowledge in our case). See \cref{sec:LDIPCP_to_MIPS} for details.

\subsubsection{Towards zero knowledge \MIPStar{} for nondeterministic exponential time}
\label{sec:techniques-towards}

Equipped with the lifting lemma, we are left with the task of constructing classical zero knowledge low-degree IPCPs for all languages in $\NEXP$. We first explain why current constructions do \emph{not} suffice for this purpose.

The first thing to observe is that the classical protocol for the $\NEXP$-complete language \emph{Oracle 3SAT} by Babai, Fortnow, and Lund \cite{BabaiFL91} (neglecting the multilinearity test) can be viewed as low-degree IPCP. Indeed, in \cite{BabaiFL91} the protocol is stated as an \DoQuote{oracle protocol}, which is equivalent to an IPCP. The oracle is encoded as a low-degree polynomial, and so low-degree completeness is satisfied. Alas, the foregoing protocol is \emph{not} zero knowledge.
We remark that since the \MIPStar{} protocol in \cite{ItoV12} relies on the protocol in \cite{BabaiFL91}, the former inherits the lack of zero knowledge from the latter.

Proceeding to consider classical \emph{zero knowledge} proof systems, for example the protocols in \cite{DworkFKNS92,KilianPT97,GoyalIMS10}, we observe that while some of these proof systems can be viewed as IPCPs, they are not \emph{low-degree} IPCPs. This is because they achieve zero knowledge via combinatorial techniques that do \emph{not} admit the algebraic structure that we require. We stress that the natural way of endowing an IPCP with algebraic structure by taking the low-degree extension of the PCP oracle does \emph{not} necessarily preserve zero knowledge.\footnote{Intuitively, a single point in the encoded oracle can summarize a large amount of information from the original oracle (e.g., very large linear combinations).} Correspondingly, the \MIPStar{} protocols in \cite{Vidick16,NatarajanV17}, which rely on applying the low-degree extension code to a PCP, do not preserve zero knowledge for this reason.

Finally, we observe that recent advances in algebraic zero knowledge \cite{BenSassonCFGRS17} (building on techniques from \cite{BenSassonCGV16}) already provide us with a classical proof system that is compatible with our framework, and can thus be used to derive a zero knowledge \MIPStar{} protocol, albeit only for languages in $\sharpP$. 

To strengthen the aforementioned result and show that $\NEXP \subseteq \PZKMIPS$ (matching the $\NEXP \subseteq \MIPS$ containment, and showing that zero knowledge can, in a sense, be obtained for \DoQuote{free} in the setting of \MIPStar{} protocols), we need to construct a \emph{much stronger} zero knowledge low-degree IPCP. The second part of \cref{thm:main-zk}, which is our main technical contribution, provides exactly that. We proceed to provide an overview of the techniques that we use to construct such protocols.

\subsection{Part II: new algebraic techniques for zero knowledge}
\label{sec:techniques-part-II}

The techniques discussed thus far tell us that, if we wish to obtain a zero knowledge \MIPStar{} for $\NEXP$, it suffices to obtain a zero knowledge \emph{low-degree} IPCP for $\NEXP$ (an IPCP wherein the oracle is a low-degree polynomial). Doing so is the second part of our proof of \cref{thm:main-zk}, and for this we develop new algebraic techniques for obtaining zero knowledge protocols. Our techniques, which build on recent developments \cite{BenSassonCGV16,BenSassonCFGRS17}, stand in stark contrast to other known constructions of zero knowledge PCPs and interactive PCPs (such as \cite{DworkFKNS92,KilianPT97,GoyalIMS10}). We remind the reader that this part of our work only deals with classical protocols, and does not require any knowledge of quantum information.

\subsubsection{A zero knowledge low-degree IPCP for $\NEXP$}
\label{sec:techniques-nexp}

Our starting point is the protocol of Babai, Fortnow, and Lund \cite{BabaiFL91} (the \DoQuote{BFL protocol}). We first recall how the BFL protocol works, in order to explain its sources of information leakage and how one could prevent them via algebraic techniques. These are the ideas that underlie our algebraic construction of an unconditional (perfect) zero knowledge low-degree IPCP for $\NEXP$.

\parhead{The BFL protocol, and why it leaks}
\emph{Oracle 3SAT} ($\mathrm{O3SAT}$) is the following $\NEXP$-complete problem: given a boolean formula $B$, does there exist a boolean function $A$ (a witness) such that 
\begin{equation*}
	B(z, b_{1}, b_{2}, b_{3}, A(b_{1}), A(b_{2}), A(b_{3}))=0 \quad \text{for all } z \in \Bits^{r}, b_{1}, b_{2}, b_{3} \in \Bits^{s}
	\;\;\text{?}
\end{equation*}
The BFL protocol is an IPCP for $\mathrm{O3SAT}$ that is then (generically) converted to an MIP. In the BFL protocol, the honest prover first sends a PCP oracle $\LD{A} \colon \Field^{s} \to \Field$ that is the unique multilinear extension (in some finite field $\Field$) of a valid witness $A \colon \Bits^{s} \to \Bits$. The verifier must check that
\begin{inparaenum}[(a)]
	\item $\LD{A}$ is a boolean function on $\Bits^{s}$, and
	\item $\LD{A}$'s restriction to $\Bits^{s}$ is a valid witness for $B$.
\end{inparaenum}
To do these checks, the verifier arithmetizes the formula $B$ into an arithmetic circuit $\LD{B}$, and reduces the checks to conditions that involve $\LD{A}$, $\LD{B}$, and other low-degree polynomials. A technique in \cite{BabaiFLS91} allows the verifier to \DoQuote{bundle} all of these conditions into a single low-degree polynomial $f$ such that (with high probability over the choice of $f$) the conditions hold if and only if $f$ sums to $0$ on $\Bits^{r+3s+3}$. The verifier checks that this is the case by engaging in a sumcheck protocol with the prover.\footnote{The soundness of the sumcheck protocol depends on the PCP oracle being the evaluation of a low-degree polynomial, and so the verifier in \cite{BabaiFL91} checks this using a low-degree test. In our setting of \emph{low-degree} IPCPs a low-degree test is not necessary.}

We observe that the BFL protocol is \emph{not} zero knowledge for two reasons:
\begin{inparaenum}[(i)]
\item the verifier has oracle access to $\LD{A}$ and, in particular, to the witness $A$;
\item the prover's messages during the sumcheck protocol leak further information about $A$ (namely, hard-to-compute partial sums of $f$, which itself depends on $A$).
\end{inparaenum}

\parhead{A blueprint for zero knowledge}
We now describe the \DoQuote{blueprint} for an approach to achieve zero knowledge in the BFL protocol. The prover does not send $\LD{A}$ directly, but instead a \emph{commitment} to it. After this, the prover and verifier engage in a sumcheck protocol with suitable zero knowledge guarantees; at the end of this protocol, the verifier needs to evaluate $f$ at a point of its choice, which involves evaluating $\LD{A}$ at three points. Now the prover reveals the requested values of $\LD{A}$, without leaking any information beyond these, so that the verifier can perform its check. We explain how these ideas motivate the need for certain algebraic tools, which we later develop and use to instantiate our approach.

\parhead{(1) Randomized low-degree extension}
Even if the prover reveals only three values of $\LD{A}$, these may still leak information about $A$. We address this problem via a \emph{randomized low-degree extension}. Indeed, while the prover in the BFL protocol sends the \emph{unique} multilinear extension of $A$, one can verify that \emph{any} extension of $A$ of sufficiently low degree also works. We exploit this flexibility as follows: the prover randomly samples $\LD{A}$ in such a way that any three evaluations of $\LD{A}$ do not reveal any information about $A$. Of course, if any of these evaluations is within the systematic part $\Bits^{s}$, then no extension of $A$ has this property. Nevertheless, during the sumcheck protocol, the prover can ensure that the verifier chooses only evaluations outside of $\Bits^{s}$ (by aborting if the verifier deviates), which incurs only a small increase in the soundness error.\footnote{The honest verifier will be defined so that it always chooses evaluations \emph{outside} of $\Bits^{s}$, so completeness is unaffected.} With this modification in place, it suffices for the prover to let $\LD{A}$ be a random degree-$4$ extension of $A$: by a dimensionality argument, any $3$ evaluations outside of $\Bits^{s}$ are now independent and uniformly random in $\Field$. We are thus able to reduce a claim about $A$ to a claim which contains \emph{no information} about $A$.

\parhead{(2) Algebraic commitments}
As is typical in zero knowledge protocols, the prover will send a \emph{commitment} to $\LD{A}$, and then selectively reveal a limited set of evaluations of $\LD{A}$. The challenge in our setting is that this commitment must \emph{also} be a low-degree polynomial, since we require a low-degree oracle. For this, we devise a new algebraic commitment scheme based on the sumcheck protocol; we discuss this in \cref{sec:techniques-algebraic-commitment}.

\parhead{(3) Sumcheck in zero knowledge}
We need a sumcheck protocol where the prover's messages leak little information about $f$. The prior work in \cite{BenSassonCFGRS17} achieves an IPCP for sumcheck that is \DoQuote{weakly} zero knowledge: any verifier learns at most one evaluation of $f$ for each query it makes to the PCP oracle. If the verifier could evaluate $f$ by itself, as was the case in that paper, this guarantee would suffice for zero knowledge. In our setting, however, the verifier \emph{cannot} evaluate $f$ by itself because $f$ is (necessarily) hidden behind the algebraic commitment.

One approach to compensate would be to further randomize $\LD{A}$ by letting $\LD{A}$ be a random extension of $A$ of some well-chosen degree $d$. Unfortunately, this technique is incompatible with our low-degree IPCP to \MIPStar{} transformation: such a low-degree extension is at most $d$-wise independent, whereas our lifting lemma (\cref{lem:lifting}), and more generally low-degree testing, requires zero knowledge against any $\Omega(d^{2})$ queries.



We resolve this by relying on more algebraic techniques, achieving an IPCP for sumcheck with a much stronger zero knowledge guarantee: any malicious verifier that makes polynomially-many queries to the PCP oracle learns only a \emph{single} evaluation of $f$. This suffices for zero knowledge in our setting: learning one evaluation of $f$ implies learning only three evaluations of $\LD{A}$, which can be made \DoQuote{safe} if $\LD{A}$ is chosen to be a random extension of $A$ of sufficiently high degree. Our sumcheck protocol uses as building blocks both our algebraic commitment scheme and the \DoQuote{weak} zero knowledge sumcheck in \cite{BenSassonCFGRS17}; we summarize its construction in \cref{sec:techniques-strong-zksc}.

\subsubsection{Algebraic commitments from algebraic query complexity lower bounds}
\label{sec:techniques-algebraic-commitment}

We provide a high-level description of an information-theoretic commitment scheme in the low-degree IPCP model (i.e., a low-degree \emph{interactive locking scheme} \cite{GoyalIMS10}). See \cref{sec:algebraic-query-complexity} for the full details.

In this scheme, the prover commits to a message by sending to the verifier a PCP oracle that perfectly hides the message; subsequently, the prover can reveal positions of the message by engaging with the verifier in an interactive proof, whose soundness guarantees statistical binding. 

\parhead{Committing to an element}
We first consider the simple case of committing to a single element $a$ in $\Field$. Let $k$ be a security parameter, and set $N \DefineEqual 2^{k}$. Suppose that the prover samples a random $B$ in $\Field^{N}$ such that $\sum_{i=1}^{N} B_{i} = a$, and sends $B$ to the verifier as a commitment. Observe that any $N-1$ entries of $B$ do not reveal any information about $a$, and so any verifier with oracle access to $B$ that makes fewer than $N$ queries cannot learn any information about $a$. However, as $B$ is unstructured it is not clear how the prover can later convince the verifier that $\sum_{i=1}^{N} B_{i} = a$.

Instead, we can consider imbuing $B$ with additional algebraic structure. Namely, the prover views $B$ as a function from $\Bits^{k}$ to $\Field$, and sends its unique multilinear extension $\LD{B} \colon \Field^{k} \to \Field$ to the verifier. Subsequently, the prover can reveal $a$ to the verifier, and then engage in a sumcheck protocol for the claim \DoQuote{$\sum_{\vec{\beta} \in \Bits^{k}} \LD{B}(\vec{\beta}) = a$} to establish the correctness of $a$. The soundness of the sumcheck protocol protects the verifier against cheating provers and hence guarantees that this scheme is binding.

However, giving $B$ additional structure calls into question the hiding property of the scheme. Indeed, surprisingly, a result of Juma et al.~\cite{JumaKRS09} shows that this new scheme is in fact \emph{not} hiding (in fields of odd characteristic): it holds that $\LD{B}(2^{-1}, \ldots, 2^{-1}) = a \cdot 2^{-k}$ for any choice of $B$, so the verifier can learn $a$ with only a single query to $\LD{B}$!

Sending an extension of $B$ has created a new problem: querying the extension outside of $\Bits^{k}$, the verifier can learn information that may require many queries to $B$ to compute. Indeed, this additional power is precisely what underlies the soundness of the sumcheck protocol. To resolve this, we need to understand what the verifier can learn about $B$ given some low-degree extension $\LD{B}$. This is precisely the setting of \emph{algebraic query complexity} \cite{AaronsonW09}.\footnote{Interestingly, in \cite{AaronsonW09} a connection between algebra and zero knowledge is also exhibited. Namely, to show that the result $\NP \subseteq \FormatComplexityClass{CZK}$ \cite{GoldreichMW91} algebrizes, it is necessary to exploit the algebraic structure of the oracle to design a zero knowledge protocol for verifying the existence of certain sets of query answers.}

Indeed the foregoing theory suggests a natural approach for overcoming the problem created by the extension of $B$: instead of considering the multilinear extension, we can let $\LD{B}$ be chosen uniformly at random from the set of degree-$d$ extensions of $B$, for some $d > 1$. It is not hard to see that if $d$ is very large (say, $\SetCardinality{\Field}$) then $2^{k}$ queries are required to determine the summation of $\LD{B}$ on $\{0,1\}^{k}$. However, we need $d$ to be small to achieve soundness. Fortunately, a result of \cite{JumaKRS09} shows that $d = 2$ suffices: given a random multiquadratic extension $\LD{B}$ of $B$, one needs $2^{k}$ queries to $\LD{B}$ to determine $\sum_{\vec{\beta} \in \Bits^{k}} \LD{B}(\vec{\beta})$.\footnote{This is the main reason why our application to constructing \MIPStar{} protocols requires low-degree test against entangled provers, rather than just a multilinearity test, as was used in \cite{ItoV12}.}

\parhead{Committing to a polynomial}
The prover in our zero knowledge protocols needs to commit not just to a single element but rather to the evaluation of an $m$-variate polynomial $Q$ over $\Field$ of degree $d>1$. We extend our ideas to this setting. We follow a similar general approach, however, arguing the hiding property now requires a \emph{stronger} algebraic query complexity lower bound than the one proved in \cite{JumaKRS09}. Not only do we need to know that the verifier cannot determine $Q(\vec{\alpha})$ for a particular $\vec{\alpha} \in \Field^{m}$, but we need to know that the verifier cannot determine $Q(\vec{\alpha})$ for \emph{any} $\vec{\alpha} \in \Field^{m}$, or even \emph{any linear combination of any such values}. We prove that this stronger guarantee holds in the same parameter regime: if $d > 1$ then $2^{k}$ queries are both necessary and sufficient. See the discussion at the beginning of \cref{sec:algebraic-query-complexity} for a more detailed overview.

\parhead{Decommitting in zero knowledge}
To use our commitment scheme in zero knowledge protocols, we must ensure that, in the decommitment phase, the verifier cannot learn any information beyond the value $a \DefineEqual Q(\vec{\alpha})$, for a chosen $\vec{\alpha}$. To decommit, the prover sends the value $a$ and has to convince the verifier that the claim \DoQuote{$\sum_{\vec{\beta} \in \Bits^{k}} \LD{B}(\vec{\alpha}, \vec{\beta}) = a$} is true. However, if the prover and verifier simply run the sumcheck protocol on this claim, the prover leaks partial sums $\sum_{\vec{\beta} \in \Bits^{k-i}} \LD{B}(\vec{\alpha}, c_{1}, \ldots, c_{i}, \vec{\beta})$, for $c_{1}, \ldots, c_{i} \in \Field$ chosen by the verifier, which could reveal additional information about $Q$. Instead, the prover and verifier run on this claim the IPCP for sumcheck of \cite{BenSassonCFGRS17}, whose \DoQuote{weak} zero knowledge guarantee ensures that this cannot happen. (Thus, in addition to the commitment, the honest prover also sends the evaluation of a random low-degree polynomial as required by the IPCP for sumcheck of \cite{BenSassonCFGRS17}.)

\subsubsection{A zero knowledge sumcheck protocol}
\label{sec:techniques-strong-zksc}

We describe the \DoQuote{strong} zero knowledge variant of the sumcheck protocol that we use in our construction. The protocol relies on the algebraic commitment scheme described in the previous section. We first cover some necessary background, and then describe our protocol.

\parhead{Previous sumcheck protocols}
The sumcheck protocol \cite{LundFKN92} is an IP for claims of the form \DoQuote{$\sum_{\vec{\alpha} \in H^{m}} F(\vec{\alpha}) = 0$}, where $H$ is a subset of a finite field $\Field$ and $F$ is an $m$-variate polynomial over $\Field$ of small individual degree. The protocol has $m$ rounds: in round $i$, the prover sends the univariate polynomial $g_{i}(\VariableX_{i}) \DefineEqual \sum_{\vec{\alpha} \in H^{m-i}} F(c_{1}, \ldots, c_{i-1}, \VariableX_{i}, \vec{\alpha})$, where $c_{1}, \ldots, c_{i-1} \in \Field$ were sent by the verifier in previous rounds; the verifier checks that $\sum_{\alpha_{i} \in H} g_{i}(\alpha_{i}) = g_{i-1}(c_{i-1})$ and replies with a uniformly random challenge $c_{i} \in \Field$. After round $m$, the verifier outputs the claim \DoQuote{$F(c_{1}, \dots, c_{m}) = g_{m}(c_{1}, \dots, c_{m})$}. If $F$ is of sufficiently low degree and does not sum to $\SCSum$ over the space, then the output claim is false with high probability. Note that the verifier does not need access to $F$.

The \DoQuote{weak} zero knowledge IPCP for sumcheck in \cite{BenSassonCFGRS17} modifies the above protocol as follows. The prover first sends a PCP oracle that (allegedly) equals the evaluation of a random \DoQuote{masking} polynomial $R$; the verifier checks that $R$ is (close to) low degree. Subsequently, the prover and verifier conduct the following interactive proof. The prover sends $z \in \Field$ that allegedly equals $\sum_{\vec{\alpha} \in H^{m}} R(\vec{\alpha})$, and the verifier responds with a uniformly random challenge $\rho \in \Field^{*}$. The prover and verifier now run the (standard) sumcheck protocol to reduce the claim \DoQuote{$\sum_{\vec{\alpha} \in H^{m}} \rho F(\vec{\alpha}) + R(\vec{\alpha}) = \rho \SCSum + z$} to a claim \DoQuote{$\rho F(\vec{c}) + R(\vec{c})=b$}, for a random $\vec{c} \in \Field^{m}$. The verifier queries $R$ at $\vec{c}$ and then outputs the claim \DoQuote{$F(\vec{c})=\frac{b-R(\vec{c})}{\rho}$}. If $\sum_{\vec{\alpha} \in H^{m}} F(\vec{\alpha}) \neq a$, then with high probability over $\rho$ and the verifier's messages in the sumcheck protocol, this claim is false.

A key observation is that if the verifier makes no queries to $R$, then the prover's messages are identically distributed to the sumcheck protocol applied to a random polynomial $Q$. When the verifier does make queries to $R$, simulating the resulting conditional distribution involves techniques from Algebraic Complexity Theory, as shown in \cite{BenSassonCFGRS17}. Given $Q$, the verifier's queries to $R(\vec{\alpha})$, for $\vec{\alpha} \in \Field^{m}$, are identically distributed to $Q(\vec{\alpha}) - \rho F(\vec{\alpha})$. Thus, the simulator need only make at most one query to $F$ for every query to $R$; that is, any verifier making $q$ queries to $R$ learns no more than it would learn by making $q$ queries to $F$ alone.

As discussed, this zero knowledge guarantee does not suffice for the application that we consider: in the $\NEXP$ protocol, the polynomial $F$ is defined in terms of the $\NEXP$ witness. In this case the verifier can learn enough about $F$ to break zero knowledge by making only $O(\deg{F})$ queries to $R$.

\parhead{Our sumcheck protocol}
The \DoQuote{strong} zero knowledge guarantee that we aim for is the following: any polynomial-time verifier learns no more than it would by making \emph{one} query to $F$, regardless of its number of queries to the PCP oracle.

The main idea to achieve this guarantee is the following. The prover sends a PCP oracle that is an \emph{algebraic commitment} $Z$ to the aforementioned masking polynomial $R$. Then, as before, the prover and verifier run the sumcheck protocol to reduce the claim \DoQuote{$\sum_{\vec{\alpha} \in H^{m}} \rho F(\vec{\alpha}) + R(\vec{\alpha}) = \rho a + z$} to a claim \DoQuote{$\rho F(\vec{c}) + R(\vec{c})=b$} for random $\vec{c} \in \Field^{m}$.

We now face two problems. First, the verifier cannot simply query $R$ at $\vec{c}$ and then output the claim \DoQuote{$F(\vec{c})=\frac{b-R(\vec{c})}{\rho}$}, since the verifier only has oracle access to the commitment $Z$ of $R$. Second, the prover could cheat the verifier by having $Z$ be a commitment to an $R$ that is far from low degree, which allows cheating in the sumcheck protocol.

The first problem is addressed by the fact that our algebraic commitment scheme has a decommitment sub-protocol that is zero knowledge: the prover can reveal $R(\vec{c})$ in such a way that no other values about $R$ are also revealed as a side-effect. As discussed, this relies on the protocol of \cite{BenSassonCFGRS17}, used as a subroutine.

The second problem is addressed by the fact that our algebraic commitment scheme is \DoQuote{transparent} to low-degree structure; that is, the algebraic structure of the scheme implies that if the commitment $Z$ is a low-degree polynomial (as in a \emph{low-degree} IPCPs), then $R$ must also be low degree (and vice versa).

Overall, the only value that a malicious verifier can learn is $F(\vec{c})$, for $\vec{c} \in I^{m}$ of its choice (where $I$ is some sufficiently large subset of $\Field$, fixed in advance). More precisely, we prove the following theorem, which shows a strong zero knowledge sumcheck protocol.

\begin{theorem}[Informally stated, see \cref{lem:strong-zk-sumcheck-short}]
\label{lem:strong-zk-sumcheck-short-informal}
There exists a low-degree IPCP for sumcheck, with respect to a low-degree polynomial $\SCPoly$, that satisfies the following zero knowledge guarantee: the view of any probabilistic polynomial-time verifier in the protocol can be perfectly and efficiently simulated by a simulator that makes only a single query to $\SCPoly$.
\end{theorem}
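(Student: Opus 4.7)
The plan is to realize the construction sketched in \cref{sec:techniques-strong-zksc}. Given a claim of the form $\sum_{\vec{\alpha} \in \SCSubset^\SCVars} \SCPoly(\vec{\alpha}) = \SCSum$ for a low individual degree polynomial $\SCPoly$, the honest prover first samples a uniformly random low-degree masking polynomial $\RandPoly$ of the same individual degree as $\SCPoly$ and sends, as its single PCP oracle, the algebraic commitment $Z \DefineEqual \mathsf{Commit}(\RandPoly)$ from \cref{sec:techniques-algebraic-commitment}, together with the scalar $z \DefineEqual \sum_{\vec{\alpha} \in \SCSubset^\SCVars} \RandPoly(\vec{\alpha})$. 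The verifier replies with a random $\rho \in \Multiplicative{\Field}$, after which the parties run the standard (non--zero-knowledge) sumcheck on the combined claim $\sum_{\vec{\alpha} \in \SCSubset^\SCVars}\bigl(\rho \SCPoly(\vec{\alpha}) + \RandPoly(\vec{\alpha})\bigr) = \rho \SCSum + z$, reducing it to a point claim $\rho \SCPoly(\vec{c}) + \RandPoly(\vec{c}) = b$ at a uniformly random $\vec{c} \in \Field^\SCVars$. Finally, the prover invokes the zero-knowledge decommitment sub-protocol of the algebraic commitment (which internally calls the weakly zero-knowledge sumcheck IPCP of \cite{BenSassonCFGRS17}) to reveal only $\RandPoly(\vec{c})$, so that the outer verifier can output the reduced claim $\SCPoly(\vec{c}) = (b - \RandPoly(\vec{c}))/\rho$.

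Completeness is immediate from the completeness of each sub-protocol and from the linearity of summation. For low-degree soundness, I would argue in three stages: transparency of the algebraic commitment ensures that, if the oracle $Z$ is a low-degree polynomial, then the committed $\RandPoly$ is itself a well-defined low-degree polynomial, so the standard sumcheck IP invocation is sound with error $O(\SCVars \SCDegree / \SetCardinality{\Field})$ against the combined polynomial $\rho \SCPoly + \RandPoly$; a random $\rho$-combination preserves any fixed nonzero discrepancy in $\sum \SCPoly$ except with probability $1/\SetCardinality{\Multiplicative{\Field}}$; and statistical binding of the commitment ensures that the value $\RandPoly(\vec{c})$ revealed during decommitment is the genuine committed value, so any cheating prover who forces the outer verifier to accept while $\sum \SCPoly \neq \SCSum$ must break one of these sub-protocols.

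For the zero-knowledge simulator $\Simulator$, I would use lazy sampling of $\RandPoly$. The simulator samples $z \in \Field$ and $\rho \in \Multiplicative{\Field}$ uniformly at random and, across the $\SCVars$ rounds of interaction, sends a fresh random univariate polynomial $g_i$ of the correct degree subject to $\sum_{\alpha \in \SCSubset} g_i(\alpha) = g_{i-1}(c_{i-1})$ (with the convention $g_0 \equiv \rho \SCSum + z$). After all rounds this determines $\vec{c}$ and $b \DefineEqual g_\SCVars(c_\SCVars)$. The simulator now makes its single query $y \DefineEqual \SCPoly(\vec{c})$ and fixes the target decommitment value $\RandPoly(\vec{c}) \DefineEqual b - \rho y$. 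Throughout the interaction the simulator answers the verifier's oracle queries into $Z$ on the fly, maintaining consistency with some uniformly random low-degree polynomial that is free to pass through the single point $(\vec{c}, b - \rho y)$; this is exactly the setting in which the perfect hiding of the algebraic commitment (from the algebraic query complexity lower bound in \cref{sec:techniques-algebraic-commitment}) applies. The decommitment transcript is then simulated by invoking the perfect simulator of the inner weakly zero-knowledge sumcheck of \cite{BenSassonCFGRS17}. By the observation of \cite{BenSassonCFGRS17} that the standard sumcheck on $\rho \SCPoly + \RandPoly$ with uniformly random $\RandPoly$ is distributed identically to the standard sumcheck applied to a uniformly random polynomial of the correct degree summing to $\rho \SCSum + z$, the simulated sumcheck messages are perfectly distributed.

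The main technical obstacle will be establishing that this lazy sampling is truly perfect against an \emph{adaptive} malicious verifier. The algebraic query complexity lower bounds of \cite{AaronsonW09,JumaKRS09} show that exponentially many queries are needed to recover $\sum \RandPoly$, but the simulator needs the strictly stronger statement that no polynomial number of queries to $Z$ can fix any nonzero linear combination of evaluations of $\RandPoly$ across $\Field^\SCVars$; this is precisely the stronger lower bound sketched at the end of \cref{sec:techniques-algebraic-commitment}, and proving it in the relevant degree regime is the main new algebraic ingredient. Composing this with the weakly zero-knowledge simulator of \cite{BenSassonCFGRS17} so that the leakage budgets of the commitment and of the decommitment sub-protocol add up to \emph{exactly one} $\SCPoly$-query will require a careful hybrid argument tracking how each verifier query into $Z$, and into the inner PCP oracle of the decommitment, is answered without ever invoking $\SCPoly$ beyond the single evaluation at $\vec{c}$.
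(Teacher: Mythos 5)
Your proposal matches the paper's approach: algebraically commit to a masking polynomial $\RandPoly$ by sending a higher-arity $\StrongRandPoly$ with $\RandPoly(\vec{\VariableX}) = \sum_{\vec{\beta}} \StrongRandPoly(\vec{\VariableX},\vec{\beta})$, run standard sumcheck on $\rho\SCPoly + \RandPoly$, decommit a single evaluation via the weak-ZK sumcheck of \cite{BenSassonCFGRS17}, and simulate by lazy sampling of sumcheck messages plus lazy consistent sampling of the oracle, with perfect hiding coming from the new algebraic query complexity lower bound (\cref{lem:sum-query-lower-bound} and its consequence \cref{cor:partial-sum-indep-vars}). You correctly pinpoint the central new ingredient (the stronger lower bound for arbitrary linear combinations of partial sums) and the need to compose carefully with the inner weak-ZK simulator.

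Two implementation details you have not resolved, which the paper handles explicitly. First, the PCP oracle cannot be ``just'' the commitment $\StrongRandPoly$: the inner weak-ZK sumcheck of \cite{BenSassonCFGRS17} requires its own masking polynomial $\AuxRandPoly$ to be part of the oracle message, and since an IPCP prover sends only one oracle, $\StrongRandPoly$ and $\AuxRandPoly$ must be bundled into a single low-degree polynomial $O(W,\vec{\VariableX},\vec{\VariableY}) = W\cdot\StrongRandPoly(\vec{\VariableX},\vec{\VariableY}) + (1-W)\cdot\AuxRandPoly(\vec{\VariableY})$ (\cref{con:strong-sumcheck-ip}, step 1). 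Second, the verifier's sumcheck challenges are drawn from a designated subset $\SoundnessSet \subset \Field$, not from all of $\Field$, and the honest prover aborts if this is violated; this is what lets the downstream $\NEXP$ protocol arrange that the single leaked evaluation of $\SCPoly$ lands outside the systematic part of the witness encoding. Your ``careful hybrid argument'' is realized concretely in the paper as a redraw step (\cref{step:strong-zksc-redraw} of the simulator): after fixing $\Simulated{w}$, the simulator resamples a fresh $\Simulated{\StrongRandPoly}'$ uniformly conditioned on both the partial-sum constraint at $\vec{c}$ and all previously revealed evaluations, and \cref{cor:partial-sum-indep-vars} is exactly what guarantees this conditional resampling matches the real conditional distribution.
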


Our sumcheck protocol leaks a single evaluation of $\SCPoly$. We stress that this limitation is \emph{inherent}: the honest verifier always outputs a true claim about one evaluation of $\SCPoly$, which it cannot do without learning that evaluation. Nevertheless, this guarantee is strong enough for our application, as we can ensure that learning a single evaluation of $\SCPoly$ does not harm zero knowledge.

We remark that our strong zero knowledge sumcheck protocol can be transformed into a standard IPCP, by the standard technique of adding a (classical) low-degree test to the protocol.

\clearpage
\section{Discussion and open problems}
\label{sec:open-problems}

The framework that we use to prove that $\NEXP \subseteq \PZKMIPS$ elucidates the role that algebra plays in the design of proofs systems with entangled provers. Namely, we show that a large class of algebraic protocols (low-degree IPCPs) can be transformed in a black box manner to MIPs with entangled provers. This abstraction decouples the mechanisms responsible for soundness against entangled adversaries from other classical components in the proof system. In turn, this allows us to focus our attention on designing proof systems with desirable properties (zero knowledge, in this work), without having to deal with the complications that arise from entanglement, and then derive \MIPStar{} protocols from these classical protocols.

These ideas also enable us to re-interpret prior constructions of \MIPStar{} protocols at a higher level of abstraction. For example, the protocol in \cite{ItoV12} can be viewed as applying our lifting lemma to the (low-degree) IPCP in \cite{BabaiFL91}. As another example, one can start with \emph{any} PCP for some language $\Language$, low-degree extend the PCP, and then apply our lifting lemma to obtain a corresponding \MIPStar{} protocol for $\Language$; in fact, the protocol in \cite{Vidick16} can be viewed in this perspective.

In more detail, we say that a transformation from IPCP to \MIPStar{} is \emph{black box} if it maps an IPCP protocol into an \MIPStar{} protocol whose verifier can be expressed as an algorithm that only accesses the queries and messages of the IPCP verifier, but does not access its input (apart from its length). The following corollary shows that any IPCP protocol can be transformed into an \MIPStar{} protocol via a black box transformation. While a proof of this fact is implicit in \cite{Vidick16,NatarajanV17}, the framework developed in this paper allows us to crystallize its structure and give a compellingly short proof of it. (See, also, \cref{fig:mip-diag}.)

\begin{corollary}
\label{cor:blackbox}
There is a black box transformation that maps any $\RoundComplexity$-round IPCP protocol for a language $\Language$ to a $2$-prover $(\RoundComplexity+1)$-round \MIPStar{} for $\Language$
\end{corollary}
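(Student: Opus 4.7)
The plan is to reduce \cref{cor:blackbox} to the lifting lemma (\cref{lem:lifting-informal}) by exhibiting, for every IPCP, a structure-preserving translation into a low-degree IPCP to which the lemma applies. First, I would replace the PCP oracle $\pi$ with its low individual-degree extension. Viewing the proof string of length $n$ as a function $\pi \colon H^k \to \Field$ for a subfield $H \subseteq \Field$ and dimension $k$ with $|H|^k \geq n$, let $\hat{\pi} \colon \Field^k \to \Field$ be any low individual-degree extension of $\pi$. The honest prover of the new protocol sends $\hat{\pi}$, and the verifier emulates $V'$ by translating each original query position into the corresponding point of $H^k$. Low-degree completeness holds by construction; low-degree soundness follows because any low-degree polynomial $f$ that a cheating prover might send has a well-defined restriction $f|_{H^k}$ which can be fed as a PCP oracle to the original $V'$, whose standard soundness then carries over.

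Second, I would reduce the (possibly many) queries made by the emulated verifier to a single uniform query via the standard random-curve technique {\`a} la Kalai--Raz. After the $r$ rounds of IPCP interaction, the verifier would normally make $q$ queries at prescribed points $p_1, \ldots, p_q \in \Field^k$; instead, in one extra round, the verifier samples a uniform point $p_0 \in \Field^k$, computes the random degree-$q$ curve $\gamma$ interpolating $p_0, p_1, \ldots, p_q$, and asks the prover for the univariate polynomial $\hat{\pi}|_\gamma$. The verifier recovers the $q$ values at $p_1, \ldots, p_q$ from the received polynomial, runs the decision procedure of $V'$, and additionally checks consistency at the single uniform query $p_0$. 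Soundness is immediate: a cheating prover passes this check only if the sent univariate polynomial agrees with $\hat{\pi}|_\gamma$ at the uniform $p_0$, an event of probability $O(q \cdot \mathrm{deg}(\hat{\pi})/|\Field|)$.

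Third, I would apply the lifting lemma (\cref{lem:lifting-informal}) to the resulting low-degree IPCP. By the footnote to that lemma, when the IPCP verifier already makes a single uniform query to its oracle, the lift produces a $2$-prover \MIPStar{} with $r^{*}+1$ rounds instead of $r^{*}+2$. Since the random-curve step already contributes the one extra round that the lifting lemma would otherwise need for its own internal preprocessing, the overall round complexity is $r+1$, as the corollary asserts. Black-boxness is inherited from each step: the low-degree extension, the curve reduction, and the lift all access $V'$ only as a subroutine and never inspect $x$ beyond its length.

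The main obstacle is the tight round accounting. A naive composition of the three steps above yields $r+2$ rounds: one extra round for the single-query reduction plus $r^{*}+2 = r+2$ rounds from the lifting lemma applied to the original (multi-query) IPCP, since the lemma's own preprocessing likewise adds a round. Reaching the advertised $r+1$ requires observing that the random-curve preprocessing \emph{is} precisely the step the lifting lemma internally performs to enforce its single-uniform-query precondition, and thus should be charged only once. Concretely, one should insert the random-curve reduction into the IPCP in place of the lemma's internal preprocessing, and then invoke the strengthened round count from the footnote; the remainder of the argument proceeds modularly from results already established in the paper.
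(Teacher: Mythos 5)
Your overall decomposition---take a low individual-degree extension of the PCP oracle, reduce to a single uniform query via a random-curve step, then invoke the lifting lemma---is the same route as the paper's proof of \cref{cor:ipcp-to-mips}. The gap is in the round accounting, and specifically in your explanation of why it lands at $\RoundComplexity+1$.

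You attribute the savings to charging the curve-reduction round ``only once,'' i.e., to having it replace the lifting lemma's internal preprocessing. That does avoid a double-count, but it does not yield $\RoundComplexity+1$: once the curve-reduced protocol has $\RoundComplexity+1$ rounds and makes a single uniform query, the footnote you cite gives an \MIPStar{} with $(\RoundComplexity+1)^{*}+1 = \RoundComplexity + 2$ rounds, not $\RoundComplexity+1$. (The footnote only excuses the lemma from running its \emph{own} copy of \cref{prop:round-reduction}; it does not retroactively cancel the round that your explicit curve reduction added.) The round you are missing is removed by a different mechanism entirely: since \cref{cor:blackbox} does not require preserving zero knowledge, one can drop the prover-side symmetrization step from \cref{con:sumcheck-mipstar} and instead let the verifier assign the roles directly, saving one round (this is the first item of \cref{rem:rounds} and is stated explicitly right after \cref{cor:blackbox}: ``the round complexity of $\RoundComplexity+1$ \ldots\ is less than in our lifting lemma \ldots\ because now we do not require that zero knowledge is preserved''). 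Your write-up never invokes this, so as stated it proves only a $(\RoundComplexity+2)$-round bound. The fix is simple---note that zero knowledge is not required, so the symmetrization round can be dispensed with---but it is the load-bearing observation for the round count, not the ``charge the curve once'' point.
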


The round complexity of $\RoundComplexity+1$ in \cref{cor:blackbox} is less than in our lifting lemma ($\RoundComplexity+2$), because now we do not require that zero knowledge is preserved. We make the foregoing discussion precise in \cref{sec:blackbox}. 

We conclude this section by discussing several open problems.

In this work we show that there exist perfect zero knowledge \MIPStar{} protocols for all languages in $\NEXP$, with \emph{polynomially-many} rounds. Since round complexity is a crucial resource in any interactive proof system,
it is essential to understand whether zero knowledge \MIPStar{} protocols with low round complexity exist. (After all, without the requirement of zero knowledge, every language in $\NEXP$ has a \MIPStar{} protocol with just \emph{one round} \cite{Vidick16,NatarajanV17}.) We remark that the \DoQuote{oracularization} technique of Ito et al.\ \cite{ItoKM09} reduces the round complexity of any \MIPStar{} to one round, but this technique does \emph{not} preserve zero knowledge.

\begin{openproblem}
Do there exist constant-round zero knowledge \MIPStar{} protocols for $\NEXP$?
\end{openproblem}


At the beginning of this section, we reflected on the fact that known results that establish the power of \MIPStar{} protocols rely on \emph{algebraic} structure, which enables classical-to-quantum black box transformations of protocols. But is algebraic structure inherently required, or does some combinatorial structure suffice?

\begin{openproblem}
Is there a richer class of classical protocols (beyond low-degree IPCPs) that can be black-box transformed into \MIPStar{} protocols?
\end{openproblem}

For instance, could we replace low-degree polynomials with, say, error correcting codes with suitable local testability and decodability properties? One place to start would be to understand whether local testers for tensor product codes \cite{BenSassonS06} are sound against entangled provers.

\begin{openproblem}
When suitably adapted to the multi-prover setting, is the random hyperplane test in \cite{BenSassonS06} for tensor product codes sound against entangled provers?
\end{openproblem}

\clearpage
\section{Roadmap}
\label{sec:roadmap}

In \cref{sec:prelims} we provide definitions needed for the technical sections, including that for a \emph{low-degree IPCP}, which is central to our work.
In \cref{part:I} we prove that any low-degree IPCP can be transformed into an \MIPStar{} protocol, while preserving zero knowledge; see \cref{lem:lifting}.
In \cref{part:II} we prove that every language in $\NEXP$ has a perfect zero knowledge low-degree IPCP; see \cref{thm:pzk-for-nexp}.
Combining the results proved in \cref{part:I,part:II} enables us to derive our main result, \cref{thm:main-zk}, which shows that every language in $\NEXP$ has a perfect zero knowledge $2$-prover \MIPStar{}.
\cref{fig:mip-diag} summarizes the roadmap towards proving \cref{thm:main-zk}.

\begin{figure}[h!]
\begin{center}
\newcommand{\ResultHeader}[1]{\textbf{#1}}
\tikzstyle{headbox} = [rectangle, align=center, draw=black, fill=white, font=\footnotesize, anchor=south west]
\tikzstyle{backbox} = [rectangle, minimum width=15.3cm, align=center, draw=black, fill=white, font=\footnotesize, anchor=south west]
\tikzstyle{vtext} = [align=center, font=\footnotesize, rotate=90]
\tikzstyle{htext} = [align=center, font=\footnotesize]
\tikzstyle{newresult} = [rectangle, rounded corners, minimum width=3cm, minimum height=1cm, align=center, draw=black, fill=white, font=\footnotesize]
\tikzstyle{oldresult} = [rectangle, rounded corners, minimum width=3cm, minimum height=1cm, align=center, draw=black, fill=gray!15, font=\footnotesize]
\tikzstyle{smallresult} = [rectangle, rounded corners, align=center, draw=black, fill=gray!15, font=\footnotesize]
\tikzstyle{arrow} = [thick,->,>=stealth]
\tikzstyle{da} = [dashed,->,>=stealth]
\begin{tikzpicture}[node distance=1.5cm]
\node (backbox1) [backbox,  minimum height=2cm] at (-2.1,-1) {};
\node (headbox1) [headbox,  minimum height=2cm, minimum width=1cm] at (-3.1,-1) {};
\node (backbox2) [backbox,  minimum height=5.5cm] at (-2.1,1) {};
\node (headbox2) [headbox,  minimum height=5.5cm, minimum width=1cm] at (-3.1,1) {};
\node (backbox3) [backbox,  minimum height=3.0cm] at (-2.1,6.5) {};
\node (headbox3) [headbox,  minimum height=3.0cm, minimum width=1cm] at (-3.1,6.5) {};
\node (text1) [vtext] at (-2.55,3.7) {\textbf{Low-Degree Interactive PCPs}};
\node (text4) [vtext] at (-2.8,0) {\textbf{Algebraic}};
\node (text5) [vtext] at (-2.3,0) {\textbf{Complexity}};
\node (text6) [vtext] at (-2.55,8.0) {\textbf{\MIPStar{}}};
\node (raz) [oldresult] at (0.5,0) {derandomize PIT for sums \\ of products of univariates \\ \cite{RazS05}};
\node (scd) [oldresult] at (4.8,0) {succinct constraint detection \\ for multi-variate low-degree \\ polynomials and their sums \\ \cite{BenSassonCFGRS17}};
\node (aqc) [newresult] at (8.7,0) {\ResultHeader{\S\ref{sec:algebraic-query-complexity}: \cref{lem:sum-query-lower-bound}} \\ lower bounds for \\ algebraic query complexity \\ of polynomial summation};
\node (weak) [oldresult] at (4.8,2) {weak PZK sumcheck \\ \cite{BenSassonCFGRS17}};
\node (pzkcomm) [newresult] at (8.7,2) {\ResultHeader{\S\ref{sec:strong-zk-sumcheck}} \\ perfectly-hiding \\ statistically-binding \\ algebraic commitment};
\node (strong) [newresult] at (8.7,3.6) {\ResultHeader{\S\ref{sec:strong-zk-sumcheck}: \cref{lem:strong-zk-sumcheck}} \\ strong PZK sumcheck};
\node (bfl) [smallresult] at (-0.7,5) {IPCP for $\NEXP$ \\ from \cite{BabaiFL91}};
\node (pcp) [smallresult] at (2.2,5) {low-degree extension \\ of any PCP};
\node (sharp) [smallresult] at (4.8,5) {PZK IPCP \\ for $\sharpP$ \\ \cite{BenSassonCFGRS17}};
\node (nexp) [newresult] at (8.7,5) {\ResultHeader{\S\ref{sec:zk-nexp}: \cref{thm:pzk-for-nexp}} \\ PZK IPCP \\ for $\NEXP$};
\node (polymip) [smallresult] at (-0.7,8.6) {poly-round \MIPStar{} \\ for $\NEXP$ \\ \cite{ItoV12}};
\node (twomip) [smallresult] at (2.2,8.6) {1-round \MIPStar{} \\ for $\NEXP$ \\ \cite{Vidick16}};
\node (zkmipsharpp) [newresult] at (4.8,8.6) {\ResultHeader{\S\ref{sec:putting-it-all-together}: \cref{remark:sharpp-mip}} \\ PZK \MIPStar{} \\ for $\sharpP$};
\node (zkmipnexp) [newresult] at (8.7,8.6) {\ResultHeader{\S\ref{sec:putting-it-all-together}: \cref{thm:main-zk}} \\ PZK \MIPStar{} \\ for $\NEXP$};
\node (ldt) [smallresult] at (11.1,7.25) {low-degree test sound \\ against entangled provers  \\\cite{NatarajanV17} (see \cref{thm:quantum_low_degree_test})};
\draw [arrow] (aqc) -- (pzkcomm);
\draw [arrow] (scd) -- (weak);
\draw [arrow] (raz) -- (scd);
\draw [arrow] (pzkcomm) -- (strong);
\draw [arrow] (weak) -- (strong.west);
\draw [arrow] (weak) -- (sharp);
\draw [arrow] (strong) -- (nexp);
\draw [arrow] (pcp) -- (twomip);
\draw [arrow] (bfl) -- (polymip);
\draw [arrow] (sharp) -- (zkmipsharpp);
\draw [arrow] (nexp) -- (zkmipnexp);
\node (overlay1) [backbox,  minimum height=1.4cm, minimum width=9.8cm] at (-0.9,5.8) {};
\node (overlay2) [backbox,  minimum height=0.5cm, minimum width=9.6cm] at (-0.8,5.9) {};
\node [htext] at (3.6,6.1) {\ResultHeader{\cref{prop:round-reduction}:} reduce to single uniform query};
\node [htext] at (3.6,6.9) {\ResultHeader{\cref{lem:lifting}:} make resistant to entangled provers};
\draw [da] (ldt) -- (overlay1);
\end{tikzpicture}
\end{center}
\caption{%
Diagram of the roadmap for proving \cref{thm:main-zk}. White blocks correspond to our new contributions, while grey blocks correspond to previous works.
Building on techniques in algebraic complexity from \cite{RazS05,BenSassonCFGRS17}, we prove lower bounds on algebraic query complexity of polynomial summation (\cref{lem:sum-query-lower-bound}).\label{fig:mip-diag}
This allows us to construct the perfectly-hiding statistically-binding algebraic commitment scheme that underlies our strong perfect zero knowledge sumcheck protocol (\cref{lem:strong-zk-sumcheck}, which also relies on the weak zero knowledge sumcheck protocol in \cite{BenSassonCFGRS17}), and in turn, prove that there exists a perfect zero knowledge low-degree IPCP for any language in $\NEXP$ (\cref{thm:pzk-for-nexp}).
Finally, we show a lemma that lifts low-degree IPCPs to \MIPStar{} protocols (\cref{lem:lifting}), while \emph{preserving zero knowledge}, and use it to derive our main result (\cref{thm:main-zk}); namely, a perfect zero knowledge low-degree \MIPStar{} for any language in $\NEXP$.
Taking an alternative route, we can apply our lifting lemma to a zero knowledge low-degree IPCP in \cite{BenSassonCFGRS17} to obtain a weaker variant of our main result: a zero knowledge low-degree \MIPStar{} for any language in $\sharpP$.
We also reframe previous works \cite{ItoV12, Vidick16} via our framework.
}
\end{figure}

\clearpage
\section{Preliminaries}
\label{sec:prelims}
We cover the notation and basic definitions that are shared by both parts of this paper.

\subsection{Notation}
\label{sec:notation}

For $n \in \Naturals$ we denote by $[n]$ the set $\{1,\ldots,n\}$. For $m,n \in \Naturals$ we denote by $m+[n]$ the set $\{m+1,\ldots,m+n\}$. For a set $X$, $n \in \Naturals$, $I \subseteq [n]$, and $\vec{x} \in X^{n}$, we denote by $\vec{x}_{I}$ the vector $\big(x_{i}\big)_{i \in I}$ that is $\vec{x}$ restricted to the coordinates in $I$.

\parhead{Integrality}
All (relevant) integers stated as real numbers are implicitly rounded to the closest integer.

\parhead{Distance}
The \emph{relative Hamming distance} (or just \emph{distance}), over alphabet $\Sigma$, between two strings $x,y \in \Sigma^n$ is $\Distance{x}{y} \DefineEqual \SetCardinality{\Set{i \in[n] \text{ s.t. } x_i \neq y_i}}/n$. If $\Distance{x}{y} \leq \eps$, we say that $x$ is \emph{$\eps$-close} to $y$; otherwise we say that $x$ is \emph{$\eps$-far} from $y$. Similarly, the \emph{relative distance} of $x$ from a non-empty set $S \subseteq \Sigma^n$ is $\Distance{x}{S} \DefineEqual \min_{y \in S} \Distance{x}{y}$. If $\Distance{x}{S} \leq \eps$, we say that $x$ is \emph{$\eps$-close} to $S$; otherwise we say that $x$ is \emph{$\eps$-far} from $S$.

\parhead{Functions, distributions, fields}
We use $f \colon \Domain \to \Range$ to denote a function with domain $\Domain$ and range $\Range$; given a subset $\SubDomain$ of $\Domain$, we use $\Restrict{f}{\SubDomain}$ to denote the restriction of $f$ to $\SubDomain$. Given a distribution $\Distribution$, we write $x \gets \Distribution$ to denote that $x$ is sampled according to $\Distribution$. We denote by $\Field$ a finite field and by $\Field_{\FieldSize}$ the field of size $\FieldSize$. Arithmetic operations over $\Field_{q}$ take time $\polylog q$ and space $O(\log q)$.

\parhead{Polynomials}
We denote by $\PolynomialRing{\Field}{\SCVars}{\VariableX}$ the ring of $\SCVars$-variable polynomials over the field $\Field$. Given a polynomial $P$ in $\PolynomialRing{\Field}{m}{\VariableX}$, $\IndividualDegree{P}[\VariableX_{i}]$ is the degree of $P$ in the variable $\VariableX_{i}$. The \emph{individual degree} of a polynomial is its maximum degree in any variable, $\deg{P} \DefineEqual \max_{1 \leq i \leq m}{\IndividualDegree{P}[\VariableX_{i}]}$. Throughout, unless explicitly specified otherwise, we will exclusively work with \emph{individual} degree and often refer to it simply as degree. We denote by $\PolynomialRingIndOne{\Field}{\SCVars}{\VariableX}{\SCDegree}$ the subspace of all polynomials $P \in \PolynomialRing{\Field}{\SCVars}{\VariableX}$ such that $\deg{P} \leq \SCDegree$.

\parhead{Low-degree extensions}
Given a finite field $\Field$, subset $\SCSubset \subseteq \Field$, and number of variables $\SCVars \in\Naturals$, the \emph{low-degree extension} (LDE) of a function $f \colon \SCSubset^\SCVars \to \Field$ is the \emph{unique} polynomial of individual degree $\SetCardinality{\SCSubset}-1$ that agrees with $f$ on $\SCSubset^\SCVars$, i.e., $\LD{f}\in\PolynomialRingIndOne{\Field}{\SCVars}{\VariableX}{\SetCardinality{\SCSubset}-1}$ such that $\LD{f}(\vec{h}) = f(\vec{h})$ for all $\vec{h}\in \SCSubset^\SCVars$. In particular, $\LD{f} \colon \Field^{m} \to \Field$ is defined as follows:
\begin{equation*}
\LD{f}(\vec{\VariableX})
\DefineEqual
  \sum_{\vec{\beta} \in H^{m}}
  \Lagrange{H^{m}}(\vec{\VariableX}, \vec{\beta})
  \cdot
  f(\vec{\beta})
\enspace,
\end{equation*}
where $\Lagrange{H^{m}}(\vec{\VariableX}, \vec{\VariableY}) \DefineEqual \prod_{i=1}^{m} \sum_{\omega \in H} \prod_{\gamma \in H \setminus \{\omega\}} \frac{(\VariableX_{i} - \gamma)(\VariableY_{i} - \gamma)}{(\omega - \gamma)^{2}}$ is the unique polynomial in $\PolynomialRingIndOne{\Field}{m}{\VariableX}{\SetCardinality{H}-1}$ such that, for all $(\vec{\alpha},\vec{\beta}) \in H^{m} \times H^{m}$, $\Lagrange{H^{m}}(\vec{\alpha},\vec{\beta})$ equals $1$ when $\vec{\alpha}=\vec{\beta}$ and equals $0$ otherwise. Note that $\Lagrange{H^{m}}(\vec{\VariableX}, \vec{\VariableY})$ can be generated and evaluated in time $\poly(\SetCardinality{H}, m, \log \SetCardinality{\Field})$ and space $O(\log \SetCardinality{\Field} + \log m)$, so $\LD{f}(\vec{\alpha})$ can be evaluated in time $\SetCardinality{H}^{m} \cdot \poly(\SetCardinality{H}, m, \log \SetCardinality{\Field})$ and space $O(m \cdot \log \SetCardinality{\Field})$.

\parhead{Languages and relations}
We denote by $\Language$ a language consisting of \emph{instances} $\Instance$, and by $\Relation$ a (binary ordered) relation consisting of pairs $(\Instance,\Witness)$, where $\Instance$ is the \emph{instance} and $\Witness$ is the \emph{witness}. We denote by $\GetLanguage{\Relation}$ the language corresponding to $\Relation$, and by $\Witnesses{\Relation}{\Instance}$ the set of witnesses in $\Relation$ for $\Instance$ (if $\Instance \not\in \GetLanguage{\Relation}$ then $\Witnesses{\Relation}{\Instance} \DefineEqual \emptyset$). We assume that $\BitSize{\Witness}$ is bounded by some computable function of $\InstanceSize \DefineEqual \BitSize{\Instance}$; in fact, we are mainly interested in relations arising from nondeterministic languages: $\Relation \in \NTIME(\DeciderTime)$ if there exists a $\DeciderTime(\InstanceSize)$-time machine $\DeciderMachine$ such that $\DeciderMachine(\Instance,\Witness)$ outputs $1$ if and only if $(\Instance,\Witness) \in \Relation$. We assume that $\DeciderTime(\InstanceSize) \geq \InstanceSize$.

\parhead{Randomized algorithms and oracle access}
We denote by $A^\Oracle(x)$ the output of an algorithm $A$ when given an input $x$ (explicitly) and query access to an oracle $\Oracle$. If $A$ is probabilistic then $A^\Oracle(x)$ is a random variable and, when writing expressions such as \DoQuote{$\Pr[A^\Oracle(x) = z]$}, we mean that the probability is taken over $A$'s internal randomness (in addition to other randomness beyond it). The algorithm is said to be \emph{$\QueryBound$-query} if it makes \emph{strictly less than} $\QueryBound$ queries to its oracle. Given two interactive algorithms (protocols) $A$ and $B$, we denote by $(A^\Oracle(x),B^\Oracle(y))(z)$ the output of $A^\Oracle(x)$ when interacting with $B^\Oracle(y)$ on common input $z$.

\subsection{Low-degree interactive PCPs}
\label{sec:ipcps}

An \emph{interactive PCP} (IPCP) \cite{KalaiR08} is a probabilistically checkable proof (PCP) verifiable via an interactive proof (IP). In more detail, an IPCP protocol for a language $\Language$ is a pair of probabilistic interactive algorithms $(\Prover, \Verifier)$, where the \emph{prover} $\Prover$ is computationally unbounded and the \emph{verifier} $\Verifier$ runs in polynomial time. Both parties receive an (explicit) input $x \in \Bits^n$, allegedly in the language $\Language$, and engage in a two-phase protocol as follows. First, $\Prover$ sends to $\Verifier$ an oracle proof string $\Proof \in \Bits^{*}$. Second, $\Prover$ and $\Verifier^{\Proof}$ (i.e., $\Verifier$ with oracle access to $\Proof$) engage in an interactive protocol, at the end of which $\Verifier$ either accepts or rejects. The completeness property requires that, if $x \in \Language$, then there exists a prover $\Prover$ such that $\Pr[(\Prover,\Verifier)(x) = 1] = 1$. The soundness property requires that, if $x \not\in \Language$, then for any prover $\Malicious{\Prover}$ it holds that $\Pr[(\Malicious{\Prover},\Verifier)(x) = 1] \leq \SoundnessError(n)$, where $\SoundnessError$ is called the \emph{soundness error}. Unless specified otherwise, we define our IPCP protocols with respect to a small constant soundness error, say, $\SoundnessError = 1/2$.

A round of interaction consists of one message from each of the parties. We say that an IPCP has \emph{round complexity} $\RoundComplexity$ if the second step of the interaction (the standard IP) consists of $\RoundComplexity$ rounds. The \emph{PCP length} of an IPCP is the length of $\Proof$. The \emph{communication complexity} is the total number of bits exchanged between the parties \emph{except} for the message that contains $\Proof$. The \emph{query complexity} is the number of queries that $\Verifier$ makes to the PCP $\Proof$. We write
\begin{equation*}
	\Language \in \IPCPparams{\SoundnessError}{\RoundComplexity}{\ProofLength}{\CommunicationComplexity}{\QueryComplexity}
\end{equation*}
to indicate that a language $\Language$ has an IPCP with the specified parameters.

\parhead{Low-degree IPCPs}
A key tool that we use is \emph{low-degree IPCPs}, a class of algebraically-structured IPCPs. Informally, these are IPCPs in which the PCP oracle is \emph{promised} to be a low-degree polynomial.
In more detail, given a field $\Field$, number of variables $\SCVars$, and degree $\SCDegree$, we say that an IPCP protocol $(\Prover, \Verifier)$ is an \LDIPCP{} if the following conditions hold.
\begin{itemize}[nolistsep]
	\item \emph{Low-degree completeness}: The PCP oracle that the (honest) prover $\Prover$ sends is a polynomial $\Poly$ in $\PolynomialRingIndOne{\Field}{\SCVars}{\VariableX}{\SCDegree}$.
	\item \emph{Low-degree soundness}: soundness is merely required to hold against provers $\Malicious{\Prover}$ that send PCP oracles that are polynomials $\Malicious{\Poly}$ in $\PolynomialRingIndOne{\Field}{\SCVars}{\VariableX}{\SCDegree}$.
\end{itemize}
We remark that the notion of low-degree IPCPs is closely related to \emph{holographic} IPCPs and IPs \cite{ReingoldRR16,GurR17}. However, whereas in holographic proof systems the \emph{input} is guaranteed to be encoded as a low-degree polynomial, in low-degree IPCPs the oracle may not be related to the input in any way.

\parhead{Public-coin interaction}
Our protocols and transformations refer to \emph{public-coin} proof systems. We remark that the only part wherein we rely on public-coin interaction is in the transformation of IPCPs to low-degree IPCPs in \cref{sec:single-unif-q}. In fact, for this transformation it suffices to rely on a weaker condition that is implied by public-coin interaction; namely, all we require is that the verifier queries the PCP oracle \emph{after} the communication with the prover terminates.

\parhead{Adaptivity}
For simplicity, we assume that all (public-coin) IPCP verifiers make non-adaptive queries to their oracle. However, all of our results can be extended, in a straightforward way, to hold with respect to verifiers that make adaptive queries, at the cost of an increase in round complexity. (See \cref{remark:on-adaptivity}.)

\parhead{Zero knowledge}
We consider the standard notion of (perfect) zero knowledge for IPCPs \cite{GoyalIMS10,BenSassonCFGRS17}. Let $A,B$ be algorithms and $x,y$ strings. We denote by $\IPCPView{B(y)}{A(x)}$ the \emph{view} of $A(x)$ in an IPCP protocol with $B(y)$, i.e., the random variable $(x,r,s_{1},\dots,s_{n},t_{1},\dots,t_{m})$ where $x$ is $A$'s input, $r$ is $A$'s randomness, $s_{1},\dots,s_{n}$ are $B$'s messages, and $t_{1},\dots,t_{m}$ are the answers to $A$'s queries to the proof oracle sent by $B$.

An IPCP protocol $\pair{\Prover}{\Verifier}$ for a language $\Language$ is \emph{(perfect) zero knowledge against query bound $\QueryBound$}
if there exists a polynomial-time simulator algorithm $\Simulator$ such that for every $\QueryBound$-query algorithm $\Malicious{\Verifier}$ and input $x \in \Language$ it holds that $\Simulator^{\Malicious{\Verifier}} (x)$ and $\IPCPView{\Prover(x)}{\Malicious{\Verifier}(x)}$ are identically distributed. 
We write
\begin{equation*}
	\Language \in \PZKIPCPparams{\SoundnessError}{\RoundComplexity}{\ProofLength}{\CommunicationComplexity}{\QueryComplexity}{\QueryBound}
\end{equation*}
to indicate that a language $\Language$ has a perfect zero knowledge IPCP with the specified parameters.

\begin{remark}[straightline simulators]
The aforementioned works (\cite{GoyalIMS10,BenSassonCFGRS17}) consider a stronger notion of zero knowledge IPCPs in which the simulator is \emph{straightline}, i.e., the simulator cannot rewind the verifier. All of the simulators that we construct in this work are straightline too; even so, all of the transformations presented in this work preserve zero knowledge even for simulators that rewind the verifier.
\end{remark}

\doclearpage
\part{Low-degree IPCP to \MIPStar{}}
\label{part:I}

In this part we build on recent advances in low-degree testing against entangled provers \cite{ItoV12,Vidick16,NatarajanV17} to prove a \emph{lifting lemma} that transforms a class of algebraically-structured classical protocols, namely low-degree interactive PCPs, into \MIPStar{} protocols, while \emph{crucially}, preserving zero knowledge. 

\parhead{Organization}
We begin in \cref{sec:quantum-information} by covering the necessary preliminaries regarding quantum information and proof systems with entangled provers. In \cref{sec:quantum-individual} we discuss the main technical tool that we need: a low-degree test against entangled provers, which we refine from a total degree to an individual degree test. Then, in \cref{sec:lowdegree_to_ZK} we state and prove our transformation of low-degree IPCP to \MIPStar, while preserving zero knowledge. Finally, in \cref{sec:putting-it-all-together} we prove our main result (\cref{thm:main-zk}) by applying the foregoing transformation to a zero knowledge low-degree IPCP for $\NEXP$, which we construct in \cref{part:II}.

\section{Preliminaries: proof systems with entangled provers}
\label{sec:quantum-information}

We begin with standard preliminaries in quantum information. Let $\HilbertSpace$ be a finite-dimensional Hilbert space, and let $\NumRegisters \in \Naturals$.

\parhead{States and operators.}
We define \emph{entangled quantum states}, which for brevity, we will refer to simply as \qstate{}s. An $\NumRegisters$-register \qstate{} $\ket{\Psi}$ is a unit vector in $\HilbertSpace^{\otimes \NumRegisters}$. We say that $\ket{\Psi}$ is \emph{permutation-invariant} if $\sigma \ket{\Psi} = \ket{\Psi}$ for every linear operator that permutes the $\NumRegisters$ registers of $\HilbertSpace^{\otimes \NumRegisters}$. We denote by $\rho = \rho(\ket{\Psi})$ the reduced density of $\ket{\Psi}$ on a number of registers that will be always clear from the context, so that for an operator $A$ we have
\begin{equation*}
  \TraceRho{A} \DefineEqual \Trace(A \rho)
= \bra{\Psi} A \otimes \Id \otimes \cdots \otimes \Id \ket{\Psi}
\enspace.
\end{equation*}

Let $\LinearOperators{\HilbertSpace}$ be the set of linear operators over $\HilbertSpace$. Denote by $\Id$ the identity operator in $\LinearOperators{\HilbertSpace}$. For $\NumRegisters \ge 2$, the \qstate{} $\ket{\Psi}$ induces a bilinear form on $\LinearOperators{\HilbertSpace} \times \LinearOperators{\HilbertSpace}$, given by
\begin{equation*}
	\InnerProductPsi{A}{B} = \bra{\Psi} A \otimes B \otimes \Id^{\otimes (\NumRegisters - 2)} \ket{\Psi} \in \Complex \enspace,
\end{equation*}
as well as a semi-norm, given by
\begin{equation*}
  \NormPsi{A}
= \sqrt{\bra{\Psi} A A^{\dagger} \otimes \Id^{\otimes (\NumRegisters - 1)} \ket{\Psi}}
\enspace.
\end{equation*}

\parhead{Measurements.}
All measurement in this work are POVMs (Positive Operator Valued Measures). A \emph{measurement} on $\HilbertSpace$ is a finite set $A = \Set{\Measurement{A}{}{i}}_{i \in S}$ where $S$ is the set of measurement outcomes and the $A^i$'s are non-negative definite operators on $\HilbertSpace$ such that $\sum_{i \in S} \Measurement{A}{}{i} = \Id$. A \emph{sub-measurement} on $\HilbertSpace$ relaxes the aforementioned condition by only requiring that $\sum_{i \in S} A_i \leq \Id$. The following standard claim provides a quantitive bound on the distance between two measurements as a function of their correlation.

\begin{claim}[Approximate consistency to trace distance \cite{Vidick11,Vidick16}]
\label{clm:consistncy-to-trace}
Let $\ket{\Psi}$ be a permutation-invariant \qstate{} on $\NumRegisters \geq 2$ registers, and let $\Set{\Measurement{A}{}{\Element}}$, $\Set{\Measurement{B}{}{\Element}}$ be single-register measurements with outcomes in the same set. 
Then,
\begin{equation*}
   \sum_\Element \| \Measurement{A}{}{\Element} - \Measurement{B}{}{\Element} \|_{\Psi}^{2}
\le O \left( \max \left\{ 1- \sum_\Element \InnerProduct{\Measurement{A}{}{\Element}}{\Measurement{B}{}{\Element}}_{\Psi}, 1- \sum_\Element \InnerProduct{\Measurement{A}{}{\Element}}{\Measurement{A}{}{\Element}}_{\Psi}   \right\}\right)
\enspace.
\end{equation*}
\end{claim}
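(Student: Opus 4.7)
The plan is to interpose auxiliary vectors and close the argument with a triangle-type inequality, using permutation invariance to route a comparison across registers. Specifically, for each outcome $z$ I will introduce
\[
\ket{u_z} \DefineEqual (\Measurement{A}{}{z} \otimes \Id^{\otimes(\NumRegisters-1)})\ket{\Psi}, \quad \ket{w_z} \DefineEqual (\Id \otimes \Measurement{A}{}{z} \otimes \Id^{\otimes(\NumRegisters-2)})\ket{\Psi}, \quad \ket{v_z} \DefineEqual (\Id \otimes \Measurement{B}{}{z} \otimes \Id^{\otimes(\NumRegisters-2)})\ket{\Psi}.
\]
Because $\ket{\Psi}$ is fixed by the swap of registers $1$ and $2$, one has $\sum_z \NormPsi{\Measurement{A}{}{z} - \Measurement{B}{}{z}}^2 = \sum_z \|\ket{w_z} - \ket{v_z}\|^2$, so it suffices to bound the right-hand side.

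Next I would directly bound the two ``bridging'' pairwise distances. Writing $\eps_{AB} \DefineEqual 1 - \sum_z \InnerProductPsi{\Measurement{A}{}{z}}{\Measurement{B}{}{z}}$ and $\eps_{AA}$ defined analogously, expanding $\|\ket{u_z} - \ket{v_z}\|^2$ using Hermiticity of the POVM elements leaves three terms. The POVM property $(\Measurement{A}{}{z})^2 \leq \Measurement{A}{}{z}$ together with the completeness relation $\sum_z \Measurement{A}{}{z} = \Id$ (and the analogous facts for $B$) bounds the two ``squared'' contributions by $1$ apiece after summing, so the total collapses to $2\eps_{AB}$. Replacing $B$ by $A$ throughout yields $\sum_z \|\ket{u_z} - \ket{w_z}\|^2 \leq 2\eps_{AA}$.

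Finally, the elementary inequality $\|x - y\|^2 \leq 2\|x - u\|^2 + 2\|u - y\|^2$, applied termwise with $u = \ket{u_z}$ and summed over $z$, gives $\sum_z \|\ket{w_z} - \ket{v_z}\|^2 \leq 4(\eps_{AA} + \eps_{AB}) = O(\max\{\eps_{AA}, \eps_{AB}\})$, which is the desired estimate. The main obstacle this scheme is designed to overcome is the asymmetry of the hypotheses: no self-consistency of $B$ is assumed, so a head-on approach (bounding the target seminorm directly) would require control of $1 - \sum_z \InnerProductPsi{\Measurement{B}{}{z}}{\Measurement{B}{}{z}}$, which is simply not available. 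Routing both bridging comparisons through the self-consistent operator $A$ placed on a second register, and then transporting the result back to register $1$ via permutation invariance, is precisely what makes the one-sided hypothesis suffice.
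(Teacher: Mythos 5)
Your argument is correct and complete: the transport of $(A^z - B^z)^2$ from register $1$ to register $2$ via permutation invariance is exactly what links the single-register seminorm $\NormPsi{\cdot}$ to the two-register bilinear form $\InnerProductPsi{\cdot}{\cdot}$, the bound $\sum_z (A^z)^2 \le \sum_z A^z = \Id$ (and likewise for $B$) controls the diagonal terms, the cross terms are precisely $\InnerProductPsi{A^z}{B^z}$ and $\InnerProductPsi{A^z}{A^z}$ (real because $A^z \otimes B^z$ is Hermitian), and the parallelogram-type inequality $\|x-y\|^2 \le 2\|x-u\|^2 + 2\|u-y\|^2$ pieces the two bridges together, yielding the estimate with constant $8$. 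Note that the paper itself does not give a proof of this claim --- it is cited from \cite{Vidick11,Vidick16} --- but your argument is the standard one, and in particular your observation that routing both comparisons through $A$ (rather than one through $A$ and one through $B$) is what lets the one-sided self-consistency hypothesis suffice is the right way to understand the asymmetry in the statement.
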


We shall also need a specific variant of Winter's gentle measurement lemma \cite{Winter99}, due to Ogawa and Nagaoka \cite{OgawaN07}, which formalizes the intuition that measurements that with high probability output a particular outcome on a certain quantum state imply that the post-measurement state is close to the original state.

\begin{lemma}[\cite{OgawaN07}]
\label{lem:gentle-measurement}
Let $\rho$ be a density operator on a Hilbert space $\HilbertSpace$, and let $A,B \in \LinearOperators{\HilbertSpace}$ be such that $A^\dagger A, B^\dagger B \le \Id$. Then,
\begin{equation*}
     \NormOne{A \rho A^\dagger - B \rho B^\dagger}
\leq 2 \sqrt{\Trace \left( (A - B) \rho (A - B)^\dagger \right)}
\enspace.
\end{equation*}
\end{lemma}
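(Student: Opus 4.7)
The plan is to reduce the one-norm bound on the difference $A\rho A^\dagger - B\rho B^\dagger$ to a two-norm (Hilbert--Schmidt) bound on $(A-B)\sqrt{\rho}$, via a standard telescoping and an application of the non-commutative Cauchy--Schwarz (Hölder) inequality $\|XY\|_1 \leq \|X\|_2 \|Y\|_2$ for Schatten norms, where $\|X\|_2 = \sqrt{\Trace(X^\dagger X)}$.

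First, I would write the telescoping decomposition
\begin{equation*}
A \rho A^\dagger - B \rho B^\dagger = A \rho (A-B)^\dagger + (A-B) \rho B^\dagger,
\end{equation*}
and apply the triangle inequality for $\NormOne{\cdot}$ to reduce the problem to separately bounding each of the two summands.

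Next, I would factor $\rho = \sqrt{\rho}\,\sqrt{\rho}$ and apply Cauchy--Schwarz to each summand; for example,
\begin{equation*}
\NormOne{A \rho (A-B)^\dagger} = \NormOne{(A\sqrt{\rho})\,(\sqrt{\rho}(A-B)^\dagger)} \leq \|A\sqrt{\rho}\|_2 \cdot \|(A-B)\sqrt{\rho}\|_2,
\end{equation*}
and symmetrically for the other summand with $B$ in place of $A$ on the left. The factor $\|A\sqrt{\rho}\|_2^2 = \Trace(A^\dagger A\, \rho)$ is at most $\Trace(\rho) = 1$ by the hypothesis $A^\dagger A \leq \Id$, and the same bound holds for $\|B\sqrt{\rho}\|_2$. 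Meanwhile, $\|(A-B)\sqrt{\rho}\|_2^2 = \Trace((A-B)\rho(A-B)^\dagger)$, which is exactly the expression appearing under the square root in the target inequality.

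Combining the two terms yields the factor of $2$ and completes the bound. There is no substantive obstacle here: the whole argument is an algebraic manipulation plus one invocation of Hölder for Schatten norms, and the hypotheses $A^\dagger A, B^\dagger B \leq \Id$ are used exactly once each to bound the ``spectator'' two-norm factors by $1$. The only subtlety to keep track of is ordering the operators correctly so that the Cauchy--Schwarz factorization yields the symmetric quantity $(A-B)\rho(A-B)^\dagger$ rather than its adjoint, but since the trace norm is invariant under taking adjoints this choice does not affect the final bound.
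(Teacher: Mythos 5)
Your proof is correct and complete: the telescoping decomposition, the triangle inequality, and Hölder's inequality for Schatten norms (with $p=q=2$) are applied properly, and the hypothesis $A^\dagger A, B^\dagger B \le \Id$ is used exactly where needed to bound the spectator factors by $1$. The paper itself gives no proof (it cites \cite{OgawaN07}), but your argument is the standard one and matches how this inequality is proved in that reference.
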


\parhead{MIPs with entangled provers}
A \emph{multi-prover interactive proof with entangled provers} (\MIPStar{}) \cite{CleveHTW04} is a multi-prover interactive proof (MIP) in which the spatially-isolated (honest and malicious) provers are allowed to use entangled strategies, i.e., any strategy obtained by measuring a shared \qstate{}.

In more detail, a $\NumProvers$-prover $\RoundComplexity$-round \MIPStar{} for a language $\Language$ is a tuple of probabilistic interactive algorithms $(\Prover_1, \ldots, \Prover_\NumProvers, \Verifier)$, where the \emph{provers} $\Prover_1, \ldots, \Prover_\NumProvers$ are computationally unbounded and the \emph{verifier} $\Verifier$ runs in polynomial time. All parties receive an input $x \in \Bits^n$, and the $\NumProvers$ provers share an \qstate{} $\ket{\Psi} \in \mathcal{H}^{\otimes \NumProvers}$ (which may depend on $\ket{\Psi}$), for a Hilbert space $\mathcal{H}$. The parties engage in a protocol of the following type. In each of the $\RoundComplexity$ rounds, each prover $\Prover_i$ receives a (classical) message from the verifier in a message register, performs a quantum operation on this register together with its share of the \qstate{} $\ket{\Psi}$, measures the message register, and sends back the (classical) outcome to the verifier.

We require perfect completeness and soundness with a given error $\SoundnessError$. If $x \in \Language$ then $(\Prover_1, \ldots, \Prover_\NumProvers)$ make $\Verifier$ accept with probability $1$; if $x \not\in \Language$ then $\Verifier$ rejects \emph{every} prover strategy with probability at least $1 - \SoundnessError(n)$, where $\SoundnessError$ is called the \emph{soundness error}. We stress that the latter condition, soundness, makes no assumptions on the computational power of the provers, nor regarding the number of entangled qubits they share. Unless specified otherwise, we define an \MIPStar{} with respect to a small constant soundness error, say, $\SoundnessError = 1/2$.\footnote{This constant is arbitrary and can be amplified via parallel repetition for entangled strategies \cite{KempeV11,Yuen16}.}

We indicate that a language $\Language$ has a $\NumProvers$-prover $\RoundComplexity$-round \MIPStar{} with soundness error $\SoundnessError$ and with communication complexity $\CommunicationComplexity$ (the total number of bits exchanged between the verifier and the provers) as follows:
\begin{equation*}
\Language \in \MIPSparams{\NumProvers}{\SoundnessError}{\RoundComplexity}{\CommunicationComplexity}
\enspace.
\end{equation*}

\parhead{Zero knowledge}
We extend the standard definition of perfect zero knowledge MIPs \cite{BenOrGKW88} to the setting of \MIPStar{} in the natural way. Denote by $\MIPView{\Prover_1, \ldots, \Prover_{\NumProvers}}{\Malicious{\Verifier}}(x)$ the \emph{view} of a (possibly malicious) verifier $\Malicious{\Verifier}$ in a $\NumProvers$-prover \MIPStar{} with provers $\Prover_1, \ldots, \Prover_\NumProvers$ and input $x$; that is, the verifier's view is the random variable consisting of the input $x$, the verifier's random string, and the provers' messages to the verifier.

An \MIPStar{} $(\Prover_1, \ldots, \Prover_\NumProvers, \Verifier)$ for a language $\Language$ is \emph{perfect zero knowledge} if there exists a probabilistic polynomial-time simulator algorithm $\Simulator$ such that for every probabilistic polynomial-time algorithm $\Malicious{\Verifier}$ and input $x \in \Language$ it holds that $\Simulator^{\Malicious{\Verifier}} (x)$ and $\MIPView{\Prover_1, \ldots, \Prover_{\NumProvers}}{\Malicious{\Verifier}}(x)$ are identically distributed. All simulators for \MIPStar{} protocols in this work achieve the stronger notion of universal \emph{straightline simulators} \cite{FeigeS89,DworkS98}, in which the simulator do \emph{not} rewind the verifier. We write
\begin{equation*}
	\Language \in \PZKMIPSparams{\NumProvers}{\SoundnessError}{\RoundComplexity}{\CommunicationComplexity}
\end{equation*}
to indicate that a language $\Language$ has an perfect zero knowledge \MIPStar{} with the specified parameters.

\parhead{Quantum malicious verifiers}
The \MIPStar{} model requires quantum entangled provers and \emph{classical} verifiers (in contrast to the Q\MIPStar{} model, in which both the provers and the verifier are quantum), and so our notion of zero knowledge is with respect to classical verifiers. Nevertheless, we remark that our results extend to hold against \emph{quantum} malicious verifiers. This is because:
\begin{inparaenum}[(1)]
	\item the honest verifier is classical, and so the provers can enforce classical communication by systematically measuring the verifier's answers in the computational basis, and
	\item all of our simulators are \emph{straightline} (i.e., they do \emph{not} rewind the verifier), and so they avoid the key hurdle for simulators of quantum verifiers, which is that quantum algorithms cannot be rewinded (as quantum information cannot be copied, and measurements are irreversible processes).
\end{inparaenum}

\parhead{Symmetric strategies}
Symmetry plays an important simplifying role in the analysis of an \MIPStar{}. The following lemma, due to Kempe et al.\ \cite{KempeKMTV11}, states that if the verifier treats provers symmetrically (in this paper this is always the case) then we can assume, without loss of generality, that the provers' optimal strategy is symmetric (all provers use the same measurement) and that any shared \qstate{} is permutation invariant.

\begin{lemma}[\cite{KempeKMTV11}]
\label{lem:asymmetry}
Let $(\Prover_1, \ldots, \Prover_\NumProvers, \Verifier)$ be a $\NumProvers$-prover \MIPStar{} for a language $\Language$ in which the verifier $\Verifier$ treats the provers symmetrically. If there exists a prover strategy $(\ProverStrategy_1, \ldots, \ProverStrategy_\NumProvers)$ with an \qstate{} $\ket{\Psi}$ that succeeds with probability $\eps$, then there also exists a \emph{symmetric} prover strategy $(\ProverStrategy', \ldots, \ProverStrategy')$ with a \emph{permutation-invariant} \qstate{} $\ket{\Psi'}$ that succeeds with probability $\eps$.
\end{lemma}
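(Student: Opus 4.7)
The plan is to augment each prover's Hilbert space with a classical \emph{role} register, and to construct a shared state that encodes a uniformly random permutation of roles, so that a single common strategy --- measure the role, then execute the corresponding original strategy --- suffices. Concretely, I would enlarge each prover's Hilbert space by an $\NumProvers$-dimensional label register $\Register{L}_i$ and define
\begin{equation*}
\ket{\Psi'} \DefineEqual \frac{1}{\sqrt{\NumProvers!}} \sum_{\pi \in S_{\NumProvers}} \ket{\pi(1)}_{\Register{L}_1} \otimes \cdots \otimes \ket{\pi(\NumProvers)}_{\Register{L}_{\NumProvers}} \otimes \sigma_\pi \ket{\Psi},
\end{equation*}
where $\sigma_\pi$ is the unitary permuting the $\NumProvers$ prover registers of $\ket{\Psi}$ so that prover $i$ receives the register originally held by prover $\pi(i)$. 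The common strategy $\ProverStrategy'$ is: first measure the label register in the computational basis to obtain a role $\ell \in [\NumProvers]$, and then execute $\ProverStrategy_\ell$ on the remaining quantum register. This strategy is symmetric by construction, and enlarging the state with the label registers is permitted since an \qstate{} may have arbitrary finite local dimension.

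The first key step is to verify that $\ket{\Psi'}$ is permutation invariant. Applying a permutation $\tau$ of the prover registers to both the labels and the quantum parts of the $\pi$-th summand transforms it into the summand indexed by $\pi' \DefineEqual \pi \circ \tau^{-1}$: in the label part, register $\Register{L}_k$ acquires $\ket{\pi(\tau^{-1}(k))} = \ket{\pi'(k)}$, and in the quantum part, register $k$ acquires the original register $\pi(\tau^{-1}(k)) = \pi'(k)$ of $\ket{\Psi}$, which is exactly $\sigma_{\pi'} \ket{\Psi}$. As $\pi$ ranges over $S_{\NumProvers}$, so does $\pi'$, so the sum is unchanged.

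The second step is to analyze the honest execution of $(\ProverStrategy', \ldots, \ProverStrategy')$ on $\ket{\Psi'}$. Since the labels live in disjoint computational-basis superposition terms, all provers measuring their labels yields a joint outcome that is a uniformly random permutation $\pi \in S_\NumProvers$, and conditioned on $\pi$ the post-measurement quantum state is $\sigma_\pi \ket{\Psi}$ --- i.e., prover $i$ holds the $\pi(i)$-th register of $\ket{\Psi}$ and runs $\ProverStrategy_{\pi(i)}$ on it. Thus the conditional joint execution is equivalent to the original execution of $(\ProverStrategy_1, \ldots, \ProverStrategy_\NumProvers)$ on $\ket{\Psi}$, only with the provers relabeled by $\pi$. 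Because $\Verifier$ treats the provers symmetrically, its acceptance probability is invariant under this relabeling; therefore the conditional acceptance probability equals $\eps$ for every $\pi$, and averaging over $\pi$ the overall acceptance probability of $(\ProverStrategy', \ldots, \ProverStrategy')$ with $\ket{\Psi'}$ is also $\eps$.

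The main obstacle I anticipate is the bookkeeping for the permutation-invariance step: one must simultaneously track how $\tau$ acts on the classical label registers and how it composes with $\sigma_\pi$ on the quantum registers, and verify that the combined effect is pure reindexing of the summation variable. This reduces to a group-action identity for the family $\{\sigma_\pi\}_{\pi \in S_{\NumProvers}}$, but care is required with left-vs-right composition conventions to ensure the two effects cancel exactly. Once this identity is established, the remainder of the argument is routine.
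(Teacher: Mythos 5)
The paper states this lemma as a citation from \cite{KempeKMTV11} and supplies no proof of its own, so there is nothing internal to compare against. Your symmetrization --- a shared state that coherently stores a uniformly random permutation in classical label registers, together with a common strategy that measures the label and then plays the corresponding original role on the correspondingly permuted quantum register --- is exactly the construction used in \cite{KempeKMTV11}, and your verification of permutation invariance via the reindexing $\pi \mapsto \pi\circ\tau^{-1}$ and of the preserved acceptance probability via the verifier's symmetry under prover relabeling is correct.
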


\doclearpage
\section{Low individual-degree testing against entangled quantum strategies}
\label{sec:quantum-individual}

A \emph{low-degree test} is a procedure used to determine if a given function $f \colon \Field^\SCVars \to \Field$ is close to a low-degree polynomial in $\PolynomialRing{\Field}{\SCVars}{\VariableX}$ or if, instead, it is far from all low-degree polynomials, by examining $f$ at very few locations. A test typically consists of examining $f$ at random restrictions, such as a point, line, or plane.

Low-degree tests can also be phrased in the setting of multiple non-communicating provers, where each prover is (allegedly) answering queries about the same function $f$ that is being tested. For example, the celebrated line-vs-point test of Rubinfeld and Sudan \cite{RubinfeldS96} can be viewed as a $2$-prover $1$-round MIP protocol. The verifier specifies a random line in $\Field^\SCVars$ to one prover and a random point on this line to the other prover; each prover replies with the purported value of $f$ on the received line or point; then the verifier checks that these values are consistent.


Loosely speaking, the classical analysis of this test asserts the following conditions:
\begin{inparaenum}[(1)]
  \item \emph{approximate consistency with a low-degree polynomial}, i.e., each player acts as a lookup for a function that is (close to) a low-degree polynomial; and
  \item \emph{self-consistency between the provers}, i.e., both players answer according to the \emph{same} function.
\end{inparaenum}

In this paper we rely on a similar low-degree test, the plane-vs-point test \cite{RazS97}, adapted to the setting of \MIPStar{}, whose analysis asserts the quantum analogue of the conditions above. In fact, we use a more refined version, which tests a polynomial's \emph{individual} degree rather than its \emph{total} degree. In the classical setting, such a test is implicit in \cite[Section 5.4.2]{GoldreichS06} via a reduction from individual-degree to total-degree testing. Informally, this reduction first invokes the test for low total degree, then performs univariate low-degree testing with respect to a random axis-parallel line in each axis. The extension of this reduction to the quantum setting yields an \MIPStar{} for individual-degree testing.

\parhead{Low individual degree test}
Let $\Planes{U}$ be the set of all \emph{planes} in a vector space $U$ (every $\Plane \in \Planes{U}$ is a $2$-dimensional affine subspace of $U$). The plane-vs-point test, with respect to individual degree, for \MIPStar{} is the following $2$-prover $1$-round protocol.

\begin{construction}
\label{con:plane-vs-point-individual}
Let $\Field$ be a finite field, $\SCVars \in \Naturals$ the number of variables, and $\SCDegree \in \Naturals$ the \emph{individual} degree. The quantum plane-vs-point $(\Field, \SCDegree, \SCVars)$-low-individual-degree test is an \MIPStar{} $(\Prover_1, \Prover_2, \Verifier)$ in which $\Prover_1, \Prover_2$ claim that a certain function $\Poly \colon \Field^\SCVars \to \Field$ is a polynomial of individual degree $\SCDegree$, and the (honest) interaction is as follows.

First the verifier $\Verifier$ symmetrizes the protocol: with probability $1/2$ it assigns the roles $\ProverPoint$ to the first prover and $\ProverPlane$ to the second prover, and with probability $1/2$ it assigns $\ProverPlane$ to the first prover and $\ProverPoint$ to the second prover. Then $\Verifier$ chooses uniformly at random one of the following tests.	
\begin{itemize}
	\item \emph{$(\Field, \SCVars\SCDegree, \SCVars)$-total-degree test:}
	\begin{enumerate}[nolistsep]
 		\item The verifier $\Verifier$ samples a random plane $\Plane \in \Planes{\Field^\SCVars}$ and a random point $\Point \in \Plane$ on that plane.
		\item $\Verifier$ sends the plane $\Plane$ to $\ProverPlane$, and the line $\Point$ to $\ProverPoint$.
		\item $\ProverPlane$ replies with $g \DefineEqual Q \circ \Plane$ (the bivariate polynomial obtained by restricting $\Poly$ to $\Plane$).
		\item $\ProverPoint$ replies with $\Element \DefineEqual Q(\Point)$ (the value of $Q$ at $\Point$).
		\item $\Verifier$ checks that $g$ is a polynomial of total degree $\SCVars\SCDegree$ and accepts if and only if $g(\Point) = \Element$.
	\end{enumerate}
	
	\item \emph{Axis-parallel univariate $(\Field, \SCDegree)$-degree test:}
	\begin{enumerate}[nolistsep]
		\item The verifier $\Verifier$ samples a random plane $\Plane \in \Planes{\Field^\SCVars}$, a random point $\Point \in \Plane$ on that plane, and a random axis-parallel line $\Line$ passing through the point $\Point$.
		\item $\Verifier$ sends the plane $\Plane$ to $\ProverPlane$, and the line $\Line$ to $\ProverPoint$.
		\item $\ProverPlane$ replies with $g \DefineEqual Q \circ \Plane$ (the bivariate polynomial obtained by restricting $\Poly$ to $\Plane$).
	  	\item $\ProverPoint$ replies with $h \DefineEqual Q \circ \Line$ (the univariate polynomial obtained by restricting $\Poly$ to $\Line$).
	  	\item $\Verifier$ checks that $h$ is a polynomial of individual degree $\SCDegree$ and accepts if and only if $g(\Point) = h(\Point)$.
	\end{enumerate}
\end{itemize}
\end{construction}

Perfect completeness follows since if $\Poly$ is indeed a polynomial of individual degree $\SCDegree$, then $g \DefineEqual Q \circ \Plane$ is a bivariate polynomial of total degree $\SCVars\SCDegree$ and $h \DefineEqual Q \circ \Plane$ is a univariate polynomial of degree $\SCDegree$, and so both provers can simply answer according to $\Poly$. We are grateful to Thomas Vidick for allowing to include the following theorem (and its proof), which shows that the plane-vs-point individual-degree test in \cref{con:plane-vs-point-individual} is sound against entangled strategies.

\begin{theorem}[quantum low individual degree test]
\label{thm:quantum_low_degree_test}
There exist absolute constants $c \in [0,1]$ and $C \ge 1$ such that the following holds. Let $\SoundnessError > 0$, $\SCVars, \SCDegree \in \Naturals$, and let $\Field$ be a finite field of size $\SetCardinality{\Field} = (\SCVars \SCDegree/\SoundnessError)^C$. Let $\ProverStrategy$ be a symmetric prover strategy using \qstate{} $\ket{\Psi} \in \HilbertSpace \otimes \HilbertSpace$ and projective measurements $\Set{\Measurement{A}{\Point}{\Element}}_{\Point \in \Field^\SCVars, \Element \in \Field}$.
If the strategy $(\Prover, \Prover)$ is accepted by the $( \Field, \SCDegree, \SCVars)$-low-degree test in \cref{con:plane-vs-point-individual} with probability at least $1-\SoundnessError$, then there exists a measurement $\Set{\Measurement{L}{}{\Poly}}_{\Poly \in \PolynomialRingIndOne{\Field}{\SCVars}{\VariableX}{\SCDegree}}$ that satisfies the following properties.
\begin{enumerate}
	
  \item \emph{Approximate consistency with $\Set{\Measurement{A}{\Point}{\Element}}_{\Element \in \Field, \Point \in \Field^\SCVars}$:}
\begin{equation*}
\E_{\Point \in \Field^\SCVars}
\sum_{\Poly \in \PolynomialRingIndOne{\Field}{\SCVars}{\VariableX}{\SCDegree}}
\sum_{\substack{\Element \in \Field \\ \Element \neq \Poly(\Point)}}
\bra{\Psi} \Measurement{A}{\Point}{\Element} \otimes \Measurement{L}{}{\Poly} \ket{\Psi}
\leq \SoundnessError^c
\enspace.
\end{equation*}

  \item \emph{Self-consistency of $\Set{\Measurement{L}{}{\Poly}}_{\Poly \in \PolynomialRingIndOne{\Field}{\SCVars}{\VariableX}{\SCDegree}}$:}
\begin{equation*}
\sum_{\Poly \in \PolynomialRingIndOne{\Field}{\SCVars}{\VariableX}{\SCDegree}}
\bra{\Psi} \Measurement{L}{}{\Poly} \otimes (\Id -\Measurement{L}{}{\Poly}) \ket{\Psi}
\leq \SoundnessError^c
\enspace.
\end{equation*}

\end{enumerate}	
\end{theorem}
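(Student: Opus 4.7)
My plan is to reduce the individual-degree test to the Natarajan--Vidick total-degree test (already cited as \cref{thm:quantum_low_degree_test_informal} in the introduction) by a Schwartz--Zippel-style argument carried out inside the measurement framework. Since the verifier runs each sub-test with probability $1/2$, an acceptance probability of $1-\SoundnessError$ implies that each sub-test is accepted with probability $\geq 1-2\SoundnessError$. Apply the NV soundness theorem to the $(\Field, \SCVars\SCDegree, \SCVars)$-total-degree sub-test (using \cref{lem:asymmetry} to assume symmetry/permutation invariance) to obtain a projective measurement $\{\Measurement{M}{}{\Poly_0}\}$ indexed by polynomials $\Poly_0$ of \emph{total} degree $\leq \SCVars\SCDegree$, such that $\{\Measurement{M}{}{\Poly_0}\}$ is approximately consistent with the point measurement $\{\Measurement{A}{\Point}{\Element}\}$ and is self-consistent, each with error $(2\SoundnessError)^{c_0}$ for the NV constant $c_0$.

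Next I would define the desired measurement by projecting onto the individual-degree subspace: set $\Measurement{L}{}{\Poly} := \Measurement{M}{}{\Poly}$ for every $\Poly \in \PolynomialRingIndOne{\Field}{\SCVars}{\VariableX}{\SCDegree}$, and collect the remaining weight into $\Measurement{L}{}{\bot} := \sum_{\Poly_0 : \IndividualDegree{\Poly_0} > \SCDegree} \Measurement{M}{}{\Poly_0}$. With this definition, approximate consistency and self-consistency are inherited term-by-term from $\{\Measurement{M}{}{\Poly_0}\}$, provided we can bound $\TraceRho{\Measurement{L}{}{\bot}} \leq \SoundnessError^{c'}$ for some absolute constant $c'>0$. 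Then the theorem follows by taking $c := \min(c_0, c')$ and absorbing the constant $2^{c_0}$ into the exponent by possibly shrinking $c$.

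The heart of the proof is bounding $\TraceRho{\Measurement{L}{}{\bot}}$ via the axis-parallel univariate sub-test (which is also passed with probability $\geq 1 - 2\SoundnessError$). For each coordinate $j \in [\SCVars]$, let $p_j := \sum_{\Poly_0 : \IndividualDegree{\Poly_0}[\VariableX_j] > \SCDegree} \TraceRho{\Measurement{M}{}{\Poly_0}}$. Fix $j$ and condition on $\Poly_0$ with $\IndividualDegree{\Poly_0}[\VariableX_j] > \SCDegree$; by classical Schwartz--Zippel applied to the leading coefficient of $\VariableX_j^k$ (for some $k > \SCDegree$) as a polynomial of total degree $\leq \SCVars \SCDegree$ in the remaining variables, for at least a $1 - \SCVars\SCDegree/|\Field|$ fraction of offsets $\vec{\alpha}_{-j} \in \Field^{\SCVars-1}$ the univariate restriction of $\Poly_0$ to the axis-parallel line through $\vec{\alpha}_{-j}$ in direction $j$ has degree $> \SCDegree$. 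On any such line, the univariate prover's reply $h$ (forced to have degree $\leq \SCDegree$) is a distinct polynomial from $\Poly_0|_\ell$ (degree $\leq \SCVars\SCDegree$), so they agree on at most $\SCVars\SCDegree$ of the $|\Field|$ points of $\ell$. Combining the NV approximate consistency of $\{\Measurement{M}{}{\Poly_0}\}$ with the point measurement, the analogous (test-enforced) consistency of the line measurement with the point measurement, and triangle inequality in the $\NormPsi{\cdot}$ semi-norm (using \cref{clm:consistncy-to-trace} to pass between approximate consistency and operator closeness), this event contributes additively to the test's rejection probability at rate $\Omega(p_j / \SCVars) - O(\SCVars\SCDegree/|\Field|)$. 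Union-bounding over $j$ yields $\sum_j p_j \leq O(\SCVars^2 \SoundnessError)$ up to lower-order terms negligible once $|\Field| = (\SCVars\SCDegree/\SoundnessError)^C$ with $C$ sufficiently large; since $\SCVars \leq |\Field|$ in the meaningful regime, this bound is at most $\SoundnessError^{c'}$ for an absolute $c' > 0$.

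The main obstacle I expect is step three: rigorously chaining the approximate-consistency of three distinct measurements (the global-polynomial $\Measurement{M}{}{\Poly_0}$, the point measurement $\Measurement{A}{\Point}{\Element}$, and the implicit line measurement of the symmetrized strategy) into a single inequality without losing a power of $\SoundnessError$. This demands careful use of the gentle measurement lemma (\cref{lem:gentle-measurement}) to commute approximate projections on the shared state, together with a delicate accounting that trades the $\SCVars$-factor from the axis-choice against the $|\Field|$-degree of freedom available thanks to $|\Field| = (\SCVars\SCDegree/\SoundnessError)^C$, so that the final error remains of the form $\SoundnessError^c$ with $c$ an absolute constant rather than degrading to an additive polynomial in $\SoundnessError$.
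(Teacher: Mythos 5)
Your proposal follows essentially the same route as the paper: invoke the Natarajan--Vidick total-degree theorem to obtain the polynomial measurement, then bound the mass on outcomes of high individual degree by chaining approximate-consistency inequalities (via \cref{clm:consistncy-to-trace}) through the axis-parallel univariate sub-test and a Schwartz--Zippel count on the line restriction. Your explicit per-coordinate accumulators $p_j$ and the $\bot$ outcome are cosmetic reorganizations of the paper's aggregate bound, and your informal handling of the residual $\SCVars$-factor in the final step matches the level of rigor of the paper's own sketch.
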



\begin{proof}[Proof of \cref{thm:quantum_low_degree_test}.]
The proof relies on the analysis of the plane-vs-point test \cite{RazS97} for \MIPStar{}, due to Natarajan and Vidick \cite{NatarajanV17}, which asserts that the provers in an \MIPStar{} are answering according to a polynomial of low \emph{total} degree. This new analysis improves on the analysis of the multilinearity test in \cite{ItoV12} and the $3$-prover low-degree test in \cite{Vidick16}.\footnote{The extension of the analysis to a low-degree test (rather a multilinearity test as in \cite{ItoV12}) is crucial for our results (see discussion in \cref{sec:algebraic-query-complexity}). Instead, the improvement of the $3$-prover test in \cite{Vidick16} to the $2$-prover test in \cite{NatarajanV17} simply reduces the number of provers required to obtain zero knowledge.}

Throughout, we fix $\SoundnessError > 0$, $\SCVars, \SCDegree \in \Naturals$, and a finite field $\Field$ such that $\SetCardinality{\Field} = (\SCVars \SCDegree/\SoundnessError)^C$ for the absolute constant $C \ge 1$ in \cref{thm:quantum_low_degree_test}. Furthermore, we assume that all \MIPStar{} prover strategies are symmetric with respect to a permutation-invariant bipartite \qstate{} and that all measurements are projective.

Recall that the $(\Field, \SCVars\SCDegree, \SCVars)$-total-degree test (a sub-procedure in \cref{con:plane-vs-point-individual}) is an adaptation of the classical plane-vs-point test to the setting of $2$-prover $1$-round \MIPStar{}, in which the verifier specifies a random $2$-dimensional plane in $\Field^\SCVars$ to one prover and a random point on this plane to the other prover; each prover replies with the purported value of $f$ on the received plane or point; and the verifier checks that these values are consistent. The following theorem shows that this test is sound against entangled quantum provers.

\begin{theorem}[Natarajan and Vidick \cite{NatarajanV17}]
\label{thm:quantum_low_degree_test_total}
There exists an absolute constant $c \in [0,1]$ such that the following holds. Let $(\Prover, \Prover)$ be a symmetric prover strategy using an \qstate{} $\ket{\Psi} \in \HilbertSpace \otimes \HilbertSpace$ and measurements $\Set{\Measurement{A}{\Point}{\Element}}_{\Element \in \Field, \Point \in \Field^\SCVars}$. If the strategy $(\Prover,\Prover)$ is accepted by the $(\Field, \SCVars\SCDegree, \SCVars)$-total-degree test with probability at least $1-\SoundnessError$, then there exists a measurement $\Set{\Measurement{L}{}{\Poly}}_{\Poly \in \PolynomialRingIndOne{\Field}{\SCVars}{\VariableX}{\SCDegree}}$ that satisfies the following.
\begin{enumerate}
	
  \item \emph{Approximate consistency with $\Set{\Measurement{A}{\Point}{\Element}}_{\Element \in \Field, \Point \in \Field^\SCVars}$:}
\begin{equation*}
\E_{\Point \in \Field^\SCVars}
\sum_{\Poly \in \PolynomialRingIndOne{\Field}{\SCVars}{\VariableX}{\SCDegree}}
\sum_{\substack{\Element \in \Field \\ \Element \neq \Poly(\Point)}}
\bra{\Psi} \Measurement{A}{\Point}{\Element} \otimes \Measurement{L}{}{\Poly} \ket{\Psi}
\leq \SoundnessError^c
\enspace.
\end{equation*}

  \item \emph{Self-consistency of $\Set{\Measurement{L}{}{\Poly}}_{\Poly \in \PolynomialRingIndOne{\Field}{\SCVars}{\VariableX}{\SCDegree}}$:}
\begin{equation*}
\sum_{\Poly \in \PolynomialRingIndOne{\Field}{\SCVars}{\VariableX}{\SCDegree}}
\bra{\Psi} \Measurement{L}{}{\Poly} \otimes (\Id -\Measurement{L}{}{\Poly}) \ket{\Psi}
\leq \SoundnessError^c
\enspace.
\end{equation*}

\end{enumerate}	
\end{theorem}

We remark that the above result is stated in \cite{NatarajanV17} for finite fields of prime order. Nevertheless, inspection of the proof there reveals that the result in fact holds for \emph{any} finite field.

Recall that, in the $(\Field, \SCDegree, \SCVars)$-low-individual-degree test (\cref{con:plane-vs-point-individual}), the verifier flips a coin to choose whether to invoke the aforementioned \emph{$(\Field, \SCVars\SCDegree, \SCVars)$-total-degree test} or the \emph{axis-parallel univariate $(\Field, \SCDegree)$-degree test}. In the latter, the verifier samples a random plane $\Plane \in \Planes{\Field^\SCVars}$, a random point $\Point \in \Plane$ on that plane, and a random axis-parallel line $\Line$ passing through the point $\Point$; sends the line to one prover and the plane to the other; and checks that the provers reply with low-degree polynomials that agree on $\Point$.

The \emph{total}-degree test reduces the prover to performing a measurement with outcomes in the set of polynomials of total degree $\SCVars\SCDegree$; we can then argue that, given this, the total contribution of outcomes where the polynomial has \emph{individual} degree greater than $\SCDegree$ in any one variable is small. We do so by relating the probability that the prover obtains these \DoQuote{bad} outcomes to the rejection probability in the axis-parallel univariate test.

Let $T_{\SCVars\SCDegree}$ be the set of all $\SCVars$-variate polynomials over $\Field$ of \emph{total} degree $\SCVars\SCDegree$. Since the $(\Field, \SCDegree, \SCVars)$-low-individual-degree test invokes the \emph{$(\Field, \SCVars\SCDegree, \SCVars)$-total-degree test} with probability $1/2$, \cref{thm:quantum_low_degree_test_total} implies that there exist measurements $\Set{\Measurement{L}{}{\Poly}}_{\Poly \in T_{\SCVars\SCDegree}}$ such that the conclusions of \cref{thm:quantum_low_degree_test_total} hold with respect to soundness error $\SoundnessError' \DefineEqual 2\SoundnessError$.

Let $\Lines{U}$ be the set of all \emph{lines} in a vector space $U$ (every $\Line \in \Lines{U}$ is a $1$-dimensional affine subspace of $U$), and let $\Set{\Measurement{M}{\Line}{v}}_{\Line \in \Lines{\Field^\SCVars},\; v \colon \Line \to \Field}$ be the measurement applied by a prover when asked for a line $\Line$. Without loss of generality, assume that the outcomes range over univariate polynomials of degree at most $\SCDegree$, since any other outcome is rejected by the verifier.

Observe that the probability of the verifier rejecting in the axis-parallel univariate $(\Field, \SCDegree)$-degree test is at least the probability that the line obtained via the $M$-measurement disagrees with the point obtained via the $A$-measurement. More precisely,
\begin{equation*}
	\SoundnessError' \geq 
	\E_{\Line \in \Lines{\Field^\SCVars},\; \Point \in \Line}
	\sum_{\Element \in \Field,\; v \colon \Line \to \Field \text{ s.t. } v(\Point) \neq \Element} \InnerProductPsi{\Measurement{A}{\Point}{\Element}}{\Measurement{M}{\Line}{v}} \enspace.
\end{equation*}

Recall that $\deg{\Poly}$ denotes the \emph{individual degree} of a polynomial $\Poly$, and denote by $\Poly(\Line)$ the evaluations of $\Poly$ over the line $\Line$. Using the approximate consistency condition from \cref{thm:quantum_low_degree_test_total} and the fact that the marginal on $\Point$ is uniform, 
\begin{align*} 
2\SoundnessError' + O\left(\sqrt{\SoundnessError^c}\right)
&\geq \E_{\Line \in \Lines{\Field^\SCVars} ,\Point \in \Line} \sum_{\substack{\Poly \in T_{\SCVars\SCDegree},\; v \colon \Line \to \Field \\ \text{ s.t. } \Poly(\Element)\neq v(\Element)}} \,\langle \Measurement{L}{}{\Poly} ,\, \Measurement{M}{\Line}{v} \rangle_{\Psi}\\
&\geq \E_{\Line \in \Lines{\Field^\SCVars}} \sum_{\substack{\Poly \in T_{\SCVars\SCDegree} \\\text{s.t. } \Poly(\Line) \neq v}}
 \,\langle \Measurement{L}{}{\Poly} ,\, \Measurement{M}{\Line}{v} \rangle_{\Psi} - O\left(\frac{\SCVars\SCDegree}{\SetCardinality{\Field}}\right)\\
&\geq \E_{\Line \in \Lines{\Field^\SCVars}} \sum_{\Poly:\, \deg{\Poly} > \SCDegree} \,\langle \Measurement{L}{}{\Poly} ,\, \Id \rangle_{\Psi} - O\left(\frac{\SCVars\SCDegree}{\SetCardinality{\Field}}\right) \;,
\end{align*}
where the second inequality holds since distinct polynomials (of total degree $\SCVars\SCDegree$) on $\Line$ intersect in at most $\SCVars\SCDegree$ points, and the last inequality holds since any univariate (line) polynomial $v$ considered has degree at most $\SCDegree$ (polynomials with a higher degree would be immediately rejected by the verifier).

To conclude, if a $\SCVars$-variate polynomial $\Poly$ has at least one variable in which the individual degree is larger than $\SCDegree$, then its restriction to a random axis-parallel line will have degree larger than $\SCDegree$ with probability at least $O(\frac{1}{\SCVars} - \frac{\SCVars\SCDegree}{\SetCardinality{\Field}})$ over the choice of the line. This concludes the proof, by our assumption regarding the size of the field $\Field$.
\end{proof}

\doclearpage
\section{Lifting from low-degree IPCP to \MIPStar{} while preserving zero knowledge}
\label{sec:lowdegree_to_ZK}

Recall that low-degree IPCPs are IPCP protocols in which the PCP oracle is \emph{promised} to be a low-degree polynomial (see \cref{sec:ipcps}). We prove that any low-degree IPCP can be transformed into a corresponding \MIPStar{}, \emph{while preserving zero knowledge}.

\begin{lemma}[lifting lemma]
\label{lem:lifting}
Let $C \ge 1$ be the absolute constant in \cref{thm:quantum_low_degree_test}. There exists a transformation $\Transformation$ that maps any $\RoundComplexity$-round \LDIPCP{} $(\Prover',\Verifier')$ for a language $\Language$, where $\SCVars, \SCDegree \in \Naturals$ and $\Field$ is a finite field of size $\SetCardinality{\Field} > \max\Set{(2\SCVars \SCDegree)^C, 5\SCDegree\QueryComplexity}$, into a $2$-prover $(\RoundComplexity^* + 2)$-round \MIPStar{} $(\Prover_1, \Prover_2, \Verifier) \DefineEqual \Transformation(\Prover',\Verifier')$ for $\Language$, where $\RoundComplexity^* \DefineEqual \max\Set{\RoundComplexity,1}$.

Furthermore, if the IPCP $(\Prover',\Verifier')$ is zero knowledge with query bound $\QueryBound \geq 2(\SCDegree+1)^{2} + \SCDegree \QueryComplexity + 1$ ($\QueryComplexity$ denotes the query complexity of the honest verifier), then the \MIPStar{} $(\Prover_1, \Prover_2, \Verifier)$ is zero knowledge.
\end{lemma}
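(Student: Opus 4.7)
The plan is a two-stage construction. First, apply a round-reduction step---labelled \cref{prop:round-reduction} in the roadmap, and essentially a Kalai--Raz-style trick adapted to low-degree oracles---to convert $(\Prover', \Verifier')$ into an equivalent $(\Field,\SCDegree,\SCVars)$-low-degree IPCP $(\tilde{\Prover}, \tilde{\Verifier})$ of round complexity $\RoundComplexity^{*}+1$ whose verifier, after all interaction, queries the PCP oracle at a \emph{single, uniformly distributed} point $\Point \in \Field^{\SCVars}$. This preprocessing preserves zero knowledge at the cost of a mild blow-up in the query bound (and is essential because \cref{thm:quantum_low_degree_test} only analyzes one oracle lookup per execution). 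Second, splice this preprocessed IPCP with the quantum low individual-degree test of \cref{con:plane-vs-point-individual,thm:quantum_low_degree_test} into a 2-prover \MIPStar{} by adding one final round, matching the claimed complexity $\RoundComplexity^{*}+2$.

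The combined protocol proceeds as follows. The verifier $\Verifier$ tosses a coin $b \in \Bits$, picks a random permutation of prover roles, and either (i) if $b=0$, executes the plane-vs-point (or axis-parallel line-vs-point) individual-degree test of \cref{con:plane-vs-point-individual}, or (ii) if $b=1$, emulates $(\tilde{\Prover},\tilde{\Verifier})$ by forwarding $\tilde{\Verifier}$'s messages to one prover, and at the very end submits the uniform query $\Point$ to the \emph{other} prover, using the returned field element in place of the PCP oracle's value in $\tilde{\Verifier}$'s decision. The two provers are treated symmetrically throughout, so \cref{lem:asymmetry} applies. Perfect completeness is immediate: the honest provers share a state encoding a valid oracle polynomial, making both branches accept with probability $1$.

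For soundness, by \cref{lem:asymmetry} assume a symmetric strategy with a permutation-invariant state $\ket{\Psi}$ and projective measurements $\{\Measurement{A}{\Point}{\Element}\}$ for the oracle-lookup role. If $\Verifier$ accepts with probability exceeding the claimed error, then conditioned on $b=0$ the test is passed with small error $\SoundnessError$, and \cref{thm:quantum_low_degree_test} supplies a polynomial-valued measurement $\{\Measurement{L}{}{\Poly}\}$ over $\polys{\SCDegree}$ that is approximately consistent with $\{\Measurement{A}{\Point}{\Element}\}$ and self-consistent. Thinking of $\{\Measurement{L}{}{\Poly}\}$ as shared randomness selecting a low-degree $\Poly$, and using \cref{clm:consistncy-to-trace} together with \cref{lem:gentle-measurement} to replace the actual lookup answer by $\Poly(\Point)$ at the cost of an $O(\SoundnessError^{c/2})$ additive error, the branch $b=1$ becomes (up to this error) a classical execution of $(\tilde{\Prover},\tilde{\Verifier})$ with low-degree oracle $\Poly$; low-degree soundness of the preprocessed IPCP then bounds the acceptance probability.

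The main obstacle is the \emph{furthermore} clause: preserving perfect zero knowledge against an (even entangled) malicious verifier. The trouble is that the usual symmetrization that uniformly permutes prover roles interacts poorly with a straightline IPCP simulator $\Simulator'$, since a cheating $\Malicious{\Verifier}$ could interleave plane queries and oracle queries so as to force $\Simulator'$ to commit to a polynomial before having seen every query. The plan is to use a non-standard symmetrization in which the role of each prover and the branch bit $b$ are fixed \emph{before} the execution in a way visible to the simulator; the lifted simulator then (a) runs $\Simulator'$ against $\Malicious{\Verifier}$ to generate the IPCP transcript along with answers to the at most $\SCDegree\QueryComplexity + 1$ oracle queries $\Simulator'$'s verifier may pose, (b) samples a uniformly random polynomial in $\polys{\SCDegree}$ conditioned on agreement with these answers, and (c) uses this polynomial to answer every query arising from the low-degree-test branch, namely at most $(\SCDegree+1)^{2}$ values from one plane and $\SCDegree+1$ from one axis-parallel line. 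The hypothesis $\QueryBound \geq 2(\SCDegree+1)^{2} + \SCDegree\QueryComplexity + 1$ keeps $\Simulator'$ inside its zero-knowledge regime, the approximate consistency of $\{\Measurement{L}{}{\Poly}\}$ with $\{\Measurement{A}{\Point}{\Element}\}$ guarantees that the simulated distribution matches the real one exactly in the honest regime, and straightlineness of $\Simulator'$ ensures the lifted simulator is itself straightline---yielding perfect zero knowledge for the lifted \MIPStar{} protocol.
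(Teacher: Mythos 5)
Your skeleton is right --- preprocess via \cref{prop:round-reduction} to a single uniform oracle query, then splice with the entanglement-resistant individual-degree test and use \cref{thm:quantum_low_degree_test} plus \cref{clm:consistncy-to-trace,lem:gentle-measurement} to reduce soundness to the classical low-degree IPCP. But the ``furthermore'' clause, which is the actual content of the lemma, has two genuine gaps.

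First, the symmetrization. You start with the verifier ``picking a random permutation of prover roles,'' then pivot to roles being ``fixed before the execution in a way visible to the simulator.'' This is both self-contradictory and misses the structural constraint that makes the argument work: in the paper the \emph{provers} flip the coin (by measuring $\ket{\Psi}$) to name a primary and a secondary prover, and the invariant is that the secondary prover may \emph{never} be assigned the role $\ProverMain$. Without that invariant, a malicious verifier can assign $\ProverMain$ to both provers and run two correlated IPCP executions, which the single-interaction IPCP simulator has no obligation to simulate (see \cref{rem:breaking-zk}). ``Fixing roles visibly'' in some unspecified way does not by itself exclude this attack, and if the roles are fixed deterministically the strategy is no longer symmetric, which breaks the appeal to \cref{lem:asymmetry} in the soundness argument.

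Second, the simulator you describe does not produce the right distribution. In step~(b) you sample a fresh uniformly random polynomial in $\PolynomialRingIndOne{\Field}{\SCVars}{\VariableX}{\SCDegree}$ conditioned on agreement with the IPCP-simulator's oracle answers, and in step~(c) you use that polynomial to answer plane/line queries. But in the real interaction the plane, line, and point answers all come from the \emph{same} oracle $\Oracle$ that the honest IPCP prover committed to, and $\Oracle$ is in general nowhere near uniform given a few of its values (it may even be deterministic given the witness), so your steps~(b)--(c) produce the wrong joint distribution. The correct move is to build a malicious \emph{IPCP} verifier $\Malicious{\Verifier}''$ that internally runs $\Malicious{\Verifier}$ and, whenever $\Malicious{\Verifier}$ poses a plane or line query, interpolates it from at most $(\SCDegree+1)^2$ (resp.\ $\SCDegree+1$) point queries to the oracle; the \MIPStar{} simulator is then $(\Simulator'')^{\Malicious{\Verifier}''}$ for the given IPCP simulator $\Simulator''$, and the bound $\QueryBound \geq 2(\SCDegree+1)^2 + \SCDegree\QueryComplexity + 1$ is precisely what keeps $\Malicious{\Verifier}''$ within the ZK query budget after the preprocessing has consumed $\SCDegree\QueryComplexity + 1$ of it. Finally, invoking the ``approximate consistency of $\{\Measurement{L}{}{\Poly}\}$ with $\{\Measurement{A}{\Point}{\Element}\}$'' to conclude the simulation is exact is a category error: those measurements exist only in the soundness analysis of a \emph{malicious} prover strategy, whereas zero knowledge concerns honest provers answering by a fixed $\Oracle$, and in any case an $O(\SoundnessError^c)$ approximation can never give perfect simulation.
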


In the rest of this section we prove \cref{lem:lifting}. Specifically,
in \cref{sec:preprocessing} we begin with a classical preprocessing step (a query reduction);
in \cref{sec:LDIPCP_to_MIPS} we present our transformation;
in \cref{sec:main_soundness} we prove soundness against entangled provers; and
in \cref{sec:main_zk} we prove preservation of zero knowledge.
The conceptual contribution of \cref{lem:lifting} is that it provides an abstraction of techniques in \cite{ItoV12,Vidick16}.

\begin{remark}[on preserving round complexity]
\label{rem:rounds}
If we do not wish to preserve zero knowledge, then the round complexity of the 
\MIPStar{} that is obtained in \cref{lem:lifting} can be reduced by $1$ (see discussion at the end of \cref{sec:LDIPCP_to_MIPS}). In addition, if the original low-degree IPCP makes a single, uniformly distributed query to its PCP oracle, then the preprocessing step is not required, and we can save an additional round. In particular, if both conditions occur, we obtain an $\RoundComplexity^*$-round \MIPStar{}, fully preserving round complexity.
\end{remark}

\subsection{Classical preprocessing}
\label{sec:preprocessing}

Let $\SCVars, \SCDegree \in \Naturals$, and let $\Field$ be a finite field of size $\SetCardinality{\Field} > (\SCVars \SCDegree/\SoundnessError)^C$. Let $(\Prover',\Verifier')$ be an $\RoundComplexity$-round \LDIPCP{} for a language $\Language$. Denote its oracle by $\Oracle$, query complexity by $\QueryComplexity$, PCP length by $\ProofLength$, communication complexity by $\CommunicationComplexity$, and soundness error by $\SoundnessError = 1/4$.\footnote{The soundness error is reduced, via standard parallel repetition, to $1/4$, since we next apply a transformation that slightly increases the soundness error, and we wish to end up with soundness error at most $1/2$.} If $(\Prover',\Verifier')$ is zero knowledge with respect to a query bound, denote this bound by $\QueryBound$.

The preprocessing step, which is purely classical, allows us to transform any low-degree IPCP into one that makes a single uniform query, at only a small cost in parameters. Crucially, this transformation \emph{preserves zero knowledge} (with minor deterioration in the zero knowledge query bound).

\begin{proposition}
\label{prop:round-reduction}
There exists a transformation $\Transformation$ such that, for every $\SCVars, \SCDegree \in \Naturals$ and finite field $\Field$, if $(\Prover',\Verifier')$ is a public-coin\footnote{In fact, it suffices to satisfy a weaker condition that is implied by public-coin interaction. Specifically, our transformation also works for \emph{private-coin} IPCPs as long as the verifier queries the PCP oracle \emph{after} the interaction with the prover terminates.} \LDIPCP{} with parameters
	\begin{equation*}
		\LDIPCPparams
		{\SoundnessError}
		{\RoundComplexity}
		{\ProofLength}
		{\CommunicationComplexity}
		{\QueryComplexity}
		{\PolynomialRingIndOne{\Field}{\SCVars}{\VariableX}{\SCDegree}},
	\end{equation*}
then $(\Prover'',\Verifier'') \DefineEqual \Transformation(\Prover',\Verifier')$ is a low-degree IPCP for $\Language$ with parameters
	\begin{equation*}
		\LDIPCPparams
		{\SoundnessError' = \SoundnessError + \frac{\SCDegree \QueryComplexity}{|\Field|-\QueryComplexity}}
		{\RoundComplexity' = \RoundComplexity+1}
		{\ProofLength' = \ProofLength}
		{\CommunicationComplexity' = \CommunicationComplexity + \poly(\SCVars, \SCDegree, \QueryComplexity)}
		{\QueryComplexity' = 1}
		{\PolynomialRingIndOne{\Field}{\SCVars}{\VariableX}{\SCDegree}},
	\end{equation*}
	where the verifier's single query is uniformly distributed. Furthermore, if $(\Prover',\Verifier')$ is (perfect) zero knowledge with query bound $\QueryBound$, then $(\Prover'',\Verifier'')$ is (perfect) zero knowledge with query bound $\QueryBound - (\SCDegree \QueryComplexity + 1)$.
\end{proposition}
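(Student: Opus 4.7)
The plan is to perform a classical point-to-curve query reduction in a single extra round, leveraging the public-coin assumption to defer all of $\Verifier'$'s oracle queries until after its interaction with $\Prover'$ concludes, and then to collapse the $\QueryComplexity$ resulting queries into a single uniformly distributed query with the prover's help.

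Concretely, $\Prover''$ first sends the same low-degree oracle $Q \in \PolynomialRingIndOne{\Field}{\SCVars}{\VariableX}{\SCDegree}$. Both parties then faithfully emulate the $\RoundComplexity$ rounds of $(\Prover',\Verifier')$ without making any oracle queries (possible because the protocol is public-coin), at the end of which the verifier has determined query points $\alpha_1,\dots,\alpha_{\QueryComplexity} \in \Field^{\SCVars}$ together with the predicate $\phi$ that $\Verifier'$ would apply to $Q(\alpha_1),\dots,Q(\alpha_{\QueryComplexity})$. Fix distinct constants $\beta_0,\dots,\beta_{\QueryComplexity} \in \Field$ (which exist since $|\Field| > \QueryComplexity$). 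In the extra round, $\Verifier''$ samples $\alpha_0 \in \Field^{\SCVars}$ uniformly and sends it; both sides then interpolate the unique polynomial curve $\gamma \colon \Field \to \Field^{\SCVars}$ satisfying $\gamma(\beta_i) = \alpha_i$ for every $i$, and $\Prover''$ responds with the univariate polynomial $g(T) \DefineEqual Q(\gamma(T))$, whose degree is at most $\SCDegree\QueryComplexity$. $\Verifier''$ then evaluates $\phi(g(\beta_1),\dots,g(\beta_{\QueryComplexity}))$ to emulate $\Verifier'$'s decision, samples $t^* \in \Field \setminus \{\beta_1,\dots,\beta_{\QueryComplexity}\}$ uniformly, queries $Q$ once at $\gamma(t^*)$, and accepts iff $\phi$ holds and $g(t^*) = Q(\gamma(t^*))$. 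The single query $\gamma(t^*)$ is uniform in $\Field^{\SCVars}$ because in the Lagrange expansion of $\gamma(t^*)$ the coefficient of $\alpha_0$ is nonzero for every $t^* \notin \{\beta_1,\dots,\beta_{\QueryComplexity}\}$, and $\alpha_0$ is itself uniform.

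Completeness is immediate. For low-degree soundness, if the malicious prover uses a low-degree oracle $\tilde{Q}$, then either $g = \tilde{Q} \circ \gamma$, in which case $\phi$'s verdict agrees with what $\Verifier'$ would have output against $\tilde{Q}$ (rejecting with probability at least $1-\SoundnessError$ by the original low-degree soundness), or $g \neq \tilde{Q}\circ\gamma$, in which case the nonzero univariate polynomial $g - \tilde{Q}\circ\gamma$ has degree at most $\SCDegree\QueryComplexity$, and Schwartz--Zippel gives acceptance probability over $t^*$ at most $\SCDegree\QueryComplexity/(|\Field|-\QueryComplexity)$. A union bound yields the stated $\SoundnessError'$. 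For zero knowledge, I wrap the assumed simulator $\Simulator'$ into $\Simulator''$: the IPCP phase is simulated by $\Simulator'$ (forwarding oracle queries made by $\tilde{\Verifier}''$); when $\tilde{\Verifier}''$ sends $\alpha_0$, $\Simulator''$ computes $\gamma$, queries $\Simulator'$'s simulated oracle at $\SCDegree\QueryComplexity + 1$ distinct points along $\gamma$, and interpolates the unique polynomial of degree at most $\SCDegree\QueryComplexity$ through those values to form $g$. Because the honest prover's extra-round message is exactly $g = Q \circ \gamma$, which is uniquely determined by any $\SCDegree\QueryComplexity + 1$ evaluations of $Q$ along $\gamma$, the simulated view is distributed identically to the real view, and $\Simulator''$'s query overhead of $\SCDegree\QueryComplexity + 1$ matches the claimed query-bound degradation.

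I expect the main obstacle to be establishing the sharp degree bound $\deg(g) \leq \SCDegree\QueryComplexity$: a naive coordinate-wise composition of an individual-degree-$\SCDegree$ polynomial with a curve whose coordinates each have degree $\QueryComplexity$ only yields the weaker bound $\SCVars\SCDegree\QueryComplexity$, and closing this gap will require either choosing $\gamma$ with a more structured parametrization compatible with the individual-degree constraint or replacing the single high-degree curve with an appropriately batched construction so that the prover's univariate inherits the tighter degree. A secondary but routine subtlety is checking that deferring $\Verifier'$'s queries to a single post-interaction moment is invisible to $\Simulator'$'s analysis, which follows automatically for public-coin protocols since verifier messages are independent of the oracle's contents.
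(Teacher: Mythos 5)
Your construction follows essentially the same route as the paper's own proof in the appendix: interpolate a degree-$\QueryComplexity$ curve through the verifier's $\QueryComplexity$ query points plus one fresh random point, have the prover return the univariate restriction of the oracle to that curve, emulate $\Verifier'$'s decision on the interpolated answers, and spot-check the prover's restriction at one secret parameter value. The only difference is cosmetic: you fix the extra interpolation node $\beta_0$ and keep only $t^*$ secret, whereas the paper samples a secret parameter $t$ together with a secret target $\vec{r}$ and sends the resulting curve; both induce the same uniform distribution over degree-$\QueryComplexity$ curves through the query set, and the soundness and zero-knowledge arguments are otherwise identical.

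The degree worry you raise at the end is legitimate, but the resolution is simpler than you anticipate and does not call for a structured parametrization or any batching. Since $Q$ has \emph{individual} degree $\SCDegree$ in $\SCVars$ variables, its total degree is at most $\SCVars\SCDegree$, so the composition $Q \circ \gamma$ has degree at most $\SCVars\SCDegree\QueryComplexity$; this is tight in general (take $Q = \prod_j \VariableX_j^{\SCDegree}$), so there is no way to force the bound $\SCDegree\QueryComplexity$ for arbitrary query sets without imposing structure on them that the original IPCP does not provide. The correct fix is simply to carry the factor of $\SCVars$ through: the prover's univariate has degree at most $\SCVars\SCDegree\QueryComplexity$, the Schwartz--Zippel loss becomes $\frac{\SCVars\SCDegree\QueryComplexity}{|\Field|-\QueryComplexity}$, and the zero-knowledge query-bound degradation becomes $\SCVars\SCDegree\QueryComplexity + 1$ (the number of evaluations along $\gamma$ the simulator needs to interpolate $g$). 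The paper's own statement and appendix proof assert the bound $\SCDegree\QueryComplexity$ without elaboration, so the factor of $\SCVars$ appears to be an inadvertent omission there as well, and the corresponding terms in the hypotheses of \cref{lem:lifting} (namely $|\Field| > 5\SCDegree\QueryComplexity$ and $\QueryBound \geq 2(\SCDegree+1)^2 + \SCDegree\QueryComplexity + 1$) should be adjusted in the same way; but since $\SCVars$, $\SCDegree$, and $\QueryComplexity$ are all polynomial in the relevant instance size in every application, the extra factor is absorbed by a polynomial increase in the field size and query bound and the main theorem is unaffected. With that adjustment, your argument is complete and matches the paper's.
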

The proof of \cref{prop:round-reduction} is via a straightforward adaptation of a technique from \cite{KalaiR08}, while keeping track of its effect on zero knowledge; we defer this proof to \cref{sec:single-unif-q}.

We apply \cref{prop:round-reduction} to $(\Prover',\Verifier')$ in order to obtain the \LDIPCP{} $(\Prover'',\Verifier'')$, with parameters as stated in the lemma above, whose verifier makes a single uniformly distributed query to its PCP oracle. In particular, note that the new zero knowledge query bound is $\QueryBound' \geq 2(\SCDegree+1)^{2}$ and the soundness is $1/4 + \frac{\SCDegree \QueryComplexity}{|\Field|-\QueryComplexity} \le 1/2$. We then proceed to transform $(\Prover'',\Verifier'')$ to an \MIPStar{} in \cref{sec:LDIPCP_to_MIPS}.

\begin{remark}[prover-oblivious queries]
After the preprocessing, the verifier makes a single uniform query, which means that its queries are a random variable that is \emph{independent} of the prover messages (but may be correlated with the verifier messages). We refer to this property as \emph{prover-oblivious queries}.	
\end{remark}

\begin{remark}[on adaptivity]
\label{remark:on-adaptivity}
We assumed that all IPCP verifiers make non-adaptive queries to their oracle. However, we can extend all of our results, in a straightforward way, to hold with respect to verifiers that make adaptive queries, at the cost of an increase in round complexity. Specifically, by the public-coin property of our IPCP verifiers, we can assume without loss of generality that the verifier performs its queries \emph{after} the interaction with the prover ceases. After which, the verifier can ask the prover for the evaluation of the oracle, instead of actually querying it (at the cost of an additional round of interaction per adaptive query), and then perform all queries, non-adaptively, at the end.
\end{remark}

\subsection{The transformation}
\label{sec:LDIPCP_to_MIPS}

Recall that $(\Prover'',\Verifier'')$ is an $\RoundComplexity$-round IPCP protocol for a language $\Language$ with soundness error $\SoundnessError$, whose completeness and soundness conditions are with respect to a low-degree PCP oracle $\Oracle \in \PolynomialRingIndOne{\Field}{\SCVars}{\VariableX}{\SCDegree}$ to which the verifier makes a single uniform query. 

To construct an \MIPStar{} for $\Language$, we follow the proof overview presented in \cref{sec:techniques}. However, there is an additional complication that we need to deal with, which we discuss next.

\parhead{Symmetrization and zero knowledge}
Our high-level strategy for constructing a zero knowledge \MIPStar{} for $\Language$ is to let one entangled prover simulate the PCP oracle $\Oracle$ and the other one simulate the IPCP prover $\Prover$, while using the entanglement-resistant low-degree test to assert that the prover simulating $\Oracle$ actually answers according to a low-degree polynomial.

Recall that the analysis of the low-degree test (\cref{thm:quantum_low_degree_test}) requires that the provers employ \emph{symmetric} strategies. Typically, this is handled by letting the verifier randomly choose the roles that the provers play. However, in the setting of zero knowledge \MIPStar{} such a symmetrization causes problems.

Specifically, to prove zero knowledge we need to consider \emph{malicious} verifiers that may abuse the interaction to learn from the provers. In particular, it turns out that if the verifier asks both provers to take the role of the IPCP prover $\Prover$, then the protocol may \emph{no longer} be zero knowledge condition. Indeed, the particular zero knowledge IPCP that we construct in \cref{part:II} to the end of obtaining our zero knowledge \MIPStar{} loses its zero knowledge property if the verifier is allowed to perform two parallel interactions with the prover. 

We overcome this difficulty via the following (non-standard) symmetrization. First, the provers flip a coin (by performing a measurement on $\ket{\Psi}$) to decide which prover is \emph{primary} and which is \emph{secondary}, and send its outcome to the verifier. The secondary prover may only be assigned with the role of plane or point lookup, whereas the primary prover may also be assigned with the role of the IPCP prover. This allows the verifier to enforce that only one prover takes the role of the IPCP prover, while keeping the provers' strategy symmetric.

Below we describe how to construct an \MIPStar{} for $\Language$.

\begin{construction}
\label{con:sumcheck-mipstar}
We construct a $2$-prover \MIPStar{} $(\Prover_1, \Prover_2, \Verifier)$ for the language $\Language$. The provers $\Prover_1$ and $\Prover_2$ share an \qstate{} $\ket{\Psi}$, and all three parties receive an explicit input $\Input$. The (honest) interaction takes place as follows.

\begin{enumerate}

	\item \emph{Symmetrization.} The provers flip a coin (by performing a measurement on $\ket{\Psi}$) and send its outcome to the verifier, to decide which prover is \emph{primary} and which one is \emph{secondary}. The primary prover may be assigned a role in $\Set{\ProverMain, \ProverPoint, \ProverPlane}$, and the secondary prover only in $\Set{\ProverPoint, \ProverPlane}$.
	
	\item The verifier chooses uniformly at random between the following procedures.
	
	\begin{itemize}
		\item \emph{Low individual degree test.} The verifier $\Verifier$ performs the low individual degree test of \cref{con:plane-vs-point-individual}. Recall that with probability $1/4$ the verifier sends a random point $\alpha \in \Field^{\SCVars}$ to the secondary prover.
		
		\item \emph{IPCP emulation.}
		\begin{enumerate}
		\item The verifier $\Verifier$ assigns the primary prover the role $\ProverMain$ and the secondary prover the role $\ProverPoint$.
		\item $\Verifier$ asks $\ProverPoint$ for an evaluation of $\RandPoly$ at a uniformly chosen point $\vec{\beta} \in \Field^\SCVars$.
  		\item $\ProverMain$ and $V$ emulate the interaction of the IPCP $(\Prover'',\Verifier'')$. This generates a value $c \in \Field$ such that, with probability at least $1 - \SoundnessError$, $\Input \in \Language$ if and only if $\RandPoly(\vec{\beta})=c$.
  		\item $\ProverPoint$ replies with an element $\Malicious{\Element} \in \Field$.
  		\item $\Verifier$ accepts if and only if $c = \Malicious{\Element}$.
  		\end{enumerate}
	\end{itemize} 

\end{enumerate}
\end{construction}

The honest prover strategy in \cref{con:sumcheck-mipstar} is symmetric and so we write $\ProverStrategy \DefineEqual \Prover_1 = \Prover_2$. The round complexity of the \MIPStar in \cref{con:sumcheck-mipstar} is $\RoundComplexity^* + 2$, because the parties assign roles in the first round, then either engage in a $1$-round low-degree test protocol or an $\RoundComplexity^*+ 1$-round IPCP protocol.\footnote{This relies on the IPCP verifier satisfying the prover-oblivious queries property.}

For completeness, if $\Input \in \Language$, then there exists a low-degree polynomial $\Oracle \in \PolynomialRingIndOne{\Field}{\SCVars}{\VariableX}{\SCDegree}$ that the IPCP verifier $\Verifier''$ accepts. Hence, after the primary prover is chosen, if the verifier selects the low-degree test, then both $\ProverPoint$ and $\ProverPlane$ can simply answer according to $\Oracle$, and if the verifier choses the IPCP emulation, then the prover given role $\ProverMain$ acts according to the strategy of $\Prover''$, and $\ProverPoint$ acts as a lookup for $\Oracle$. In the case of the low-degree test, the foregoing strategy is accepted with probability $1$, whereas in the IPCP emulation, the strategy inherits its completeness directly from the IPCP $(\Prover'',\Verifier'')$.

We next argue soundness (\cref{sec:main_soundness}) and preservation of zero knowledge (\cref{sec:main_zk}).

\parhead{Preserving round complexity sans zero knowledge}
As mentioned in \cref{rem:rounds}, the round complexity of the \MIPStar{} in \cref{con:sumcheck-mipstar} can be improved by $1$. This is achieved by letting the (honest) \emph{verifier} choose at random which prover is primary and which is secondary (replacing the first step in \cref{con:sumcheck-mipstar}).

While this modification may break the zero knowledge property (as it allows the verifier to engage in protocols that abuse the interaction with the prover, e.g., by allowing the verifier to set both provers as the main prover\footnote{Indeed, the particular zero knowledge IPCP (shown in \cref{part:II}) that we use to obtain our main result (\cref{thm:main-zk} see also \cref{rem:breaking-zk}) was observed to lose its zero knowledge property if the verifier is allowed to perform two parallel interactions with the prover (see \cite[Remark 5.6]{BenSassonCFGRS17}).}), this modification has essentially no effect on the soundness analysis, which we show next.

\subsection{Soundness analysis}
\label{sec:main_soundness}
We argue soundness against entangled quantum provers for the \MIPStar{} from \cref{con:sumcheck-mipstar}. Namely, we prove soundness with respect to a large constant soundness error $1 - \SoundnessError$, and then we amplify the soundness to the desired constant via parallel repetition for entangled strategies \cite{KempeV11,Yuen16}.\footnote{While the known parallel repetition theorems for entangled strategies are much weaker than their classical counterparts (having polynomial rather than exponential decay), this difference is immaterial in our setting.} Note that this preserves the complexities required by the conclusion of \cref{lem:lifting}.

Let $x \not\in \Language$. We may assume by \cref{lem:asymmetry} that, since the protocol is symmetric, the provers employ some symmetric strategy $(\Malicious{\ProverStrategy},\Malicious{\ProverStrategy})$, using a permutation invariant \qstate{} $\ket{\Psi} \in \HilbertSpace \otimes \HilbertSpace$. Suppose towards contradiction that the verifier accepts with probability at least $1 - \SoundnessError/2$, for constant $\SoundnessError$ to be determined later. We show that this implies a strategy that fools the (classical) IPCP with constant probability.

Let $\Set{\Measurement{A}{\Point}{\Element}}_{\Element \in \Field, \Point \in \Field^\SCVars}$ be the projective measurement describing the strategy $\Malicious{\ProverStrategy}$. The verifier $\Verifier$ with probability $1/2$ performs the low individual degree test of \cref{con:plane-vs-point-individual}. Since by assumption the verifier accepts with probability at least $1 - \SoundnessError/2$, the low-degree test passes with probability at least $1-\SoundnessError$. 

Therefore, by \cref{thm:quantum_low_degree_test}, there exists an absolute constant $c \in [0,1]$ and a measurement $\Set{\Measurement{L}{}{\Poly}}_{\Poly \in \PolynomialRingIndOne{\Field}{\SCVars}{\VariableX}{\SCDegree}}$ such that for $\ProximityParameter \DefineEqual \SoundnessError^c$ it holds that
\begin{equation}
\label{eq:point-strategy}
	\E_{\Point \in \Field^\SCVars} \sum_{\Poly \in \PolynomialRingIndOne{\Field}{\SCVars}{\VariableX}{\SCDegree}} \sum_{\substack{\Element \in \Field \\ \Element \neq \Poly(\Point)}} \bra{\Psi} \Measurement{A}{\Point}{\Element} \otimes \Measurement{L}{}{\Poly} \ket{\Psi} \leq \ProximityParameter \enspace,
\end{equation}
and
\begin{equation}
\label{eq:self-consistency}
	\sum_{\Poly \in \PolynomialRingIndOne{\Field}{\SCVars}{\VariableX}{\SCDegree}} \bra{\Psi} \Measurement{L}{}{\Poly} \otimes (\Id -\Measurement{L}{}{\Poly}) \ket{\Psi} \leq \ProximityParameter \enspace.
\end{equation}

Let $\Malicious{\ProverStrategyLD}$ be the strategy derived from $\Malicious{\ProverStrategy}$ by replacing the (arbitrary) measurement $\Set{\Measurement{A}{\Point}{\Element}}_{\Element \in \Field, \Point \in \Field^\SCVars}$ with the (low-degree) measurement $\Set{\Measurement{L}{\Point}{\Element}}_{\Element \in \Field,\Point \in \Field^\SCVars}$ given by
\begin{equation*}
\Measurement{L}{\Point}{\Element}
\DefineEqual
\sum_{\substack{\Poly \in \PolynomialRingIndOne{\Field}{\SCVars}{\VariableX}{\SCDegree} \\ \Poly(\Point) = \Element}}
  \Measurement{L}{}{\Poly}
\enspace.
\end{equation*}

Without loss of generality we designate the primary prover as $\Prover_{1}$ and the secondary prover as $\Prover_{2}$. With probability at least $1/2$, the verifier performs IPCP emulation; since the success probability of the malicious prover strategy $(\Malicious{\ProverStrategy},\Malicious{\ProverStrategy})$ is $1 - \SoundnessError/2$, this succeeds with probability at least $1 - \SoundnessError$. Recall that in this case $\Prover_{1}$ is given the role of $\ProverMain$ and $\Prover_{2}$ the role of $\ProverPoint$. In this setting, $\Prover_{1}$ acts identically under strategy $\Malicious{\ProverStrategy}$ and $\Malicious{\ProverStrategyLD}$, whereas $\Prover_{2}$ measures according to $\Set{\Measurement{A}{\Point}{\Element}}_{\Element \in \Field, \Point \in \Field^\SCVars}$ in the former case and $\Set{\Measurement{L}{\Point}{\Element}}_{\Element \in \Field,\Point \in \Field^\SCVars}$ in the latter. We show that the probability that $\Verifier$ (falsely) accepts the strategy $(\Malicious{\ProverStrategy},\Malicious{\ProverStrategy})$ is close, up to an additive $\ProximityParameter$ factor, to the probability it accepts the strategy $(\Malicious{\ProverStrategyLD},\Malicious{\ProverStrategyLD})$.

We describe the system via the following four registers.
\begin{enumerate}
	\item $\Register{A}$ is the (classical) register wherein the message from $\Verifier$ to $\ProverPoint$ is stored.
	\item $\Register{B}$ is the register that corresponds to the private space of $\ProverPoint$.
	\item $\Register{C}$ is the register that consists of the rest of the system (everything but $\Register{A}$, $\Register{B}$, and the ancilla).
	\item $\Register{D}$ is the ancilla $\ProverPoint$ uses to store its answers to $\Verifier$.

\end{enumerate}

Let $\State \in \HilbertSpace_{\Register{A}} \otimes \HilbertSpace_{\Register{B}} \otimes \HilbertSpace_{\Register{C}} \otimes \HilbertSpace_{\Register{D}}$ be the global \qstate{} of the system prior to the measurement performed by $\ProverPoint$, $\State_B$ be the global state of the system after $\ProverPoint$ measures according to $\set{\Measurement{A}{\Point}{b}}_{b \in \Field, \Point \in \Field^\SCVars}$, and $\State_T$ be the global state of the system after $\ProverPoint$ measures according to $\Set{\Measurement{L}{}{\Poly}}_{\Poly \in \PolynomialRingIndOne{\Field}{\SCVars}{\VariableX}{\SCDegree}}$, where after the measurements $\ProverPoint$ discards the post-measurement state. Note that
\begin{equation*}
	\State = \E_{\Point \in \Field^\SCVars}\ket{\Point}\bra{\Point} \otimes \State_{\Point}^{\Register{B}, \Register{C}} \enspace,
\end{equation*} 
\begin{equation*}
	\State_{B} = \Trace_{\Register{B}} \left[
	\E_{\Point \in \Field^\SCVars}\ket{\Point}\bra{\Point} \otimes
	(\Measurement{B}{\Point}{b} \otimes \Id_{\Register{C}}) \State_{\Point}^{\Register{B}, \Register{C}} (\Measurement{B}{\Point}{b} \otimes \Id_{\Register{C}}) \otimes
	\sum_{b \in \Field}\ket{b}\bra{b} \right] \enspace,
\end{equation*} 
\begin{equation*}
	\State_{T} = \Trace_{\Register{B}} \left[
	\E_{\Point \in \Field^\SCVars}\ket{\Point}\bra{\Point} \otimes
	(\Measurement{L}{\Point}{b} \otimes \Id_{\Register{C}}) \State_{\Point}^{\Register{B}, \Register{C}} (\Measurement{L}{\Point}{b} \otimes \Id_{\Register{C}}) \otimes
	\sum_{b \in \Field}\ket{b}\bra{b} \right] \enspace,
\end{equation*}
where $\State_{\Point}^{\Register{B}, \Register{C}}$ denotes the \qstate{} $\State$ after $\Verifier$ asked the question $\Point$, restricted to the registers $\Register{B}, \Register{C}, \Register{D}$, and $\Id_{\Register{C}}$ denotes the identity operator over $\HilbertSpace_{\Register{D}}$.

Recall that $(\Malicious{\ProverStrategy}, \Malicious{\ProverStrategy}, \Verifier)(x)$ and $(\Malicious{\ProverStrategyLD},\Malicious{\ProverStrategyLD},\Verifier)(x)$ denote the random variables representing the output of the verifier $\Verifier$ when interacting with provers employing strategies $(\Malicious{\ProverStrategy},\Malicious{\ProverStrategy})$ and $(\Malicious{\ProverStrategyLD},\Malicious{\ProverStrategyLD})$, respectively, on input $x \notin \Language$. Observe that
\begin{align*}
	 & \left| \Pr[(\Malicious{\ProverStrategy},\Malicious{\ProverStrategy},\Verifier)(x) = 1] - \Pr[(\Malicious{\ProverStrategyLD},\Malicious{\ProverStrategyLD},\Verifier)(x) = 1] \right| \\
	 & \leq \frac{1}{2} \NormOne{\State_{B} - \State_{T}} \\
	 & = \frac{1}{2} \NormOne{\Trace_{\Register{B}}
	\E_{\Point \in \Field^\SCVars}\ket{\Point}\bra{\Point} \otimes
	\left( (\Measurement{B}{\Point}{b} \otimes \Id_{\Register{C}}) \State_{\Point}^{\Register{B},\Register{C}} (\Measurement{B}{\Point}{b} \otimes \Id_{\Register{C}}) - (\Measurement{L}{\Point}{b} \otimes \Id_{\Register{C}}) \State_{\Point}^{\Register{B},\Register{C}} (\Measurement{L}{\Point}{b} \otimes \Id_{\Register{C}}) \right) \otimes \sum_{b \in \Field}\ket{b}\bra{b} } \enspace, \\
	 & \leq \frac{1}{2}\E_{\Point \in \Field^\SCVars} \NormOne{\sum_{b \in \Field} \left( (\Measurement{B}{\Point}{b} \otimes \Id_{\Register{C}}) \State_{\Point}^{\Register{B},\Register{C}} (\Measurement{B}{\Point}{b} \otimes \Id_{\Register{C}}) - (\Measurement{L}{\Point}{b} \otimes \Id_{\Register{C}}) \State_{\Point}^{\Register{B},\Register{C}} (\Measurement{L}{\Point}{b} \otimes \Id_{\Register{C}}) \right) \otimes \ket{b}\bra{b}} \\
	 & \leq \E_{\Point \in \Field^\SCVars} \sqrt{\sum_{b \in \Field} \Trace \left( (\Measurement{B}{\Point}{b} - \Measurement{L}{\Point}{b}) \rho (\Measurement{B}{\Point}{b} - \Measurement{L}{\Point}{b})^\dagger \right)} \quad\quad \text{(by \cref{lem:gentle-measurement})} \\
	 & \leq \sqrt{ \E_{\Point \in \Field^\SCVars} \sum_{b \in \Field} \TraceRho{(\Measurement{B}{\Point}{b} - \Measurement{L}{\Point}{b})^2}} \quad\quad \text{(by Jensen's inequality)} \\
	 & \leq \delta \quad\quad \text{(by \cref{eq:point-strategy} and \cref{clm:consistncy-to-trace}).}
\end{align*}

By the triangle inequality, the total success probability of $(\Malicious{\ProverStrategyLD},\Malicious{\ProverStrategyLD})$ is at least $1 - \SoundnessError - \ProximityParameter > 1-2\ProximityParameter$. To conclude the proof, note that when using strategy $(\Malicious{\ProverStrategyLD},\Malicious{\ProverStrategyLD})$ the prover $\ProverPoint$ can measure the prior entanglement obliviously to its question, and so the strategy can be realize merely via shared randomness.

Therefore we can construct a malicious prover $\Malicious{\Prover}''$ that will fool the \LDIPCP{} verifier $\Verifier''$ for $\Language$ (which we started from) with probability at least $1-2\ProximityParameter$, by implementing the strategy $\Malicious{\ProverStrategy}''$ in the natural way. That is, $\Malicious{\Prover}''$ samples some $\Poly \in \PolynomialRingIndOne{\Field}{\SCVars}{\VariableX}{\SCDegree}$ according to the distribution induced by $\Set{\Measurement{L}{}{\Poly}}$ and $\ket{\Psi}$ and sends $\Poly$ as the oracle. It then interacts with $\Verifier$ according to the strategy $\Malicious{\ProverStrategy}$.

\subsection{Preserving zero knowledge}
\label{sec:main_zk}

We argue that \cref{con:sumcheck-mipstar} preserves zero knowledge. Suppose that the IPCP $(\Prover'',\Verifier'')$ is zero knowledge with query bound $\QueryBound' \geq 2(\SCDegree+1)^{2}$, and let $\Simulator''$ be the corresponding simulator. We explain how to construct a simulator $\Simulator$ for the \MIPStar{} $(\Prover_1, \Prover_2, \Verifier)$.

Given a malicious verifier $\Malicious{\Verifier}$ for the \MIPStar{}, we design a \DoQuote{malicious} verifier $\Malicious{\Verifier}''$ for the IPCP protocol that, when interacting with the IPCP prover $\Prover''$, outputs the view of the malicious \MIPStar{} verifier $\Malicious{\Verifier}$ when interacting with the (honest) \MIPStar{} provers $\Prover_{1}, \Prover_{2}$. The simulator $\Simulator$ is then given by running $(\Simulator'')^{\Malicious{\Verifier}''}$, and returning the output of $\Malicious{\Verifier}''$. We first describe the operation of $\Malicious{\Verifier}''$.

\begin{mdframed}
\begin{enumerate}[nolistsep]
\item Begin simulating $\Malicious{\Verifier}$.
\item Flip a coin, and send the outcome to $\Malicious{\Verifier}$. Receive from $\Malicious{\Verifier}$ the role assignments for the provers; if the secondary prover is assigned to be $\ProverMain$, we simulate as if it has aborted.
\item The remainder of the simulation is divided up with respect to prover role.
\begin{enumerate}[nolistsep]
	\item Every message $\Malicious{\Verifier}$ sends to the prover assigned to be $\ProverMain$, if any, is forwarded to $\Prover''$, and the responses of $\Prover''$ are forwarded to $\Malicious{\Verifier}$.
	\item If any prover is assigned the role of $\ProverPoint$, then if $\Malicious{\Verifier}$ sends a query point $\alpha \in \F^{\SCVars}$ to this prover, query the oracle at $\alpha$, and send the answer to $\Malicious{\Verifier}$; if $\Malicious{\Verifier}$ sends an axis-parallel line $\ell$ to this prover, query the oracle at $(\SCDegree+1)$ points on $\ell$ in order to interpolate $\RandPoly \circ \ell$, and send this polynomial to $\Malicious{\Verifier}$.
	\item If any prover is assigned the role of $\ProverPlane$, then if $\Malicious{\Verifier}$ sends a query plane $\Plane \in \Planes{\Field^\SCVars}$ to this prover, query the oracle at a set of points sufficient to interpolate $\RandPoly \circ \Line$ (of size at most $(\SCDegree+1)^{2}$), and send this polynomial to $\Malicious{\Verifier}$.
\end{enumerate}
\item Output the view of the simulated $\Malicious{\Verifier}$.
\end{enumerate}
\end{mdframed}

It is clear that the output of $\Malicious{\Verifier}''$ when interacting with $\Prover''$ is exactly the view of $\Malicious{\Verifier}$ in the \MIPStar{}. The number of queries $\Malicious{\Verifier}''$ makes is at most $(\SCDegree+1)^{2} + \SCDegree + 1 \leq 2(\SCDegree+1)^{2}$. By the zero knowledge guarantee for $\Simulator''$, provided $\QueryBound' \geq 2(\SCDegree+1)^{2}$, the view of $\Malicious{\Verifier}''$ is perfectly simulated, and so in particular its output in simulation is identically distributed to the output of $\Malicious{\Verifier}''$ when interacting with $\Prover''$.

\begin{remark}
\label{rem:breaking-zk}
Observe that if it were possible for the verifier to assign both provers to be $\ProverMain$, as is the case with the standard symmetrization, then the above argument would not go through. The reason is that the two interactions may be correlated in some way that we cannot simulate.
\end{remark}

\doclearpage
\section{Zero knowledge \MIPStar{} for nondeterministic exponential time}
\label{sec:putting-it-all-together}

Recall that our plan is to prove \cref{thm:main-zk} in two steps:
\begin{inparaenum}[(1)]
	\item construct a zero knowledge low-degree IPCP for any language in $\NEXP$;
	\item invoke the lifting lemma (\cref{lem:lifting}) on this low-degree IPCP in order to obtain a zero knowledge \MIPStar{} for $\NEXP$.
\end{inparaenum}
So far, in \cref{part:I}, we have obtained the tools for deriving a zero knowledge \MIPStar{} from a zero knowledge low-degree IPCP. The goal of \cref{part:II} is to construct such a zero knowledge low-degree IPCP for any language in $\NEXP$; that is, in \cref{part:II} we prove the following theorem.

\begin{theorem}[concisely stated; see \cref{thm:pzk-for-nexp} for the full statement]
\label{thm:pzk-ldipcp}
There exists a constant $c \in \Naturals$ such that for every query bound function $\QueryBound$ and language $\Language \in \NEXP$ the following holds. There exists an \LDIPCP{} for $\Language$, where $d, m = O(n^{c} \log \QueryBound)$ and $\Field$ is a field with $|\Field| = \Omega((n^{c} \log \QueryBound)^{4})$, that is perfect zero knowledge against all $\QueryBound$-query malicious verifiers and has the following parameters:
	\begin{equation*}
		\LDIPCPparams{1/2}{\poly(n) + O(\log \QueryBound)}{\poly(2^{n},\QueryBound)}{\poly(n, \log(\QueryBound))}{\poly(n, \log(\QueryBound))}{\PolynomialRingIndOne{\Field}{\SCVars}{\VariableX}{\SCDegree}} \enspace.
	\end{equation*}
\end{theorem}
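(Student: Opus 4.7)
The plan is to implement the blueprint outlined in \cref{sec:techniques-nexp}, turning the classical BFL protocol \cite{BabaiFL91} for the $\NEXP$-complete problem \emph{Oracle 3SAT} into a perfect zero knowledge low-degree IPCP by composing three algebraic ingredients we have already assembled: randomized low-degree extensions of the witness, the algebraic commitment scheme built on algebraic query complexity lower bounds (\cref{sec:algebraic-query-complexity}), and the strong zero knowledge sumcheck protocol (\cref{lem:strong-zk-sumcheck-short-informal}). Given an instance $(B,r,s)$ of Oracle 3SAT, the honest prover would first sample, over a suitably large field $\Field$, a \emph{random} degree-$d$ extension $\LD{A}\in\PolynomialRingIndOne{\Field}{s}{\VariableX}{d}$ of a witness $A\colon\Bits^{s}\to\Bits$, where $d$ is chosen large enough (as a function of the target query bound $\QueryBound$) so that, by a straightforward dimension argument, the joint distribution of any polynomially-many evaluations of $\LD{A}$ at points \emph{outside} $\Bits^{s}$ is uniform and independent in $\Field$. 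Rather than sending $\LD{A}$ in the clear, the prover sends a \emph{low-degree} algebraic commitment $Z$ to $\LD{A}$, which is itself a low-degree polynomial and thus respects the low-degree IPCP promise.

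Next, following BFL, I would arithmetize $B$ and use the \cite{BabaiFLS91} bundling trick to reduce the compound check \DoQuote{$\LD{A}$ extends a boolean function and $\LD{A}|_{\Bits^{s}}$ satisfies $B$} to a single sumcheck claim \DoQuote{$\sum_{\vec{\gamma}\in\Bits^{r+3s+3}} f(\vec{\gamma}) = 0$}, where $f$ is a low-degree polynomial depending on $\LD{A}$ and on the arithmetization of $B$. The prover and verifier then execute the strong zero knowledge sumcheck protocol of \cref{lem:strong-zk-sumcheck-short-informal} on this claim, with the verifier's sumcheck challenges drawn from a subset $I\subseteq\Field$ \emph{disjoint from $\Bits$} (with the honest prover aborting if the verifier deviates, incurring only a small soundness loss). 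At the end, the sumcheck reduces to evaluating $f$ at a single point $\vec{c}$, which in turn requires three evaluations of $\LD{A}$ at points outside $\Bits^{s}$; these are revealed by invoking the zero-knowledge decommitment sub-protocol of the algebraic commitment, which itself piggybacks on the weak zero knowledge sumcheck of \cite{BenSassonCFGRS17}.

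For completeness and soundness, the low-degree completeness of the oracle follows from the algebraic commitment being a low-degree polynomial; low-degree soundness follows by combining the soundness of the sumcheck (which is safe because the oracle is promised low-degree), the binding property of the commitment, and the Schwartz--Zippel-type argument behind the BFL reduction, all of which contribute polynomially small error over a sufficiently large field, giving total error $\leq 1/2$. Tracking complexities gives round complexity $\poly(n)+O(\log\QueryBound)$ (dominated by the sumcheck of length $r+3s+3=\poly(n)$ plus the $O(\log\QueryBound)$-round commitment sub-protocol whose security parameter scales with $\log\QueryBound$), and the PCP length is $\poly(2^{n},\QueryBound)$ because both $\LD{A}$ and its commitment are evaluated over $\Field^{s}$ with $|\Field|=\poly(n,\log\QueryBound)$.

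For zero knowledge, I would build a straightline simulator that emulates the commitment phase using the hiding property (so that the commitment $Z$ reveals nothing about $\LD{A}$ to any $\QueryBound$-query verifier, by the algebraic query complexity bound of \cref{lem:sum-query-lower-bound}), simulates the sumcheck transcript by invoking the simulator of \cref{lem:strong-zk-sumcheck-short-informal} (which in total requires only one evaluation of $f$, hence at most three evaluations of $\LD{A}$), and finally samples those three evaluations uniformly and independently from $\Field$, which is correctly distributed because all three query points lie outside $\Bits^{s}$ and $\LD{A}$ is a random high-degree extension. The main obstacle I anticipate is the tight quantitative interplay between three parameters: the degree $d$ of $\LD{A}$ must be large enough that the $\poly(\QueryBound)$ queries the simulator implicitly makes to the commitment leak nothing (dictating $d=\Omega(\log\QueryBound)$ through the algebraic query complexity bound), yet small enough that the induced polynomial $f$ in the sumcheck stays within the individual-degree budget of the strong zero knowledge sumcheck and of the eventual lifting lemma \cref{lem:lifting}; the bookkeeping to show $d,m=O(n^{c}\log\QueryBound)$ and $|\Field|=\Omega((n^{c}\log\QueryBound)^{4})$ simultaneously suffice is where the bulk of the technical care will go.
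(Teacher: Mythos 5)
Your plan follows the same route as the paper's proof: BFL arithmetization of Oracle 3SAT, a randomized low-degree extension of the witness concealed inside an algebraic commitment, the strong zero knowledge sumcheck of \cref{lem:strong-zk-sumcheck} with challenges drawn from $\Field \setminus \SCSubset$, and decommitment of the three required witness evaluations via the weak zero knowledge sumcheck of \cite{BenSassonCFGRS17}. (The paper produces the random extension and its commitment in one step, by sampling $\StrongRandPoly$ uniformly subject to $\sum_{\vec{\beta}} \StrongRandPoly(\vec{\alpha},\vec{\beta}) = A(\gamma_2(\vec{\alpha}))$ on the base domain, but that is a cosmetic difference from your two-step description.)

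The gap is in your claim that $\SCDegree$ is chosen ``large enough (as a function of $\QueryBound$) so that \dots the joint distribution of any polynomially-many evaluations of $\LD{A}$ at points outside $\Bits^s$ is uniform and independent.'' This fails for any small individual degree: a malicious verifier can query $\SCDegree+2$ collinear points and observe the degree constraint, so achieving $\poly(\QueryBound)$-wise independence this way would force $\SCDegree = \poly(\QueryBound)$, which is incompatible with the target $\SCDegree = O(n^{c}\log\QueryBound)$ and is precisely the obstruction the paper raises in \cref{sec:techniques-nexp} (blueprint item (3)) to motivate the commitment. In the actual proof the degree of the extension plays no role in hiding against $\QueryBound$ queries; that hiding comes entirely from the algebraic query complexity lower bound (\cref{lem:sum-query-lower-bound}, via \cref{cor:partial-sum-indep-vars}) applied to the commitment oracle, and the parameter that must scale with $\QueryBound$ is the number $\SSCVars$ of summed-out variables, through the requirement $\SetCardinality{\SSCSubset}^{\SSCVars} > \QueryBound$. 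The extension's degree need only exceed the encoding degree by a small constant so that the \emph{three} evaluations the simulator produces are uniform --- which is in fact the only way your simulator uses the extension. Your statement that the algebraic query complexity bound ``dictates $\SCDegree = \Omega(\log\QueryBound)$'' therefore misattributes the dependence: the paper's $\SCDegree = O(n^c \log\QueryBound)$ is a by-product of the \cite{BabaiFLS91}-style re-encoding over an alphabet $\SCSubset$ of size $\poly(n, \log\QueryBound)$, which also makes the oracle a polynomial in $O(n^{c}\log\QueryBound)$ variables rather than over $\Field^{s}$ as you wrote, and it is this re-encoding that delivers the claimed PCP length $\poly(2^n,\QueryBound)$ and round complexity $\poly(n)+O(\log\QueryBound)$.
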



In this section we prove \cref{thm:main-zk} by taking the zero knowledge low-degree IPCP in \cref{thm:pzk-ldipcp} and lifting it via \cref{lem:lifting} to obtain a $2$-prover zero knowledge \MIPStar{}, concluding the proof of \cref{thm:main-zk}.

Let $\Language$ be a language in $\NEXP$, and let $\pair{\Prover'}{\Verifier'}$ be the \LDIPCP{} for $\Language$ implied by \cref{thm:pzk-ldipcp}, with respect to query bound $\QueryBound$, $d, m = O(n^{c} \log \QueryBound)$, and a finite field $\Field$ of size $\SetCardinality{\Field} = \poly(n)$ such that $\SetCardinality{\Field} > \max\Set{(2\SCVars \SCDegree)^C, 5\SCDegree\QueryComplexity, \Omega((n^{c} \log \QueryBound)^{4})}$ (where $C$ is the constant from \cref{thm:quantum_low_degree_test}).

The \LDIPCP{} $\pair{\Prover'}{\Verifier'}$ satisfies the conditions of the lifting lemma (\cref{lem:lifting}), and thus we can apply the transformation $\Transformation$ in \cref{lem:lifting} to the low-degree IPCP $\pair{\Prover'}{\Verifier'}$ to obtain a perfect zero knowledge $2$-prover \MIPStar{} $(\Prover_1, \Prover_2, \Verifier) \DefineEqual \Transformation(\Prover',\Verifier')$ for $\Language$, with round complexity $\poly(n) + O(\log \QueryBound) = \poly(n)$, communication complexity $\poly(n, \log(\QueryBound), \SCDegree, \QueryComplexity, \SCVars) = \poly(n)$, and soundness error $1/2$. This concludes the proof of our main result, \cref{thm:main-zk}.


\begin{remark}[zero knowledge \MIPStar{} for $\sharpP$ via known IPCPs]
\label{remark:sharpp-mip}
As mentioned in \cref{sec:techniques-towards}, a recent work in algebraic zero knowledge \cite{BenSassonCFGRS17} (building on techniques from \cite{BenSassonCGV16}) obtains a zero knowledge low-degree IPCP for any language in $\sharpP$. By replacing our IPCP for $\NEXP$ in \cref{thm:pzk-ldipcp} with their IPCP for $\sharpP$, we can derive a zero knowledge \MIPStar{}, albeit only for languages in $\sharpP$. 
\end{remark}

\doclearpage
\part{Low-degree IPCP with zero knowledge}
\label{part:II}

The purpose of this part is to show that there exists a perfect zero knowledge low-degree IPCP for any language in $\NEXP$, which is the remaining step in our construction of perfect zero knowledge \MIPStar{} protocols for $\NEXP$ (as discussed in \cref{sec:putting-it-all-together}). To this end we build on advances in algebraic zero knowledge \cite{BenSassonCFGRS17} and ideas from algebraic complexity theory, and we develop new techniques for obtaining algebraic zero knowledge.

\parhead{Organization}
We begin in \cref{sec:algebraic-query-complexity}, where we show new algebraic query complexity lower bounds on polynomial summation. Then, in \cref{sec:strong-zk-sumcheck} we construct our strong zero knowledge sumcheck protocol, whose analysis relies on the foregoing algebraic query complexity lower bounds. Finally, in \cref{sec:zk-nexp} we use our strong zero knowledge sumcheck protocol to show a perfect zero knowledge low-degree IPCP for any language in $\NEXP$.

\section{Algebraic query complexity of polynomial summation}
\label{sec:algebraic-query-complexity}

We have outlined in \cref{sec:techniques-algebraic-commitment} an algebraic commitment scheme based on the sumcheck protocol and lower bounds on the algebraic query complexity of polynomial summation. The purpose of this section is to describe this construction in more detail, and then provide formal statements for the necessary lower bounds.

\parhead{The setting: algebraic commitment schemes}
We begin with the case of committing to a single element $a \in \Field$. The prover chooses a uniformly random string $B \in \Field^{N}$ such that $\sum_{i=1}^{N} B_{i} = a$, for some $N \in \Naturals$. Fixing some $\SCDegree \in \Naturals$, $\SSCSubset \subseteq \Field$ and $\SSCVars \in \Naturals$ such that $\SetCardinality{\SSCSubset} \leq \SCDegree+1$ and $\SetCardinality{\SSCSubset}^{\SSCVars} = N$, the prover views $B$ as a function from $\SSCSubset^{\SSCVars}$ to $\Field$ (via an arbitrary ordering on $\SSCSubset^{\SSCVars}$) and sends the evaluation of a degree-$d$ extension $\LD{B} \colon \Field^{\SSCVars} \to \Field$ of $B$, chosen uniformly at random from all such extensions. The verifier can test that $\LD{B}$ is indeed (close to) a low-degree polynomial but (ideally) cannot learn any information about $a$ without reading \emph{all} of $B$ (i.e., without making $N$ queries). Subsequently, the prover can decommit to $a$ by convincing the verifier that $\sum_{\vec{\beta} \in \SSCSubset^{\SSCVars}} \LD{B}(\vec{\beta}) = a$ via the sumcheck protocol.

To show that the above is a commitment scheme, we must show both \emph{binding} and \emph{hiding}. Both properties depend on the choice of $\SCDegree$. The binding property follows from the soundness of the sumcheck protocol, and we thus would like the degree $d$ of $\LD{B}$ to be as small as possible. A natural choice would be $\SCDegree = 1$ (so $\SetCardinality{\SSCSubset} = 2$), which makes $\LD{B}$ the unique multilinear extension of $B$. However (as discussed in \cref{sec:techniques-algebraic-commitment}) this choice of parameters does not provide any hiding: it holds that $\sum_{\beta \in \Bits^{k}} B(\beta) = \LD{B}(2^{-1}, \ldots, 2^{-1}) \cdot 2^{k}$ (as long as $\Characteristic{\Field} \neq 2$). We therefore need to understand how the choice of $d$ affects the number of queries to $\LD{B}$ required to compute $a$. This is precisely the setting of \emph{algebraic query complexity}, which we discuss next.

The algebraic query complexity (defined in \cite{AaronsonW09} to study \DoQuote{algebrization}) of a function $f$ is the (worst-case) number of queries to some low-degree extension $\LD{B}$ of a string $B$ required to compute $f(B)$. This quantity is bounded from above by the standard query complexity of $f$, but it may be the case (as above) that the low-degree extension confers additional information that helps in computing $f$ with fewer queries. The usefulness of this information depends on the parameters $\SCDegree$ and $\SSCSubset$ of the low-degree extension. Our question amounts to understanding this dependence for the function $\textsc{Sum} \colon \Field^{N} \to \Field$ given by $\textsc{Sum}(B) \DefineEqual \sum_{i=1}^{N} B_{i}$. It is known that if $\SSCSubset = \Bits$ and $\SCDegree = 2$ then the algebraic query complexity of $\textsc{Sum}$ is exactly $N$ \cite{JumaKRS09}.

For our purposes, however, it is not enough to commit to a single field element. Rather, we need to commit to the evaluation of a polynomial $Q \colon \Field^{\SCVars} \to \Field$ of degree $\SCDegree_{Q}$, which we do as follows. Let $K$ be a subset of $\Field$ of size $\SCDegree_{Q}+1$. The prover samples, for each $\vec{\alpha} \in K^{\SCVars}$, a random string $B^{\vec{\alpha}} \in \Field^{N}$ such that $\textsc{Sum}(B^{\vec{\alpha}}) = Q(\vec{\alpha})$. The prover views these strings as a function $B \colon K^{\SCVars} \times \SSCSubset^{\SSCVars} \to \Field$, and takes a low-degree extension $\LD{B} \colon \Field^{\SCVars} \times \Field^{\SSCVars} \to \Field$. The polynomial $\LD{B}(\vec{\VariableX}, \vec{\VariableY})$ has degree $d_{Q}$ in $\vec{\VariableX}$ and $d$ in $\vec{\VariableY}$; this is a commitment to $Q$ because $\sum_{\vec{\beta} \in \SSCSubset^{\SSCVars}} \LD{B}(\vec{\VariableX}, \vec{\beta})$ is a degree-$d_{Q}$ polynomial that agrees with $Q$ on $K^{\SCVars}$, and hence equals $Q$.

Once again we will decommit to $Q(\vec{\alpha})$ using the sumcheck protocol, and so for binding we need $d$ to be small. For hiding, as in the single-element case, if $d$ is too small, then a few queries to $\LD{B}$ can yield information about $Q$. Moreover, it could be the case that the verifier can leverage the fact that $\LD{B}$ is a \emph{joint} low-degree extension to learn some linear combination of evaluations of $Q$. We must exclude these possibilities in order to obtain our zero knowledge guarantees.

\parhead{New algebraic query complexity lower bounds}
The foregoing question amounts to a generalization of algebraic query complexity where, given a list of strings $B_{1}, \ldots, B_{M}$, we determine how many queries we need to make to their \emph{joint} low-degree extension $\LD{B}$ to determine any nontrivial linear combination $\sum_{i=1}^{M} c_{i} \cdot \textsc{Sum}(B_{i})$. We will show that the \DoQuote{generalized} algebraic query complexity of $\textsc{Sum}$ is exactly $N$, provided $\SCDegree \geq 2(\SetCardinality{\SSCSubset}-1)$ (which is also the case for the standard algebraic query complexity).

In the remainder of the section we state our results in a form equivalent to the above, which is more useful to us. Denote by $\PolynomialRingIndOneXY{\Field}{\SCVars}{\VariableX}{\SSCVars}{\VariableY}{\SCDegree}{\SCDegree'}$ the set of all $(m+k)$-variate polynomials of individual degree $\SCDegree$ in the variables $\VariableX_1,\ldots, \VariableX_m$ and individual degree $\SCDegree'$ in the variables $\VariableY_1,\ldots, \VariableY_k$.
Given an arbitrary polynomial $\StrongRandPoly \in \PolynomialRingIndOneXY{\Field}{\SCVars}{\VariableX}{\SSCVars}{\VariableY}{\SCDegree}{\SCDegree'}$, we ask how many queries are required to determine any nontrivial linear combination of $\sum_{\vec{y} \in \SSCSubset^{\SSCVars}} \StrongRandPoly(\vec{\alpha}, \vec{y})$ for $\vec{\alpha} \in \Field^{\SCVars}$. The following lemma is more general: it states that not only do we require many queries to determine \emph{any} linear combination, but that the number of queries grows linearly with the number of independent combinations that we wish to learn.

\begin{lemma}[algebraic query complexity of polynomial summation]
\label{lem:sum-query-lower-bound}
Let $\Field$ be a field, $\SCVars, \SSCVars, \SCDegree, \SCDegree' \in \Naturals$, and $\SSCSubset, K, L$ be finite subsets of $\Field$ such that $K \subseteq L$, $\SCDegree' \geq \SetCardinality{\SSCSubset} - 2$, and $\SetCardinality{K} = \SCDegree+1$. If $S \subseteq \Field^{\SCVars+\SSCVars}$ is such that there exist matrices $C \in \Field^{L^{m} \times \ell}$ and $D \in \Field^{S \times \ell}$ such that for all $\StrongRandPoly \in \PolynomialRingIndOneXY{\Field}{\SCVars}{\VariableX}{\SSCVars}{\VariableY}{\SCDegree}{\SCDegree'}$ and all $i \in \{1, \ldots, \ell\}$
	\begin{equation*}
	\sum_{\vec{\alpha} \in L^{\SCVars}} C_{\vec{\alpha},i} \sum_{\vec{y} \in \SSCSubset^{\SSCVars}} \StrongRandPoly(\vec{\alpha}, \vec{y}) = \sum_{\vec{q} \in S} D_{\vec{q},i} \StrongRandPoly(\vec{q}) \enspace,
	\end{equation*}
	then $\SetCardinality{S} \geq \rank(BC) \cdot (\min\{\SCDegree' - \SetCardinality{\SSCSubset} + 2, \SetCardinality{\SSCSubset}\})^{\SSCVars}$, where $B \in \Field^{K^{\SCVars} \times L^{\SCVars}}$ is such that column $\vec{\alpha}$ of $B$ represents $\StrongRandPoly(\vec{\alpha})$ in the basis $(\StrongRandPoly(\vec{\beta}))_{\vec{\beta} \in K^{\SCVars}}$.
\end{lemma}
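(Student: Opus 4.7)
The plan is to reduce to a clean tensor formulation via the interpolation matrix $B$, establish a combinatorial single-variable lower bound for the sum functional, and then induct on $\SSCVars$ by slicing the last $\VariableY$-coordinate.

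First, I would use that $K^{\SCVars}$ is a perfect interpolation set for individual-degree-$\SCDegree$ polynomials in $\SCVars$ variables (since $\SetCardinality{K} = \SCDegree+1$), so that $B$ realizes $Z(\vec\alpha) = \sum_{\vec\beta \in K^{\SCVars}} B_{\vec\beta, \vec\alpha} Z(\vec\beta)$ for every $Z \in \PolynomialRingIndOneXY{\Field}{\SCVars}{\VariableX}{\SSCVars}{\VariableY}{\SCDegree}{\SCDegree'}$. Substituting into the LHS of the hypothesis replaces the sum over $L^{\SCVars}$ with a sum over $K^{\SCVars}$ using the coefficient matrix $BC$. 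Writing $U := \PolynomialRingIndOne{\Field}{\SCVars}{\VariableX}{\SCDegree}$ and $W := \PolynomialRingIndOne{\Field}{\SSCVars}{\VariableY}{\SCDegree'}$, and defining $G(w) := \sum_{\vec{y} \in \SSCSubset^{\SSCVars}} w(\vec{y}) \in W^{*}$ and $M_{i} := \sum_{\vec\beta \in K^{\SCVars}} (BC)_{\vec\beta, i} \phi^{X}_{\vec\beta} \in U^{*}$, the hypothesis rewrites as $M_{i} \otimes G = \sum_{\vec{q} \in S} D_{\vec{q}, i} \phi_{\vec{q}}$ on $U \otimes W$. Since $\{\phi^{X}_{\vec\beta}\}_{\vec\beta \in K^{\SCVars}}$ is a basis of $U^{*}$, the $M_{i}$'s span a subspace of $U^{*}$ of dimension exactly $r := \rank(BC)$; the target is to show $\SetCardinality{S} \geq r \cdot t^{\SSCVars}$ with $t := \min(\SetCardinality{\SSCSubset}, \SCDegree' - \SetCardinality{\SSCSubset} + 2)$.

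The combinatorial heart is a single-variable sub-lemma: the functional $F(w) := \sum_{y \in \SSCSubset} w(y)$ on $\Field[\VariableY^{\leq \SCDegree'}]$ has point-evaluation rank at least $t$. To prove this, suppose $F = \sum_{j=1}^{s} c_{j} \phi_{y_{j}}$; if $\{y_{j}\} \supseteq \SSCSubset$ then $s \geq \SetCardinality{\SSCSubset}$, and otherwise I would pick $y_{0} \in \SSCSubset \setminus \{y_{j}\}$ and set
\begin{equation*}
Z(\VariableY) := \prod_{y \in \SSCSubset \setminus \{y_{0}\}} (\VariableY - y) \cdot \prod_{j=1}^{s} (\VariableY - y_{j}),
\end{equation*}
of degree $\SetCardinality{\SSCSubset} - 1 + s$. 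If $s \leq \SCDegree' - \SetCardinality{\SSCSubset} + 1$ then $\deg Z \leq \SCDegree'$, and $Z$ vanishes on $\SSCSubset \setminus \{y_{0}\}$ and on all $y_{j}$ while $Z(y_{0}) \neq 0$; this yields $F(Z) = Z(y_{0}) \neq 0 = \sum_{j} c_{j} Z(y_{j})$, a contradiction. Hence $s \geq t$.

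For the main claim I would induct on $\SSCVars$. The base case $\SSCVars = 0$ is the elementary linear-algebraic fact that $r$ linearly independent functionals need at least $r$ point evaluations to span. For the inductive step $\SSCVars \geq 1$, partition $S$ by its last coordinate: $T := \{q_{\SCVars+\SSCVars} : \vec{q} \in S\}$ and $S_{z} := \{\vec{q} \in S : q_{\SCVars+\SSCVars} = z\}$. Substituting $Z = Z'(\vec\VariableX, \VariableY_{1}, \ldots, \VariableY_{\SSCVars-1}) \cdot s(\VariableY_{\SSCVars})$ into the hypothesis yields, for every $i$, the slicing identity
\begin{equation*}
F(s) \cdot (M_{i} \otimes F^{\otimes(\SSCVars-1)}) = \sum_{z \in T} s(z) \cdot A_{i, z}, \qquad \forall\, s \in \Field[\VariableY_{\SSCVars}^{\leq \SCDegree'}],
\end{equation*}
where $A_{i, z} := \sum_{\vec{q} \in S_{z}} D_{\vec{q}, i} \phi_{\vec{q}'}$ and $\vec{q}'$ drops the last coordinate. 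In the generic range $\SetCardinality{T} \leq \SCDegree' + 1$, Lagrange polynomials $L_{z^{*}}$ at $T$ of degree $\leq \SCDegree'$ exist, and specializing $s := L_{z^{*}}$ extracts $A_{i, z^{*}} = F(L_{z^{*}}) \cdot (M_{i} \otimes F^{\otimes(\SSCVars-1)})$. Back-substituting recovers $F = \sum_{z^{*} \in T} F(L_{z^{*}}) \phi_{z^{*}}$ as a representation of $F$ on $\Field[\VariableY_{\SSCVars}^{\leq \SCDegree'}]$, which by the sub-lemma forces $\SetCardinality{T^{*}} \geq t$ where $T^{*} := \{z^{*} \in T : F(L_{z^{*}}) \neq 0\}$. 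For each $z^{*} \in T^{*}$, after rescaling by $F(L_{z^{*}})$, the functionals $\{A_{i, z^{*}}\}_{i=1}^{r}$ are exactly the $r$ linearly independent $\{M_{i} \otimes F^{\otimes(\SSCVars-1)}\}_{i=1}^{r}$ jointly representable via points in $S_{z^{*}}$ (projected onto the first $\SCVars+\SSCVars-1$ coordinates); the inductive hypothesis gives $\SetCardinality{S_{z^{*}}} \geq r \cdot t^{\SSCVars-1}$, and summing over disjoint slices yields $\SetCardinality{S} \geq \SetCardinality{T^{*}} \cdot r \cdot t^{\SSCVars-1} \geq r \cdot t^{\SSCVars}$.

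The main obstacle will be the degenerate case $\SetCardinality{T} > \SCDegree' + 1$, where the $\phi_{z}$'s for $z \in T$ cease to be linearly independent on $\Field[\VariableY_{\SSCVars}^{\leq \SCDegree'}]$ and the Lagrange extraction breaks down. The resolution is to choose a spanning subset $T_{0} \subseteq T$ of size $\SCDegree' + 1$, express $\phi_{z} = \sum_{z' \in T_{0}} \lambda_{z, z'} \phi_{z'}$ for $z \in T \setminus T_{0}$, and form the ``effective slices'' $B_{i, z'} := A_{i, z'} + \sum_{z \in T \setminus T_{0}} \lambda_{z, z'} A_{i, z}$, reducing the slicing identity to the generic form on $T_{0}$. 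The technical subtlety is that the $B_{i, z'}$'s share point support across different $z' \in T_{0}$, so recovering the full multiplicative factor of $t^{\SSCVars}$ requires careful accounting of how the original slice sizes $\SetCardinality{S_{z}}$ redistribute under this basis change without undercounting; this is the subtlest piece of the argument and where the preservation of the $r$ independent $M_{i}$'s must be tracked meticulously throughout the rewriting.
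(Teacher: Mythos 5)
Your single-variable sub-lemma is correct, and the inductive step is sound in the generic range $\SetCardinality{T}\le\SCDegree'+1$. But the degenerate case $\SetCardinality{T}>\SCDegree'+1$ is a genuine gap, not a loose end that can simply be flagged. After passing to effective slices $B_{i,z'}=A_{i,z'}+\sum_{z\in T\setminus T_0}\lambda_{z,z'}A_{i,z}$, each $B_{i,z'}$ is supported on the projection of $S_{z'}\cup\bigcup_{z\in T\setminus T_0}S_z$ onto the first $\SCVars+\SSCVars-1$ coordinates, so the supports for distinct $z'\in T_0$ all share the entire projection of $\bigcup_{z\in T\setminus T_0}S_z$. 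The inductive hypothesis lower-bounds each such support, but summing these bounds over $z'\in T_0^*$ no longer bounds $\SetCardinality{S}=\sum_{z\in T}\SetCardinality{S_z}$, because the disjointness of the original slices is exactly what delivered the extra factor $t$ in the generic case. If $\bigcup_{z\in T\setminus T_0}S_z$ holds essentially all of $S$ while each $S_{z'}$, $z'\in T_0$, is small, the argument as written certifies only a single-slice bound of order $rt^{\SSCVars-1}$. Re-expressing the representation over an interpolation set $T_0$ does not rescue this either: that rewrite can only enlarge the support set (every $(\vec\gamma',z)$ with $z\in T\setminus T_0$ fans out into up to $\SetCardinality{T_0}$ new points), and a lower bound on the enlarged set gives no information about $\SetCardinality{S}$.

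The paper's proof sidesteps this by not slicing coordinate by coordinate. It considers the space $P_0$ of polynomials of $\vec{\VariableY}$-degree at most $t-1$ vanishing on $S$ (codimension at most $\SetCardinality{S}$), restricts their evaluations to $K^{\SCVars}\times H^{\SSCVars}$ for an arbitrary $t$-element subset $H\subseteq\SSCSubset$, and, by subadditivity of rank over a partition of columns, finds a single $\vec\beta_0\in H^{\SSCVars}$ whose column block has rank at least $(\SCDegree+1)^{\SCVars}-\SetCardinality{S}/t^{\SSCVars}$. Multiplying $p\in P_0$ by the Lagrange indicator $q$ of $\vec\beta_0$ over $\SSCSubset^{\SSCVars}$ keeps the $\vec{\VariableY}$-degree within $\SCDegree'$ (since $t+\SetCardinality{\SSCSubset}-2\le\SCDegree'$), still vanishes on $S$, and satisfies $\sum_{\vec{y}\in\SSCSubset^{\SSCVars}}q(\vec{y})p(\vec\alpha,\vec{y})=p(\vec\alpha,\vec\beta_0)$; the hypothesis then forces every column of $BC$ into the null space of that block, and rank--nullity gives $\SetCardinality{S}\ge\rank(BC)\cdot t^{\SSCVars}$ in one step, with no partition of $S$ and hence no degenerate branch. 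If you want to keep the inductive framework, you will need a genuinely new mechanism for $\SetCardinality{T}>\SCDegree'+1$; the paper's global rank argument is the cleaner route.
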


We remark that in \cref{sec:query-complexity-appendix} we prove upper bounds showing that, in some cases, \cref{lem:sum-query-lower-bound} is tight.

\begin{proof}[Proof of \cref{lem:sum-query-lower-bound}]
	We use a rank argument. First, since $\StrongRandPoly$ has individual degree at most $\SCDegree$ in $\vec{\VariableX}$, we can rewrite any such linear combination in the following way:
	\begin{equation*}
	\sum_{\vec{\alpha} \in L^{\SCVars}} C_{\vec{\alpha},i} \sum_{\vec{y} \in \SSCSubset^{\SSCVars}} \StrongRandPoly(\vec{\alpha}, \vec{y})
	= \sum_{\vec{\alpha} \in L^{\SCVars}} C_{\vec{\alpha},i} \sum_{\vec{\beta} \in K^{\SCVars}} b_{\vec{\beta},\vec{\alpha}} \sum_{\vec{y} \in \SSCSubset^{\SSCVars}} \StrongRandPoly(\vec{\alpha}, \vec{y})
	= \sum_{\vec{\alpha} \in K^{\SCVars}} C'_{\vec{\alpha},i} \sum_{\vec{y} \in \SSCSubset^{\SSCVars}} \StrongRandPoly(\vec{\alpha}, \vec{y})
	= \sum_{\vec{q} \in S} D_{\vec{q},i} \StrongRandPoly(\vec{q}) \enspace,
	\end{equation*}
	where $C' \DefineEqual BC$. If $\SCDegree' = \SetCardinality{\SSCSubset} - 2$, then the bound is trivial. Otherwise, let $H$ be some arbitrary subset of $G$ of size $\min\{\SCDegree' - \SetCardinality{\SSCSubset} + 2, \SetCardinality{\SSCSubset}\}$. Let $P_{0} \subseteq \PolynomialRingIndOneXY{\Field}{\SCVars}{\VariableX}{\SSCVars}{\VariableY}{\SCDegree}{\SetCardinality{\SCSubset}-1}$ be such that for all $p \in P_{0}$ and for all $\vec{q} \in S$, $p(\vec{q}) = 0$. Since these are at most $S$ linear constraints, $P_{0}$ has dimension at least $(\SCDegree+1)^{\SCVars}\SetCardinality{\SCSubset}^{\SSCVars}-\SetCardinality{S}$.
	
	Let $B_{0} \in \Field^{n \times (\SCDegree+1)^{\SCVars} \SetCardinality{\SCSubset}^{\SSCVars}}$ be a matrix whose rows form a basis for the vector space $\{\big(p(\vec{\alpha},\vec{\beta})\big)_{\vec{\alpha} \in K^{\SCVars}, \vec{\beta} \in \SCSubset^{\SSCVars}} : p \in P_{0}\}$ of evaluations of polynomials in $P_{0}$ on $K^{\SCVars} \times \SCSubset^{\SSCVars}$; we have $n \geq (\SCDegree+1)^{\SCVars}\SetCardinality{\SCSubset}^{\SSCVars}-\SetCardinality{S}$. By an averaging argument there exists $\vec{\beta}_{0} \in \SCSubset^{k}$ such that the submatrix $B_{\vec{\beta}_{0}}$ consisting of columns $(\vec{\alpha},\vec{\beta}_{0})$ of $B_{0}$ for each $\vec{\alpha} \in K^{\SCVars}$ has rank at least $(\SCDegree+1)^{\SCVars}-\SetCardinality{S}/\SetCardinality{\SCSubset}^{\SSCVars}$.
	
	Let $q \in \PolynomialRingIndOne{\Field}{\SSCVars}{\VariableY}{\SetCardinality{\SSCSubset}-1}$ be the polynomial such that $q(\vec{\beta}_{0}) = 1$, and $q(\vec{y}) = 0$ for all $\vec{y} \in \SSCSubset^{\SSCVars} - \{\vec{\beta}_{0}\}$. For arbitrary $p \in P_{0}$, let $Z(\vec{\VariableX}, \vec{\VariableY}) \DefineEqual q(\vec{\VariableY}) p(\vec{\VariableX},\vec{\VariableY}) \in \PolynomialRingIndOneXY{\Field}{\SCVars}{\VariableX}{\SSCVars}{\VariableY}{\SCDegree}{\SetCardinality{\SCSubset}+\SetCardinality{\SSCSubset}-2}$. Observe that our choice of $\SCSubset$ ensures that the degree of $\StrongRandPoly$ in $\vec{\VariableY}$ is at most $\SCDegree'$. Then for all $i \in \{1, \ldots, \ell\}$, it holds that
	\begin{equation*}
	\sum_{\vec{\alpha} \in K^{\SCVars}} C'_{\vec{\alpha},i} \sum_{\vec{y} \in \SSCSubset^{\SSCVars}} Z(\vec{\alpha},\vec{y})
	= \sum_{\vec{\alpha} \in K^{\SCVars}} C'_{\vec{\alpha},i} \cdot p(\vec{\alpha}, \vec{\beta}_{0})
	= \sum_{\vec{q} \in S} D_{\vec{q},i} \cdot Z(\vec{\alpha},\vec{y}) = 0 \enspace.
	\end{equation*}
	Thus the column space of $C'$ is contained in the null space of $B_{\vec{\beta}_{0}}$, and so the null space of $B_{\vec{\beta}_{0}}$ has rank at least $\rank(C')$. Hence $(d+1)^{m} - \rank(C') \geq \rank(B_{\vec{\beta}_{0}}) \geq (d+1)^{m} - \SetCardinality{S}/\SetCardinality{\SCSubset}^{\SSCVars}$, so $\SetCardinality{S} \geq \rank(C') \cdot \SetCardinality{\SCSubset}^{\SSCVars}$, which yields the theorem.
\end{proof}

\parhead{Implications} 
We state below special cases of \cref{lem:sum-query-lower-bound} that suffice for our zero knowledge applications.

\begin{corollary}
\label{cor:special-case-query}
Let $\Field$ be a finite field, $\SSCSubset$ be a subset of $\Field$, and $\SCDegree,\SCDegree' \in \Naturals$ with $\SCDegree' \geq 2(\SetCardinality{\SSCSubset}-1)$. If $S \subseteq \Field^{\SCVars+\SSCVars}$ is such that there exist $(c_{\vec{\alpha}})_{\vec{\alpha} \in \Field^{\SCVars}}$ and $(d_{\vec{\beta}})_{\vec{\beta} \in \Field^{\SCVars+\SSCVars}}$ such that
\begin{itemize}[nolistsep]

  \item for all $\StrongRandPoly \in \PolynomialRingIndOneXY{\Field}{\SCVars}{\VariableX}{\SSCVars}{\VariableY}{\SCDegree}{\SCDegree'}$ it holds that
$
\sum_{\vec{\alpha} \in \Field^{\SCVars}} c_{\vec{\alpha}} \sum_{\vec{y} \in \SSCSubset^{\SSCVars}} \StrongRandPoly(\vec{\alpha}, \vec{y})
= \sum_{\vec{q} \in S} d_{\vec{q}} \StrongRandPoly(\vec{q})
$
,and

  \item there exists $\StrongRandPoly' \in \PolynomialRingIndOneXY{\Field}{\SCVars}{\VariableX}{\SSCVars}{\VariableY}{\SCDegree}{\SCDegree'}$ such that $\sum_{\vec{\alpha} \in \Field^{\SCVars}} c_{\vec{\alpha}} \sum_{\vec{y} \in \SSCSubset^{\SSCVars}} \StrongRandPoly'(\vec{\alpha}, \vec{y}) \neq 0$,

\end{itemize}  
then $\SetCardinality{S} \geq \SetCardinality{\SSCSubset}^{\SSCVars}$.
\end{corollary}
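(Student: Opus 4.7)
The plan is to derive \cref{cor:special-case-query} directly from \cref{lem:sum-query-lower-bound} by taking $\ell = 1$. First I would check that the parameters line up: the hypothesis $\SCDegree' \geq 2(\SetCardinality{\SSCSubset}-1)$ gives $\SCDegree' - \SetCardinality{\SSCSubset} + 2 \geq \SetCardinality{\SSCSubset}$, so $\min\{\SCDegree'-\SetCardinality{\SSCSubset}+2,\SetCardinality{\SSCSubset}\} = \SetCardinality{\SSCSubset}$, and the $(\cdots)^{\SSCVars}$ factor in the conclusion of the lemma becomes $\SetCardinality{\SSCSubset}^{\SSCVars}$. Thus, producing coefficient matrices satisfying the hypothesis of the lemma with $\rank(BC) \geq 1$ suffices.

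Next I would instantiate the matrices $C$ and $D$. The sum $\sum_{\vec{\alpha} \in \Field^{\SCVars}} c_{\vec{\alpha}}(\cdots)$ in the corollary is only meaningful as a finite sum, so $(c_{\vec{\alpha}})$ has finite support; choose a finite subset $L \subseteq \Field$ and a subset $K \subseteq L$ of size $\SCDegree+1$ such that the support of $c$ lies in $L^{\SCVars}$. Define $C \in \Field^{L^{\SCVars} \times 1}$ by $C_{\vec{\alpha},1} \DefineEqual c_{\vec{\alpha}}$ and $D \in \Field^{S \times 1}$ by $D_{\vec{q},1} \DefineEqual d_{\vec{q}}$. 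The first hypothesis of the corollary, restricted to $\vec{\alpha} \in L^{\SCVars}$ (no loss since $c$ vanishes outside), then matches the hypothesis of \cref{lem:sum-query-lower-bound}.

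The only substantive step, and the main (mild) obstacle, is verifying that $\rank(BC) \geq 1$, i.e., $BC \neq 0$. I would argue this by contradiction: if $BC = 0$, then for \emph{any} admissible $\StrongRandPoly$, the fact that $\vec{\alpha} \mapsto \sum_{\vec{y} \in \SSCSubset^{\SSCVars}} \StrongRandPoly(\vec{\alpha}, \vec{y})$ is a polynomial of individual degree at most $\SCDegree$ in $\vec{\VariableX}$ lets us Lagrange-interpolate from $K^{\SCVars}$, yielding
\begin{equation*}
\sum_{\vec{\alpha} \in \Field^{\SCVars}} c_{\vec{\alpha}} \sum_{\vec{y} \in \SSCSubset^{\SSCVars}} \StrongRandPoly(\vec{\alpha},\vec{y}) = \sum_{\vec{\beta} \in K^{\SCVars}} (BC)_{\vec{\beta},1} \sum_{\vec{y} \in \SSCSubset^{\SSCVars}} \StrongRandPoly(\vec{\beta},\vec{y}) = 0,
\end{equation*}
contradicting the second hypothesis applied to $\StrongRandPoly = \StrongRandPoly'$. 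Hence $\rank(BC) \geq 1$, and \cref{lem:sum-query-lower-bound} gives $\SetCardinality{S} \geq 1 \cdot \SetCardinality{\SSCSubset}^{\SSCVars}$. The second hypothesis is precisely what rules out the trivial degenerate case where every admissible $\StrongRandPoly$ is sent to $0$, and no other subtlety is expected.
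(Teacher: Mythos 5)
Your derivation is correct and matches the (implicit) one intended by the paper: instantiate \cref{lem:sum-query-lower-bound} with $\ell=1$, note that $\SCDegree' \geq 2(\SetCardinality{\SSCSubset}-1)$ makes the minimum equal to $\SetCardinality{\SSCSubset}$, and use the second hypothesis together with a Lagrange-interpolation rewrite of the left-hand side through $K^{\SCVars}$ to conclude that $BC \neq 0$, hence $\rank(BC)\geq 1$. (One small simplification: since $\Field$ is finite here you may just take $L=\Field$; there is no need to invoke finite support.)
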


Next, we give an equivalent formulation of \cref{cor:special-case-query} in terms of random variables that we use in later sections. (Essentially, the linear structure of the problem implies that \DoQuote{worst-case} statements are equivalent to \DoQuote{average-case} statements.)

\begin{corollary}[equivalent statement of \cref{cor:special-case-query}]
\label{cor:partial-sum-indep-vars}
Let $\Field$ be a finite field, $\SSCSubset$ be a subset of $\Field$, and $\SCDegree,\SCDegree' \in \Naturals$ with $\SCDegree' \geq 2(\SetCardinality{\SSCSubset}-1)$. Let $Q$ be a subset of $\Field^{\SCVars+\SSCVars}$ with $\SetCardinality{Q} < \SetCardinality{\SSCSubset}^{\SSCVars}$, and let $\StrongRandPoly$ be uniformly random in $ \PolynomialRingIndOneXY{\Field}{\SCVars}{\VariableX}{\SSCVars}{\VariableY}{\SCDegree}{\SCDegree'}$. Then, the ensembles $\big(\sum_{\vec{y} \in \SSCSubset^{\SSCVars}} Z(\vec{\alpha},\vec{y})\big)_{\vec{\alpha} \in \Field^{\SCVars}}$ and $\big(\StrongRandPoly(\vec{q})\big)_{\vec{q} \in Q}$ are independent.
\end{corollary}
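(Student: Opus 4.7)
The plan is to derive \cref{cor:partial-sum-indep-vars} from \cref{cor:special-case-query} via a standard linear-algebra characterization of independence, exploiting that $\StrongRandPoly$ is uniformly distributed on the finite-dimensional $\Field$-vector space $V \DefineEqual \PolynomialRingIndOneXY{\Field}{\SCVars}{\VariableX}{\SSCVars}{\VariableY}{\SCDegree}{\SCDegree'}$ and that both ensembles in the claim are collections of $\Field$-linear functionals of $\StrongRandPoly$.

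First I would recall (or briefly verify) the following basic fact: if $X$ is uniform on a finite-dimensional $\Field$-vector space $V$ and $\phi \colon V \to W_{1}$, $\psi \colon V \to W_{2}$ are $\Field$-linear, then $\phi(X)$ and $\psi(X)$ are independent if and only if the only pair of linear functionals $f \in W_{1}^{*}$ and $g \in W_{2}^{*}$ satisfying $f \circ \phi = g \circ \psi$ as functions on $V$ is the pair with $f \circ \phi = g \circ \psi = 0$. This follows from two standard observations: (i) a linear image of the uniform distribution on $V$ is uniform on its image; and (ii) two jointly uniform random variables on finite sets are independent if and only if the support of the joint distribution equals the Cartesian product of the supports of the marginals.

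Next I would instantiate this fact with $\phi(\StrongRandPoly) \DefineEqual \big(\sum_{\vec{y} \in \SSCSubset^{\SSCVars}} \StrongRandPoly(\vec{\alpha},\vec{y})\big)_{\vec{\alpha} \in \Field^{\SCVars}}$ and $\psi(\StrongRandPoly) \DefineEqual \big(\StrongRandPoly(\vec{q})\big)_{\vec{q} \in Q}$. An arbitrary linear relation between these ensembles has the form
\begin{equation*}
\sum_{\vec{\alpha} \in \Field^{\SCVars}} c_{\vec{\alpha}} \sum_{\vec{y} \in \SSCSubset^{\SSCVars}} \StrongRandPoly(\vec{\alpha},\vec{y}) = \sum_{\vec{q} \in Q} d_{\vec{q}} \StrongRandPoly(\vec{q})
\end{equation*}
holding for every $\StrongRandPoly \in V$, for some scalars $(c_{\vec{\alpha}})$ and $(d_{\vec{q}})$. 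If the left-hand side were not identically zero in $\StrongRandPoly$, then \cref{cor:special-case-query} applied with $S \DefineEqual Q$ and the same coefficients would yield $\SetCardinality{Q} \geq \SetCardinality{\SSCSubset}^{\SSCVars}$, contradicting the hypothesis $\SetCardinality{Q} < \SetCardinality{\SSCSubset}^{\SSCVars}$. Hence the left-hand side vanishes identically in $\StrongRandPoly$, and by the displayed equality so does the right-hand side. By the recalled characterization, this establishes the desired independence.

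The main (minor) obstacle is simply setting up the linear-algebraic dictionary cleanly; once this is in place, \cref{cor:partial-sum-indep-vars} is essentially a restatement of \cref{cor:special-case-query} in probabilistic language, with no substantive further content.
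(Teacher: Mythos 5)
Your proposal is correct and follows essentially the same route as the paper: both arguments reduce the probabilistic independence claim to a linear-algebraic statement about linear functionals (the paper packages this as its \cref{claim:linear-algebra-fact}, which you state directly in terms of $\phi,\psi$ and functionals $f,g$) and then invoke the algebraic query lower bound to rule out any nontrivial relation. The only cosmetic difference is that the paper cites \cref{lem:sum-query-lower-bound} directly in the final step whereas you route through \cref{cor:special-case-query}; the content is the same.
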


\begin{proof}[Proof of \cref{cor:partial-sum-indep-vars}]
We will need a simple fact from linear algebra: that \DoQuote{linear independence equals statistical independence}. That is, if we sample an element from a vector space and examine some subsets of its entries, these distributions are independent if and only if there does not exist a linear dependence between the induced subspaces. The formal statement of the claim is as follows.
	
\begin{claim}
\label{claim:linear-algebra-fact}
Let $\Field$ be a finite field and $\Domain$ a finite set. Let $V \subseteq \Field^{\Domain}$ be an $\Field$-vector space, and let $\vec{v}$ be a random variable that is uniform over $V$. For any subdomains $S, S' \subseteq \Domain$, the restrictions $\Restrict{\vec{v}}{S}$ and $\Restrict{\vec{v}}{S'}$ are statistically dependent if and only if there exist constants $(c_{i})_{i \in S}$ and $(d_{i})_{i \in S'}$ such that:
\begin{itemize}[nolistsep]
  \item there exists $\vec{w} \in V$ such that $\sum_{i \in S} c_{i} w_{i} \neq 0$, and
  \item for all $\vec{w} \in V$, $\sum_{i \in S} c_{i} w_{i} = \sum_{i \in S'} d_{i} w_{i}$.
\end{itemize}
\end{claim}

\begin{proof}[Proof of \cref{claim:linear-algebra-fact}]
	For arbitrary $\vec{x} \in \Field^{S}, \vec{x}' \in \Field^{S'}$, we define the quantity
	\begin{equation*}
	p_{\vec{x},\vec{x}'} \DefineEqual \Pr_{\vec{v} \in V}
	\left[
	\Restrict{\vec{v}}{S} = \vec{x} \wedge \Restrict{\vec{v}}{S'} = \vec{x}'
	\right] \enspace.
	\end{equation*}
	Let $d \DefineEqual \dim(V)$, and let $B \in \Field^{D \times d}$ be a basis for $V$. Let $B_{S} \in \Field^{S \times d}$ be $B$ restricted to rows corresponding to elements of $S$, and let $B_{S'}$ be defined likewise. Finally, let $B_{S,S'} \in \Field^{(\SetCardinality{S}+\SetCardinality{S'})\times d}$ be the matrix whose rows are the rows of $B_{S}$, followed by the rows of $B_{S'}$. Then
	\begin{equation*}
	p_{\vec{x},\vec{x}'} = \Pr_{\vec{z} \in \Field^{d}}
	\left[
	B_{S,S'} \cdot \vec{z} = (\vec{x}, \vec{x}')
	\right] \enspace.
	\end{equation*}
	One can verify that, for any matrix $A \in \Field^{m \times n}$,
	\begin{equation*}
	\Pr_{\vec{z} \in \Field^{n}} [A\vec{z} = \vec{b}] = \begin{cases}
	\Field^{-\rank(A)} & \text{ if $\vec{b} \in \mathrm{colsp}(A)$, and} \\
	0 & \text{otherwise.}
	\end{cases}
	\end{equation*}
	Observe that $\mathrm{colsp}(B_{S,S'}) \subseteq \mathrm{colsp}(B_{S}) \times \mathrm{colsp}(B_{S'})$, and equality holds if and only if $\rank(B_{S,S'}) = \rank(B_{S}) + \rank(B_{S'})$. It follows that $p_{\vec{x},\vec{x}'} = \Pr_{\vec{v} \in V}[\Restrict{\vec{v}}{S} = \vec{x}] \cdot \Pr_{\vec{v} \in V}[\Restrict{\vec{v}}{S'} = \vec{x}']$ if and only if $\rank(B_{S,S'}) = \rank(B_{S}) + \rank(B_{S'})$. By the rank-nullity theorem and the construction of $B_{S,S'}$, this latter condition holds if and only if $\mathrm{nul}(B_{S,S'}^{T}) \subseteq \mathrm{nul}(B_{S}^{T}) \times \mathrm{nul}(B_{S'}^{T})$. To conclude the proof, it remains only to observe that the condition in the claim is equivalent to the existence of vectors $\vec{c} \in \Field^{S}$, $\vec{d} \in \Field^{S'}$ such that $\vec{c} \notin \mathrm{nul}(B_{S}^{T})$ but $(\vec{c},-\vec{d}) \in \mathrm{nul}(B_{S,S'}^{T})$.
\end{proof}

Now, observe that 
\begin{equation*}
\Big\{ \Big(\big(\StrongRandPoly(\vec{\gamma})\big)_{\vec{\gamma} \in \Field^{\SCVars+\SSCVars}}, \big(\sum_{\vec{y} \in \SSCSubset^{\SSCVars}} \StrongRandPoly(\vec{\alpha},\vec{y})\big)_{\vec{\alpha} \in \Field^{\SCVars}}\Big) : \StrongRandPoly \in \PolynomialRingIndOneXY{\Field}{\SCVars}{\VariableX}{\SSCVars}{\VariableY}{\SCDegree}{\SCDegree'} \Big\}
\end{equation*}
is an $\Field$-vector space with domain $\Field^{\SCVars+\SSCVars} \cup \Field^{\SCVars}$. Consider subdomains $\Field^{\SCVars}$ and $S$. Since $\SetCardinality{S} < \SetCardinality{\SSCSubset}^{\SSCVars}$, by \cref{lem:sum-query-lower-bound} there exist no constants $(c_{\vec{\alpha}})_{\alpha \in \Field^{\SCVars}}$, $(d_{\vec{\gamma}})_{\vec{\gamma} \in S}$ such that the conditions of the claim hold. This concludes the proof of \cref{cor:partial-sum-indep-vars}
\end{proof}

\doclearpage
\section{Zero knowledge sumcheck from algebraic query lower bounds}
\label{sec:strong-zk-sumcheck}

We leverage our lower bounds on the algebraic query complexity of polynomial summation (\cref{sec:algebraic-query-complexity}) to obtain an analogue of the sumcheck protocol with a strong zero knowledge guarantee, which we then use to obtain a zero knowledge low-degree IPCP for $\NEXP$ (\cref{sec:zk-nexp}).

The sumcheck protocol \cite{LundFKN92} is an Interactive Proof for claims of the form $\sum_{\vec{x} \in \SCSubset^{\SCVars}} \SCPoly(\vec{x}) = a$, where $\SCSubset$ is a subset of a finite field $\Field$, $\SCPoly$ is an $\SCVars$-variate polynomial over $\Field$ of individual degree at most $\SCDegree$, and $a$ is an element of $\Field$. The sumcheck protocol is \emph{not} zero knowledge (unless $\sharpP \subseteq \BPP$).

Prior work \cite{BenSassonCFGRS17} obtains a sumcheck protocol, in the IPCP model, with a certain (weak) zero knowledge guarantee. In that protocol, the prover first sends a proof oracle that consists of the evaluation of a random $\SCVars$-variate polynomial $\RandPoly$ of individual degree at most $\SCDegree$; after that, the prover and the verifier run the (standard) sumcheck protocol on a new polynomial obtained from $\SCPoly$ and $\RandPoly$. The purpose of $\RandPoly$ is to \DoQuote{mask} the partial sums, which are the intermediate values sent by the prover during the sumcheck protocol.

The zero knowledge guarantee in \cite{BenSassonCFGRS17} is the following: \emph{any verifier that makes $q$ queries to $\RandPoly$ learns at most $q$ evaluations of $\SCPoly$}. This guarantee suffices to obtain a zero knowledge protocol for $\sharpP$ (the application in \cite{BenSassonCFGRS17}), because the verifier can evaluate $\SCPoly$ efficiently at any point (as $\SCPoly$ is merely an arithmetization of a 3SAT formula).

We achieve a much stronger guarantee: \emph{any verifier that makes polynomially-many queries to $\RandPoly$ learns at most a single evaluation of $\SCPoly$} (that, moreover, lies within a chosen subset $\SoundnessSet^{\SCVars}$ of $\Field^{\SCVars}$). Our application requires this guarantee because we use the sumcheck simulator as a sub-simulator in a larger protocol, where $\SCPoly$ is a randomized low-degree extension of some function that is hard to compute for the verifier. The randomization introduces bounded independence, which makes a small number of queries easy to simulate (where \DoQuote{small} means somewhat less than the degree).

The main idea to achieve the above zero knowledge guarantee is the following. Rather than sending the masking polynomial $\RandPoly$ directly, the prover sends a (perfectly-hiding and statistically-binding) commitment to it in the form of a random $(\SCVars+\SSCVars)$-variate polynomial $\StrongRandPoly$. The \DoQuote{real} mask is recovered by summing out $k$ variables: $\RandPoly(\vec{\VariableX}) \DefineEqual \sum_{\vec{\beta} \in \SSCSubset^{\SSCVars}} \StrongRandPoly(\vec{\VariableX}, \vec{\beta})$. Our lower bounds on the algebraic query complexity of polynomial summation (\cref{sec:algebraic-query-complexity}) imply that any $q$ queries to $\StrongRandPoly$, with $q < \SetCardinality{\SSCSubset}^{\SSCVars}$, yield \emph{no information} about $\RandPoly$. The prover, however, can elect to decommit to $\RandPoly(\vec{c})$, for a single point $\vec{c} \in \SoundnessSet^{\SCVars}$ chosen by the verifier. This is achieved using the weak zero knowledge sumcheck protocol in \cite{BenSassonCFGRS17} as a subroutine: the prover sends $w \DefineEqual \RandPoly(\vec{c})$ and then proves that $w = \sum_{\vec{\beta} \in \SSCSubset^{\SSCVars}} \StrongRandPoly(\vec{c}, \vec{\beta})$.

The protocol thus proceeds as follows. Given a security parameter $\SubsetSize \in \Naturals$, the prover sends the evaluations of two polynomials $\StrongRandPoly \in \PolynomialRingIndOneXY{\Field}{\SCVars}{\VariableX}{\SSCVars}{\VariableY}{\SCDegree}{2\SubsetSize}$ and $\AuxRandPoly \in \PolynomialRingIndOne{\Field}{\SSCVars}{\VariableY}{2\SubsetSize}$ as proof oracles ($\AuxRandPoly$ is the mask for the subroutine in \cite{BenSassonCFGRS17}). The prover sends a field element $z$, which is (allegedly) the summation of $\StrongRandPoly$ over $\SCSubset^{\SCVars} \times \SSCSubset^{\SSCVars}$. The verifier replies with a random challenge $\rho \in \Field \setminus \{0\}$. The prover and the verifier then engage in the standard (not zero knowledge) sumcheck protocol on the claim \DoQuote{$\sum_{\vec{\alpha} \in \SCSubset^{\SCVars}} \rho \SCPoly(\vec{\alpha}) + \RandPoly(\vec{\alpha}) = \rho \SCSum + z$}. This reduces checking the correctness of this claim to checking a claim of the form \DoQuote{$\rho \SCPoly(\vec{c}) + \RandPoly(\vec{c}) = b$}, for some $\vec{c} \in \SoundnessSet^{\SCVars}$ and $b \in \Field$; the prover then decommits to $w \DefineEqual \RandPoly(\vec{c})$ as above. In sum, the verifier deduces that, with high probability, the claim \DoQuote{$\rho \SCPoly(\vec{c}) = b - w$} is true if and only if the original claim was.

If the verifier could evaluate $\SCPoly$, then the verifier could simply check the aforementioned claim and either accept or reject. However, we do not give the verifier access to $\SCPoly$ and, instead, we follow \cite{Meir13,GoldwasserKR15} and phrase sumcheck as a \emph{reduction} from a claim about a sum of a polynomial over a large product space to a claim about the evaluation of that polynomial at a single point. This view of the sumcheck protocol is useful later on when designing more complex protocols, which employ sumcheck as a sub-protocol. The completeness and soundness definitions, which we will formally define in \cref{sec:sumcheck-analysis}, are thus modified according to this viewpoint, where the verifier does \emph{not} have access to $\SCPoly$ and simply outputs the claim at the end of the protocol.

We state a simplified version of the main theorem of this section; the full version is given as \cref{lem:strong-zk-sumcheck}.

\begin{theorem}
\label{lem:strong-zk-sumcheck-short}
For every finite field $\Field$ and $\SCDegree,\SSCVars,\SubsetSize \in \Naturals$, $2\SubsetSize \leq \SCDegree$, there exists an $(\Field,\SCDegree,\SCVars+\SSCVars+1)$-low-degree IPCP system $\pair{\Prover}{\Verifier}$, for the sumcheck problem with respect to polynomials in $\PolynomialRingIndOne{\Field}{\SCVars}{\VariableX}{\SCDegree}$, which is zero knowledge against $\SubsetSize^{\SSCVars}-1$ queries, where the simulator makes a single query to the summand polynomial.
\end{theorem}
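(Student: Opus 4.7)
The plan is to make precise the construction sketched in the overview and then verify three properties: low-degree completeness/soundness and strong (perfect) zero knowledge. I would first fix a subset $\SSCSubset \subseteq \Field$ of size $\SubsetSize+1$ so that the condition $\SCDegree \geq 2\SubsetSize \geq 2(\SetCardinality{\SSCSubset}-1)$ needed to apply \cref{cor:partial-sum-indep-vars} is in force, and fix the decommitment subset $\SoundnessSet \subseteq \Field$. The honest prover samples $\StrongRandPoly$ uniformly in $\PolynomialRingIndOneXY{\Field}{\SCVars}{\VariableX}{\SSCVars}{\VariableY}{\SCDegree}{2\SubsetSize}$ and $\AuxRandPoly$ uniformly in $\PolynomialRingIndOne{\Field}{\SSCVars}{\VariableY}{2\SubsetSize}$, sends them as the (single) PCP oracle $\Oracle$ (a polynomial of individual degree $\SCDegree$ on $\Field^{\SCVars+\SSCVars+1}$, obtained by packing the two polynomials into disjoint fibers), sends $z \DefineEqual \sum_{\vec{\alpha} \in \SCSubset^{\SCVars}, \vec{\beta} \in \SSCSubset^{\SSCVars}} \StrongRandPoly(\vec{\alpha},\vec{\beta})$, receives a uniform challenge $\rho \in \Field \setminus \Set{0}$, and engages in the standard (non-ZK) sumcheck on the polynomial $\vec{\alpha} \mapsto \rho \SCPoly(\vec{\alpha}) + \RandPoly(\vec{\alpha})$, where $\RandPoly(\vec{\VariableX}) \DefineEqual \sum_{\vec{\beta} \in \SSCSubset^{\SSCVars}} \StrongRandPoly(\vec{\VariableX},\vec{\beta})$, with the restriction that the verifier's round challenges lie in $\SoundnessSet$. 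This reduces the claim to one of the form $\rho \SCPoly(\vec{c}) + \RandPoly(\vec{c}) = b$ for some $\vec{c} \in \SoundnessSet^{\SCVars}$; the prover then sends $w \DefineEqual \RandPoly(\vec{c})$ and invokes the weak-ZK sumcheck IPCP of \cite{BenSassonCFGRS17} to prove $\sum_{\vec{\beta} \in \SSCSubset^{\SSCVars}} \StrongRandPoly(\vec{c},\vec{\beta}) = w$, using $\AuxRandPoly$ as its mask. The verifier's output is the claim $\SCPoly(\vec{c}) = (b-w)/\rho$.

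Low-degree completeness is immediate from the construction. For low-degree soundness, I would compose the two standard sumcheck soundness analyses: if $\sum_{\vec{\alpha}} \SCPoly(\vec{\alpha}) \neq a$, then conditioning on any fixed $\StrongRandPoly$ of individual degree $\SCDegree$, a uniformly random $\rho \neq 0$ makes $\rho a + z$ unequal to $\sum_{\vec{\alpha}} (\rho \SCPoly + \RandPoly)(\vec{\alpha})$ except with probability $1/(\SetCardinality{\Field}-1)$; the standard sumcheck then rejects except with probability $O(\SCVars \SCDegree/\SetCardinality{\SoundnessSet})$; and the inner invocation of the weak-ZK sumcheck catches a false decommitment with comparable error.

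The main technical step, and the hardest one, is the perfect simulation. I would build a straightline simulator $\Simulator$ that lazily samples the oracle. On any query to $\StrongRandPoly$ or $\AuxRandPoly$ from a $\QueryBound$-query malicious verifier with $\QueryBound \leq \SubsetSize^{\SSCVars}-1$, the simulator answers with an independent uniform field element, conditioned on the constraint that $\Oracle$ has individual degree $\SCDegree$; since strictly fewer than $\SubsetSize^{\SSCVars}$ queries are made to $\StrongRandPoly$, \cref{cor:partial-sum-indep-vars} ensures that the joint distribution of the answers is independent of the family $\big(\sum_{\vec{\beta} \in \SSCSubset^{\SSCVars}} \StrongRandPoly(\vec{\alpha},\vec{\beta})\big)_{\vec{\alpha} \in \Field^{\SCVars}}$, i.e., of $\RandPoly$. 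Exploiting this independence, $\Simulator$ can sample $z$, the round polynomials of the outer sumcheck, and the reduced value $b$ according to the correct marginal by treating the value $\RandPoly(\cdot)$ as a uniformly random polynomial of the right degree \emph{independent} of the answers given so far — a distribution it can reproduce exactly without knowing $\SCPoly$. At the very end of the outer sumcheck, $\Simulator$ issues its \emph{single} query $\SCPoly(\vec{c})$ to the oracle and sets $w \DefineEqual b - \rho \SCPoly(\vec{c})$, which is the value $\RandPoly(\vec{c})$ would have taken in the real execution conditioned on the transcript. Finally, $\Simulator$ invokes the simulator of the weak-ZK sumcheck of \cite{BenSassonCFGRS17} as a subroutine; by the guarantee of that simulator, every query the verifier makes to $\AuxRandPoly$ during the inner sumcheck can be answered using at most one query to the summand $\vec{\beta} \mapsto \StrongRandPoly(\vec{c},\vec{\beta})$, which $\Simulator$ emulates by answering with fresh independent uniform field elements — again permissible because the total number of queries to $\StrongRandPoly$ stays below $\SubsetSize^{\SSCVars}$.

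The hard part will be verifying that the simulated transcript and query answers are \emph{identically} (not merely statistically close) distributed to the real view. This requires a careful coupling argument: show that, conditioned on the set of points at which $\StrongRandPoly$ is revealed (directly by the verifier's queries or indirectly via decommitment), the joint distribution of $(\StrongRandPoly\text{-answers}, z, \text{outer sumcheck messages}, b, w)$ factors through $\RandPoly$ in such a way that \cref{cor:partial-sum-indep-vars} applies at each step, and then invoke the perfect simulation guarantee of the inner weak-ZK sumcheck. Once this coupling is established, the two views are equal as distributions and the theorem follows.
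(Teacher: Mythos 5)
Your construction matches the paper's: bundle $\StrongRandPoly$ and $\AuxRandPoly$ into a single low-degree oracle, send $z$, challenge with $\rho$, run standard sumcheck on $\rho\SCPoly + \RandPoly$ restricted to $\SoundnessSet$, decommit to $\RandPoly(\vec{c})$, and verify the decommitment via the weak-ZK sumcheck of \cite{BenSassonCFGRS17}. Your choice $\SetCardinality{\SSCSubset} = \SubsetSize+1$ (vs.\ the paper's $\SetCardinality{\SSCSubset} = \SubsetSize$) is a harmless relaxation, still meeting the degree hypothesis of \cref{cor:partial-sum-indep-vars} with equality and giving a slightly stronger query bound than the theorem asks for.

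However, your simulator description has a real gap at the decommitment stage. After you set $w \DefineEqual b - \rho\SCPoly(\vec{c})$, you run the weak-ZK sumcheck simulator for the claim $\sum_{\vec{\beta}}\StrongRandPoly(\vec{c},\vec{\beta}) = w$ and answer its summand queries ``with fresh independent uniform field elements.'' But that simulator's guarantee only asserts perfect simulation \emph{when the claim is true for the summand it is given}, and a lazily sampled summand will almost never satisfy $\sum_{\vec{\beta}}\StrongRandPoly(\vec{c},\vec{\beta}) = w$. You invoke the guarantee without checking its precondition. The paper handles this with an explicit \emph{redraw} step (\cref{step:strong-zksc-redraw}): it resamples $\Simulated{\StrongRandPoly}'$ uniformly subject to consistency with all prior query answers \emph{and} the constraint $\sum_{\vec{\beta}}\Simulated{\StrongRandPoly}'(\vec{c},\vec{\beta}) = \Simulated{w}$, which makes the inner claim literally true and lets the weak-ZK simulator's guarantee be applied directly; \cref{cor:partial-sum-indep-vars} is then used to show that this redraw leaves the distribution of subsequent oracle answers unchanged (since fewer than $\SetCardinality{\SSCSubset}^{\SSCVars}$ queries are made). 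Without the redraw, you would need an additional argument that the inner simulator's \emph{output} distribution is insensitive to whether the summand satisfies the claim --- plausible by bounded independence, but not something you state or prove. Related imprecision: answers to $\StrongRandPoly$ queries are not ``fresh independent uniform'' in general --- a low-degree polynomial imposes linear dependencies among evaluations at certain query sets --- so the simulator must sample from the correct conditional distribution respecting the degree structure (the paper does this via the succinct constraint-detection algorithm of \cref{lem:efficient-poly-simulator}). Finally, you explicitly defer the ``careful coupling argument'' that establishes exact equality of the two views, which is precisely the content of the paper's step-by-step analysis, including the key observation that $\MaskedPoly = \rho\SCPoly + \RandPoly$ is uniformly distributed (subject to the sum constraint) because $\rho$ is independent of $\RandPoly$ given fewer than $\SetCardinality{\SSCSubset}^{\SSCVars}$ queries to $\StrongRandPoly$.
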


We prove \cref{lem:strong-zk-sumcheck-short} in the next subsections, by showing and analyzing a construction that implements the ideas we outlined above. We begin by stating the required preliminaries regarding sampling partial sums of random low-degree polynomials.

\subsection{Sampling partial sums of random low-degree polynomials}
\label{sec:partial-sums}

We recall an algorithm due to Ben-Sasson et al.~\cite{BenSassonCFGRS17} for adaptively sampling random low-degree multivariate polynomials from spaces with exponentially large dimension.

Let $\Field$ be a finite field, $\SCVars,\SCDegree$ positive integers, and $\SCSubset$ a subset of $\Field$. Recall that $\PolynomialRingIndOne{\Field}{\SCVars}{\VariableX}{\SCDegree}$ is the subspace of $\PolynomialRing{\Field}{\SCVars}{\VariableX}$ consisting of those polynomials with individual degrees at most $\SCDegree$. We denote by $\Field^{\leq\SCVars}$ the set of all vectors over $\Field$ of length at most $\SCVars$. Given $Q \in \PolynomialRingIndOne{\Field}{\SCVars}{\VariableX}{\SCDegree}$ and $\vec{\alpha} \in \Field^{\leq\SCVars}$, we define $Q(\vec{\alpha}) \DefineEqual \sum_{\vec{\gamma} \in \SCSubset^{\SCVars - \SetCardinality{\vec{\alpha}}}} Q(\vec{\alpha}, \vec{\gamma})$; that is, the answer to a query that specifies only a prefix of the variables is the sum of the values obtained by letting the remaining variables range over $\SCSubset$.

In \cref{sec:strong-zk-sumcheck} we rely on the fact, formally stated below and proved in \cite{BenSassonCFGRS17}, that one can efficiently sample the distribution $\RandPoly(\vec{\alpha})$, where $\RandPoly$ is uniformly random in $\PolynomialRingIndOne{\Field}{\SCVars}{\VariableX}{\SCDegree}$ and $\vec{\alpha} \in \Field^{\leq\SCVars}$ is fixed, \emph{even conditioned on any polynomial number of (consistent) values for $\RandPoly(\vec{\alpha}_{1}),\dots,\RandPoly(\vec{\alpha}_{\ListSize})$}, for any choice of $\vec{\alpha}_{1},\dots,\vec{\alpha}_{\ListSize} \in \Field^{\leq\SCVars}$. More precisely, the sampling algorithm runs in time that is only $\poly(\log \SetCardinality{\Field}, \SCVars, \SCDegree, \SetCardinality{\SCSubset}, \ListSize)$, which is much faster than the trivial running time of $\Omega(\SCDegree^{\SCVars})$ achieved by sampling $\RandPoly$ explicitly. This \DoQuote{succinct} sampling follows from the notion of \emph{succinct constraint detection} studied in \cite{BenSassonCFGRS17} for the case of partial sums of low-degree polynomials.

\begin{lemma}[\cite{BenSassonCFGRS17}]
\label{lem:efficient-poly-simulator}
There exists a probabilistic algorithm $\CodeSimAlgorithm$ such that, for every finite field $\Field$, positive integers $\SCVars,\SCDegree$, subset $\SCSubset$ of $\Field$, subset $S = \{(\alpha_{1},\beta_{1}), \dots, (\alpha_{\ListSize}, \beta_{\ListSize})\} \subseteq \Field^{\leq\SCVars} \times \Field$, and $(\alpha,\beta) \in \Field^{\leq\SCVars} \times \Field$,
\begin{equation*}
\Pr\Big[
\CodeSimAlgorithm(\Field,\SCVars,\SCDegree,\SCSubset,S,\alpha) = \beta
\Big]
=
\Pr_{\RandPoly \gets \PolynomialRingIndOne{\Field}{\SCVars}{\VariableX}{\SCDegree}}
\left[
\RandPoly(\alpha) = \beta
\pST
\begin{array}{c}
\RandPoly(\alpha_{1}) = \beta_{1} \\
\vdots \\
\RandPoly(\alpha_{\ListSize}) = \beta_{\ListSize}
\end{array}
\right]\enspace.
\end{equation*}
Moreover $\CodeSimAlgorithm$ runs in time $\SCVars(\SCDegree \ListSize \SetCardinality{\SCSubset} + \SCDegree^{3}\ListSize^{3}) \cdot \poly(\log \SetCardinality{\Field}) = \ListSize^{3} \cdot \poly(\SCVars, \SCDegree, \SetCardinality{\SCSubset}, \log \SetCardinality{\Field})$.
\end{lemma}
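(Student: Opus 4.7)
The plan is to reduce the sampling problem to computing, given the new query $\alpha$, whether the linear functional $L_\alpha \colon R \mapsto R(\alpha) = \sum_{\vec{\gamma} \in H^{m - |\alpha|}} R(\alpha,\vec{\gamma})$ lies in the $\F$-span of the prior functionals $L_{\alpha_1},\ldots,L_{\alpha_\ell}$ on the vector space $\PolynomialRingIndOne{\F}{m}{X}{d}$, and if so to find the unique coefficients. Since $R$ is uniform over a vector space, conditioning on the linear constraints $L_{\alpha_i}(R) = \beta_i$ yields a uniform distribution over an affine subspace, whose marginal under $L_\alpha$ is either a delta on a specific linear combination $\sum_i c_i \beta_i$ (when $L_\alpha = \sum_i c_i L_{\alpha_i}$) or uniform over $\F$ (otherwise). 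Thus $\CodeSimAlgorithm$ performs this dependence check and outputs the appropriate value; correctness is immediate from the elementary fact about Gaussian marginals of uniform distributions on affine subspaces.

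The main obstacle, and where the real work lies, is doing the dependence check and the coefficient computation in time polynomial in $\ell, m, d, |H|, \log |\F|$, despite the ambient space having dimension $(d+1)^m$, which is exponential in $m$. The key structural observation I would exploit is that each $L_{\alpha_i}$ has a \emph{tensor-product} form on the monomial basis: writing $\alpha_i = (\alpha_{i,1},\ldots,\alpha_{i,k_i})$, define $v^{(i)}_j \in \F^{d+1}$ by $v^{(i)}_j(e) = \alpha_{i,j}^{e}$ for $j \le k_i$ and $v^{(i)}_j(e) = \sum_{h \in H} h^{e}$ for $j > k_i$; then $L_{\alpha_i} = v^{(i)}_1 \otimes \cdots \otimes v^{(i)}_m$ as a vector in $(\F^{d+1})^{\otimes m}$. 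So all prior functionals live in an $\ell$-dimensional subspace of this tensor space, and we only need to maintain a basis of that subspace succinctly.

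To make this concrete, I would perform the dependence check by recursion on the number of variables. Group the $\ell$ prior functionals into buckets according to linear dependencies among their first-coordinate vectors $v^{(i)}_1$: find a basis $\{u_1,\ldots,u_r\}$ (with $r \le \min(\ell, d+1)$) of $\Span\{v^{(i)}_1\}$, rewrite each $L_{\alpha_i}$ as $\sum_{s=1}^{r} \lambda_{i,s}\, u_s \otimes (v^{(i)}_2 \otimes \cdots \otimes v^{(i)}_m)$, and then split $L_\alpha$'s first-coordinate vector against $u_1,\ldots,u_r$ in the same way. The problem of deciding whether $L_\alpha$ depends on $\{L_{\alpha_i}\}$ reduces, for each basis vector $u_s$, to a sub-problem of detecting dependence among $(m-1)$-variate tensor functionals on $\PolynomialRingIndOne{\F}{m-1}{X}{d}$, with at most $\ell$ residual constraints. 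Unfolding the recursion and pruning to a basis at each level (so that the constraint list never exceeds $\ell$) gives total work $\ell^3 \cdot \poly(m,d,|H|,\log|\F|)$ as claimed, with the $\ell^3$ factor absorbing the per-level Gaussian elimination over $\F^{d+1}$ repeated $m$ times.

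The main technical difficulty I expect is controlling the recursion carefully enough to avoid an exponential blowup in the number of residual sub-constraints: a naive split could multiply the constraint list by $(d+1)$ at each level. The resolution is to re-basis aggressively after each split, using that rank is bounded by $\ell$ globally, so after the Gaussian step there are at most $\ell$ representatives to recurse on. Once this is in place, writing out the coefficients $c_1,\ldots,c_\ell$ (when $L_\alpha$ is dependent) is just following the linear combinations produced by the recursion back up, and the output probability either becomes a fixed linear combination $\sum_i c_i \beta_i$ or a fresh uniform sample from $\F$, matching the required conditional distribution exactly.
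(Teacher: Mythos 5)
This lemma is cited from \cite{BenSassonCFGRS17} rather than proved in this paper; the underlying algorithm is succinct constraint detection for the Reed--Muller code extended with partial sums, built on the Raz--Shpilka PIT algorithm for sums of products of univariates (the roadmap figure records exactly this lineage). Your proposal reconstructs that route: the reduction from conditional sampling to span-membership and coefficient-recovery for the functionals $L_\alpha$, the identification of each $L_{\alpha_i}$ with a rank-one tensor $v_1^{(i)} \otimes \cdots \otimes v_m^{(i)}$ whose factors are evaluation vectors $(\alpha_{i,j}^{e})_{e}$ or summation vectors $\big(\sum_{h \in H} h^{e}\big)_{e}$, and the variable-at-a-time re-basing to keep the working representation of size $O(\ell)$ despite ambient dimension $(d+1)^m$ are all the right ingredients.

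The one place the write-up needs tightening is the recursion itself. Phrased as ``for each basis vector $u_s$, a sub-problem of detecting dependence among $(m{-}1)$-variate tensor functionals,'' it branches by a factor $r$ per variable, and simply pruning the residual tensors $\lambda_{i,s} W_i$ to a basis does not obviously cut the branching, because the $r$ sub-constraints $\sum_i c_i \lambda_{i,s} W_i = \mu_s W$ constrain the \emph{same} unknown vector $c$ and cannot be solved independently. The fix your ``prune to a basis and recurse on at most $\ell$ representatives'' remark is gesturing at, but does not pin down, is to make a \emph{single} recursive call on $\{W_i\}$ with target $W$ that returns both the null space $N = \{d : \sum_i d_i W_i = 0\}$ and a particular solution $d^{*}$ (or $\bot$ if $W \notin \mathrm{span}\{W_i\}$); the $r$ coupled conditions then read $\Lambda_s c \in \mu_s d^{*} + N$ with $\Lambda_s = \mathrm{diag}(\lambda_{i,s})_i$ and are intersected by local Gaussian elimination over $\F^{\ell}$. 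That gives one recursive call per level and depth $m$, matching the stated running time; the equivalent bottom-up formulation --- maintain a basis for $\mathrm{span}\{v_1^{(i)} \otimes \cdots \otimes v_j^{(i)}\}$ together with the $\ell{+}1$ coordinate vectors of the partial products (including $L_\alpha$'s) in that basis --- is precisely Raz--Shpilka, and makes the $\ell^3 \cdot \poly(m,d,|H|,\log|\F|)$ bound transparent.
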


\subsection{Strong zero knowledge sumcheck}

We present our strong zero knowledge sumcheck protocol within the IPCP model. For brevity, throughout, we will refer to the weak zero knowledge IPCP for sumcheck in \cite{BenSassonCFGRS17} simply as the \DoQuote{weak-ZK sumcheck protocol}.

\begin{construction}
\label{con:strong-sumcheck-ip}
Fix a finite field $\Field$, and $\SCDegree, \SCVars, \SubsetSize \in \Naturals$ with $2\SubsetSize \leq \SCDegree$. Let $\SSCSubset$ be any subset of $\Field$ of size $\SubsetSize$. In the protocol $\pair{\Prover}{\Verifier}$:
\begin{itemize}

  \item $\Prover$ and $\Verifier$ receive an instance $(\SCSubset,\SCSum)$ as common input;
  
  \item $\Prover$ additionally receives a summand polynomial $\SCPoly \in \PolynomialRingIndOne{\Field}{\SCVars}{\VariableX}{\SCDegree}$ as an oracle.

\end{itemize}
The interaction between $\Prover$ and $\Verifier$ proceeds as follows:
\begin{enumerate}

  \item $\Prover$ draws uniformly random polynomials $\StrongRandPoly \in \PolynomialRingIndOneXY{\Field}{\SCVars}{\VariableX}{\SSCVars}{\VariableY}{\SCDegree}{2\SubsetSize}$ and $\AuxRandPoly \in \PolynomialRingIndOne{\Field}{\SSCVars}{\VariableY}{2\SubsetSize}$, and sends as an oracle the polynomial
	\begin{equation*}
		O(W,\vec{\VariableX},\vec{\VariableY}) \DefineEqual W \cdot \StrongRandPoly(\vec{\VariableX},\vec{\VariableY}) + (1-W) \cdot \AuxRandPoly(\vec{\VariableY}) \in \PolynomialRingIndOne{\Field}{\SCVars+\SSCVars+1}{\VariableX}{\SCDegree} \enspace;
	\end{equation*}
	note that $\StrongRandPoly$ can be recovered as $O(1,\cdot)$ and $\AuxRandPoly$ as $O(0,\vec{0},\cdot)$.
	
  \item \label{step:sum-challenge} $\Prover$ sends $z \DefineEqual \sum_{\vec{\alpha} \in \SCSubset^{\SCVars}}\sum_{\vec{\beta} \in \SSCSubset^{\SSCVars}} \StrongRandPoly(\vec{\alpha},\vec{\beta})$ to $\Verifier$.

  \item \label{step:challenge} $\Verifier$ draws a random element $\rho_{1}$ in $\Field \setminus \{0\}$ and sends it to $\Prover$.

  \item $\Prover$ and $\Verifier$ run the \emph{standard} sumcheck IP \cite{LundFKN92} on the statement \DoQuote{$\sum_{\vec{\alpha} \in \SCSubset^{\SCVars}} \MaskedPoly(\vec{\alpha})=\rho_{1}\SCSum+z$} where
\begin{equation*}
\MaskedPoly(\VariableX_{1},\dots,\VariableX_{\SCVars})
\DefineEqual
\rho_{1} \SCPoly(\VariableX_{1},\dots,\VariableX_{\SCVars})
+
\sum_{\vec{\beta} \in \SSCSubset^{\SSCVars}} \StrongRandPoly(\VariableX_{1},\dots,\VariableX_{\SCVars}, \vec{\beta})
\enspace,
\end{equation*}
with $\Prover$ playing the role of the prover and $\Verifier$ that of the verifier, and the following modification.

For $i=1,\dots,\SCVars$, in the $i$-th round, $\Verifier$ samples its random element $c_{i}$ from the set $\SoundnessSet$ rather than from all of $\Field$; if $\Prover$ ever receives $c_{i} \in \Field \setminus \SoundnessSet$, it immediately aborts. In particular, in the $\SCVars$-th round, $\Prover$ sends a polynomial $g_{\SCVars}(\VariableX_{\SCVars}) \DefineEqual \rho_{1} \SCPoly(c_{1}, \dots, c_{\SCVars-1}, \VariableX_{\SCVars}) + \sum_{\vec{\beta} \in \SSCSubset^{\SSCVars}} \StrongRandPoly(c_{1}, \dots, c_{\SCVars-1}, \VariableX_{\SCVars}, \vec{\beta})$ for some $c_{1}, \dots, c_{\SCVars-1} \in \SoundnessSet$.

  \item \label{step:send-cm} $\Verifier$ sends $c_{\SCVars} \in \SoundnessSet$ to $\Prover$.

  \item $\Prover$ sends the element $w \DefineEqual \sum_{\vec{\beta} \in \SSCSubset^{\SSCVars}} \StrongRandPoly(\vec{c}, \vec{\beta})$ to $\Verifier$, where $\vec{c} \DefineEqual (c_{1}, \dots, c_{\SCVars})$.

  \item $\Prover$ and $\Verifier$ engage in the weak-ZK sumcheck protocol with respect to the claim $\sum_{\vec{\beta} \in \SSCSubset^{\SSCVars}} \StrongRandPoly(\vec{c},\vec{\beta}) = w$, using $\AuxRandPoly$ as the oracle. If the verifier in that protocol rejects, so does $\Verifier$.

  \item $\Verifier$ outputs the claim \DoQuote{$\SCPoly(\vec{c}) = \frac{g_{\SCVars}(c_{\SCVars}) - w}{\rho_{1}}$}.

\end{enumerate}
\end{construction}

\begin{remark}
	Formally, the protocol in \cref{lem:strong-zk-sumcheck-short} is not presented as a proper low-degree IPCP, but rather as a reduction with respect to some fixed, yet \emph{inaccessible} low-degree polynomial. Nevertheless, this reduction perspective is consistent with our application, and indeed when in \cref{sec:zk-nexp} we use the protocol in \cref{lem:strong-zk-sumcheck-short} as a sub-procedure, we obtain a low-degree IPCP per our definition in \cref{sec:prelims}.
\end{remark}

\subsection{Analysis of the protocol}
\label{sec:sumcheck-analysis}

The following theorem, which is a more elaborate version of \cref{lem:strong-zk-sumcheck-short}, provides an analysis of \cref{con:strong-sumcheck-ip}. We stress that the protocol will satisfy a relaxed notion of soundness, similar to low-degree soundness, where the \DoQuote{no} instances are required to be low-degree polynomials. This suffices for our applications.

\begin{theorem}
\label{lem:strong-zk-sumcheck}
For every finite field $\Field$, $\SCDegree,\SSCVars,\SubsetSize \in \Naturals$, $2\SubsetSize \leq \SCDegree$, there exists an $(\Field,\SCDegree,\SCVars+\SSCVars+1)$-low-degree IPCP system $\pair{\Prover}{\Verifier}$ such that, for every $\SCPoly \in \PolynomialRingIndOne{\Field}{\SCVars}{\VariableX}{\SCDegree}$, the following holds.
\begin{itemize}

  \item \textsc{Completeness.}
  If $\sum_{\vec{\alpha} \in \SCSubset^{\SCVars}} F(\vec{\alpha}) = \SCSum$, then $\Verifier(\SCSubset,\SCSum)$, when interacting with $\Prover^{\SCPoly}(\SCSubset,\SCSum)$, outputs a true claim of the form \DoQuote{$\SCPoly(\vec{\gamma}) = a$} (with $\vec{\gamma} \in \Field^{\SCVars}$ and $a \in \Field$) with probability $1$.

  \item \textsc{Soundness.}
  If $\sum_{\vec{\alpha} \in \SCSubset^{\SCVars}} F(\vec{\alpha}) \neq \SCSum$, then for any malicious prover $\Malicious{\Prover}$ it holds that $\Verifier(\SCSubset,\SCSum)$, when interacting with $\Malicious{\Prover}$, outputs a true claim \DoQuote{$\SCPoly(\vec{\gamma}) = a$} (with $\vec{\gamma} \in \Field^{\SCVars}$ and $a \in \Field$) with probability at most $\frac{\SCVars\SCDegree}{\SetCardinality{\SoundnessSet}} + \frac{\SSCVars\SCDegree + 2}{\SetCardinality{\Field}-1}$.

  \item \textsc{Zero knowledge.}
  There exists a simulator $\Simulator$ such that if $\sum_{\vec{\alpha} \in \SCSubset^{\SCVars}} F(\vec{\alpha}) = \SCSum$, then for every $\SubsetSize^{\SSCVars}$-query malicious verifier $\Malicious{\Verifier}$, the following two distributions are equal
\begin{equation*}
\Simulator^{\Malicious{\Verifier},\SCPoly}(\SCSubset,\SCSum)
\quad\text{and}\quad
\IPCPView{\Prover^{\SCPoly}(\SCSubset,\SCSum)}{\Malicious{\Verifier}}
\enspace.
\end{equation*}
Moreover:
\begin{itemize}
  \item $\Simulator$ makes a single query to $\SCPoly$ at a point in $\SoundnessSet^{\SCVars}$;
  \item $\Simulator$ runs in time
\begin{equation*}
(\SCVars + \SSCVars)((\SCDegree+\SubsetSize)\QueryComplexity_{\Malicious{\Verifier}}\SetCardinality{\SCSubset} + (\SCDegree+\SubsetSize)^{3}\QueryComplexity_{\Malicious{\Verifier}}^{3}) \cdot \poly(\log\SetCardinality{\Field}) = \poly(\log \SetCardinality{\Field}, \SCDegree, \SCVars, \SubsetSize, \SSCVars, \SetCardinality{\SCSubset}) \cdot \QueryComplexity_{\Malicious{\Verifier}}^{3} \enspace,
\end{equation*}
where $\QueryComplexity_{\Malicious{\Verifier}}$ is $\Malicious{\Verifier}$'s query complexity;
  \item $\Simulator$'s behavior does not depend on $\SCSum$ until after the simulated $\Malicious{\Verifier}$ sends its first message.
\end{itemize}
\end{itemize}
\end{theorem}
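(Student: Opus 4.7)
The plan is to establish the three conditions—completeness, soundness, and zero knowledge—of the low-degree IPCP described in \cref{con:strong-sumcheck-ip} separately, leveraging the algebraic query complexity lower bound from \cref{cor:partial-sum-indep-vars}, the succinct partial-sum sampler of \cref{lem:efficient-poly-simulator}, and, as a sub-protocol in a black-box fashion, the weak zero knowledge IPCP for sumcheck from \cite{BenSassonCFGRS17}. First I would verify that the oracle $O(W,\vec{\VariableX},\vec{\VariableY})$ has individual degree at most $\SCDegree$ in every variable (using the hypothesis $2\SubsetSize \leq \SCDegree$), so that the low-degree IPCP formalism applies.

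Completeness is by inspection: when $\sum_{\vec{\alpha} \in \SCSubset^{\SCVars}} \SCPoly(\vec{\alpha}) = \SCSum$, the honest $\Prover$ sends $z = \sum_{\vec{\alpha},\vec{\beta}} \StrongRandPoly(\vec{\alpha},\vec{\beta})$, the standard sumcheck on $\MaskedPoly$ correctly reduces the claim \DoQuote{$\sum_{\vec{\alpha}} \MaskedPoly(\vec{\alpha}) = \rho_{1}\SCSum + z$} to a single-point claim, the prover's split into $\rho_{1}\SCPoly(\vec{c})$ and $w = \sum_{\vec{\beta}} \StrongRandPoly(\vec{c},\vec{\beta})$ is correct, and the subroutine of \cite{BenSassonCFGRS17} establishes the latter. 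For soundness, the low-degree soundness guarantee fixes the value of $\sum_{\vec{\alpha},\vec{\beta}} \StrongRandPoly(\vec{\alpha},\vec{\beta})$ regardless of $\Malicious{\Prover}$'s $\Malicious{z}$, so the random choice of $\rho_{1} \in \Field \setminus \{0\}$ makes the combined claim \DoQuote{$\sum \MaskedPoly = \rho_{1}\SCSum + \Malicious{z}$} false except with probability $1/(\SetCardinality{\Field}-1)$; the usual sumcheck analysis over the restricted challenge set $\SoundnessSet$ contributes at most $\SCVars\SCDegree/\SetCardinality{\SoundnessSet}$; and the weak-ZK sumcheck subroutine contributes its own $O(\SSCVars\SCDegree/(\SetCardinality{\Field}-1))$ error. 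Adding these yields the stated bound.

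The main technical work lies in the zero knowledge simulator. I would construct a straight-line $\Simulator$ that proceeds as follows: it answers each query $\Malicious{\Verifier}$ makes to $O$ by calling \cref{lem:efficient-poly-simulator} on the appropriate underlying polynomial (recovering $\StrongRandPoly$-queries from the $W=1$ slice and $\AuxRandPoly$-queries from the $W=0,\vec{\VariableX}=\vec{0}$ slice, and handling other points by the linear combination), maintaining a consistent query history throughout the simulation; it samples the prover's message $z$, and each round-$i$ polynomial $g_{i}(\VariableX_{i})$, by invoking \cref{lem:efficient-poly-simulator} to sample the partial sums $\sum_{\vec{\beta}} \StrongRandPoly(c_{1},\dots,c_{i-1},\VariableX_{i},\vec{\beta})$ conditioned on previously answered queries and previously committed partial sums, padded by freshly sampled uniform univariate polynomials of degree $\SCDegree$ that encode the \DoQuote{$\rho_{1}\SCPoly$ contribution} subject to the sumcheck consistency relations $\sum_{\alpha \in \SCSubset} g_{i}(\alpha) = g_{i-1}(c_{i-1})$; at the end of the outer sumcheck, $\Simulator$ issues its single query to obtain $\SCPoly(\vec{c})$, sets $w \DefineEqual g_{\SCVars}(c_{\SCVars}) - \rho_{1} \SCPoly(\vec{c})$, and finally invokes the weak-ZK sumcheck simulator of \cite{BenSassonCFGRS17} on the sub-claim \DoQuote{$\sum_{\vec{\beta}} \StrongRandPoly(\vec{c},\vec{\beta}) = w$}, forwarding its oracle calls through the already-maintained query history for $\AuxRandPoly$ and using the simulated evaluations of $\StrongRandPoly(\vec{c},\cdot)$ as the \DoQuote{summand} oracle.

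The hard part will be proving that this simulated view is \emph{identically} distributed to the honest view. The central argument invokes \cref{cor:partial-sum-indep-vars}: since $\Malicious{\Verifier}$ makes strictly fewer than $\SubsetSize^{\SSCVars}$ queries to $\StrongRandPoly$, and since $2\SubsetSize \geq 2(\SetCardinality{\SSCSubset}-1)$, the joint distribution of the partial-sum ensemble $\bigl(\sum_{\vec{\beta} \in \SSCSubset^{\SSCVars}} \StrongRandPoly(\vec{\alpha},\vec{\beta})\bigr)_{\vec{\alpha} \in \Field^{\SCVars}}$ is statistically independent of the vector of query answers to $\StrongRandPoly$. This is exactly what is needed to show that the succinct samples drawn by \cref{lem:efficient-poly-simulator} are distributed identically to the honest prover's partial sums, conditioned on the oracle transcript. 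I would formalize this by a hybrid argument: start from the real view; replace the honest prover's oracle with a succinctly-sampled one (identical by \cref{lem:efficient-poly-simulator}); replace each $g_{i}$ by its conditional sample (identical by the same lemma combined with \cref{cor:partial-sum-indep-vars}); replace the decommitment $w$ using the one query to $\SCPoly$ (identical by definition); and finally invoke the perfect simulator of \cite{BenSassonCFGRS17} for the inner weak-ZK sumcheck. The query threshold $\SubsetSize^{\SSCVars}$ is precisely the boundary at which \cref{cor:partial-sum-indep-vars} fails, and the single query to $\SCPoly$ is inherent because the verifier's output claim must be a true statement about $\SCPoly(\vec{c})$, which the simulator cannot manufacture without learning that one evaluation.
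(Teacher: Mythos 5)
Your overall structure — completeness by inspection, soundness by a three-term union bound, zero knowledge via a straight-line simulator built from the succinct partial-sum sampler and the weak-ZK sumcheck simulator — is the same as the paper's proof, and your completeness/soundness sketches are essentially correct. The gap is in the description of the simulated round messages $g_{i}$. You propose to sample the $R$-part (partial sums of $\StrongRandPoly$) via \cref{lem:efficient-poly-simulator}, then \emph{pad} with ``freshly sampled uniform univariate polynomials that encode the $\rho_{1}\SCPoly$ contribution.'' This is self-contradictory: a fresh uniform polynomial cannot encode a fixed, simulator-inaccessible quantity like the partial sums of $\rho_{1}\SCPoly$, and if one sampled both halves independently and uniformly their sum would be uniform over a sumset that need not match the distribution of the honest $g_{i}$. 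The paper's simulator sidesteps this decomposition entirely: it draws a \emph{single} polynomial $\Simulated{\MaskedPoly}$ uniformly at random subject only to the constraint $\sum_{\vec{\alpha}} \Simulated{\MaskedPoly}(\vec{\alpha}) = \tilde{\rho}\SCSum + \Simulated{z}$, and runs the standard sumcheck on it. The justification is exactly the place where \cref{cor:partial-sum-indep-vars} earns its keep: because $\Malicious{\Verifier}$ makes fewer than $\SubsetSize^{\SSCVars}$ oracle queries, the (adaptively chosen) $\tilde{\rho}$ is statistically independent of $R$ conditioned on $\sum R = z$, so $\MaskedPoly = R + \tilde{\rho}\SCPoly$ — a uniform $R$ shifted by a conditionally \emph{fixed} $\tilde{\rho}\SCPoly$ — is itself uniform subject to the sum constraint. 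That observation (uniform plus fixed is uniform, with independence of $\tilde{\rho}$ from $R$ to make the shift legitimate even under an adaptive verifier) is the heart of the simulation, and your write-up does not cleanly isolate it.

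A second, smaller issue: once the simulator computes $\Simulated{w} = \Simulated{\MaskedPoly}(\vec{c}) - \tilde{\rho}\SCPoly(\vec{c})$, all subsequent oracle queries (both direct and those made by the inner weak-ZK sumcheck simulator) must be answered consistently with the new constraint $\sum_{\vec{\beta}} \StrongRandPoly(\vec{c},\vec{\beta}) = \Simulated{w}$, which the initially drawn $\Simulated{\StrongRandPoly}$ does not satisfy. The paper handles this with an explicit ``redraw'' step — resampling $\Simulated{\StrongRandPoly}'$ conditioned on agreement with all past answers and on the new sum constraint — and then shows via \cref{cor:partial-sum-indep-vars} that future queries to $\Simulated{\StrongRandPoly}'$ match the real conditional distribution of $\StrongRandPoly$. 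Your phrase ``maintaining a consistent query history'' may be intended to cover this, but it needs to be made explicit that the new sum constraint is added to the history before the decommitment sub-protocol runs.
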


\begin{remark}[space complexity]
With two-way access to the random tape, the prover can be made to run in space complexity $\poly(\log \SetCardinality{\Field}, \SCDegree, \SCVars, \SubsetSize, \SSCVars, \SetCardinality{\SCSubset})$.
\end{remark}

\begin{remark}[straightline simulators]
Inspection shows that our simulators, in fact, achieve the stronger notion of universal \emph{straightline simulators} \cite{FeigeS89,DworkS98}, in which the simulator do \emph{not} rewind the verifier.
\end{remark}

\begin{proof}
Completeness is immediate from the protocol description and the completeness property of the sumcheck sub-protocols it invokes. Soundness follows from the fact that, fixing $\SCPoly$ such that $\sum_{\vec{\alpha} \in \SCSubset^{\SCVars}} F(\vec{\alpha}) \neq \SCSum$, we can argue as follows:
\begin{itemize}

  \item For every polynomial $\StrongRandPoly \in \PolynomialRingIndOne{\Field}{\SCVars+\SSCVars}{\VariableX}{\SCDegree}$, with probability $1 - \frac{1}{\SetCardinality{\Field}-1}$ over the choice of $\rho_{1}$ it holds that $\sum_{\vec{\alpha} \in \SCSubset^{\SCVars}} \MaskedPoly(\vec{\alpha}) \neq \rho_{1}\SCSum + z$, i.e., the sumcheck claim is false.

  \item Therefore, by the soundness guarantee of the sumcheck protocol, with probability at least $1 - \SCVars \SCDegree/\SetCardinality{\SoundnessSet}$, either the verifier rejects or $\rho_{1} \SCPoly(\vec{c}) + \sum_{\vec{\beta} \in \SSCSubset^{\SSCVars}} \StrongRandPoly(\vec{c}, \vec{\beta}) \neq g_{\SCVars}(c_{\SCVars})$.

  \item Finally, we distinguish between two cases depending on $\Malicious{\Prover}$:
  \begin{itemize}[nolistsep]
    \item If $\Malicious{\Prover}$ sends $w \neq \sum_{\vec{\beta} \in \SSCSubset^{\SSCVars}} \StrongRandPoly(\vec{c}, \vec{\beta})$, then by the soundness guarantee of the weak-ZK sumcheck protocol, the verifier rejects with probability at least $1 - \frac{\SSCVars\cdot\SCDegree+1}{\SetCardinality{\Field}}$.
    \item If $\Malicious{\Prover}$ sends $w = \sum_{\vec{\beta} \in \SSCSubset^{\SSCVars}} \StrongRandPoly(\vec{c}, \vec{\beta})$, then $\SCPoly(\vec{c}) \neq \frac{g_{\SCVars}(c_{\SCVars}) - w}{\rho_{1}}$ with probability $1$.
  \end{itemize}
\end{itemize}
Taking a union bound on the above cases yields the claimed soundness error.

To show the perfect zero knowledge guarantee, we need to construct a suitably-efficient simulator that perfectly simulates the view of any malicious verifier $\Malicious{\Verifier}$. We first construct an \emph{inefficient} simulator $\SlowSimulator$ and prove that its output follows the desired distribution; afterwards, we explain how the simulator can be made efficient.

\begin{mdframed}
{\small
The simulator $\SlowSimulator$, given (straightline) access to $\Malicious{\Verifier}$ and oracle access to $\SCPoly$, works as follows:

\begin{enumerate}

  \item Draw $\Simulated{\StrongRandPoly} \in \PolynomialRingIndOneXY{\Field}{\SCVars}{\VariableX}{\SSCVars}{\VariableY}{\SCDegree}{2\SubsetSize}$. Run the weak-ZK sumcheck simulator $\Simulator'$.

  \item \label[step]{step:szksc-sim-oracles} Begin simulating $\Malicious{\Verifier}$. Its queries to $O$ are answered by making the appropriate queries to $\Simulated{\StrongRandPoly}$ and the simulated $\AuxRandPoly$ provided by $\Simulator'$.

  \item \label[step]{step:szksc-sim-sums} Send $\Simulated{z} \DefineEqual \sum_{\vec{\alpha} \in \SCSubset^{\SCVars}} \sum_{\vec{\beta} \in \SSCSubset^{\SSCVars}} \Simulated{\StrongRandPoly}(\vec{\alpha},\vec{\beta})$.

  \item \label[step]{step:szksc-first-sumcheck} Receive $\tilde{\rho}$. Draw $\Simulated{\MaskedPoly} \in \PolynomialRingIndOne{\Field}{\SCVars}{\VariableX}{\SCDegree}$ uniformly at random conditioned on $\sum_{\vec{\alpha} \in \SCSubset^{\SCVars}} \Simulated{\MaskedPoly}(\vec{\alpha}) = \tilde{\rho} \SCSum + \Simulated{z}$, then engage in the sumcheck protocol on the claim \DoQuote{$\sum_{\vec{\alpha} \in \SCSubset^{\SCVars}} \Simulated{\MaskedPoly}(\vec{\alpha}) = \tilde{\rho} \SCSum + \Simulated{z}$}. If in any round $\Malicious{\Verifier}$ sends $c_i \not\in \SoundnessSet$ as a challenge, abort.

  \item \label[step]{step:strong-zksc-w} Let $\vec{c} \in \SoundnessSet^{\SCVars}$ be the point chosen by $\Malicious{\Verifier}$ in the sumcheck protocol above. Query $\SCPoly(\vec{c})$, and set $\Simulated{w} \DefineEqual \Simulated{\MaskedPoly}(\vec{c}) - \tilde{\rho} \SCPoly(\vec{c})$; send this value to the verifier.

  \item \label[step]{step:strong-zksc-redraw} Draw $\Simulated{\StrongRandPoly}' \in \PolynomialRingIndOneXY{\Field}{\SCVars}{\VariableX}{\SSCVars}{\VariableY}{\SCDegree}{2\SubsetSize}$ uniformly at random conditioned on
  \begin{itemize}[nolistsep]
    \item $\sum_{\vec{\beta} \in \SSCSubset^{\SSCVars}} \Simulated{\StrongRandPoly}'(\vec{c}, \vec{\beta}) = \Simulated{w}$, and
    \item $\Simulated{\StrongRandPoly}'(\vec{\gamma}) = \Simulated{\StrongRandPoly}(\vec{\gamma})$ for all previous queries $\vec{\gamma}$ to $\StrongRandPoly$.
  \end{itemize} 
  From this point on, answer all queries to $\StrongRandPoly$ with $\Simulated{\StrongRandPoly}'$.

  \item \label[step]{step:strong-zksc-second-sumcheck} Use $\Simulator'$ to simulate the sumcheck protocol for the claim \DoQuote{$\sum_{\vec{\beta} \in \SSCSubset^{\SSCVars}} \Simulated{\StrongRandPoly}'(\vec{c},\vec{\beta}) = \Simulated{w}$}.

  \item Output the view of the simulated $\Malicious{\Verifier}$.
\end{enumerate}
}
\end{mdframed}

To prove that this simulator outputs the correct distribution, we consider the information that the verifier receives in each step and show that the corresponding random variable is distributed identically to the view of the verifier in the real protocol. It will be convenient to define $R(\vec{\VariableX}) = \sum_{\vec{\beta} \in \SSCSubset^{\SSCVars}} \StrongRandPoly(\vec{\VariableX},\vec{\beta})$, and $\Simulated{R}$ likewise. We proceed with a step-by-step analysis (the relevant steps are in \cref{step:szksc-sim-oracles,step:szksc-sim-sums,step:szksc-first-sumcheck,step:strong-zksc-w,step:strong-zksc-redraw,step:strong-zksc-second-sumcheck}).

In \cref{step:szksc-sim-oracles} and \cref{step:szksc-sim-sums}, the verifier has query access to a uniformly random polynomial $\StrongRandPoly$ and receives its summation over $\SCSubset^{\SCVars} \times \SSCSubset^{\SSCVars}$, exactly as in the real protocol. 

In \cref{step:szksc-first-sumcheck}, we simulate the (standard) sumcheck protocol on the polynomial $\Simulated{\MaskedPoly}$, which is chosen uniformly at random conditioned on $\sum_{\vec{\alpha} \in \SCSubset^{\SCVars}} \Simulated{\MaskedPoly}(\vec{\alpha}) = \rho_{1} a + \Simulated{z}$. A key observation is that this is the distribution of $\MaskedPoly$ in the real protocol. To see this, note that by \cref{cor:partial-sum-indep-vars}, we have that $\Malicious{\rho}$, being a function of fewer than $\SubsetSize^{\SSCVars}$ queries to $\StrongRandPoly$, is independent of $R$ given $\sum_{\vec{\alpha} \in \SCSubset^{\SCVars}} R(\vec{\alpha}) = z$. Then $\Malicious{\rho} \SCPoly$ is a random variable conditionally independent of $R$, and so $\MaskedPoly = R + \Malicious{\rho} \SCPoly$ is a uniformly random polynomial such that $\sum_{\vec{\alpha} \in \SCSubset^{\SCVars}} \MaskedPoly(\vec{\alpha}) = \Malicious{\rho} a + z$.\footnote{Note that if $\Malicious{\rho}$ were not independent of $R$ then this may not be true.}

In \cref{step:strong-zksc-w}, we send $\Simulated{w} \Simulated{\MaskedPoly}(\vec{c}) - \Malicious{\rho} \SCPoly(\vec{c})$ to the verifier. In the real protocol, we send $w \sum_{\vec{\beta} \in \SSCSubset^{\SSCVars}} \StrongRandPoly(\vec{c}, \vec{\beta}) = \MaskedPoly(\vec{c}) - \Malicious{\rho} \SCPoly(\vec{c})$, where the latter equality is by the definition of $\MaskedPoly$. Since $\MaskedPoly$ and $\Simulated{\MaskedPoly}$ are identically distributed, then $w$ and$ \Simulated{w}$ are also identically distributed.

In \cref{step:strong-zksc-redraw}, we replace $\Simulated{\StrongRandPoly}$ with a new oracle $\Simulated{\StrongRandPoly}'$ (that is consistent with $\Simulated{\StrongRandPoly}$ on all points in which it was queried), which is a commitment to $\Simulated{R}'$ such that $\Simulated{R}'(\vec{c}) = \Simulated{w}$. Consider any future query to $\StrongRandPoly$, which happens after this replacement. We show that this query is distributed exactly as in the original protocol. By \cref{cor:partial-sum-indep-vars}, the following holds for any $\vec{q} \in \Field^{\SCVars+\SSCVars}, a \in \Field$, where $U \subseteq \Field^{\SCVars+\SSCVars} \times \Field$ is the set of previous query-answer pairs.
\begin{equation*}
\Pr_{\Simulated{Z}'}\left[\Simulated{Z}'(\vec{q}) = a \middle\vert
\begin{array}{c}
\Simulated{Z}'(\vec{\gamma}) = b \quad \forall (\vec{\gamma}, b) \in U \\
\sum_{\vec{\beta} \in \SSCSubset^{\SSCVars}} \Simulated{\StrongRandPoly}'(\vec{c}, \vec{\beta}) = \Simulated{w}
\end{array}
\right]
=
\Pr_{Z}\left[Z(\vec{q}) = a \middle\vert
\begin{array}{c}
Z(\vec{\gamma}) = b \quad \forall (\vec{\gamma}, b) \in U \\
\sum_{\vec{\beta} \in \SSCSubset^{\SSCVars}} Z(\vec{\VariableX}, \vec{\beta}) \equiv Q(\vec{\VariableX}) - \Malicious{\rho} \SCPoly (\vec{\VariableX})
\end{array}
\right]
\end{equation*}
Observe that the left hand side describes the distribution of the answer to oracle query $\vec{q}$ provided by the simulator after we replace the $\StrongRandPoly$-oracle, and the right hand side describes the distribution of the answer to the same query in the real protocol.

In \cref{step:strong-zksc-second-sumcheck}, we make use of the weak-ZK simulator for the decommitment. Since, after the replacement of $\Simulated{\StrongRandPoly}$ by $\Simulated{\StrongRandPoly}'$, the statement we are proving is true, we can use its zero knowledge guarantee. This ensures that the only information the verifier gains is the value $R(\vec{c}) = w$, which we already simulate, and a number of evaluations of $\StrongRandPoly$ equal to the number of queries to $\AuxRandPoly$, which we can fold into the query bound. This concludes the argument for the correctness of the inefficient simulator $\SlowSimulator$.

To complete the proof of zero knowledge, we note that $\SlowSimulator$ can be transformed into an efficient simulator $\Simulator$ by using succinct constraint detection for the Reed--Muller code extended with partial sums \cite{BenSassonCFGRS17}: more precisely, we can use the algorithm of \cref{lem:efficient-poly-simulator} to answer both point and sum queries to $\StrongRandPoly$, $\AuxRandPoly$, and $\MaskedPoly$, in a stateful way, maintaining corresponding tables $\AnsTable{\Simulated{\StrongRandPoly}}$, $\AnsTable{\Simulated{\AuxRandPoly}}$, and $\AnsTable{\Simulated{\MaskedPoly}}$.
\end{proof}

\doclearpage
\section{Zero knowledge low-degree IPCP for $\NEXP$}
\label{sec:zk-nexp}

In this section we use the zero knowledge sumcheck protocol developed in \cref{sec:strong-zk-sumcheck} (along with the \cite{BenSassonCFGRS17} protocol) to build a zero knowledge low-degree IPCP for $\NEXP$, which is the key technical component in our proof of \cref{thm:main-zk}.

Our protocol is based on the IPCP for NEXP of \cite{BabaiFL91}. Recall that in this protocol, the prover first sends a low-degree extension of a $\NEXP$ witness, and then engages in the \cite{LundFKN92} sumcheck protocol on a polynomial related to the instance. To make this zero knowledge, the prover first takes a \emph{randomized} low-degree extension $R$ of the witness (which provides some bounded independence), and then sets the oracle to be an algebraic commitment to $R$. Namely, the prover draws a polynomial uniformly at random subject to the condition that \DoQuote{summing out} a few of its variables yields $R$, and places its evaluation in the oracle.

The prover and verifier then engage in the zero knowledge sumcheck detailed in \cref{sec:strong-zk-sumcheck} with respect to the \cite{BabaiFL91} polynomial. This ensures that the verifier learns nothing through the interaction except for a single evaluation of the summand polynomial, which corresponds to learning a constant number of evaluations of the randomized witness. Bounded independence ensures that these evaluations do not leak any information. The prover decommits these evaluations to the verifier, using the \DoQuote{weak} zero knowledge sumcheck protocol in \cite{BenSassonCFGRS17}.

Following \cite{BabaiFLS91}, the arithmetization encodes bit strings as elements in $\SCSubset^{m}$ for some $\SCSubset$ of size $\poly(\BitSize{B})$, rather than with $\SCSubset=\Bits$ as in \cite{BabaiFL91}, for improved efficiency.

\medskip
We start by defining the \emph{oracle 3-satisfiability problem}, which is the $\NEXP$-complete problem used in \cite{BabaiFL91} to construct two-prover interactive proofs for $\NEXP$.

\begin{definition}[$\OracleSATRelation$]
	\label{def:O3SAT}
	The \defemph{oracle 3-satisfiability relation}, denoted $\OracleSATRelation$, consists of all instance-witness pairs $(\Instance,\Witness)=\big((r, s, B), A\big)$, where $r, s$ are positive integers, $B \colon \Bits^{r + 3s + 3} \to \Bits$ is a boolean formula, and $A \colon \Bits^{s} \to \Bits$ is a function, that satisfy the following condition:
	\begin{equation*}
	\forall\,z \in \Bits^{r},\;
	\forall\,b_{1}, b_{2}, b_{3} \in \Bits^{s},\;
	B\big(z, b_{1}, b_{2}, b_{3}, A(b_{1}), A(b_{2}), A(b_{3})\big) = 1
	\enspace.
	\end{equation*}
\end{definition}

In the rest of this section, we prove the following theorem, which shows that every language in $\NEXP$ has a perfect zero knowledge low-degree IPCP with polynomial communication and query complexity.

\begin{theorem}[PZK low-degree IPCP for $\NEXP$]
	\label{thm:pzk-for-nexp}
	There exists $c \in \Naturals$ such that for any query bound function $\SCStrength(n)$, some integers $d(n),m(n) = O(n^{c} \log \SCStrength)$, and any sequence of fields $\F(n)$ that are extension fields of $\F_{2}$ with $|\F(n)| = \Omega((n^{c} \log \SCStrength)^{4})$, the $\NEXP$-complete relation $\OracleSATRelation$ has a public-coin, non-adaptive \LDIPCP{}, with parameters
	\begin{equation*}
		\LDIPCPparams
		{1/2}
		{O(n, \SCStrength)}
		{\poly(2^{n}, \SCStrength)}
		{\poly(n, \log \SCStrength)}
		{\poly(n, \log \SCStrength)}
		{\PolynomialRingIndOne{\Field}{\SCVars}{\VariableX}{\SCDegree}},
	\end{equation*}
which is zero knowledge with query bound $\SCStrength$.
\end{theorem}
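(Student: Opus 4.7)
The plan is to implement the blueprint sketched in \cref{sec:techniques-nexp}: take the IPCP for $\OracleSATRelation$ underlying the \cite{BabaiFL91} protocol, arithmetize it in a way that is compatible with our algebraic machinery, and then replace every source of information leakage using the tools developed in \cref{sec:algebraic-query-complexity,sec:strong-zk-sumcheck}. Concretely, given $(r,s,B)$ with witness $A\colon \Bits^{s}\to\Bits$, fix a subset $H\subseteq\Field$ of size $\poly(n)$ with $|H|^{m}\ge 2^{s}$ (following \cite{BabaiFLS91} for efficiency) and an arithmetization $\LD{B}$ of $B$. The prover will use a \emph{randomized} low-degree extension $\LD{A}\in\PolynomialRingIndOne{\Field}{m}{\VariableX}{d_A}$ of $A$, where $d_{A}$ is chosen large enough (at least $4$, and scaled with $\log\SCStrength$) that any small-in-$\SCStrength$ collection of evaluations of $\LD{A}$ \emph{outside} $H^{m}$ is jointly uniform by a dimensionality argument on the space of degree-$d_{A}$ extensions. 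The \cite{BabaiFLS91} bundling step then reduces the pair of checks \DoQuote{$\LD{A}|_{H^{m}}$ is boolean} and \DoQuote{$\LD{A}|_{H^{m}}$ satisfies $B$} to a single sumcheck statement \DoQuote{$\sum_{\vec{\alpha}\in H^{m'}}f(\vec{\alpha})=0$} for a low-degree $f$ that depends on $\LD{A}$, $\LD{B}$, and verifier randomness only via a constant number of evaluations of $\LD{A}$ at points $\vec{\gamma}_{1},\vec{\gamma}_{2},\vec{\gamma}_{3}\in\Field^{m}$.

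Next, I would not send $\LD{A}$ in the clear. Instead, the oracle transmitted by the (honest) prover is the algebraic commitment constructed in \cref{sec:techniques-algebraic-commitment}: a uniformly random polynomial $\StrongRandPoly\in\PolynomialRingIndOneXY{\Field}{m}{\VariableX}{k}{\VariableY}{d}{2\SubsetSize}$ subject to the constraint that \DoQuote{summing out the $\vec{\VariableY}$-variables over $G^{k}$} recovers $\LD{A}$, where $G\subseteq\Field$ and $k$ are chosen so that $\SubsetSize^{k}>\SCStrength$. By \cref{cor:partial-sum-indep-vars}, any $\SCStrength$ queries to this oracle are statistically independent of $\LD{A}$. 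This single oracle simultaneously serves as the sum-check summand (it is still a low-degree polynomial, as required by the \LDIPCP{} definition) and as a commitment to the witness.

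The interactive part of the protocol then runs the strong zero knowledge sumcheck of \cref{lem:strong-zk-sumcheck} on the claim obtained from arithmetization, with the summand polynomial being the image of $f$ \emph{inside} the committed object (so that soundness of the sumcheck still applies to the low-degree polynomial in the oracle). After the sumcheck reduces everything to an evaluation claim at a uniformly random point $\vec{c}\in I^{m'}$, the prover reveals the at-most-three evaluations $\LD{A}(\vec{\gamma}_{i})$ needed to compute $f(\vec{c})$, arranging via an aborting rule (as in the blueprint) that $\vec{\gamma}_{1},\vec{\gamma}_{2},\vec{\gamma}_{3}\notin H^{m}$, and proves consistency of each revealed value with the committed oracle by invoking the decommitment sub-protocol, which itself is an instance of the weak-ZK sumcheck of \cite{BenSassonCFGRS17} on the committing relation. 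The verifier accepts iff all sumchecks accept and the output claim matches the claimed evaluations of $\LD{A}$.

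For the analysis, completeness is immediate from the completeness of each subprotocol and the definition of the arithmetization. For low-degree soundness, the oracle is promised to be a polynomial of degree $d$, so the BFL/BFLS soundness goes through with error $O(|H|m\cdot\poly(n)/|\Field|)$, compounded with the sumcheck soundness from \cref{lem:strong-zk-sumcheck}; choosing $|\Field|=\Omega((n^{c}\log\SCStrength)^{4})$ drives the total error below $1/2$. For zero knowledge, I would build the simulator by composition: invoke the strong-ZK sumcheck simulator of \cref{lem:strong-zk-sumcheck}, whose guarantee is that it makes a \emph{single} query to $f$, which translates to at most three evaluations of $\LD{A}$ at points outside $H^{m}$. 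By the bounded independence of the randomized low-degree extension, these evaluations are uniformly and independently distributed in $\Field$, so the simulator can answer them on the fly without knowing $A$; the decommitment step is simulated using the weak-ZK sumcheck simulator, whose query overhead is absorbed by the query budget $\SubsetSize^{k}>\SCStrength$ on the committing oracle.

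The main obstacle, in my view, is juggling the three degree parameters simultaneously: the degree $d_{A}$ of the randomized witness extension (which must be high enough for $O(1)$-wise independence of evaluations outside $H^{m}$ and for the composed polynomial $f$ to lie in the degree class tolerated by the sumcheck), the degree $d$ of the committed oracle (which controls both the low-degree soundness error and the \LDIPCP{} output degree bound in the theorem), and the commitment parameters $(G,k,\SubsetSize)$ that must satisfy $\SubsetSize^{k}>\SCStrength$ while keeping $k=O(\log\SCStrength)$ and $|G|$ constant, so that $d=O(n^{c}\log\SCStrength)$ and $m=O(n^{c}\log\SCStrength)$ as claimed. A careful but routine accounting of these quantities, together with the aborting rule that keeps revealed points outside $H^{m}$ (which only affects soundness additively by $O(|H|m/|\Field|)$), yields the stated parameters.
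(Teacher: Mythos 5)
Your proposal follows essentially the same route as the paper's proof of \cref{thm:pzk-for-nexp}: BFLS-style arithmetization over a polynomial-size base set $H$, a randomized low-degree extension $\LD{A}$ of the witness hidden behind the algebraic commitment of \cref{sec:algebraic-query-complexity}, the strong zero knowledge sumcheck of \cref{lem:strong-zk-sumcheck} with challenges restricted to $\Field\setminus H$, revelation of three witness evaluations decommitted via the weak-ZK sumcheck of \cite{BenSassonCFGRS17}, and a composed simulator that exploits the independence of the three revealed points (via the extension degree) together with the hiding of the commitment against $\SCStrength$ queries (via \cref{cor:partial-sum-indep-vars}). The only imprecision is the claim that the committed oracle ``serves as the sum-check summand'': in the paper the summand of the strong ZK sumcheck is the arithmetized polynomial $f$, which is never an oracle; the commitment merely hides $\LD{A}$, which parameterizes $f$, with low-degree soundness propagated through the commitment's transparency exactly as your parenthetical indicates.
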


\begin{proof}
	We begin with the arithmetization of the problem.
	
	\parhead{Arithmetization}
	Let $\LD{B} \colon \Field^{m} \to \Field$ be the \DoQuote{direct} arithmetization of the negation of $B$: rewrite $B$ by using ANDs and NOTs; negate its output; replace each $\mathrm{AND}(a,b)$ with $a \cdot b$ and $\mathrm{NOT}(a)$ with $1-a$. For every $\vec{x} \in \Bits^{r + 3s + 3}$, $\LD{B}(\vec{x}) = 0$ if $B(\vec{x})$ is true, and $\LD{B}(\vec{x}) = 1$ if $B(\vec{x})$ is false. Note that $\LD{B}$ is computable in time $\poly(\BitSize{B})$ and has total degree $O(\BitSize{B})$.
	
	Note that $(r, s, B) \in \OracleSATRelation$ if and only if there exists a multilinear function $\LD{A} \colon \Field^{s} \to \Field$ that is boolean on $\Bits^{s}$ such that $\LD{B}(\vec{z}, \vec{b}_{1}, \vec{b}_{2}, \vec{b}_{3}, \LD{A}(\vec{b}_{1}), \LD{A}(\vec{b}_{2}), \LD{A}(\vec{b}_{3})) = 0$, for all $\vec{z} \in \Bits^{r}$, $\vec{b}_{1}, \vec{b}_{2}, \vec{b}_{3} \in \Bits^{s}$.
	
	The requirement that $\LD{A}$ is boolean on $\Bits^{s}$ can be encoded by $2^{s}$ constraints: $\LD{A}(\vec{b})(1 - \LD{A}(\vec{b})) = 0$ for every $\vec{b} \in \Bits^{s}$.
	These constraints can be expressed as follows:
	\begin{align*}
	\left\{
	g_{1}(\vec{\alpha}) \DefineEqual \LD{B}(\vec{z}, \vec{b}_{1}, \vec{b}_{2}, \vec{b}_{3}, \LD{A}(\vec{b}_{1}), \LD{A}(\vec{b}_{2}), \LD{A}(\vec{b}_{3})) = 0 \right\}_{\vec{z} \in \Bits^{r},\, \vec{b}_{i} \in \Bits^{s}} \\
	\left\{
	g_{2}(\vec{\beta}) \DefineEqual \LD{A}(\vec{b})(1 - \LD{A}(\vec{b})) = 0 \right\}_{\vec{b} \in \Bits^{s}}
	\quad\quad\quad\quad\quad\quad\quad
	\end{align*}
	
	Let $F$ be the polynomial over $\Field$ given by
	\begin{equation*}
	F(\vec{\VariableX}, \vec{\VariableY})
	\DefineEqual
	\sum_{\vec{\alpha} \in \Bits^{r + 3s}}
	\left(
	g_{1}(\vec{\alpha}) \vec{\VariableX}^{\vec{\alpha}} + g_{2}(\vec{\alpha}_{[s]}) \vec{\VariableY}^{\vec{\alpha}}
	\right)
	\enspace,
	\end{equation*}
	where $\vec{\VariableX}^{\vec{\alpha}} \DefineEqual \VariableX_{1}^{\alpha_{1}} \cdots \VariableX_{\ell}^{\alpha_{\ell}}$ for $\vec{\alpha} \in \Bits^{\ell}$, and $\vec{\alpha}_{[s]}$ are the first $s$ coordinates in $\vec{\alpha}$.
	
	Note that $F \equiv 0$ if and only if all the above constraints hold. Since $F$ is a polynomial of total degree $r + 3s$, if $F \not\equiv 0$, then $F$ is zero on at most an $\frac{r + 3s}{\SetCardinality{\Field}}$ fraction of points in $\Field^{2(r + 3s)}$.
	
	For $\alpha_{i} \in \Bits$ it holds that $\VariableX_{i}^{\alpha_{i}} = 1 + (\VariableX_{i} - 1)\alpha_{i}$, so we can also write
\begin{align*}
   F(\vec{\VariableX}, \vec{\VariableY})
&= \sum_{\vec{\alpha} \in \Bits^{r + 3s}}
    \left(
	g_{1}(\vec{\alpha}) 
	\cdot \prod_{i=1}^{r+3s} (1 + (\VariableX_{i} - 1)\alpha_{i})
	+ 
	g_{2}(\vec{\alpha}_{[s]})
	\cdot \prod_{i=1}^{r+3s} (1 + (\VariableY_{i} - 1)\alpha_{i})
	\right) \\
&=: \sum_{\vec{\alpha} \in \Bits^{r + 3s}} f(\vec{\VariableX},\vec{\VariableY},\vec{\alpha})
\enspace.
\end{align*}
	
Let $\SCSubset$ be a subfield of $\Field$ of size $\poly(r + s + \log \SCStrength)$; define $m_{1} \DefineEqual r/\log \SetCardinality{\SCSubset}$ and $m_{2} \DefineEqual s/\log \SetCardinality{\SCSubset}$ (assuming without loss of generality that both are integers). For $i \in \{1, 2\}$, let $\gamma_{i} \colon \SCSubset^{m_{i}} \to \Bits^{m_{i} \log \SetCardinality{\SCSubset}}$ be the lexicographic order on $\SCSubset^{m_{i}}$. The low-degree extension $\LD{\gamma}_{i}$ of $\gamma_{i}$ is computable by an arithmetic circuit constructible in time $\poly(\SetCardinality{\SCSubset}, m_{i}, \log \SetCardinality{\Field})$ \cite[Claim 4.2]{GoldwasserKR15}. Let $\gamma \colon \SCSubset^{m_{1} + 3m_{2}} \to \Bits^{r + 3s}$ be such that $\gamma(\vec{\alpha}, \vec{\beta}_{1}, \vec{\beta}_{2}, \vec{\beta}_{3}) = (\gamma_{1}(\vec{\alpha}), \gamma_{2}(\vec{\beta}_{1}), \gamma_{2}(\vec{\beta}_{2}), \gamma_{2}(\vec{\beta}_{3}))$ for all $\vec{\alpha} \in \SCSubset^{m_{1}}$, $\vec{\beta}_{1}, \vec{\beta}_{2}, \vec{\beta}_{3} \in \SCSubset^{m_{2}}$; let $\LD{\gamma} \colon \Field^{m_{1} + 3m_{2}} \to \Field^{r + 3s}$ be its low-degree extension.
	
	We can use the above notation to write $F$ equivalently as
	\begin{align*}
	F(\vec{\VariableX}, \vec{\VariableY}) =
	\sum_{\substack{\vec{\alpha} \in \SCSubset^{m_{1}} \\
			\vec{\beta}_{1}, \vec{\beta}_{2}, \vec{\beta}_{3} \in \SCSubset^{m_{2}}}}
	&g_{1}(\LD{\gamma}(\vec{\alpha}, \vec{\beta}_{1}, \vec{\beta}_{2}, \vec{\beta}_{3}))
	\prod_{i=1}^{r+3s} (1 + (\VariableX_{i} - 1)\LD{\gamma}(\vec{\alpha}, \vec{\beta}_{1}, \vec{\beta}_{2}, \vec{\beta}_{3})_{i}) \\
	&+g_{2}(\LD{\gamma}_{2}(\vec{\beta}_{1}))
	\prod_{i=1}^{r+3s} (1 + (\VariableY_{i} - 1)\LD{\gamma}(\vec{\alpha}, \vec{\beta}_{1}, \vec{\beta}_{2}, \vec{\beta}_{3})_{i}) \enspace.
	\end{align*}
	
	We are now ready to specify the protocol. 
	
	\parhead{Low-degree IPCP for $\OracleSATRelation$}
	Let $\SSCVars \DefineEqual \lceil \log 100\SCStrength / \log \SetCardinality{\SCSubset} \rceil$. The interaction is as follows.
	\begin{enumerate}
		\item The prover draws a polynomial $\StrongRandPoly$ uniformly at random from $\PolynomialRingIndOneXY{\Field}{m_{2}}{\VariableX}{\SSCVars}{\VariableY}{\SetCardinality{\SCSubset}+2}{2\SetCardinality{\SCSubset}}$, subject to the condition that $\sum_{\vec{\beta} \in \SSCSubset^{\SSCVars}} \StrongRandPoly(\vec{\alpha},\vec{\beta}) = A(\gamma_{2}(\vec{\alpha}))$ for all $\vec{\alpha} \in \SCSubset^{m_{2}}$. It then generates an oracle $\Proof_{0}$ for the $\SetCardinality{\SCSubset}^{\SSCVars}$-strong zero knowledge sumcheck protocol (\cref{sec:strong-zk-sumcheck}) on input $(\Field, m_{1}+3m_{2}, \deg{f}, \SCSubset, 0)$ and oracles $\Proof_{1},\Proof_{2},\Proof_{3}$ for the invocation of the weak zero knowledge sumcheck protocol in \cite{BenSassonCFGRS17} on input $(\Field, \SSCVars, 2\SetCardinality{\SCSubset}, \SCSubset, \cdot)$. (In both zero knowledge sumchecks, the oracle message does not depend on the claim itself.) The prover sends an oracle which is the \DoQuote{bundling} of the evaluations of $\StrongRandPoly$ with $(\Proof_{0}, \Proof_{1}, \Proof_{2}, \Proof_{3})$.\footnote{%
By \DoQuote{bundling} we refer to a standard technique of sending a single low-degree polynomial which encodes a list of low-degree polynomials. More precisely, the bundling of $P_{1}(\vec{\VariableX}),\ldots,P_{\ell}(\vec{\VariableX})$ is the polynomial $P(W,\VariableX) := \sum_{\alpha \in S} \Lagrange{S}(W,\alpha) P_{\gamma(\alpha)}(\vec{\VariableX})$, for some $S \subseteq \Field$ such that $|S| = \ell$ and $\gamma : S \to \{1, \ldots, \ell\}$ an ordering of $S$. Observe that
\begin{inparaenum}
\item $P(i,\vec{\VariableX}) \equiv P_{i}(\vec{\VariableX})$ for all $i = 1, \ldots, k$;
\item $\deg{P} = \max\{\deg{P_{1}}, \ldots, \deg{P_{\ell}}, |S|-1\}$;
\item any query to $P$ can be answered by querying each $P_{i}$ at that point, and so the zero knowledge guarantee is unaffected except for reducing the query bound by a factor $\ell$.
\end{inparaenum}
}
		
		\item The verifier chooses $\vec{x}, \vec{y} \in \Field^{r+3s}$ uniformly at random and sends them to the prover. The prover and verifier engage in the zero knowledge sumcheck protocol of \cref{sec:strong-zk-sumcheck} with respect to the claim \DoQuote{$F(\vec{x}, \vec{y}) = 0$} with $\SoundnessSet = \Field \setminus \SCSubset$, using $\Proof_{1}$ as the oracle message. This reduces the claim to checking that $f(\vec{x},\vec{y},\vec{c},\vec{c}'_{1},\vec{c}'_{2},\vec{c}'_{3}) = a$ for uniformly random $\vec{c} \in (\Field \setminus \SCSubset)^{m_{1}}$, $\vec{c}'_{1}, \vec{c}'_{2}, \vec{c}'_{3} \in (\Field \setminus \SCSubset)^{m_{2}}$, and some $a \in \Field$ provided by the prover.
		
		\item The prover provides $h_{i} \DefineEqual A(\gamma_{2}(\vec{c}'_{i}))$ for each $i \in \{1,2,3\}$. The verifier substitutes these values into the expression for $f$ to check the above claims, and rejects if they do not hold.
		
		\item The prover and verifier engage in the zero knowledge sumcheck protocol in \cite{BenSassonCFGRS17} with respect to the claims \DoQuote{$\sum_{\vec{\beta} \in \SCSubset^{\SSCVars}} \StrongRandPoly(\vec{c}'_{i},\vec{\beta}) = h_{i}$}, for each $i \in \{1,2,3\}$, using $\Proof_{i}$ as the oracle message.
	\end{enumerate}
	
	\parhead{Completeness}
	If $((r, s, B), A) \in \OracleSATRelation$, then $F(\vec{\VariableX}, \vec{\VariableY})$ is the zero polynomial; hence $F(\vec{x}, \vec{y}) = 0$ for all $\vec{x}, \vec{y} \in \Field^{r + 3s}$. Completeness follows from the completeness of the zero knowledge sumcheck protocols.
	
	\parhead{Low-degree soundness}
	Suppose that $(r, s, B) \notin \Language(\OracleSATRelation)$, and let $(\Malicious{\StrongRandPoly}, \Malicious{\Proof}_{0}, \Malicious{\Proof}_{1}, \Malicious{\Proof}_{2}, \Malicious{\Proof}_{3})$ be the PCP oracle sent by a malicious prover. By the low-degree soundness condition, this is a collection of polynomials of individual degree at most $\SCDegree$. Let $\Malicious{A} \DefineEqual \sum_{\vec{\beta} \in \SCSubset^{\SSCVars}} \Malicious{\StrongRandPoly}(\vec{\VariableX},\vec{\beta})$, which we think of as playing the role of $\LD{A}(\gamma_{2}(\cdot))$ in $F$. Observe that $\Malicious{A}$ has individual degree at most $d \DefineEqual \SetCardinality{\SCSubset}+2$.
	
	If $(r, s, B) \notin \Language(\OracleSATRelation)$, then there is no choice of $\LD{A}$ such that $F(\vec{\VariableX}, \vec{\VariableY})$ is the zero polynomial. Thus, $F(\vec{x}, \vec{y}) = 0$ with probability at most $(r + 3s)/\SetCardinality{\Field}$ over the choice of $\vec{x}, \vec{y}$. By the soundness of the zero knowledge sumcheck protocol (\cref{lem:strong-zk-sumcheck}), the verifier outputs a false claim \DoQuote{$f(\vec{x},\vec{y},\vec{\alpha}) = a$} with probability at least $1 - O((m_{1}+3m_{2}+\SSCVars)d)/(\SetCardinality{\Field} - \SetCardinality{\SCSubset}))$. If substituting $h_{i}$ for $\LD{A}(\gamma_{2}(\vec{c}'_{i}))$ in $f$ does not yield $a$, then the verifier rejects. Otherwise, it must be the case that for at least one $i \in \{1,2,3\}$, $\Malicious{A}(\vec{c}'_{i}) \neq h_{i}$. By the soundness of the sumcheck protocol in \cite{BenSassonCFGRS17}, the verifier rejects with probability at least $1 - O(\frac{\SSCVars d}{\SetCardinality{\Field}})$. Taking a union bound, the verifier rejects with probability at least $1 - O((m_{1}+3m_{2}+\SSCVars)d/\SetCardinality{\Field}) = 1 - O((r+s+\log \SCStrength) d/\SetCardinality{\Field})$.
	
	\parhead{Zero knowledge}
	Perfect zero knowledge is achieved via the following (straightline) simulator.
\begin{mdframed}
	{\small
		\begin{enumerate}[nolistsep]
			
			\item Draw a uniformly random polynomial $\Simulated{\StrongRandPoly} \in \PolynomialRingIndOneXY{\Field}{m_{2}}{\VariableX}{\SSCVars}{\VariableY}{\SetCardinality{\SCSubset}+2}{2\SetCardinality{\SCSubset}}$. 
			
			\item Invoke the $\SetCardinality{\SCSubset}^{k}$-strong ZK sumcheck simulator on input $(\Field, m_{1}+3m_{2}, \deg{f}, \SCSubset, 0)$, and use it to answer queries to $\Proof_{0}$ throughout. In parallel, run three copies of the simulator for the weak ZK sumcheck in \cite{BenSassonCFGRS17}, with respect to input $(\Field, k, 2\SetCardinality{\SCSubset}, \SCSubset, \cdot)$, and use them to answer queries to $\Proof_{1}, \Proof_{2}, \Proof_{3}$ respectively. (Recall that the behavior of each simulator does not depend on the claim being proven until after the first simulated message, so we can choose these later.)
				
			\item Receive $\vec{x}, \vec{y} \in \Field^{r+3s}$ from $\Malicious{\Verifier}$. 
			
			\item Simulate the strong ZK sumcheck protocol on the claim \DoQuote{$F(\vec{x},\vec{y}) = 0$}. The subsimulator will query $f$ at a single location $\vec{c} \in (\Field - \SCSubset)^{r+3s}$. Reply with the value $f(\vec{x},\vec{y},\vec{c})$, for $\vec{c} = (\vec{c}_{0}, \vec{c}_{1}, \vec{c}_{2}, \vec{c}_{3}) \in (\Field \setminus \SCSubset)^{r+3s}$. Computing this requires the values $\LD{A}(\LD{\gamma}(\vec{c}_{i}))$ for $i \in \{1,2,3\}$; we substitute each of these with $\Simulated{h}^{i} \in \Field$ drawn uniformly at random (except that if $\vec{c}_{i} = \vec{c}_{j}$ for $i \neq j$, then fix $\Simulated{h}^{i} = \Simulated{h}^{j}$).
			
			\item For $i \in \{1,2,3\}$, simulate the weak ZK sumcheck protocol in \cite{BenSassonCFGRS17} with respect to the claim \DoQuote{$\sum_{\vec{\beta} \in \SCSubset^{\SSCVars}} \StrongRandPoly(\vec{\alpha},\vec{\beta}) = \Simulated{h}^{i}$}. Whenever the subsimulator queries $\StrongRandPoly$, answer using $\Simulated{\StrongRandPoly}$.
			
		\end{enumerate}
	}
\end{mdframed}

The verifier's view consists of its interaction with $\Prover$ during the four sumcheck protocols it invokes and its queries to the oracle. The strong zero knowledge sumcheck subsimulator in \cref{sec:strong-zk-sumcheck} guarantees that the queries to $\Proof_{0}$ and the first sumcheck are perfectly simulated given a single query to $f$ at the point $\vec{c} \in (\Field \setminus \SCSubset)^{r+3s}$ chosen by $\Malicious{\Verifier}$. Since $\LD{A}'(\vec{\VariableX}) = \sum_{\vec{\beta} \in \SCSubset^{\SSCVars}} \StrongRandPoly(\vec{\VariableX},\vec{\beta}) \in \PolynomialRingIndOne{\Field}{\SCVars}{\VariableX}{\SetCardinality{\SCSubset}+2}$, the evaluation of $\LD{A}$ at any $3$ points outside of $\SCSubset^{\SCVars}$ does not determine its value at any point in $\SCSubset^{\SCVars}$. In particular, this means that the values of the $h_{i}$'s sent by the prover in the original protocol are independently uniformly random in $\Field$ (except if $\vec{c}_{i} = \vec{c}_{j}$ for $i \neq j$ as above). Thus the $\Simulated{h}^{i}$'s are identically distributed to the $h_{i}$'s, and therefore both the prover message and the simulator's query are perfectly simulated.

The sumcheck simulator in \cite{BenSassonCFGRS17} ensures that the view of the verifier in the rest of the sumchecks is perfectly simulated given $q_{\Malicious{\Verifier}}$ queries to $\StrongRandPoly$, where $q_{\Malicious{\Verifier}}$ is the number of queries the verifier makes across all $\Proof_{i}$, $i \in \{1,2,3\}$. Hence, the number of \DoQuote{queries} the simulator makes to $\Simulated{\StrongRandPoly}$ is strictly less than $100\SCStrength$ (because $\Malicious{\Verifier}$ is $\SCStrength$-query). By \cref{cor:partial-sum-indep-vars}, any set of strictly less than $100\SCStrength$ queries to $\StrongRandPoly$ is independent of $\LD{A}'$, and so the answers are identically distributed to the answers to those queries if they were made to a uniformly random polynomial, which is the distribution of $\Simulated{\StrongRandPoly}$.

Clearly, drawing a uniformly random polynomial in $\Simulated{\StrongRandPoly} \in \PolynomialRingIndOneXY{\Field}{m_{2}}{\VariableX}{\SSCVars}{\VariableY}{\SetCardinality{\SCSubset}+2}{2\SetCardinality{\SCSubset}}$ cannot be done in polynomial time. However, we can instead use the algorithm of \cref{lem:efficient-poly-simulator} to draw $\StrongRandPoly$ (a straightforward modification allows us to handle different degrees in $\vec{\VariableX}, \vec{\VariableY}$; alternatively, we could simply set the degree bound for both to be $2\SetCardinality{\SCSubset}$). The running time of the simulator is then $\poly(\log \SetCardinality{\Field}, m_{1}, m_{2}, k, \SetCardinality{\SCSubset})$.
\end{proof}

\doclearpage
\bookmarksetup{startatroot}
\addtocontents{toc}{\protect\vspace*{\baselineskip}\protect}
\appendix
\section{Reducing query complexity while preserving zero knowledge}
\label{sec:single-unif-q}

We prove \cref{prop:round-reduction} by showing that any low-degree IPCP can be transformed into a low-degree IPCP that makes a single uniform query, at only a small cost in parameters, while \emph{preserving zero knowledge}.

Let $\SCVars, \SCDegree \in \Naturals$, and let $\Field$ be a finite field of size $\SetCardinality{\Field} > (\SCVars \SCDegree/\SoundnessError)^C$. Let $(\Prover,\Verifier)$ be an $\RoundComplexity$-round \LDIPCP{} for a language $\Language$. Denote its oracle by $\Oracle$, query complexity by $\QueryComplexity$, PCP length by $\ProofLength$, communication complexity by $\CommunicationComplexity$, and soundness error by $\SoundnessError = 1/2$.

We transform $(\Prover,\Verifier)$ into a low-degree IPCP $(\Prover',\Verifier')$ for $\Language$ with parameters
	\begin{equation*}
		\LDIPCPparams
		{\SoundnessError' = \SoundnessError + \frac{\SCDegree \QueryComplexity}{|\Field|-\QueryComplexity}}
		{\RoundComplexity' = \RoundComplexity+1}
		{\ProofLength' = \ProofLength}
		{\CommunicationComplexity' = \CommunicationComplexity + \poly(\SCDegree, \QueryComplexity, \SCVars)}
		{\QueryComplexity' = 1}
		{\PolynomialRingIndOne{\Field}{\SCVars}{\VariableX}{\SCDegree}}
	\enspace,
	\end{equation*}
where the new honest verifier's single query is uniformly distributed. Furthermore, if $(\Prover,\Verifier)$ is (perfect) zero knowledge with query bound $\QueryBound$, then $(\Prover',\Verifier')$ is (perfect) zero knowledge with query bound $\QueryBound - (\SCDegree \QueryComplexity + 1)$.

We reduce the query complexity of the IPCP verifier $\Verifier$ from $\QueryComplexity$ to $1$ by using the standard approach of leveraging algebraic structure and additional interaction with the prover, while making sure that the query reduction preserves zero knowledge and that the (single) query that $\Verifier$ makes is uniformly distributed. Specifically, if the verifier wants to query the oracle $\Oracle$ at every point in a set $A$, the verifier asks the prover to provide the restriction of $R$ to a curve that contains all points in $A$. Since we wish to make a single uniform query, rather than asking for the curve of minimal degree (which is unique), we choose the curve at random from all such curves of degree at most $\SetCardinality{A}$. This technique is used by \cite{KalaiR08} to show the same result for general public-coin IPCPs (with standard soundness).\footnote{We remark that the transformation in \cite{KalaiR08} also implicitly assumes that the IPCP is public coin.} Since we require only low-degree soundness, we can dramatically simplify their proof. Consider the following protocol.

\begin{construction}
Let $(\Prover,\Verifier)$ be a $\QueryComplexity$-query \LDIPCP{} for $\Language$ in which the (honest) oracle is some $\Oracle \in \PolynomialRingIndOne{\Field}{\SCVars}{\VariableX}{\SCDegree}$. We construct an \LDIPCP{} $(\Prover',\Verifier')$ for $\Language$ in which $\Verifier'$ makes a single uniformly distributed query to $\Oracle$. We may assume that $\SetCardinality{\Field} > \QueryComplexity$, otherwise the stated soundness guarantee is trivial.
\begin{enumerate}[nolistsep]

  \item \emph{Random curve.}
$\Verifier'$ chooses a random $\vec{r} \in \Field^\SCVars$ and a random $t \in \Field \setminus S$, for some $S \subseteq \Field$ with $|S| = \QueryComplexity$. $\Verifier'$ computes a curve $\gamma \colon \Field \to \Field^\SCVars$ of degree $\QueryComplexity$ such that $\{\gamma(s)\}_{s \in S} = A$ and $\gamma(t) = \vec{r}$ and sends it to $\Prover'$. $\Prover'$ replies with the coefficients of the polynomial $\rho \colon \F \to \F$, of degree at most $\SCDegree \QueryComplexity$, that (allegedly) is the restriction of $\RandPoly$ to $\gamma$.

  \item \label{step:kr_consistency}
  \emph{Consistency.}
$\Verifier$ queries $\Oracle$ at $\vec{r}$ and receives an answer $a$; it rejects if $a \neq \rho(t)$. (Recall that $\vec{r}=\gamma(t)$.)
			
  \item \emph{Emulating the multi-query verifier.}
  $\Verifier'$ rules according to the decision predicate of $\Verifier$ with respect to the transcript of the \DoQuote{interaction phase} and the answers to the query set $A$, as indicated by the curve $\gamma$.

\end{enumerate}
\end{construction}
One can verify that the complexity of the protocol is as stated. Since the evaluation of the curve $\gamma$ (which has degree $\QueryComplexity$) at a random $t \in \Field \setminus S$ is a random variable that is uniformity distributed over $\Field^\SCVars$, the single query that the verifier makes is uniformly distributed.
Completeness is immediate by construction. For soundness, since $\rho$ is a univariate polynomial of degree at most $\SCDegree \QueryComplexity$, and $\Prover'$ does not know $t$, if the check $\rho(t) = \Oracle(\vec{r})$ (in Step \ref{step:kr_consistency}) 
passes with probability greater than $\frac{\SCDegree \QueryComplexity}{|\Field|-\QueryComplexity}$, then all the answers to the query set $A$, as indicated by $\gamma$, are consistent with $\Oracle$.
Perfect zero knowledge is preserved because the polynomial $\rho$ can be computed efficiently by making $\SCDegree \QueryComplexity + 1$ queries to $\Oracle$, and so the additional information provided by the prover could have been computed by the malicious verifier itself.


\doclearpage
\section{From PCP to \MIPStar{} via a black box transformation}
\label{sec:blackbox}

We show that any (non-adaptive) PCP, and more generally any IPCP, can be transformed into an \MIPStar{} in a black box way. While a proof of this fact is implicit in \cite{Vidick16,NatarajanV17}, the machinery developed in \cref{part:I} allows us to elucidate its structure and give a compellingly short proof of it.

We first define what we mean by \emph{black box}. Informally, we call a transformation black box if it does not depend on the language being decided.\footnote{This rules out degenerate \DoQuote{transformations} that ignore the given PCP or IPCP $(\Prover,\Verifier)$ for the language $\Language$, and simply output an \MIPStar{} for $\Language$ unrelated to $(\Prover,\Verifier)$. (Such degenerate transformations are trivially implied by the inclusion $\NEXP \subseteq \MIPS$.)} The following definition formalizes this notion.

\begin{definition}
\label{def:blackboxtrans}
A transformation $\Transformation$ maps IPCP to \MIPStar{} if, given as input an IPCP $(\Prover,\Verifier)$ for a language $\Language$, outputs an \MIPStar{} for the language $\Language$. Such a transformation is \emph{black box} if the verifier in the resulting \MIPStar{} can be expressed as an algorithm with access only to the queries and messages of $\Verifier(x)$ but no access to the input $x$, apart from its length.
\end{definition}
\noindent We stress that \cref{def:blackboxtrans} also applies to PCPs (by viewing them as $0$-round IPCPs).

The transformation is in two stages. First, we convert the IPCP into a low-degree IPCP by encoding the oracle as a low-degree polynomial. Second, we apply the transformation in \cref{lem:lifting} (which includes invoking the query reduction in \cref{prop:round-reduction}) to convert the low-degree IPCP into an \MIPStar{}. Besides being black box in the formal sense, the resulting \MIPStar{} verifier is simple to describe: it is the original verifier, composed with an interactive query-reduction protocol and a low-degree test for entangled provers.

In sum, the above yields the following corollary.

\begin{corollary}
\label{cor:ipcp-to-mips}
There exists a black box transformation that maps any $\RoundComplexity$-round IPCP for a language $\Language$ to a $2$-prover $(\RoundComplexity+1)$-round \MIPStar{} for $\Language$.
\end{corollary}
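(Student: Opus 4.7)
The plan is to compose three ingredients already developed in the paper: (i) a generic \emph{algebraization} step that turns any IPCP into a low-degree IPCP at zero cost in rounds; (ii) the query-reduction preprocessing of \cref{prop:round-reduction}; and (iii) the lifting lemma (\cref{lem:lifting}) applied in the non-zero-knowledge regime described in \cref{rem:rounds}. The round accounting works out so that steps (i)--(iii) together cost exactly one additional round over the original IPCP.

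Given an $\RoundComplexity$-round (public-coin, non-adaptive) IPCP $(\Prover,\Verifier)$ for $\Language$ with oracle $\Oracle$ of length $\ProofLength$ and query complexity $\QueryComplexity$, fix a constant-sized subset $\SCSubset \subseteq \Field$, set $\SCVars = \lceil \log \ProofLength / \log \SetCardinality{\SCSubset} \rceil$ and $\SCDegree = \SetCardinality{\SCSubset}-1$, and identify the domain $[\ProofLength]$ with (a subset of) $\SCSubset^{\SCVars}$ via a canonical ordering. Pick a finite field $\Field$ large enough so that $\SetCardinality{\Field} > \max\{(2\SCVars\SCDegree)^{C}, 5\SCDegree \QueryComplexity\}$, where $C$ is the constant of \cref{thm:quantum_low_degree_test}. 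Define the low-degree IPCP $(\Prover',\Verifier')$ in which the honest prover sends the unique low-degree extension $\LD{\Oracle} \in \PolynomialRingIndOne{\Field}{\SCVars}{\VariableX}{\SCDegree}$ of $\Oracle$, and $\Verifier'$ emulates $\Verifier$ by translating each query $i \in [\ProofLength]$ into a query to $\LD{\Oracle}$ at the corresponding point of $\SCSubset^{\SCVars}$. Completeness follows since $\LD{\Oracle}$ agrees with $\Oracle$ on $\SCSubset^{\SCVars}$; low-degree soundness follows since any cheating oracle $\Malicious{\Oracle} \in \PolynomialRingIndOne{\Field}{\SCVars}{\VariableX}{\SCDegree}$ equals the unique LDE of its own restriction $\Malicious{\Oracle}|_{\SCSubset^{\SCVars}}$, which is some function on $\SCSubset^{\SCVars}$ to which the original soundness of $(\Prover,\Verifier)$ applies.

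Next, apply \cref{prop:round-reduction} to $(\Prover',\Verifier')$, producing an $(\RoundComplexity+1)$-round low-degree IPCP whose verifier makes a single, uniformly distributed query to its oracle. Finally, invoke \cref{lem:lifting} in the regime of \cref{rem:rounds}: since we do not need to preserve zero knowledge (saving one round) and the low-degree IPCP already has the single uniform query property (saving another round), the resulting $2$-prover \MIPStar{} has round complexity $(\RoundComplexity+1)^{*} = \RoundComplexity+1$, as claimed.

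Black-box-ness is immediate from the structure of the construction: each of the three stages is oblivious to $\Language$. The LDE step depends only on $\ProofLength$, the query reduction is a universal curve-interpolation procedure, and the lifting composes a universal low-degree test with an IPCP-emulation sub-protocol in which the original $\Verifier$ is invoked as a subroutine on transcripts fed to it by the \MIPStar{} verifier. In particular, the resulting \MIPStar{} verifier never touches the instance $x$ except through the queries and messages generated by $\Verifier(x)$, so its computation can be written as an algorithm with oracle access to the queries/messages of $\Verifier(x)$. The one subtlety worth flagging is the round accounting: a naive application of the lifting lemma would cost two extra rounds, so it is essential to explicitly invoke the corner case of \cref{rem:rounds} (non-ZK \emph{together with} the single-uniform-query property produced by the preprocessing) to land at exactly $\RoundComplexity+1$ rounds.
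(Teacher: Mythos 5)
Your proposal is correct and takes essentially the same route as the paper: encode the oracle as a low-degree polynomial via the systematic LDE, observe that this is a low-degree IPCP, and invoke the lifting lemma in the non-zero-knowledge regime of \cref{rem:rounds}. The only (presentational) difference is that you apply \cref{prop:round-reduction} as an explicit external step before invoking \cref{lem:lifting} in the "single-uniform-query" regime, whereas the paper simply applies \cref{lem:lifting} directly (which performs the same query reduction internally); both yield the same protocol and the same round count of $\RoundComplexity+1$, and your explicit accounting has the minor virtue of sidestepping the slight looseness of the $\RoundComplexity^{*}$ formulation when $\RoundComplexity=0$.
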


\begin{proof}
Let $\pair{\Prover}{\Verifier}$ be an $\RoundComplexity$-round IPCP for $\Language$; denote its PCP oracle by $\Oracle$, query complexity by $\QueryComplexity$, and proof length by $\ProofLength$. Let $\SCVars,\SCDegree \in \Naturals$ be such that $\ProofLength \leq \SCDegree^{\SCVars}$, and let $\Field$ be a finite field with $\SetCardinality{\Field} > \max\Set{(2\SCVars \SCDegree)^C, 5\SCDegree\QueryComplexity}$ (where $C$ is the constant from \cref{thm:quantum_low_degree_test}).
 Encode the oracle $\Oracle$ of the IPCP as a polynomial $\LD{\Oracle} \in \PolynomialRingIndOne{\Field}{\SCDegree}{\VariableX}{\SCVars}$ by computing the low-degree extension of $\Oracle$ (see \cref{sec:notation}); the new oracle is the evaluation of $\LD{\Oracle}$ over $\Field^{\SCVars}$. Since this encoding is systematic, the verifier $\Verifier$ can directly query $\LD{\Oracle}$ at the positions that correspond to its query set. Completeness and soundness are clearly preserved, as are the query and communication complexities. Proof length is increased from $\ProofLength$ to $|\Field|^{\SCVars}$. The resulting IPCP satisfies the conditions of a low-degree IPCP, and so we can apply the exact argument as in \cref{sec:putting-it-all-together} to obtain the desired \MIPStar{}. Straightforward inspection shows that all of the applied transformations are indeed black box.
\end{proof}

\begin{remark}
Zero knowledge is \emph{not} preserved by this transformation because taking the low-degree extension of the oracle may allow the verifier to learn global information that cannot, in general, be simulated via a small number of queries. (For example, a single point of the low-degree extension may amount to a summation over exponentially many points; see \cref{sec:query-complexity-appendix}.)
\end{remark}

If we apply \cref{cor:ipcp-to-mips} to any PCP (i.e., any $0$-round IPCP) for $\NEXP$ (for example, the one in \cite{BabaiFLS91}), we immediately recover the following result from \cite{NatarajanV17}, which shows that every language in $\NEXP$ has an \MIPStar{} with optimal round complexity and number of provers.

\begin{corollary}
Every language in $\NEXP$ has a 1-round 2-prover \MIPStar{}.
\end{corollary}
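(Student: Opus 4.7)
The plan is to invoke \cref{cor:ipcp-to-mips} on a PCP for $\NEXP$, viewing the PCP as a $0$-round IPCP. Specifically, the celebrated PCP theorem for $\NEXP$ of Babai, Fortnow, Levin, and Szegedy \cite{BabaiFLS91} gives, for every language $\Language \in \NEXP$, a probabilistically checkable proof in which the verifier reads only polynomially many bits of a proof of exponential length and tosses polynomially many coins. Any PCP is, by definition, a $0$-round IPCP: the \DoQuote{interactive} phase is empty, and verification consists solely of querying the proof oracle sent by the prover.

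Applying the black box transformation of \cref{cor:ipcp-to-mips} to this $0$-round IPCP for $\Language$ yields a $2$-prover $(0+1)$-round \MIPStar{} for $\Language$, which is exactly a $1$-round $2$-prover \MIPStar. Since $\Language \in \NEXP$ was arbitrary, the containment $\NEXP \subseteq \MIPSparams{2}{1/2}{1}{\poly(n)}$ follows. Completeness, soundness, and the fact that everything is black box are already guaranteed by the statement of \cref{cor:ipcp-to-mips}, so no further verification is required at this stage.

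There is essentially no obstacle to overcome here: all of the work has already been done in proving \cref{lem:lifting}, \cref{prop:round-reduction}, and \cref{cor:ipcp-to-mips}. The only thing to double-check is that the resulting round complexity is indeed $1$ and not more. Tracing through the proof of \cref{cor:ipcp-to-mips}, the $\RoundComplexity + 1$ bound comes from the preprocessing step (\cref{prop:round-reduction}) that reduces to a single uniform query, plus the fact that (by \cref{rem:rounds}) when zero knowledge need not be preserved and the PCP oracle has been replaced by its low-degree extension, the lifting itself does not cost an extra round beyond the low-degree test that is folded into the final round. Since the original IPCP has $\RoundComplexity = 0$, we get a $1$-round \MIPStar{}, matching the claim. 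No new ideas beyond the framework developed in \cref{part:I} are needed.
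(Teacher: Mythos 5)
Your argument is the same as the paper's: instantiate \cref{cor:ipcp-to-mips} with the $0$-round PCP for $\NEXP$ from \cite{BabaiFLS91}, giving round complexity $0+1 = 1$. The extra discussion tracing the round count back through \cref{prop:round-reduction} and \cref{rem:rounds} is consistent with the paper's bookkeeping but unnecessary once \cref{cor:ipcp-to-mips} is invoked.
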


\doclearpage
\section{Algebraic query complexity upper bounds}
\label{sec:query-complexity-appendix}

We show that in certain cases the degree constraints in \cref{lem:sum-query-lower-bound} are tight.

\subsection{Multilinear polynomials}
\label{sec:multilinear-polynomials}

The first result is for the case of multivariate polynomials over any finite field, where $\SCSubset \subseteq \Field$ is arbitrary. The proof is a simple extension of a proof due to \cite{JumaKRS09} for the case $\SCSubset = \Bits$.

\begin{theorem}[multilinear polynomials]
\label{thm:multilinear-upper-bound}
Let $\Field$ be a finite field, $\SCSubset$ a subset of $\Field$, and $\gamma \DefineEqual \sum_{\alpha \in \SCSubset} \alpha$. For every $\PolyA \in \PolynomialRingIndOne{\Field}{\SCVars}{\VariableX}{1}$ (i.e., for every $\SCVars$-variate multilinear polynomial $\PolyA$) it holds that
	\begin{equation*}
	\sum_{\vec{\alpha} \in \SCSubset^{\SCVars}} \PolyA(\vec{\alpha})
	=
	\begin{cases}
	\PolyA\big(\frac{\gamma}{\SetCardinality{\SCSubset}},\dots,\frac{\gamma}{\SetCardinality{\SCSubset}}\big) \cdot \SetCardinality{\SCSubset}^{\SCVars} & \text{ if } \Characteristic{\Field} \nmid \SetCardinality{\SCSubset} \\
	\kappa \cdot \gamma^{\SCVars} & \text{ if } \Characteristic{\Field} \mid \SetCardinality{\SCSubset}
	\end{cases}
	\enspace,
	\end{equation*}
	where $\kappa$ is the coefficient of $\VariableX_{1} \cdots \VariableX_{\SCVars}$ in $\PolyA$.
\end{theorem}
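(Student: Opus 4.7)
The plan is to proceed by a direct expansion of $\PolyA$ in the monomial basis, exploit the separability of the sum over $\SCSubset^{\SCVars}$, and then split into two cases according to whether $\SetCardinality{\SCSubset}$ vanishes in $\Field$.

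First I would write any multilinear polynomial $\PolyA \in \PolynomialRingIndOne{\Field}{\SCVars}{\VariableX}{1}$ uniquely as
\begin{equation*}
\PolyA(\VariableX_{1}, \ldots, \VariableX_{\SCVars}) = \sum_{S \subseteq [\SCVars]} c_{S} \prod_{i \in S} \VariableX_{i}
\enspace,
\end{equation*}
where $c_{S} \in \Field$ (in particular $\kappa = c_{[\SCVars]}$). Because the monomials factor across variables, the sum over the product set $\SCSubset^{\SCVars}$ factors as well, so interchanging the order of summation gives
\begin{equation*}
\sum_{\vec{\alpha} \in \SCSubset^{\SCVars}} \PolyA(\vec{\alpha})
= \sum_{S \subseteq [\SCVars]} c_{S} \prod_{i \in S}\Big(\sum_{\alpha \in \SCSubset}\alpha\Big) \prod_{i \notin S} \SetCardinality{\SCSubset}
= \sum_{S \subseteq [\SCVars]} c_{S} \cdot \gamma^{\SetCardinality{S}} \cdot \SetCardinality{\SCSubset}^{\SCVars-\SetCardinality{S}}
\enspace.
\end{equation*}
This identity holds in $\Field$ regardless of characteristic and is the backbone of the proof.

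Next I would split into the two claimed cases. In the case $\Characteristic{\Field} \nmid \SetCardinality{\SCSubset}$, the element $\SetCardinality{\SCSubset}$ is invertible in $\Field$, so I can rewrite $\gamma^{\SetCardinality{S}}\SetCardinality{\SCSubset}^{\SCVars-\SetCardinality{S}} = \SetCardinality{\SCSubset}^{\SCVars} \cdot (\gamma/\SetCardinality{\SCSubset})^{\SetCardinality{S}}$, pull the common factor $\SetCardinality{\SCSubset}^{\SCVars}$ out of the sum, and recognize the remaining expression $\sum_{S} c_{S} (\gamma/\SetCardinality{\SCSubset})^{\SetCardinality{S}}$ as $\PolyA(\gamma/\SetCardinality{\SCSubset},\ldots,\gamma/\SetCardinality{\SCSubset})$; this yields the first line of the claimed formula. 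In the case $\Characteristic{\Field} \mid \SetCardinality{\SCSubset}$, one has $\SetCardinality{\SCSubset} = 0$ in $\Field$, so $\SetCardinality{\SCSubset}^{\SCVars-\SetCardinality{S}} = 0$ for every $S \neq [\SCVars]$, and only the $S = [\SCVars]$ term survives, giving $c_{[\SCVars]} \cdot \gamma^{\SCVars} = \kappa \cdot \gamma^{\SCVars}$, which is the second line.

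There is essentially no obstacle: the entire argument is the factorization of a sum of products over a product set, combined with a case analysis on the characteristic. The only mild subtlety is being careful that multilinearity is what lets the monomial expansion have at most one term per subset $S$, so that the resulting expression cleanly repackages into an evaluation of $\PolyA$ (case 1) or singles out the top coefficient $\kappa$ (case 2); anything of higher individual degree would introduce cross terms that would prevent this clean closed form.
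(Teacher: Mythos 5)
Your proof is correct and takes a somewhat cleaner, more unified route than the paper's. The paper argues the two cases separately and by different means: for $\Characteristic{\Field} \nmid \SetCardinality{\SCSubset}$ it uses a probabilistic framing (let $\vec{\alpha}$ be uniform on $\SCSubset^{\SCVars}$, use independence and multilinearity to push the expectation inside, $\Expectation[\PolyA(\vec{\alpha})] = \PolyA(\Expectation[\vec{\alpha}])$, then convert the expectation back to a sum), which intrinsically requires invertibility of $\SetCardinality{\SCSubset}$; for $\Characteristic{\Field} \mid \SetCardinality{\SCSubset}$ it switches to a direct per-monomial computation, noting that any monomial missing a variable contributes a factor $\SetCardinality{\SCSubset}=0$. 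You instead derive the single characteristic-free identity
\begin{equation*}
\sum_{\vec{\alpha} \in \SCSubset^{\SCVars}} \PolyA(\vec{\alpha})
= \sum_{S \subseteq [\SCVars]} c_{S} \cdot \gamma^{\SetCardinality{S}} \cdot \SetCardinality{\SCSubset}^{\SCVars-\SetCardinality{S}}
\end{equation*}
and then the two cases become trivial algebraic simplifications of this one formula (factor out $\SetCardinality{\SCSubset}^{\SCVars}$ when it is invertible; drop all but the $S=[\SCVars]$ term when $\SetCardinality{\SCSubset}=0$). Both approaches rest on the same underlying fact — the sum over the product set factorizes because each monomial is linear in each variable — but your version avoids the probabilistic detour (which is really just disguised division by $\SetCardinality{\SCSubset}^{\SCVars}$) and makes visible why the case distinction is purely about which terms of a single sum survive, rather than requiring two structurally distinct arguments.
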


\begin{proof}
	First suppose that $\Characteristic{\Field}$ does not divide $\SetCardinality{\SCSubset}$. Let $\vec{\alpha}$ be uniformly random in $\SCSubset^{\SCVars}$; in particular, $\alpha_{i}$ and $\alpha_{j}$ are independent for $i \neq j$. For every monomial $m(\vec{\VariableX}) = \VariableX_{1}^{e_{1}} \cdots \VariableX_{\SCVars}^{e_{\SCVars}}$ with $e_{1},\dots,e_{\SCVars} \in \Bits$,
	\begin{equation*}
	\Expectation[M(\vec{\alpha})]
	= \Expectation[\alpha_{1}^{e_{1}} \cdots \alpha_{\SCVars}^{e_{\SCVars}}]
	= \Expectation[\alpha_{1}^{e_{1}}] \cdots \Expectation[\alpha_{\SCVars}^{e_{\SCVars}}]
	= \Expectation[\alpha_{1}]^{e_{1}} \cdots \Expectation[\alpha_{\SCVars}]^{e_{\SCVars}}
	= M(\Expectation[\alpha_{1}], \dots, \Expectation[\alpha_{\SCVars}])
	\enspace.
	\end{equation*}
	Since $\PolyA$ is a linear combination of monomials, $\Expectation[\PolyA(\vec{\alpha})] = \PolyA(\Expectation[\vec{\alpha}])$. Each $\alpha_{i}$ is uniformly random in $\SCSubset$, so $\Expectation[\alpha_{i}] = \frac{1}{\SetCardinality{\SCSubset}} \sum_{\alpha \in \SCSubset} \alpha = \frac{\gamma}{\SetCardinality{\SCSubset}}$, and thus $\PolyA(\Expectation[\vec{\alpha}]) = \PolyA(\frac{\gamma}{\SetCardinality{\SCSubset}},\dots,\frac{\gamma}{\SetCardinality{\SCSubset}})$, which implies that $\Expectation[\PolyA(\vec{\alpha})] = \PolyA(\frac{\gamma}{\SetCardinality{\SCSubset}},\dots,\frac{\gamma}{\SetCardinality{\SCSubset}})$. To deduce the claimed relation, it suffices to note that $\Expectation[\PolyA(\vec{\alpha})] = \frac{1}{\SetCardinality{\SCSubset}^{\SCVars}}\sum_{\vec{\alpha} \in \SCSubset^{\SCVars}} \PolyA(\vec{\alpha})$.
	
	Next suppose that $\Characteristic{\Field}$ divides $\SetCardinality{\SCSubset}$. For every monomial $m(\vec{\VariableX}) = \VariableX_{1}^{e_{1}} \cdots \VariableX_{\SCVars}^{e_{\SCVars}}$ with $e_{1},\dots,e_{\SCVars} \in \Bits$:
	\begin{itemize}
		
		\item if there exists $j \in [\SCVars]$ such that $e_j = 0$ then
		\begin{equation*}
		\sum_{\vec{\alpha} \in \SCSubset^{\SCVars}} M(\vec{\alpha})
		= \SetCardinality{\SCSubset}
		\sum_{\alpha_{1},\dots,\alpha_{j-1},\alpha_{j+1},\dots,\alpha_{\SCVars} \in \SCSubset}
		\alpha_{1}^{e_{1}} \cdots \alpha_{j-1}^{e_{j-1}} \alpha_{j+1}^{e_{j+1}} \cdots \alpha_{\SCVars}^{e_{\SCVars}}
		= 0
		\enspace.
		\end{equation*}
		
		\item if instead $e_{1} = \cdots = e_{\SCVars} = 1$ then
		\begin{equation*}
		\sum_{\vec{\alpha} \in \SCSubset^{\SCVars}} M(\vec{\alpha})
		= \sum_{\vec{\alpha} \in \SCSubset^{\SCVars}}\prod_{i=1}^{\SCVars} \alpha_{i}
		= \prod_{i=1}^{\SCVars} \sum_{\alpha_{i} \in \SCSubset} \alpha_{i}
		= \left(\sum_{\alpha \in \SCSubset} \alpha \right)^{\SCVars}
		\enspace. \qedhere
		\end{equation*}
	\end{itemize} 
\end{proof}

The following corollary shows that for prime fields of odd size, the value of $\sum_{\vec{\alpha} \in \SCSubset^{\SCVars}} \PolyA(\vec{\alpha})$ can be computed efficiently for any $\SCSubset \subseteq \Field$ using at most a single query to $\PolyA$.

\begin{corollary}
\label{cor:prime-field-multilinear}
	Let $\Field$ be a prime field of odd size, $\SCSubset$ a subset of $\Field$, and $\gamma \DefineEqual \sum_{\alpha \in \SCSubset} \alpha$. For every $\PolyA \in \PolynomialRingIndOne{\Field}{\SCVars}{\VariableX}{1}$ (i.e., for every $\SCVars$-variate multilinear polynomial $\PolyA$) it holds that
	\begin{equation*}
	\sum_{\vec{\alpha} \in \SCSubset^{\SCVars}} \PolyA(\vec{\alpha})
	=
	\begin{cases}
	\PolyA\big(\frac{\gamma}{\SetCardinality{\SCSubset}},\dots,\frac{\gamma}{\SetCardinality{\SCSubset}}\big) \cdot \SetCardinality{\SCSubset}^{\SCVars} & \text{ if } \Characteristic{\Field} \nmid \SetCardinality{\SCSubset} \\
	0 & \text{ if } \Characteristic{\Field} \mid \SetCardinality{\SCSubset}
	\end{cases}
	\enspace.
	\end{equation*}
\end{corollary}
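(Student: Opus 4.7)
The corollary is essentially a specialization of \cref{thm:multilinear-upper-bound}, so my plan is to invoke that theorem and then simplify the second case using the additional hypotheses on $\Field$ (prime order, odd size). The first case ($\Characteristic{\Field} \nmid \SetCardinality{\SCSubset}$) is literally identical to the corresponding case of \cref{thm:multilinear-upper-bound}, so there is nothing to do there beyond quoting.

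For the second case, I would argue that the conclusion $\kappa \cdot \gamma^{\SCVars} = 0$ of \cref{thm:multilinear-upper-bound} can be strengthened to $0$ simply because $\gamma = 0$ under our hypotheses. Indeed, since $\Field$ is a prime field, $\Characteristic{\Field} = \SetCardinality{\Field} =: p$; the condition $p \mid \SetCardinality{\SCSubset}$ combined with $\SCSubset \subseteq \Field$ forces either $\SCSubset = \emptyset$ (in which case the sum is trivially $0$) or $\SCSubset = \Field$. In the latter case,
\begin{equation*}
\gamma = \sum_{\alpha \in \Field} \alpha = \sum_{i=0}^{p-1} i = \frac{p(p-1)}{2} \enspace,
\end{equation*}
and since $p$ is odd, $(p-1)/2$ is a well-defined integer, so $\gamma$ is a multiple of $p$ and hence $\gamma = 0$ in $\Field$. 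Therefore $\gamma^{\SCVars} = 0$, and the conclusion of \cref{thm:multilinear-upper-bound} specializes to $0$.

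The only \emph{mild} subtlety is the requirement that $p$ be odd: if $p = 2$, then $(p-1)/2$ is not an integer modulo $p$, and in fact $\gamma = 1 \neq 0$ in $\F_{2}$, so the conclusion would fail (the coefficient $\kappa$ would still contribute). This is precisely why the corollary restricts to odd characteristic. Since the argument reduces entirely to \cref{thm:multilinear-upper-bound} plus this one-line computation of $\gamma$, there is no real obstacle; the proof will be short.
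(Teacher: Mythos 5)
Your proposal is correct and follows essentially the same route as the paper: specialize \cref{thm:multilinear-upper-bound}, and in the $\Characteristic{\Field} \mid \SetCardinality{\SCSubset}$ case observe that the prime-field hypothesis forces $\SCSubset = \Field$, whence $\gamma = p(p-1)/2 \equiv 0 \pmod p$ because $p$ is odd. The only difference is that you explicitly handle the trivial $\SCSubset = \emptyset$ case, which the paper silently omits.
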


\begin{proof}
\cref{thm:multilinear-upper-bound} implies both cases. If $\Characteristic{\Field}$ does not divide $\SetCardinality{\SCSubset}$, then the claimed value is as in the theorem. If instead $\Characteristic{\Field}$ divides $\SetCardinality{\SCSubset}$, then it must be the case that $\SCSubset = \Field$, since $p \DefineEqual \Characteristic{\Field}$ equals $\SetCardinality{\Field}$; in this case, $\gamma = \sum_{\alpha \in \SCSubset} \alpha = (p-1)p/2$, which is divisible by $p$ since $2$ must divide $p-1$ (as $p$ is odd).
\end{proof}

\subsection{Subsets with group structure}
\label{sec:case-of-groups}

We show that if $\SCSubset$ is assumed to have some group structure, then few queries may suffice even for polynomials of degree greater than one. In particular, \cref{lem:multiplicative-group} shows that if $\SCSubset$ is a multiplicative subgroup of $\Field$ and $\SCDegree \leq \SetCardinality{\SCSubset}$, then one query suffices; \cref{lem:additive-group} shows that if $\SCSubset$ is an additive subgroup of $\Field$, then the answer depends on a polynomial related to $\SCSubset$.

\begin{lemma}[multiplicative groups]
\label{lem:multiplicative-group}
Let $\Field$ be a field, $\RMSubDomain$ a finite multiplicative subgroup of $\Field$, and $\RMVars,\RMDegree$ positive integers with $\RMDegree < \SetCardinality{\RMSubDomain}$. For every $\PolyA \in \PolynomialRingIndOne{\Field}{\RMVars}{\VariableX}{\RMDegree}$,
	\begin{equation*}
	\sum_{\vec{\alpha} \in \RMSubDomain^{\RMVars}} \PolyA(\vec{\alpha})
	= \PolyA(0,\dots,0) \cdot \SetCardinality{\RMSubDomain}^{\RMVars}
	\enspace.
	\end{equation*}
\end{lemma}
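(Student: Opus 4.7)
The plan is to prove the lemma by expanding $P$ in the monomial basis and applying a classical identity about character sums over finite multiplicative subgroups of a field. Since both sides of the identity are $\Field$-linear in $P$, it suffices to establish the claim for an arbitrary monomial $M(\vec{\VariableX}) = \VariableX_{1}^{e_{1}} \cdots \VariableX_{\RMVars}^{e_{\RMVars}}$ with $0 \leq e_{i} \leq \RMDegree$, and then extend by linearity.

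The key fact is the following: for a finite multiplicative subgroup $\RMSubDomain \leq \Multiplicative{\Field}$ of order $n \DefineEqual \SetCardinality{\RMSubDomain}$ and a non-negative integer $k$,
\begin{equation*}
\sum_{\alpha \in \RMSubDomain} \alpha^{k}
=
\begin{cases}
n & \text{if } n \mid k \\
0 & \text{otherwise}
\end{cases}
\enspace.
\end{equation*}
This is standard: $\RMSubDomain$ is cyclic, so fixing a generator $g$ we have $\sum_{\alpha \in \RMSubDomain} \alpha^{k} = \sum_{i=0}^{n-1} (g^{k})^{i}$, which equals $n$ when $g^{k} = 1$ (i.e., $n \mid k$) and otherwise equals the telescoping geometric sum $(g^{kn} - 1)/(g^{k} - 1) = 0$ since $g^{kn} = 1$. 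Factoring the sum of a monomial as a product over the coordinates,
\begin{equation*}
\sum_{\vec{\alpha} \in \RMSubDomain^{\RMVars}} M(\vec{\alpha})
= \prod_{i=1}^{\RMVars} \sum_{\alpha_{i} \in \RMSubDomain} \alpha_{i}^{e_{i}}
\enspace.
\end{equation*}
Now apply the hypothesis $\RMDegree < \SetCardinality{\RMSubDomain}$: each exponent satisfies $0 \leq e_{i} \leq \RMDegree < n$, so the only way for $n$ to divide $e_{i}$ is $e_{i} = 0$. Hence each inner sum equals $n$ when $e_{i} = 0$ and $0$ otherwise, so the product vanishes unless every $e_{i} = 0$, in which case it equals $n^{\RMVars}$.

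Consequently, writing $P(\vec{\VariableX}) = \sum_{\vec{e}} c_{\vec{e}} \vec{\VariableX}^{\vec{e}}$, the only surviving term in $\sum_{\vec{\alpha} \in \RMSubDomain^{\RMVars}} P(\vec{\alpha})$ is the constant term, giving $c_{\vec{0}} \cdot \SetCardinality{\RMSubDomain}^{\RMVars} = P(0, \ldots, 0) \cdot \SetCardinality{\RMSubDomain}^{\RMVars}$. There is no real obstacle here; the entire content is the classical character-sum identity combined with the degree bound that forbids exponents from wrapping around to a nontrivial multiple of $\SetCardinality{\RMSubDomain}$. This should occupy only a few lines of formal text.
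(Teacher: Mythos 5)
Your proof is correct. It takes a genuinely different route from the paper's. The paper proves the lemma by induction on the number of variables: the base case $\RMVars = 1$ is the geometric-sum calculation over the cyclic group, and the inductive step fixes $\VariableX_{1}$ and applies the hypothesis to the remaining $\RMVars-1$ variables (and then once more to the resulting univariate polynomial in $\VariableX_{1}$). You instead reduce directly to monomials by $\Field$-linearity, factor the sum of a monomial over the product domain $\RMSubDomain^{\RMVars}$ into a product of single-variable sums, and apply the same cyclic-group character sum identity coordinate by coordinate. Both arguments rest on exactly the same key fact, namely $\sum_{\alpha \in \RMSubDomain} \alpha^{k} = \SetCardinality{\RMSubDomain}$ if $\SetCardinality{\RMSubDomain} \mid k$ and $0$ otherwise, but your decomposition (monomials and multiplicativity) replaces the induction with a single direct computation and is arguably tighter for this product-domain setting; the paper's variable-by-variable induction mirrors the style it uses for the additive-group analogue (\cref{lem:additive-group}), where the monomial sum does not factor as cleanly and an inductive phrasing is more convenient. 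One small point worth making explicit in a formal write-up is that $\Characteristic{\Field}$ never divides $\SetCardinality{\RMSubDomain}$ (a finite multiplicative subgroup of a field has no element of order $\Characteristic{\Field}$), so the geometric-sum dichotomy you use behaves as stated in all characteristics.
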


\begin{remark}
	\label{rem:necessity-of-degree-bound}
	The hypothesis that $\RMDegree < \SetCardinality{\RMSubDomain}$ is necessary for the lemma, as we now explain. Choose $\RMSubDomain = \Multiplicative{\SubField}$, where $\SubField$ is a proper subfield of $\Field$, $\RMVars=1$, and $\RMDegree=\SetCardinality{\RMSubDomain}$. Consider the polynomial $\VariableX^{\SetCardinality{\RMSubDomain}}$, which has degree at least $\RMDegree$: $\VariableX^{\SetCardinality{\RMSubDomain}}$ vanishes on $0$; however, $\VariableX^{\SetCardinality{\RMSubDomain}}$ evaluates to $1$ everywhere on $\RMSubDomain$ so that its sum over $\RMSubDomain$ equals $\SetCardinality{\RMSubDomain} \neq 0$. (Note that if $\RMSubDomain$ is a multiplicative subgroup of $\Field$ then $\Characteristic{\Field} \nmid \SetCardinality{\RMSubDomain}$ because $\SetCardinality{\RMSubDomain}$ equals $\Characteristic{\Field}^{k}-1$ for some positive integer $k$.)
\end{remark}

\begin{proof}
	The proof is by induction on the number of variables $\RMVars$. The base case is when $\RMVars=1$, which we argue as follows. The group $\RMSubDomain$ is cyclic, because it is a (finite) multiplicative subgroup of a field; so let $\omega$ generate $\RMSubDomain$. Writing $\PolyA(\VariableX_{1})=\sum_{j=0}^{\RMDegree} \beta_{j} \VariableX_{1}^{j}$ for some $\beta_{0},\dots,\beta_{\RMDegree} \in \Field$, we have
	\begin{equation*}
	\sum_{\alpha_{1} \in \RMSubDomain} \PolyA(\alpha_{1})
	= \sum_{i=0}^{\SetCardinality{\RMSubDomain}-1} \PolyA(\omega^i)
	= \sum_{i=0}^{\SetCardinality{\RMSubDomain}-1} \sum_{j=0}^{\RMDegree} \beta_{j} \omega^{ij}
	= \sum_{j=0}^{\RMDegree} \beta_{j} \sum_{i=0}^{\SetCardinality{\RMSubDomain}-1} (\omega^{j})^{i}
	= \beta_{0} \SetCardinality{\RMSubDomain}
	= f(0) \SetCardinality{\RMSubDomain}
	\enspace,
	\end{equation*}
	which proves the base case. The second-to-last equality follows from the fact that for every $\gamma \in \RMSubDomain$,
	\begin{equation*}
	\sum_{i=0}^{\SetCardinality{\RMSubDomain}-1} \gamma^{i}
	=
	\begin{cases}
	\SetCardinality{\RMSubDomain} & \text{if $\gamma=1$} \\
	\frac{\gamma^{\SetCardinality{\RMSubDomain}}-1}{\gamma-1}=0 & \text{if $\gamma \neq 1$}
	\end{cases}
	\enspace.
	\end{equation*}
	
	For the inductive step, assume the statement for any number of variables less than $\RMVars$; we now prove that it holds for $\RMVars$ variables as well. Let $\PolyA_{\alpha}$ denote $\PolyA$ with the variable $\VariableX_{1}$ fixed to $\alpha$. Next, apply the inductive assumption below in the second equality (with $\RMVars-1$ variables) and last one (with $1$ variable), to obtain
	\begin{align*}
	\sum_{\vec{\alpha} \in \RMSubDomain^{\RMVars}} \PolyA(\alpha_{1},\dots,\alpha_{\RMVars})
	&= \sum_{\alpha_{1}\in \RMSubDomain} \sum_{(\alpha_{2},\dots,\alpha_{\RMVars})\in \RMSubDomain^{\RMVars-1}} \PolyA_{\alpha_{1}}(\alpha_{2},\dots,\alpha_{\RMVars}) \\
	&= \SetCardinality{\RMSubDomain}^{\RMVars-1} \sum_{\alpha_{1} \in \RMSubDomain} \PolyA_{\alpha_{1}}(0^{\RMVars-1}) \\
	&= \SetCardinality{\RMSubDomain}^{\RMVars-1} \sum_{\alpha_{1} \in \RMSubDomain} \PolyA(\alpha_{1},0,\dots,0) \\
	&= \SetCardinality{\RMSubDomain}^{\RMVars} \PolyA(0,\dots,0)
	\enspace,
	\end{align*}
	as claimed.
\end{proof}

\begin{lemma}[additive groups]
\label{lem:additive-group}
Let $\Field$ be a field, $\RMSubDomain$ a finite additive subgroup of $\Field$, and $\RMVars,\RMDegree$ positive integers with $\RMDegree < \SetCardinality{\RMSubDomain}$. For every $\vec{v} \in \Field^{\RMVars}$, $\PolyA \in \PolynomialRingIndOne{\Field}{\RMVars}{\VariableX}{\RMDegree}$,
	\begin{equation*}
	\sum_{\vec{\alpha} \in \RMSubDomain^{\RMVars}} \PolyA(\vec{\alpha} + \vec{v})
	= \kappa \cdot a_{0}^{\RMVars} \enspace,
	\end{equation*}
	where $\kappa$ is the coefficient of $\VariableX_{1}^{\SetCardinality{\RMSubDomain}-1} \cdots \VariableX_{\RMVars}^{\SetCardinality{\RMSubDomain}-1}$ in $\PolyA$, and $a_{0}$ is the (formal) linear term of the subspace polynomial $\prod_{h \in \RMSubDomain} (\VariableX - h)$. In particular, if $\PolyA$ has total degree strictly less than $\RMVars (\SetCardinality{\RMSubDomain}-1)$, then the above sum evaluates to $0$.
\end{lemma}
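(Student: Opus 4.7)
The strategy is to reduce the multivariate sum to a univariate power-sum computation, exploit the linearized structure of the subspace polynomial $Z_{\RMSubDomain}(X) \DefineEqual \prod_{h \in \RMSubDomain}(X-h)$, and then multiply factors back together by monomial expansion. Since $\RMSubDomain$ is a finite additive subgroup of $\Field$, it is an $\Field_p$-subspace of some dimension $\ell$ with $\SetCardinality{\RMSubDomain} = p^\ell$, where $p = \Characteristic{\Field}$; therefore $Z_{\RMSubDomain}(X)$ is a $p$-linearized polynomial $X^{p^\ell} + c_{\ell-1}X^{p^{\ell-1}} + \cdots + c_1 X^p + a_0 X$, and in particular $Z_{\RMSubDomain}(X + \alpha) = Z_{\RMSubDomain}(X)$ for all $\alpha \in \RMSubDomain$.

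\textbf{Step 1 (shift invariance reduces to $\vec{v} = \vec{0}$).} The operator $T \colon f \mapsto \sum_{\alpha \in \RMSubDomain} f(X + \alpha)$ outputs $\RMSubDomain$-shift-invariant polynomials, which form exactly $\Field[Z_{\RMSubDomain}(X)]$. Since $Z_{\RMSubDomain}$ has degree $\SetCardinality{\RMSubDomain}$ while $T f$ has degree at most $\deg{f} < \SetCardinality{\RMSubDomain}$, the image $T f$ is a \emph{constant}. Hence $\sum_{\alpha \in \RMSubDomain} f(v+\alpha) = \sum_{\alpha \in \RMSubDomain} f(\alpha)$ for any $v \in \Field$, and coordinatewise this shows that the claimed sum is independent of $\vec{v}$. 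By linearity, it remains to evaluate each monomial $X_1^{j_1}\cdots X_{\RMVars}^{j_{\RMVars}}$ at $\vec{v}=\vec{0}$, and since the sum then factorizes as $\prod_i S_{j_i}$ for $S_j \DefineEqual \sum_{\alpha \in \RMSubDomain} \alpha^j$, the whole problem reduces to determining $S_j$ for $0 \leq j < \SetCardinality{\RMSubDomain}$.

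\textbf{Step 2 ($S_j = 0$ for $j < \SetCardinality{\RMSubDomain}-1$, and $S_{\SetCardinality{\RMSubDomain}-1} = a_0$).} Expanding the constant $T(X^j) = \sum_{i=0}^{j} \binom{j}{i} S_i X^{j-i}$ forces $\binom{j}{i} S_i = 0$ for every $i < j < \SetCardinality{\RMSubDomain}$. If $0 \leq i < \SetCardinality{\RMSubDomain}-1 = p^\ell - 1$, then some base-$p$ digit of $i$ is strictly less than $p-1$; incrementing that digit yields $j$ with $i<j<\SetCardinality{\RMSubDomain}$ whose base-$p$ digits dominate those of $i$, so by Lucas's theorem $\binom{j}{i}\not\equiv 0 \pmod p$ and hence $S_i = 0$. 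For $j = \SetCardinality{\RMSubDomain}-1$, read off $e_{\SetCardinality{\RMSubDomain}-1} = (-1)^{\SetCardinality{\RMSubDomain}-1} a_0$ from the coefficient of $X^1$ in $Z_\RMSubDomain(X) = \sum_i (-1)^i e_i X^{\SetCardinality{\RMSubDomain}-i}$, and apply Newton's identity
\begin{equation*}
    S_{\SetCardinality{\RMSubDomain}-1} - e_1 S_{\SetCardinality{\RMSubDomain}-2} + \cdots + (-1)^{\SetCardinality{\RMSubDomain}-1}(\SetCardinality{\RMSubDomain}-1)\, e_{\SetCardinality{\RMSubDomain}-1} = 0.
\end{equation*}
All intermediate $S_i$ vanish, and $\SetCardinality{\RMSubDomain}-1 \equiv -1 \pmod p$, so the identity collapses to $S_{\SetCardinality{\RMSubDomain}-1} = a_0$.

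\textbf{Step 3 (assembly).} For any monomial $X_1^{j_1}\cdots X_{\RMVars}^{j_{\RMVars}}$ with each $j_i < \SetCardinality{\RMSubDomain}$, the sum $\prod_i S_{j_i}$ vanishes unless every $j_i = \SetCardinality{\RMSubDomain}-1$, in which case it equals $a_0^{\RMVars}$. Summing against the coefficients of $\PolyA$ gives $\kappa \cdot a_0^{\RMVars}$, where $\kappa$ is the coefficient of $X_1^{\SetCardinality{\RMSubDomain}-1}\cdots X_{\RMVars}^{\SetCardinality{\RMSubDomain}-1}$. The ``in particular'' remark is immediate: if $\TotalDegree{\PolyA} < \RMVars(\SetCardinality{\RMSubDomain}-1)$, then $\kappa = 0$. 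The only delicate step is Step~2, where one must verify both the Lucas-theorem argument for the intermediate power sums and the sign/characteristic bookkeeping in Newton's identity; the shift-invariance and tensoring pieces are then essentially formal.
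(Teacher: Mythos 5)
Your proof is correct, and it takes a genuinely different and more self-contained route than the paper's. The paper proves the lemma by induction on the number of variables, and its base case outsources the key univariate computation (the power sums $S_{j}=\sum_{\alpha\in\RMSubDomain}(\alpha+v)^{j}$ vanish for $j<\SetCardinality{\RMSubDomain}-1$ and equal $a_{0}$ for $j=\SetCardinality{\RMSubDomain}-1$) entirely to a citation of Byott--Chapman. You instead prove that fact from scratch: you first use the shift-invariance of $Tf=\sum_{\alpha\in\RMSubDomain}f(X+\alpha)$ together with $\deg(Tf)<\SetCardinality{\RMSubDomain}$ to deduce $Tf$ is constant (hence the whole sum is independent of $\vec{v}$, eliminating the shift in one stroke), then compute the power sums directly via the coefficient identity $\binom{j}{i}S_{i}=0$ for $i<j<\SetCardinality{\RMSubDomain}$ together with Lucas's theorem and, for $S_{\SetCardinality{\RMSubDomain}-1}$, a single Newton identity where the sign and the factor $\SetCardinality{\RMSubDomain}-1\equiv -1\pmod{p}$ neatly cancel. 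Your assembly step (the sum over $\RMSubDomain^{\RMVars}$ factorizes monomial-by-monomial into $\prod_{i}S_{j_{i}}$) plays the role of the paper's inductive step, but it is cleaner since it bypasses induction. The trade-off: the paper's argument is shorter on the page given the citation, while yours is fully self-contained and makes transparent exactly where the additive-subgroup structure (namely $\SetCardinality{\RMSubDomain}=p^{\ell}$, used in the Lucas digit argument and in the congruence $\SetCardinality{\RMSubDomain}-1\equiv -1\pmod p$) enters. One minor overstatement: you assert the $\RMSubDomain$-shift-invariant polynomials form exactly $\Field[Z_{\RMSubDomain}(X)]$, but for your argument you only need the easier fact that a shift-invariant polynomial of degree below $\SetCardinality{\RMSubDomain}$ is constant (it differs from its value at $0$ by a polynomial of degree $<\SetCardinality{\RMSubDomain}$ vanishing on all of $\RMSubDomain$), so the full characterization is unnecessary baggage.
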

\begin{proof}
	Without loss of generality, let $\RMDegree \DefineEqual \SetCardinality{\RMSubDomain} - 1$. The proof is by induction on the number of variables $\RMVars$. When $\RMVars = 1$, we have that $\PolyA(\VariableX) = \sum_{j=0}^{\RMDegree} \beta_{j} \VariableX^{j}$ for some $\beta_{0}, \ldots, \beta_{\RMDegree} \in \Field$. Then
	\begin{equation*}
	\sum_{\alpha \in \RMSubDomain} \PolyA(\alpha + v)
	= \sum_{\alpha \in \RMSubDomain} \sum_{j=0}^{\RMDegree} \beta_{j} (\alpha+v)^{j}
	= \sum_{j=0}^{\RMDegree} \beta_{j} \sum_{\alpha \in \RMSubDomain} (\alpha+v)^{j}
	= \beta_{\RMDegree} a_{0}
	\end{equation*}
	where the final equality follows by \cite[(Proof of) Theorem 1]{ByottC99}, and the fact that $d = \SetCardinality{\RMSubDomain} - 1$.
	
	For the inductive step, assume the statement for $\RMVars-1$ variables; we now prove that it holds for $\RMVars$ variables as well. Let $\PolyA_{\alpha}$ denote $\PolyA$ with the variable $\VariableX_{1}$ fixed to $\alpha$; we have $\PolyA_{\alpha}(\VariableX_{2}, \dots, \VariableX_{\RMVars}) = \sum_{\vec{e} \in \{0,\dots,\RMDegree\}^{\RMVars}} \beta_{\vec{e}} \cdot \alpha^{e_{1}} \VariableX_{2}^{e_{2}} \dots \VariableX_{\RMVars}^{e_{\RMVars}}$. Next, apply the inductive hypothesis below in the second equality (with $\RMVars-1$ variables) to obtain
	\begin{equation*}
	\sum_{\vec{\alpha} \in \RMSubDomain^{\RMVars}} \PolyA(\vec{\alpha} + \vec{v})
	= \sum_{\alpha_{1}\in \RMSubDomain} \sum_{(\alpha_{2},\dots,\alpha_{\RMVars})\in \RMSubDomain^{\RMVars-1}} \PolyA_{\alpha_{1} + v_{1}}(\alpha_{2} + v_{2},\dots,\alpha_{\RMVars} + v_{\RMVars})
	= \sum_{\alpha_{1}\in \RMSubDomain} a_{0}^{\RMVars-1} \kappa(\alpha_{1}+v_{1}) \enspace,
	\end{equation*}
	where $\kappa(\VariableX_{1}) \DefineEqual \sum_{j=0}^{\RMDegree} \beta_{(j, \RMDegree, \dots, \RMDegree)} \VariableX_{1}^{j}$. Applying the hypothesis again for $1$ variable yields
	\begin{equation*}
	\sum_{\alpha_{1}\in \RMSubDomain} a_{0}^{\RMVars-1} \kappa(\alpha_{1} + v_{1}) = a_{0}^{\RMVars} \cdot \beta_{(\RMDegree, \dots, \RMDegree)} \enspace,
	\end{equation*}
	and the claim follows.
\end{proof}

\clearpage
\section*{Acknowledgments}
We are grateful to Thomas Vidick for multiple technical and conceptual suggestions that greatly improved our results and their presentation, as well as for allowing us to include his proof of \cref{thm:quantum_low_degree_test}. We also thank Zeph Landau, Chinmay Nirkhe, and Igor Shinkar for helpful discussions.

\printbibliography
\end{document}